\newcommand{\abs}[1]{\left\lvert#1\right\rvert}
\pgfplotsset{%
	every x tick/.style={black, thin},
	every y tick/.style={black, thick},
	every tick label/.append style = {font=\footnotesize},
	every axis label/.append style = {font=\footnotesize},
	compat=1.12
}
\newcommand{\rev}[1]{{\color{black}#1}}
\newcommand{\R}{\mathbb{R}}
\DeclareMathOperator{\supp}{supp}
\DeclareMathOperator{\dist}{dist}
\renewcommand{\Re}{\mathrm{Re}\,}
\renewcommand{\leq}{\leqslant}	
\renewcommand{\geq}{\geqslant}
\DeclareMathOperator{\Id}{Id}
\newcommand{\N}{\mathbb{N}}
\newcommand{\C}{{\mathbb C}}
\newtheorem{theorem}{Theorem}[section]
\newtheorem{lemma}[theorem]{Lemma}
\newtheorem{corollary}[theorem]{Corollary}
\newtheorem{proposition}[theorem]{Proposition}
\theoremstyle{definition}
\newtheorem{definition}[theorem]{Definition}
\newtheorem{remark}[theorem]{Remark}
\let\oldtocsection=\tocsection
\let\oldtocsubsection=\tocsubsection
\let\oldtocsubsubsection=\tocsubsubsection
\renewcommand{\tocsection}[2]{\hspace{0em}\oldtocsection{#1}{#2}}
\renewcommand{\tocsubsection}[2]{\hspace{1em}\oldtocsubsection{#1}{#2}}
\renewcommand{\tocsubsubsection}[2]{\hspace{2em}\oldtocsubsubsection{#1}{#2}}
\title{Semiclassical tunneling for some 1D Schrödinger operators with complex-valued potentials}
\author[M. Averseng]{M. Averseng}
\address[M. Averseng]{Univ Angers, CNRS, LAREMA, F-49000 Angers, France}
\email{martin.averseng@univ-angers.fr}
\author[N. Frantz]{N. Frantz}
\address[N. Frantz]{Univ Angers, CNRS, LAREMA, F-49000 Angers, France}
\email{nicolas.frantz@univ-angers.fr}
\author[F. H\'erau]{F. H\'erau}
\address[F. H\'erau]{LMJL - UMR6629, Nantes Université, CNRS, 2 rue de la Houssini\`ere, BP 92208, F-44322 Nantes cedex 3, France}
\email{herau@univ-nantes.fr}
\author[N. Raymond]{N. Raymond}
\address[N. Raymond]{Univ Angers, CNRS, LAREMA, Institut Universitaire de France, F-49000 Angers, France}
\email{nicolas.raymond@univ-angers.fr}
\begin{document}
	
%
%

\begin{abstract}
We consider the non-selfadjoint, semiclassical Schrödinger operator $\mathscr{L}(h) := -h^2\partial_x^2+e^{i\alpha}V$, where $\alpha \in (-\pi,\pi)$ and $V: \R\to \R_+$ is even and vanishes at exactly two (symmetric) non-degenerate minima. We establish a semiclassical tunneling result: the spectrum of $\mathscr{L}(h)$ near the origin is given by a sequence of algebraically simple eigenvalues which come in exponentially close pairs (within a $\mathscr{O}(e^{-S/h})$ distance where $S > 0$ is explicit), each pair being separated from the others by a distance $\mathscr{O}(h)$. A one-term estimate of the gap between the two smallest eigenvalues in magnitude is derived; it reveals that, when $\alpha \neq 0$, they quickly rotate around each other as $h$ goes to $0$. 
\end{abstract}

	\maketitle
%
%
%


\section{Introduction}

\subsection{Main result}

In an influential series of works \cite{HS84,HS85}, Helffer and Sjöstrand (see \S\ref{sec:literature} for a more throrough review of the literature) established results about the spectrum of semiclassical Schrödinger operators, a particular case of which can be summarized as follows.

Let $V : \mathbb{R} \to \mathbb{R}$ be a smooth potential which admits a minimum not attained at infinity and possesses the even symmetry $V(x) = V(-x)$. Moreover, suppose that $V$ admits exactly two global, non-degenerate minima at $x_\ell$ and $x_r = -x_\ell$. Then, for $h > 0$ small enough, the low-lying spectrum of the semiclassical Schrödinger operator 
$$\mathscr{L}(h) := -h^2 \frac{d^2}{dx^2} + V(x): H^2(\R^d) \to L^2(\R^d)$$
consists of pairs of exponentially close but simple eigenvalues, each pair separated from the others by a distance of order $h$; furthermore, the gap between the first two eigenvalues $\mu_1(h)$ and $\mu_2(h)$ satisfies the estimate 
\begin{equation}\label{eq.0}
	\mu_2(h) - \mu_1(h) =  \big(\mathsf{A} + o_{h \to 0}(1) \big)h^{1/2} e^{-{S}/h}\,,
	\end{equation}
where $\mathsf{A}>0$ is an explicit constant and $S>0$ is the so-called ``Agmon distance"\footnote{after S. Agmon, who introduced this distance to study localization properties of the solutions of Schrödinger's equation; see, e.g. \cite{agmon1982lectures}} between the minima, defined by
$$S = d_{V}(x_\ell,x_r) = \int_{x_\ell}^{x_r} \sqrt{V(s)}\,ds.$$
The gap between the first two eigenvalues is connected to {\em quantum tunneling} (see e.g. \cite{razavy2013quantum}) in that after a time $T(h) \sim \frac{\pi h}{\mu_2(h) - \mu_1(h)}$, low-energy quantum states that were initially localized with strong probability in the left well (near $x_\ell$) will typically be localized with strong probability in the right well, thus ``passing through'' the potential barrier between the wells.  

In the present work, our aim is to extend the above result to a situation featuring a {\em complex-valued potential}. Namely, given $\alpha \in (-\pi,\pi)$, we wish to describe the low-energy spectrum of
\begin{equation}
\label{eq:op}
\mathscr{L}(h) = -h^2 \frac{d^2}{dx^2} + e^{i\alpha} V(x) : H^2(\R) \to L^2(\R)
\end{equation}
where $V: \R \to \R_+$ is as above (in fact, our main result holds for more general complex potentials, see Remark \ref{rem:comments_thm_gap} below). 

Naturally, the main challenge is that, for $\alpha \neq 0$, $\mathscr{L}(h)$ is a non-selfadjoint operator, and its spectrum could in principle differ radically from the selfadjoint case, even when $|\alpha|$ is small. Indeed, the spectral properties of non-selfadjoint operators are known to be highly sensitive to small perturbations, see for instance \cite{davies1982pseudo,davies2007linear}. Moreover, for $\alpha \neq 0$, it is not easy to guess what becomes of the gap estimate \eqref{eq.0}. Readers familiar with the analysis of the selfadjoint setting might expect that the groundstates $\psi_{\ell/r}$ attached to the left/right well may become oscillatory for $\alpha \neq 0$, and that some destructive interference could cause the ``interaction term'' $\langle \psi_\ell,\psi_r\rangle$ (which is directly related to the gap \eqref{eq.0} in the selfadjoint setting) to be significantly smaller than in the selfadjoint case. Does this lead to a qualitatively different phenomenon? Is the eigenvalue gap still related to some (generalized) Agmon distance between the wells? The answer to these questions, and the main result of this paper, is the following.
\begin{theorem}
\label{thm:gap}
Let $V : \R \to \R_+$ be smooth, even and such that
$$\liminf_{x\to\pm\infty} V(x) > 0.$$
Suppose that $V$ vanishes exactly at two points $x_\ell$ and $x_r= -x_\ell$, and that these minima are non-degenerate, i.e., $V''(x_\ell) = V''(x_r) > 0$. Given $\alpha \in (\pi,\pi)$, let $\mathscr{L}(h)$ be the unbounded operator on $L^2(\R)$ with domain $H^2(\R)$, defined by
$$\mathscr{L}(h) := -\frac{d^2}{dx^2} + e^{i\alpha}V.$$
There exists $h_0 > 0$ such that for all $h \in (0,h_0)$, $\mathscr{L}(h)$ admits two distinct eigenvalues $\mu_1(h)$ and $\mu_2(h)$ which are the smallest in modulus. They are algebraically simple and their difference satisfies the estimate
\begin{equation}
\label{eq:gap}
\mu_2(h) - \mu_1(h) =  \big(\mathsf{A} + o_{h \to 0}(1) \big)h^{1/2} e^{-{S(\alpha)}/h},
\end{equation}
where 
\begin{equation}
\label{eq:defSalpha}
S(\alpha) := e^{i \alpha/2} \int_{x_\ell}^{x_r} \sqrt{V(x)}\,dx, \quad \textup{and}
\end{equation}
\begin{equation}
\label{eq:defA}
\mathsf{A} := 4e^{i3\alpha/4}\sqrt{\frac{V(0)}{\pi}} \left(\frac{V''(x_\ell)}{2}\right)^{1/4}\exp\left(-2\int_{x_\ell}^0 \frac{(\sqrt{V})'(s) -  \sqrt{V''(x_\ell)/2}}{\sqrt{V(s)}}\,ds\right)\,.
\end{equation}
Moreover, there exists $C >0$ such that 
\begin{equation}
\label{eq:separation}
|\mu(h)-\mu_1(h)| \geq Ch
\end{equation}
for all $h \in (0,h_0)$ and any $\mu(h) \in \textup{sp}(\mathscr{L}(h)) \setminus \{\mu_1(h),\mu_2(h)\}$.
\end{theorem}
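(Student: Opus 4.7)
Since $V$ is even, $\mathscr{L}(h)$ commutes with the parity $\mathcal{P}u(x):=u(-x)$, so $L^2(\R)$ splits into $\mathscr{L}(h)$-invariant even/odd subspaces $L^2_\pm$. I expect $\mu_1(h)$ and $\mu_2(h)$ to correspond to the (unique) smallest-modulus eigenvalues $\mu_\pm(h)$ of $\mathscr{L}(h)\big|_{L^2_\pm}$; on $(0,\infty)$ these are eigenvalue problems with Neumann (for $+$) and Dirichlet (for $-$) boundary conditions at $0$. My first step would be to study a single-well reference problem on $(0,\infty)$, showing that it admits a unique simple eigenvalue $\lambda(h) = e^{i\alpha}h\sqrt{V''(x_\ell)/2} + \mathcal{O}(h^2)$ of modulus $\lesssim h$, separated from the rest of the spectrum by a distance $\gtrsim h$. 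This should follow from a Grushin/Feshbach reduction built on a harmonic-approximation quasimode at the right well; the separation estimate \eqref{eq:separation} then follows since $\mu_\pm(h)$ are both exponentially close to $\lambda(h)$.

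\textbf{Complex WKB quasimodes and Agmon estimates.} Next I would build WKB quasimodes $\psi_\ell^{\mathrm{wkb}}(x) = a(x;h)e^{-\varphi(x)/h}$ attached to the left well, with phase $\varphi(x) = e^{i\alpha/2}\int_{x_\ell}^x \sqrt{V(s)}\,ds$ (the decaying root of the eikonal $(\varphi')^2 = e^{i\alpha}V$) and amplitude $a = a_0 + ha_1 + \ldots$ obtained order by order from the transport equation; set $\psi_r^{\mathrm{wkb}}(x) := \psi_\ell^{\mathrm{wkb}}(-x)$. The leading amplitude $a_0$ is fixed by matching a bilinearly-normalized complex Gaussian at $x_\ell$, which is where the phase factor $e^{i\alpha/4}$ enters. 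To show that the true eigenstates $\psi_\pm$ are exponentially close to $(\psi_\ell^{\mathrm{wkb}} \pm \psi_r^{\mathrm{wkb}})/\sqrt{2}$, I would develop an Agmon-type estimate with \emph{real-valued} weight $\phi(x) := \cos(\alpha/2)\int_{x_\ell}^x \sqrt{V(s)}\,ds = \Re \varphi$. Conjugating by $e^{\phi/h}$ and taking real parts, integration by parts produces a positive quadratic form of the type $h^2\|(e^{\phi/h}u)'\|^2 + (\cos\alpha\,V - (\phi')^2)\|e^{\phi/h}u\|^2$ (modulo harmless lower-order terms), controllable away from the wells, which yields the desired decay at rate $\Re \varphi/h$.

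\textbf{Gap formula via boundary Wronskian.} With $\psi_\pm$ normalized by the bilinear pairing $\int_{\R}\psi_\pm^2\,dx = 1$ and $\mathscr{L}\psi_\pm = \mu_\pm\psi_\pm$, parity gives $\psi_+'(0) = 0$ and $\psi_-(0) = 0$. Integrating $(\mathscr{L}\psi_+)\psi_- - \psi_+(\mathscr{L}\psi_-)$ over $(0,\infty)$ yields
\begin{equation*}
(\mu_+ - \mu_-)\int_0^\infty \psi_+\psi_-\,dx = -h^2\,\psi_+(0)\,\psi_-'(0).
\end{equation*}
The denominator equals $-\tfrac{1}{2} + o(1)$ (one well contributes $-\tfrac{1}{2}$, the other is exponentially small on $(0,\infty)$, and the cross terms are $\mathcal{O}(e^{-cS(\alpha)/h})$). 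The numerator satisfies $\psi_+(0)\psi_-'(0) \sim 2\,\psi_\ell^{\mathrm{wkb}}(0)\,(\psi_\ell^{\mathrm{wkb}})'(0)$, and using $(\psi_\ell^{\mathrm{wkb}})'(0) \sim -h^{-1}e^{i\alpha/2}\sqrt{V(0)}\,\psi_\ell^{\mathrm{wkb}}(0)$ together with $\psi_\ell^{\mathrm{wkb}}(0) = a_0(0)\,e^{-S(\alpha)/(2h)}$ gives $\mu_2 - \mu_1 = \mu_- - \mu_+ \sim 4h\,e^{i\alpha/2}\sqrt{V(0)}\,a_0(0)^2\,e^{-S(\alpha)/h}$. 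Tracking $a_0(0)^2$ (the transport equation produces the factor $\exp(-2\int_{x_\ell}^0 \ldots)$ from \eqref{eq:defA}, while the Gaussian normalization at $x_\ell$ contributes $e^{i\alpha/4}(V''(x_\ell)/2)^{1/4}(\pi h)^{-1/2}$) reproduces the announced constant $\mathsf{A}\,h^{1/2}$ and the gap \eqref{eq:gap}.

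\textbf{Main obstacle.} The genuinely new difficulties, compared to the self-adjoint analysis of \cite{HS84,HS85}, are the absence of positivity and of a sesquilinear orthogonality structure. Three points will require care: (a) the Agmon estimate uses the real-valued weight $\Re \varphi$ while the imaginary part of $\varphi$ does not enter the decay but critically shapes the WKB amplitude and the phase of $\mathsf{A}$, so the matching asymptotic expansion has to be carried out in the complex setting; (b) the bilinear normalization $\int \psi_\pm^2 = 1$ is defined only up to sign, and the phases $e^{i\alpha/4}$ (from Gaussian normalization) and $e^{i\alpha/2}$ (from the derivative) must be combined consistently into the announced $e^{i3\alpha/4}$; (c) the separation \eqref{eq:separation}, unavailable from a min-max principle in the non-self-adjoint setting, will be derived from a Grushin/Feshbach reduction exploiting that $\mathscr{L}(h)$ is complex symmetric under the bilinear pairing, which lets the effective $2\times 2$ matrix diagonalize in the parity basis.
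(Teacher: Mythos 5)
Your parity decomposition into even/odd subspaces and the boundary Wronskian at $x=0$ is a genuinely different decoupling from the paper's, which instead ``seals'' the right well by adding a bump $\Sigma_\ell$ to $V$ and compares Riesz projectors of the double-well and single-well operators. Both routes lead to an exponentially accurate Wronskian-type formula, and your final bookkeeping of $a_0(0)^2$, the bilinear Gaussian normalization $e^{i\alpha/4}(V''(x_\ell)/2)^{1/4}(\pi h)^{-1/2}$ and the factor $e^{i\alpha/2}\sqrt{V(0)}$ from $(\psi^{\mathrm{wkb}}_\ell)'(0)$ does reproduce $\mathsf{A}$; the paper similarly reduces the gap to $ihW(0)/\langle\psi_r,\overline{\psi_r}\rangle$ and evaluates it by the same WKB data. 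What your sealing-free route would buy, if completed, is a cleaner half-line spectral problem; what it gives up is the extra flexibility (the paper's argument works verbatim for non-even modulations $e^{i\alpha(x)}V$, as noted in Remark~\ref{rem:comments_thm_gap}(ii)).

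However, there is a genuine error in your Agmon estimate, and it is precisely the point where the non-selfadjointness bites. Conjugating $\mathscr{L}(h)$ by $e^{\phi/h}$ and taking the \emph{plain} real part of $\langle P^\phi w,w\rangle$ gives, as you write,
\begin{equation*}
\Re\langle P^\phi w,w\rangle \;=\; \|hD_x w\|^2 + \int_{\R}\bigl(\cos\alpha\,V-\phi'^2\bigr)|w|^2\,.
\end{equation*}
With the natural weight $\phi'=\Re\varphi'=\cos(\alpha/2)\sqrt{V}$, the coefficient is $\cos\alpha\,V-\cos^2(\alpha/2)V=-\sin^2(\alpha/2)V\le 0$, \emph{not} positive (and for $|\alpha|\ge\pi/2$ one has $\cos\alpha\le 0$, so no real weight gives anything). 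The largest weight compatible with positivity of your form is $\phi'=\sqrt{\cos\alpha}\,\sqrt{V}$, which yields decay at rate $\sqrt{\cos\alpha}\int\sqrt{V}$, strictly smaller than $\cos(\alpha/2)\int\sqrt{V}=\Re S(\alpha)$ when $\alpha\neq 0$. Since the gap itself has magnitude $e^{-\Re S(\alpha)/h}$, this weaker localization is not good enough: the off-diagonal errors in your $2\times2$ effective problem would then be of order $e^{-2\sqrt{\cos\alpha}\int_{x_\ell}^0\sqrt{V}/h}$ and can dominate the main term (for $\alpha=\pi/2$ they are $O(1)$). The fix, and the crux of the paper's Proposition~\ref{prop:estimee_elliptique_conj}, is to take $\Re\langle e^{-i\alpha/2}P^\phi w,w\rangle$; the rotation turns the imaginary skew term $2\sin(\alpha/2)\Re\langle hD_xw,\phi'w\rangle$ into an asset, and a sharp Cauchy--Schwarz with $\varepsilon=\cos(\alpha/2)/|\sin(\alpha/2)|$ yields
\begin{equation*}
\Re\langle e^{-i\alpha/2}P^\phi w,w\rangle\;\ge\;\frac{1}{\cos(\alpha/2)}\int_{\R}\bigl(\cos^2(\alpha/2)V-\phi'^2\bigr)|w|^2\,,
\end{equation*}
whose coefficient vanishes exactly at $\phi'=\cos(\alpha/2)\sqrt{V}$. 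Without this rotation your subsequent steps (WKB fidelity, exponential smallness of cross terms, bilinear pairing estimates) cannot be pushed through at the required exponential rate. A secondary, less critical gap: your single-well reference problem on $(0,\infty)$ still needs non-selfadjoint resolvent bounds near the low-lying eigenvalue (the paper obtains these by comparison with the complex harmonic oscillator, Section~\ref{sec:davies}); ``Grushin/Feshbach'' is the right idea, but you will need some quantitative input, for instance the Krej\v{c}i\v{r}\'ik--Siegl type bound on $(\mathscr{H}(1)-z)^{-1}$ used in Proposition~\ref{prop:resboundH}, to control the reduced $1\times 1$ operator.
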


\begin{remark}[Comments on Theorem \ref{thm:gap}]
\label{rem:comments_thm_gap}\
\begin{enumerate}[\rm (i)]
\item When $\alpha=0$, \eqref{eq:gap} agrees with the well-known tunneling formula of \cite{HS84}.
\item Inspecting the proof of Theorem \ref{thm:gap}, one can check that it still holds in the more general case where the potential is of the form $e^{i\alpha(x)}V$, with $x\mapsto\alpha(x)$ a $C^\infty$ function such that $\alpha(\R) \subset [-\pi+\varepsilon,\pi-\varepsilon]$ for some $\varepsilon>0$. In this case, $S(\alpha)$ should be replaced by $\int_{x_\ell}^{x_r} e^{i\alpha(s)/2}\sqrt{V(s)}ds$, and $e^{3i\alpha/4}$ by $e^{i(\alpha(x_\ell)+2\alpha(0))/4}$.
\item When $\alpha\neq 0$, the smallest two eigenvalues are rotating rather quickly around each other since $\arg(\mu_2(h)-\mu_1(h))\sim \rev{-}S\sin(\alpha/2)/h$.
\item We have $|\mu_2(h)-\mu_1(h)|
=\rev{|\mathsf{A}|}\sqrt{h}e^{-S\cos(\alpha/2)/h}(1+o(1))$	with $S= \int_{x_\ell}^{x_r} \sqrt{V(s)}\,ds$. Thus, the magnitude of the eigenvalue gap {\em increases} when $\alpha$ increases.
\item In particular, the size of the eigenvalue gap does {\em not} collapse due to any kind of destructive interactions. \rev{This is in contrast with other contexts where such destructive interferences are known to occur (see, e.g., the recent article \cite{fefferman2025magnetic} involving a magnetic field).} \rev{Here}, we show that the tunneling amplitude is in fact related to the quantity $\langle\psi_\ell,\overline{\psi}_r\rangle$  where $\psi_\ell$ and $\psi_r$ are as above (and {\em not} $\langle\psi_\ell,{\psi}_r\rangle$, which does become significantly smaller when $\alpha \neq 0$). This essentially comes from the fact that $J\mathscr{L}(h)=\mathscr{L}(h)^*J$ where $J$ is the complex conjugation. Let us point out that such a property, where $J$ is a more general, abstract conjugation operator, has been used for the analysis of non selfadjoint operators, see \cite[Hypothesis 4]{FAFR24} and \cite[Hypothesis 4]{Fr25}. 
\item In this paper, the complex potential $W = e^{i\alpha}V$ satisfies the symmetry $W(x) = W(-x)$. One could also consider a case of the $\mathcal{PT}$-symmetry, i.e., where $W(x) = \overline{W(-x)}$ (taking $\alpha = \alpha(x)$ as an odd function). One would obtain a similar gap formula with some other prefactor $\mathsf{B}$ instead of $\mathsf{A}$ and $\textup{Re}(S(\alpha))$ instead of $S(\alpha)$.   
\end{enumerate}
\end{remark}

\rev{\begin{remark}[Dynamical interpretation]
When $\alpha \neq 0$, though the operator $\mathscr{L}(h)$ is non-selfadjoint (and non-unitary), there is still a dynamical interpretation of the eigenvalue gap. Indeed, let us consider 
$$ih\partial_t \psi = \mathscr{L}(h)\psi$$
with an initial condition 
$$\psi(0,\cdot) = \psi_1 + \psi_2$$
where $\psi_j$ is an $L^2$ normalized eigenfunction of $\mathscr{L}(h)$ associated to $\mu_j(h)$, $j = 1,2$. Since $\mathscr{L}(h)$ commutes with the symmetry $\psi(x) \to \psi(-x)$, and since $\mu_j(h)$ are algebraically simple, one has
$$\psi_j(-x) = e^{i \theta_j}\psi_j(x)$$ 
for some $\theta_j \in \R$, $j = 1,2$. In particular, $\psi_1$ has the same $L^2$ mass in each well, and the same is true for $\psi_2$. Moreover, as in the self-adjoint case, the results of this paper show that $\psi_1$ and $\psi_2$ can be chosen so that the linear combination $\psi_1 \pm \psi_2$ is exponentially localized in a $O(\sqrt{h})$ neighborhood of the left/right well. Depending on the sign of $b = \textup{Im}(\mu_2(h) - \mu_1(h))$, the solution for $t > 0$ (and $h > 0$ fixed small enough) satisfies
\begin{align*}
\psi(x,t) 
&= e^{\frac{t \mu_1(h)}{ih}} \psi_1+  e^{\frac{t \mu_2(h)}{ih}} \psi_2\\
& \sim \begin{cases}
 e^{\frac{t \mu_1(h)}{ih}} \psi_1 & \textup{if } b < 0,\\ 
e^{\frac{t\mu_2(h)}{ih}}\psi_2 & \textup{if } b > 0,\\
 e^{\frac{t \textup{Im}(\mu_1(h))}{h}} \left(\psi_1 + e^{\frac{i t\textup{Re}(\mu_2(h) - \mu_1(h))}{h}}\psi_2\right) & \textup{if } b = 0,
\end{cases}
\end{align*}
as $t \to \infty$ (with equality when $b = 0$). Thus, for $b \neq 0$, the system, initially localized in the left well, approaches a state with equal $L^2$ mass in each well, while for $b = 0$, one recovers an oscillatory behavior analogous to the selfadjoint case. 
\end{remark}}

\subsection{Context and motivation}
\label{sec:literature}
The analysis of quantum tunneling for Schrödinger operators with real-valued, multiple-well potentials was started in dimension one by Harell in 1980 \cite{Harrell}, followed by 
results in arbitrary dimension by Simon \cite{Simon84,Simon84b} and Helffer and Sjöstrand \cite{HS84,HS85}; for an introduction to these results, we refer to the monographs \cite{DiSj99,Hel88}, the synthetic presentation of \cite{Robert} (in french) or \cite{BHR17} for a pedagogical treatment of a connected problem in dimension one. Since these pioneering papers, a large body of works has been devoted to establishing tunneling formulas for other kinds of operators, including for instance magnetic fields, see e.g. \cite{BHR17,FBMR25} or electromagnetic fields, see e.g. \cite{fefferman2022lower,helffer2022quantum,morin2024tunneling}. In all of these cases, the operator under consideration is selfadjoint.
 
Complex potential barriers are also relevant in the physics literature as a model for absorption (see e.g. \cite{muga2004complex} for a review) and tunneling times in such potentials have also been investigated by physicists (see e.g. \cite{raciti1994complex,kovcinac2008tunneling} and the references therein). However, there seem to be comparatively fewer mathematical works establishing analogous tunneling results for non-selfadjoint operators: the only examples that we aware of are those connected to the Krammers-Focker-Planck operator, see e.g. \cite{herau2008tunnel,herau2011tunnel}. Let us also mention that the problem of bounding/counting the eigenvalues of (semiclassical) Schrödinger operators with complex-valued potentials has attracted attention in the past decade, see, e.g., \cite{frank2011eigenvalue1,frank2017eigenvalue2,frank2018eigenvalue3,cuenin2024open}.

Schrödinger operators with complex potentials also appear in the recent mathematical literature for problems set on domains $\Omega \subset \R^d$ carrying Dirichlet conditions, see e.g. \cite{A08, Almog-Helffer-Pan_2012, Almog-Helffer-Pan_2013, AH16, Almog-Helffer_2020, HKR24}. Most of these works are motivated by the theory of superconductivity and aim at estimating the decay of the semigroup $(e^{-t \mathscr{L}(h)})_{t\geq 0}$ by giving lower bounds on the real part of the spectrum. In \cite{HKR24} (extending the case $\alpha=0$ analyzed in \cite{CKPRS22}), families of eigenvalues are accurately described in some regions of the complex plane. There, it is established that eigenfunctions under consideration are exponentially localized near the boundary, allowing to reduce the spectral analysis to two-dimensional operators quite similar to \eqref{eq:op} on the half-plane. These can be analyzed by separation of variables. In the selfadjoint case \cite{CKPRS22}, it is even possible to prove an \emph{optimal} localization of the eigenfunctions near specific points of the boundary. However, in the non-selfadjoint situation \cite{HKR24}, this optimality is lost due to the complex scaling argument used there, which does not allow to control the tangential localization of the eigenfunctions. At the core of the problem are operators on the boundary in the form \eqref{eq:op} (when $d=1$) and the optimal localization behavior of their eigenfunctions.

Let us also emphasize that the case of dimension one is fundamental, since several spectral problems in higher dimensions can be reduced to lower dimensions by means of Feshbach/Grushin methods (see \cite{Keraval, BFRV25}). For example, the generalizations of the result outlined in the introduction for one-dimensional pseudodifferential operators recently obtained in \cite{DR25}, enabled, via a microlocal dimensional reduction, to establish the first known tunneling formula for a double well magnetic field in dimension two, see \cite{FBMR25}. 

Let us also mention that the analysis of exponentially small effects for 1D non selfadjoint operators, in the context of the Bohr-Sommerfeld rule, has given rise to the recent article \cite{hitrik2024overdamped} (see especially Section 5 of that reference) which is motivated by the equations of the relativity; on a closely connected topic, see also \cite{Duraffour}.

\subsection{Outline of the proof of Theorem \ref{thm:gap}}

As in the selfadjoint case, the proof of Theorem \ref{thm:gap} relies on a decoupling of the wells near $x = x_\ell$ and $x = x_r$, and a very precise computation of the interaction between the eigenfunctions related to each well (see Section \ref{sec:double_well1D}). Such a precise computation is made possible thanks to WKB-type approximations\footnote{after Wentzel, Kramers and Brillouin \cite{wentzel1926verallgemeinerung,kramers1926wellenmechanik,brillouin1926remarques}, see also the early review \cite{dunham1932wentzel}.} of the eigenfunctions near each well, see Section \ref{sec:proofWKB}. The proof of these WKB approximations takes two steps: first, one constructs exponentially good quasi-solutions of the eigenvalue equation by solving a series of ordinary differential equations, and second, one uses bounds on the resolvent for the simple-well operators to deduce that these quasi-solutions actually close to true eigenfunctions. In the selfadjoint case, the bounds on the resolvent needed in the second step are readily available thanks to the spectral theorem, but in the present, non-selfadjoint setting, a replacement is needed. This is the object of Section \ref{sec:3}. The main idea is that, due to the exponential localization results shown in Section \ref{sec:elliptic1D}, the resolvent of the simple-well operator is close (near its poles in $D(0,Rh)$) to that of a rescaled complex harmonic oscillator, which, despite still being non-selfadjoint, is sufficiently well understood. In more details: 
\begin{itemize}
\item[---] In Section \ref{sec:elliptic1D}, we define the {\em left-well operator} $\mathscr{L}_\ell(h)$, obtained by ``sealing'' the potential well at $x_r$ (that is, replacing $V$ by $V_\ell := V + \Sigma_\ell$, $\Sigma_\ell \geq 0$, with $\Sigma_\ell > 0$ near $x_r$ and $\Sigma \equiv 0$ near $x_\ell$). We analyse its low-energy eigenfunctions, i.e., those associated to eigenvalues in a small disk $D(0,Rh)$ in the complex plane; we show that they are exponentially localized near $x_\ell$ (see Corollary \ref{cor:small_far}), and as a consequence, that they provide (i) exponentially accurate (in the limit $h \to 0$) quasi-modes for the double-well operator $\mathscr{L}(h)$ as well as (ii) $\mathscr{O}(h^{3/2})$ quasi-modes for its quadratic approximation near $x_\ell$, which, up to a rescaling, is a complex harmonic oscillator (see Corollary~\ref{cor:quasimodes}).

To prove the aforementioned exponential localization, the key tool is an elliptic estimate for the conjugated operator $P^{\Phi}(h)=e^{\Phi/h}P(h)e^{-\Phi/h}$ where $P(h)$ is an operator of the general form $P(h)=h^2D_x^2+e^{i\alpha}U$ and $\Phi$ is a convenient real-valued function (see Proposition \ref{prop:estimee_elliptique_conj}). Its proof relies on the fact that, for all $\xi\in\R$,
\begin{equation}\label{eq.key}
\textup{Re} \big(e^{-i\alpha/2}\left[(\xi+i\Phi'(x))^2+e^{i\alpha}U(x)\right]\big)\geq\frac{1}{\cos(\alpha/2)}(\cos^2(\alpha/2)U(x)-\Phi'^2(x))\,.
\end{equation}
This decay estimate is applied for $U := V_\ell$ (the potential sealed on the right). Roughly speaking, the one-well eigenfunctions decay like $e^{-\Phi_\ell(x)/h}$, where $\Phi_\ell(x)=\left|\int_{x_{\rev{\ell}}}^{x}\cos(\alpha/2)\sqrt{V_\ell(s)}ds\right|$ cancels out the right-hand-side of \eqref{eq.key}). The elliptic estimate is used again in Section \ref{sec:proofWKB} with a refined choice for the weight $\Phi$. 
\item[---] Section \ref{sec:3} is devoted to the analysis of the resolvent of the left-well operator $(\mathscr{L}_\ell(h) - z)^{-1}$ for $z \in D(0,Rh)$, and more specifically, the asymptotic location of its poles -- i.e., the eigenvalues of $\mathscr{L}_\ell(h)$ -- $\mu_{\ell,1}(h), \ldots, \mu_{\ell,n}(h)$, and its behaviour near them. The results are formulated in terms of the properties of the Riesz projectors 
$$\Pi_{\ell,n}(h) := \int_{\gamma_n(h)} (z - \mathscr{L}_\ell(h))^{-1}dz,$$
where $\gamma_n(h)$ is a closed contour in $\mathbb{C} \setminus \textup{sp}(\mathscr{L}_\ell(h))$ circling around a pole $\mu_{\ell,n}(h)$, see Proposition~\ref{prop:fourre_tout}. Roughly speaking, these results allow to get around the impossibility of applying the spectral theorem to the non-selfadjoint operator $\mathscr{L}_\ell(h)$.

The key ingredients in the proof of Proposition \ref{prop:fourre_tout} are some properties of the complex harmonic oscillator $\mathscr{H}(h)$ and its resolvent, recapped/established in \S \ref{sec:davies}. In relating the properties of $(\mathscr{H}(h) - z)^{-1}$ and $(\mathscr{L}_\ell(h)-z)^{-1}$, the fact established in Section \ref{sec:elliptic1D} that low-energy eigenfunctions of $\mathscr{L}_\ell(h)$ are $\mathscr{O}(h^{3/2})$ quasimodes for a rescaled complex harmonic oscillator, naturally plays a key role.

\item[---] In Section \ref{sec:proofWKB} we give WKB approximations of the low-energy eigenfunctions and associated eigenvalues of the left-well operator (or, by symmetry, the analogous right-well operator). They take the form $\psi^{\mathrm{wkb}}_n(x;h)=e^{-\varphi_\ell(x)/h}a_n(x;h)$ ($n\geq 1$), with $\varphi_\ell(x)=e^{i\frac{\alpha}{2}}\left|\int_{x_\ell}^x\sqrt{V(s)}ds\right|$; in particular, they decay like $e^{-\Re\varphi_\ell(x)/h}=e^{-\Phi_\ell(x)/h}$, that is, precisely like the eigenfunctions of $\mathscr{L}_\ell(h)$. The corresponding quasi-eigenvalues are $(2n-1)e^{i\alpha/2}\sqrt{\frac{V''(x_\ell)}{2}}h$ modulo a remainder of order $\mathscr{O}(h^2)$. The main result is Proposition \ref{prop:WKB_estim}: it shows that these WKB constructions are optimal in the sense that they describe exactly the spectrum of $\mathscr{L}_\ell(h)$ in $D(0,Rh)$ and that the WKB Ansätze are exponentially good approximations of the eigenfunctions, see \eqref{eq.approxWKB}. Besides the resolvent bounds of Section \ref{sec:3}, the main idea is to use again the elliptic estimate of Proposition \ref{prop:estimee_elliptique_conj}, this time with a family of well-crafted ``subsolutions'' (in the sense of Definition \ref{def:M_solution}), see Lemma \ref{lem:optimal_weight}. These subsolutions are very close to $\Re \varphi_\ell$, and their expression is inspired by \cite{DR25}. 

\item[---] Section \ref{sec:double_well1D} is devoted to the proof of Theorem \ref{thm:gap}. The first step is to prove that the spectrum of $\mathscr{L}(h)$ in $D(0,Rh)$ is made of duets of exponentially close eigenvalues (modulo $\mathscr{O}(e^{-(\Re S(\alpha)-\delta)/h})$), see Proposition \ref{prop:precise_spectrum_L}. The key element to establish this is the appproximation of the Riesz projector of $\mathscr{L}(h)$ by the sum of the Riesz projectors associated with $\mathscr{L}_\ell(h)$ and $\mathscr{L}_r(h)$ (the left- and right-well operators) up to a remainder of order $\mathscr{O}(e^{-(\Re S(\alpha)-\delta)/h})$, see Proposition \ref{prop:PiPilPir}. The spectral gap is analyzed in Section \ref{sec:gap}, see Proposition \ref{prop:gap1} and Lemma \ref{lem:wronskien}, the final estimate of Lemma \ref{lem:valeurW(0)} crucially using the exponential WKB approximation. Note that the estimate of the gap in Proposition \ref{prop:gap1} deviates from the expression available in the selfadjoint case. See especially the presence of $\overline{\psi}_r$ (and not $\psi_r$) in the formula, which comes from Corollary~\ref{cor:approxPistar}.
\end{itemize}

\subsection{Notation}

	In what follows, we write
	$D_x := -i \frac{d}{dx}$. We denote by $C^\infty(U)$ the space of infinitely differentiable functions on $U$ and $C^\infty_c(U)$ the subspace of $C^\infty(U)$ consisting of functions with compact support in $U$. The Schwartz class is denoted by $\mathscr{S}(\R)$, and $H^m(\R)$ and $H^m_h(\mathbb{R})$ will denote the Sobolev space and the semi-classical version of it with the norms
	$$\|u\|^2_{H^m_h} := \sum_{l = 0}^m \|(hD_x)^l u\|^2\,, \quad \|u\|^2_{H^m} :=  \sum_{l = 0}^m \|D_x^l u\|^2\,,$$
	where $\|\cdot\|$ is the $L^2$ norm over $\mathbb{R}$.
	The notation $\|A\|$ will also be used for the norm of the operator $A:L^2(\mathbb{R}) \to L^2(\mathbb{R})$. As usual, we denote by $[A,B] := AB - BA$ the commutator of $A$ and $B$, whenever it makes sense. Given $U \subset \R$ open, we write 
	$$\|u\|_{W^{n,\infty}(U)} := \max_{0 \leq j \leq n} \sup_{U} |D_x^j u|.$$
	For $\chi,\underline{\chi}$ and $\widetilde{\chi} \in C^\infty(\mathbb{R})$, we adopt the notation
	$$\chi \perp \widetilde{\chi} \iff \dist(\supp \chi,\supp \widetilde{\chi}) > 0\,, \quad \chi \prec \underline{\chi} \iff \chi \perp (1 - \underline{\chi})$$
	We will consistently denote by $\chi_\ell \prec \underline{\chi_\ell}$, etc. functions that are equal to $1$ in a neighborhood of $x_\ell$ (functions localized ``near the left well'') and by $\widetilde{\chi}_\ell \prec \underline{\widetilde{\chi}_\ell}$, etc. functions that vanish in a neighborhood of $x_\ell$ (functions localized ``away from the left well''). 
	
	We denote by $\textup{sp}(A)$ the spectrum of the (possibly unbounded) operator $A$. For a complex number $z$ and $r > 0$, we denote by 
	$D(z,r) \subset \mathbb{C}$ the open disk of radius $r$ centered at $z$, and by $\mathscr{C}(z,r)$ its boundary. In the proofs, we use the letter $C$ in estimates like $a \leq Cb$ to denote a generic constant whose value may change from one line to another, but does not depend on the universally quantified variables of the statement. Finally, sometimes, the dependence of a symbol in the semiclassical parameter $h$ will be omitted to alleviate the notation.

	\section{An elliptic estimate and its application to exponential decay}
	\label{sec:elliptic1D}

	\subsection{The elliptic estimate}
	
	We first prove an elliptic estimate for operators of the form 
	$$P(h) = (hD_x)^2 + e^{i\alpha} U,$$
	where $U : \mathbb{R} \to \mathbb{R}_+$ is locally integrable and bounded. 
	For this, we introduce
	$$P^\Phi(h) := e^{\Phi/h}P(h)e^{-\Phi/h},$$
	where $\Phi$ is some twice-differentiable function. Solving $(P^\Phi(h) -z)w = 0$ amounts to seeking a solution of $(P(h) - z) u= 0$ under the form $u = e^{-\Phi/h}w$, and elliptic estimates for $(P^\Phi(h)-z)^{-1}$ provide bounds on the norm of $w = e^{\Phi/h}u$ which translate into exponential decay informations for the solution $u$. Writing
	\begin{equation}
	P^\Phi(h)
	= (hD_x + i \Phi')^2 + e^{i\alpha}U = h^2D_x^2 + ih(D_x \Phi' + \Phi'D_x) + e^{i\alpha}U- \Phi'^2;
	\label{eq:expression_Lphi}
	\end{equation}
	and cancelling the leading order $h = 0$ leads to the {\em eikonal equation} (see also \eqref{eq:defPhil} below)
	\begin{equation}
	\label{eq:eikonal}
	\Phi'^2 = e^{i\alpha} U.
	\end{equation}
	A good choice for $\Phi$ is one that almost solves the eikonal equation while still leaving $P^\Phi(h)$ elliptic. This motivates the following definition.
	
	\begin{definition}[$\kappa$-subsolution]
		\label{def:M_solution}
		Let $\kappa > 0$, $\Phi$ a real-valued, non-negative, twice differentiable function, and $\chi$ a smooth, compactly supported function with $0 \leq \chi \leq 1$. We say that $\Phi$ is  \textit{a $\kappa$-subsolution associated to $\chi$ for the potential $U$} if
		$$\Phi'^2 \leq \cos(\alpha/2)^2 U - \kappa (1 - \chi^2).$$
	\end{definition}
	
	\begin{proposition}[Elliptic estimate]
		\label{prop:estimee_elliptique_conj}
		\rev{Let $R > 0$ and $h_0 > 0$.} Then there exists $C(R,h_0) > 0$ such that the inequality
		\begin{equation} 
		\label{eq.elliptic_estimate}
		\|u\|_{H^2_h} \leq C(R,h_0) \left(h^{-1}\|(P^{\Phi}(h)-z)u\|+\|\chi u\|\right)
		\end{equation}
		holds for any $h \in (0,h_0)$, $z \in D(0,Rh)$, $u \in H^2(\R)$, $\chi \in C^\infty(\R,[0,1])$, and any  $(Mh)$-subsolution $\Phi$ associated to $\chi$ for the potential $U$ \rev{with $M > 3R$}.  
	\end{proposition}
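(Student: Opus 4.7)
The plan is to derive the estimate from a pointwise symbol-level convexity identity. Completing the square in $\xi$ yields the identity
\[
\Re\bigl(e^{-i\alpha/2}[(\xi+i\Phi'(x))^2 + e^{i\alpha}U(x)]\bigr) = \cos(\alpha/2)\bigl(\xi + \tan(\alpha/2)\Phi'(x)\bigr)^2 + \frac{1}{\cos(\alpha/2)}\bigl(\cos^2(\alpha/2)U(x) - \Phi'(x)^2\bigr),
\]
which is the sharp form of the inequality \eqref{eq.key} recorded in the introduction.

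Starting from \eqref{eq:expression_Lphi}, I would compute $\Re\langle e^{-i\alpha/2}P^\Phi(h)u,u\rangle$ by expanding term by term. An integration by parts shows that the subprincipal contribution $\cos(\alpha/2)h\langle\Phi''u,u\rangle$ exactly cancels the imaginary part of the cross term $\langle 2i\Phi'\,hD_xu,u\rangle$, so that only $2\sin(\alpha/2)\Re\langle\Phi'\,hD_xu,u\rangle$ survives. Bounding this by Cauchy--Schwarz together with Young's inequality at the sharp weight $\cos(\alpha/2)$ (namely $2|\sin(\alpha/2)|\,|\Re\langle\Phi'\,hD_xu,u\rangle| \leq \cos(\alpha/2)\|hD_xu\|^2 + \sin^2(\alpha/2)\cos(\alpha/2)^{-1}\|\Phi'u\|^2$) absorbs the full $\cos(\alpha/2)\|hD_xu\|^2$ and reproduces the operator analogue of the symbol identity,
\[
\Re\langle e^{-i\alpha/2}P^\Phi(h)u,u\rangle \geq \frac{1}{\cos(\alpha/2)}\int_\R \bigl(\cos(\alpha/2)^2 U - \Phi'^2\bigr)|u|^2\,dx.
\]

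Applying Definition~\ref{def:M_solution} with $\kappa = Mh$ shows that the right-hand side is at least $\tfrac{Mh}{\cos(\alpha/2)}(\|u\|^2 - \|\chi u\|^2)$. Writing the left-hand side as $\Re\langle e^{-i\alpha/2}(P^\Phi(h)-z)u,u\rangle + \Re(e^{-i\alpha/2}z)\|u\|^2$ and using $|\Re(e^{-i\alpha/2}z)|\leq Rh$, I obtain
\[
\Bigl(\frac{M}{\cos(\alpha/2)}-R\Bigr)h\|u\|^2 \leq \|(P^\Phi(h)-z)u\|\,\|u\| + \frac{Mh}{\cos(\alpha/2)}\|\chi u\|^2.
\]
The hypothesis $M > 2R$ makes the prefactor on the left uniformly positive, and Young's inequality converts this into the $L^2$ bound $\|u\| \leq Ch^{-1}\|(P^\Phi(h)-z)u\| + C\|\chi u\|$. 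I would then bootstrap to $H^2_h$ by using the operator identity $(hD_x)^2u = (P^\Phi(h)-z)u + (z-e^{i\alpha}U)u - 2i\Phi'\,hD_xu - h\Phi''u + \Phi'^2u$ together with the pointwise bound $\|\Phi'\|_\infty \leq \cos(\alpha/2)\|U\|_\infty^{1/2}$ supplied by the subsolution property itself, first controlling $\|hD_xu\|$ from a refined version of the quadratic form inequality that retains $\|hD_xu\|^2$ rather than absorbing it.

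The main obstacle is the cross term $2\sin(\alpha/2)\Re\langle\Phi'\,hD_xu,u\rangle$ in the second step: when $\alpha\neq 0$, its indefinite sign forces a trade between $\|hD_xu\|^2$ and $\|\Phi'u\|^2$, and the Young weight $\cos(\alpha/2)$ must be chosen exactly so that the resulting lower bound matches the sharp symbol identity. Any looser choice would enlarge the coefficient on $\int\Phi'^2|u|^2$ beyond what Definition~\ref{def:M_solution} compensates, and the threshold $M>2R$ stated in the proposition would no longer suffice.
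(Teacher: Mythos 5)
Your proof is correct and follows essentially the same route as the paper. The completed-square symbol identity you display is precisely the sharp content behind the paper's choice of Young weight $\varepsilon=\cos(\alpha/2)/|\sin(\alpha/2)|$, and your integration-by-parts accounting of the $\Phi''$ term is equivalent to the paper's observation that $\langle(D_x\Phi'+\Phi'D_x)u,u\rangle$ is real; the remaining steps (subsolution inequality, $|z|\leq Rh$, Young's inequality, then bootstrap to $H^1_h$ and $H^2_h$ via the identity for $(hD_x)^2u$) match the paper's, merely with the $L^2$ bound and the $H^2_h$ bootstrap carried out in the opposite order. One small quibble: in your formulation the prefactor $\tfrac{M}{\cos(\alpha/2)}-R$ is already positive as soon as $M>R$, so $M>2R$ is not what guarantees positivity there; the paper actually invokes $M>2R$ at the absorption step to ensure $\tfrac12+\tfrac RM<1$, which is a slightly different (and more accurate) place to locate the role of that hypothesis.
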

	\begin{proof} 
		Since $D_x$ is symmetric in the $L^2$ inner product, one has
		\begin{equation}
			\label{eq.DxPhiPhiDx}
			\langle (D_x \Phi' + \Phi'D_x)u,u\rangle \in \mathbb{R},
		\end{equation}
		for all $u \in H^2(\R)$. Using \eqref{eq:expression_Lphi},
		we deduce that
		$$ \Re \langle(P^\Phi(h) - z) u,u\rangle = \|(hD_x)u\|^2 + \int_\R (\cos(\alpha) U - \Phi'^2)|u|^2.$$
	 	Hence,
		$$\|(hD_x)u\| \leq C \left(\big\|(P^\Phi(h) - z)u\big\| + \|u\|\right).$$
		In turn, the expression~\eqref{eq:expression_Lphi} 
		gives
		$$\|(hD_x)^2u\| \leq C\left( \|(P^\Phi(h) - z)u\| + \|u\|\right),$$
		and it remains to show that 
		\begin{equation}
			\label{eq:sufficientAgmon}
			\|u\| \leq Ch^{-1}\|(P^\Phi(h)-z)u\|+C\|\chi u\|.
		\end{equation}
		
		To this end, we use \eqref{eq.DxPhiPhiDx} again to see that
		\begin{align*}
			\Re\langle e^{-i \frac{\alpha}{2}} P^\Phi(h) u,u\rangle &  = \cos(\alpha/2)\left(\|(h D_x)u \|^2+\int_\R(U - \Phi'^2)|u|^2\right)\\
			& \hspace{4cm} + 2\sin(\alpha/2)\Re\langle (h D_x) u, \Phi'u\rangle.
		\end{align*}
		Next, the estimate
		$$2\Re \big\langle (h D_x) u ,\Phi' u \big\rangle \geq -\frac{\|\Phi' u\|^2}{\varepsilon}-\varepsilon \|(hD_x)u\|^2$$
		applied with $\varepsilon:=\frac{\cos(\alpha/2)}{|\sin(\alpha/2)|}$ yields
		\begin{align*}\Re\langle e^{-i \frac{\alpha}{2}} P^\Phi(h)u,u\rangle &\geq \frac{1}{\cos(\alpha/2)}\int_\R \left[\cos^2(\alpha/2) U-\Phi'^2 \right]|u|^2\\
			&\geq \frac{Mh}{\cos(\alpha/2)} \int_\R (1-\chi^2)|u|^2\,.
		\end{align*}
		since $\Phi$ is an $(Mh)$-subsolution associated to $\chi$. Therefore, 
		\begin{align*}
		 \|u\|^2 &\leq \frac{\cos(\alpha/2)}{Mh}\big(\|(P^\Phi(h) - z)u\|_{L^2} + \|zu\|_{L^2}\big)\|u\|_{L^2} + \|\chi u\|^2_{L^2} \\
		 & \leq \frac{\cos(\alpha/2)^2}{2M^2} h^{-2} \|(P^\Phi - z)u\|^2 + \left( \frac{1}{2} + \frac{R}{M} \right) \|u\|^2 + \|\chi u\|^2,
		\end{align*}
		which implies \eqref{eq:sufficientAgmon} since $\frac12 + \frac RM < 1$.
	\end{proof}
	
	\subsection{Exponential localization of eigenfunctions of the simple-well operator}\label{sec:Agmon}
	
	\subsubsection{Simple-well operator}\label{s:simple_well0}
	We ``seal'' the potential $V$ at the right well, i.e., we consider
	$V_\ell:=V+\Sigma_\ell$,
	where $\Sigma_\ell : \R \to \R_+$ has compact support in $(x_r-\eta,x_r+\eta)$, $\eta > 0$ and satisfies $\Sigma_\ell(x_r) > 0$ (see Figure \ref{fig:potential} below). The parameter $\eta$ will remain fixed in the remainder of this article unless stated otherwise. 
	The resulting {\em simple-well operator} (or left-well operator) is
	$$\mathscr{L}_\ell(h) :=h^2 D_x^2+e^{i\alpha}V_\ell.$$
	This will be the main object of study in Sections \ref{sec:elliptic1D} to \ref{sec:proofWKB}, until we return to the double-well operator in Section \ref{sec:double_well1D}. 
	
	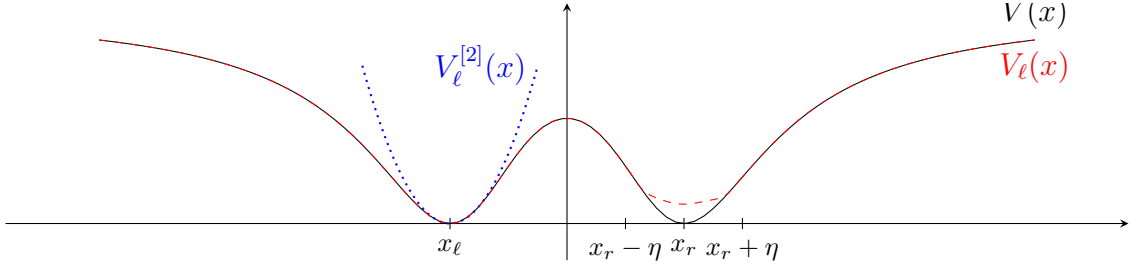
\begin{figure}[htbp]
		\centering
		\begin{tikzpicture}[domain=-8:8,samples=100,declare function={ V(\x)  =  4*(1 - (\x)^2)^2/(2 + (\x)^4);V2(\x) = 16/3*(\x+1)^2 ; bump(\x) =exp(-1/(1-(\x)^2)^2);}]
			\begin{axis}[
				width=\textwidth,
				height=5cm,
				axis x line=center,
				axis y line=center, 
				xtick = \empty,
				ytick = \empty,
				extra x ticks = {-2,1,2,3},
				extra x tick labels = {$x_\ell$,$x_r-\eta$,$x_r$,$x_r + \eta$},
				enlarge x limits={rel=0.1},
				enlarge y limits={rel=0.2},
				]
				\addplot[domain=-8:8] {V(\x/2)} node[pos=1] (endofplotV){};
				\node[above] at (endofplotV) {$V(x)$};
				\addplot[domain=-3.5:-0.5,color=blue,thick,dotted] {V2(\x/2)} node[left] {$V_\ell^{[2]}(x)$};
				\addplot[domain=1.05:2.95,color=red,dashed] {V(\x/2)+bump(\x - 2)};
				\addplot[domain=-8:1.05,color=red,dashed] {V(\x/2)};
				\addplot[domain=2.95:8,color=red,dashed] {V(\x/2)} node[below] {$V_\ell(x)$};
			\end{axis}
		\end{tikzpicture}
		\caption{Graph of a function $V$ satisfying the assumptions (black, solid line) and the potential $V_\ell = V + \Sigma_\ell$ (red, dashed line) for the simple well problem and $V_\ell^{[2]}$ the Taylor approximation of $V_\ell$ to order $2$ at $x_\ell$.}
		\label{fig:potential}
	\end{figure}

\subsubsection{Agmon distance and exponential localization}
In what follows, we fix $A > 0$ such that $[-A,A]$ contains $[x_\ell,x_r]$, and let $V_{\ell,A}$ be a smooth compactly supported function such that $0 \leq V_{\ell,A} \leq V_\ell$ and $V_{\ell,A}$ coincides with $V_{\ell}$ on the interval $[-A,A]$. Given $\varepsilon \in (0,1)$, define the {\em Agmon distance}
\begin{equation}
	\label{ex:Phieps}\Phi_\varepsilon(x):=  \sqrt{1-\varepsilon}\left|\int_{x_\ell}^x\cos(\alpha/2)\sqrt{V_{\ell,A}(s)}\,ds\right|.
\end{equation}
Note that thanks to the compact support of $V_{\ell,A}$, $\Phi_\varepsilon$ is bounded on $\R$. Furthermore, letting 
\begin{equation}
\label{eq:defSgamma}
S_-(\gamma) := \int_{x_\ell-\gamma}^{x_\ell} \cos(\alpha/2) \sqrt{V_{\ell,A}(s)}\,ds\,, \quad S_+(\gamma) := \int_{x_\ell}^{x_\ell + \gamma} \cos(\alpha/2) \sqrt{V_{\ell,A}(s)}\,ds,
\end{equation}
we have the following decay properties.
\begin{lemma}[Decay of the Agmon distance]\
\label{lem:decayPhi}
\begin{itemize}
\item[(i)] Given $\varepsilon > 0$, there exists $\gamma_0 > 0$ such that, for any $N \in \N$, there exists $C > 0$ such that
$$\|\mathbf{1}_{(x_\ell-\gamma_0,x_\ell+\gamma_0)}(x-x_\ell)^N e^{-{\Phi_\varepsilon}/h} \|_{L^\infty} \leq C h^{N/2} \qquad \textup{for all } h > 0.$$
\item[(ii)] Given $\delta > 0$, there exists $\varepsilon > 0$ such that, for all $\gamma$, $N_0 > 0$, and any $\widetilde{\chi}_\ell^{\pm} \in C^\infty(\R,[0,1])$ satisfying $$\supp (\widetilde{\chi}_\ell^-) \subset (-\infty,x_\ell - \gamma]\,, \quad \supp (\widetilde{\chi}_\ell^+) \subset [x_\ell + \gamma,+\infty)\,,$$
there exists $h_0$, $C > 0$ such that
$$\max_{0 \leq N \leq N_0} \|\widetilde{\chi}_{\ell}^{\pm}(x-x_\ell)^{N}e^{-\Phi_\varepsilon/h}\|_{W^{1,\infty}} \leq C e^{-(S_\pm(\gamma) - \delta)/h} \qquad \textup{for all } h \in (0,h_0)\,.$$
\end{itemize}
\end{lemma}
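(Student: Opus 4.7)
The plan is to handle the two parts independently from the explicit form of $\Phi_\varepsilon$. For (i), the key is the quadratic vanishing of $V_{\ell,A}$ at $x_\ell$: since $V_{\ell,A}$ coincides with $V$ near $x_\ell$ (because $\Sigma_\ell$ vanishes on a neighborhood of $x_\ell$ and $V_{\ell,A} = V_\ell$ on $[-A,A]$), Taylor's theorem gives $V_{\ell,A}(s) = \tfrac{V''(x_\ell)}{2}(s-x_\ell)^2 + O((s-x_\ell)^3)$, hence $\sqrt{V_{\ell,A}(s)} = \sqrt{V''(x_\ell)/2}\,|s-x_\ell|(1+O(|s-x_\ell|))$. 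Integrating between $x_\ell$ and $x$ shows that
\begin{equation*}
\Phi_\varepsilon(x) = \sqrt{1-\varepsilon}\cos(\alpha/2)\sqrt{V''(x_\ell)/2}\,\frac{(x-x_\ell)^2}{2}(1+O(|x-x_\ell|)),
\end{equation*}
so there exist $\gamma_0>0$ small and $c_\varepsilon>0$ with $\Phi_\varepsilon(x) \geq c_\varepsilon(x-x_\ell)^2$ on $(x_\ell-\gamma_0,x_\ell+\gamma_0)$. The claimed $h^{N/2}$ bound then follows from the rescaling $t = (x-x_\ell)/\sqrt{h}$, which reduces the supremum to $h^{N/2}\sup_{t\in\R}|t|^N e^{-c_\varepsilon t^2}$, a finite constant.

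For (ii), I would combine the monotonicity of the integral defining $\Phi_\varepsilon$ (which follows from $\sqrt{V_{\ell,A}} \geq 0$) with the fact that, because $V_{\ell,A}$ is compactly supported, the quantity $S^{\max}_\pm := \int_{x_\ell}^{\pm\infty}\cos(\alpha/2)\sqrt{V_{\ell,A}(s)}\,ds$ is finite and bounds $S_\pm(\gamma)$ uniformly in $\gamma$. Given $\delta>0$, I would then choose $\varepsilon=\varepsilon(\delta)>0$ small enough that $(1-\sqrt{1-\varepsilon})S^{\max}_\pm \leq \delta/2$; this uniform-in-$\gamma$ choice is the one genuine subtlety of the proof, and it is precisely where compact support of $V_{\ell,A}$ is used. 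For $x \in \supp(\widetilde{\chi}_\ell^+) \subset [x_\ell+\gamma,+\infty)$, monotonicity yields
\begin{equation*}
\Phi_\varepsilon(x) \geq \sqrt{1-\varepsilon}\,S_+(\gamma) \geq S_+(\gamma) - \delta/2,
\end{equation*}
and symmetrically on $\supp(\widetilde{\chi}_\ell^-)$. Hence $e^{-\Phi_\varepsilon(x)/h} \leq e^{-(S_\pm(\gamma) - \delta/2)/h}$ on $\supp(\widetilde{\chi}_\ell^\pm)$.

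To promote this $L^\infty$ estimate to a $W^{1,\infty}$ estimate incorporating the polynomial prefactor $(x-x_\ell)^N$ for $N \leq N_0$, I would differentiate $\widetilde{\chi}_\ell^\pm(x)(x-x_\ell)^N e^{-\Phi_\varepsilon(x)/h}$ and control the three resulting terms using that $\widetilde{\chi}_\ell^\pm$ has bounded derivatives, $\Phi_\varepsilon'$ is bounded (by $\sqrt{1-\varepsilon}\cos(\alpha/2)\sup\sqrt{V_{\ell,A}}$), and the relevant polynomial factors are controlled on $\supp(\widetilde{\chi}_\ell^\pm)$. The only awkward term is the $h^{-1}$ produced by differentiating $e^{-\Phi_\varepsilon/h}$; it is absorbed into the exponential via the elementary bound $h^{-1}e^{-\delta/(2h)} \leq C(\delta,h_0)$ for $h \in (0,h_0)$, which is precisely what turns the exponent $-(S_\pm(\gamma)-\delta/2)/h$ into the announced $-(S_\pm(\gamma)-\delta)/h$. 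All the remaining steps are direct calculation.
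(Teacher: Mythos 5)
The paper presents this lemma without a proof (it is labelled an ``immediate'' consequence of the definition of $\Phi_\varepsilon$), so there is no official argument to compare against. Your part (i) is correct: the quadratic lower bound $\Phi_\varepsilon(x)\geq c_\varepsilon (x-x_\ell)^2$ near $x_\ell$ (from the nondegenerate vanishing of $V$ at $x_\ell$ and the fact that $V_{\ell,A}=V$ there) together with the scaling $t=(x-x_\ell)/\sqrt h$ is exactly the right mechanism. In part (ii), the central idea --- set $S_\pm^{\max}=\int_{x_\ell}^{\pm\infty}\cos(\alpha/2)\sqrt{V_{\ell,A}}$, choose $\varepsilon$ so that $(1-\sqrt{1-\varepsilon})S_\pm^{\max}\leq\delta/2$ uniformly in $\gamma$, and use the monotonicity of the integral to get $\Phi_\varepsilon\geq S_\pm(\gamma)-\delta/2$ on $\supp(\widetilde\chi_\ell^\pm)$, then absorb the $h^{-1}$ from $\partial_x e^{-\Phi_\varepsilon/h}$ by giving up another $\delta/2$ --- is sound, and correctly identifies the one non-obvious point (the uniform-in-$\gamma$ choice of $\varepsilon$).

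However, your closing sentence, ``the relevant polynomial factors are controlled on $\supp(\widetilde\chi_\ell^\pm)$,'' is not true as stated and papers over a real issue. Because $V_{\ell,A}$ is compactly supported, $\Phi_\varepsilon$ is \emph{constant} outside a compact interval, hence $e^{-\Phi_\varepsilon(x)/h}$ is bounded below for large $|x|$; meanwhile $\supp(\widetilde\chi_\ell^\pm)$ is only contained in a half-line, and in the paper's own applications (e.g.\ $\widetilde\chi_{\ell,\gamma_0}=1-\chi_{\ell,\gamma_0}$ in Corollary~\ref{cor:quasimodes}) $\widetilde\chi_\ell^\pm\equiv 1$ at infinity. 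Thus for any $N\geq 1$ the function $\widetilde\chi_\ell^\pm(x)(x-x_\ell)^N e^{-\Phi_\varepsilon(x)/h}$ grows like $|x|^N$ at infinity and its $W^{1,\infty}(\R)$ norm is infinite. This is in fact a defect of the lemma as stated --- it is tacitly understood that $\widetilde\chi_\ell^\pm$ is compactly supported, or that the norm is taken over a fixed bounded set, which is all that is needed downstream --- but since the polynomial factor at infinity is precisely the step you wave away, you should either add the compact-support hypothesis explicitly or restrict the $W^{1,\infty}$ estimate to a bounded neighbourhood of $[x_\ell,x_r]$; the rest of your argument then goes through.
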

\rev{The proof, which we omit, relies on the fact that the weight $e^{-\Phi_\varepsilon/h}$ is Gaussian near $x_\ell$ and localized near $x_\ell$ at scale $O(\sqrt{h})$, and that $\Phi_\varepsilon \geq S_+(\gamma)$ on the support of $\widetilde{\chi}_+$.}

Let $\chi_0 \in C^\infty_c(\R)$, $0 \leq \chi_0 \leq 1$, be such that $\chi_0 \equiv 1$ on $[-1,1]$, $\supp \chi_0 \subset [-2,2]$, and for $L,\beta > 0$, let  
$$\chi_{L,h,\beta}(x) := \chi_0\left(\frac{x-x_\ell}{Lh^\beta}\right).$$
When $\beta = \frac12$, we simply write $\chi_{L,h} := \chi_{L,h,\frac12}$. We now show that $\Phi_\varepsilon$ is an $(Mh)$-subsolution associated to $\chi_{L,h}$ for the potential $V_\ell$ (in the sense of Definition \ref{def:M_solution}). Let us start by recording the following two lemmas, which will be also useful later on (the case $\beta \neq \frac12$ in Lemma \ref{lem:fatigue} is used in Lemma \ref{lem:elliptic_far} below). Their proof are elementary \rev{(using that $V_\ell$ and $\Phi$ behave quadratically near $x_\ell$)}. 
\begin{lemma}\label{lem:fatigue}
	For any positive constant $M > 0$, there exists $L > 0$ such that, for all $\beta > 0$, there exists $h_0 > 0$ such that
	$$V_\ell(x) \geq M h^{2\beta}(1 - \chi_{L,h,\beta}^2(x))\,, \quad x \in \R\,,\,\, h \in (0,h_0).$$
\end{lemma}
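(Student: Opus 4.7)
First I would note that since $\chi_0 \equiv 1$ on $[-1,1]$, the cutoff $\chi_{L,h,\beta}$ is identically $1$ on $[x_\ell - Lh^\beta, x_\ell + Lh^\beta]$, so the right-hand side $Mh^{2\beta}(1-\chi_{L,h,\beta}^2(x))$ vanishes there and the inequality is trivial. It thus suffices to establish
$$V_\ell(x) \geq Mh^{2\beta}\qquad\text{for all }x\in\R\text{ with }|x-x_\ell|\geq Lh^\beta.$$
The plan is to split this range into a \emph{close} part $Lh^\beta \leq |x-x_\ell|\leq \gamma_0$ and a \emph{far} part $|x-x_\ell|\geq \gamma_0$, for a fixed $\gamma_0>0$ to be chosen small enough that a Taylor expansion of $V_\ell$ around $x_\ell$ is effective.

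On the close part I would exploit the non-degeneracy of the minimum at $x_\ell$: since $V_\ell(x_\ell)=V_\ell'(x_\ell)=0$ and $V_\ell''(x_\ell)=V''(x_\ell)>0$, a second-order Taylor expansion furnishes $\gamma_0>0$ and $c>0$ such that $V_\ell(x)\geq c(x-x_\ell)^2$ whenever $|x-x_\ell|\leq\gamma_0$. I would then set $L:=\sqrt{M/c}$, which depends only on $M$, and obtain $V_\ell(x)\geq cL^2h^{2\beta}=Mh^{2\beta}$ on the close range (provided $h_0$ is small enough that $Lh_0^\beta\leq\gamma_0$, a condition involving $\beta$).

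On the far part I would argue by compactness. The function $V_\ell$ is continuous, and $x_\ell$ is its only zero: indeed $V_\ell=V+\Sigma_\ell$, $V$ vanishes only at $x_\ell$ and $x_r$, while $\Sigma_\ell(x_r)>0$ forces $V_\ell(x_r)>0$. Combined with $\liminf_{|x|\to\infty}V_\ell\geq\liminf_{|x|\to\infty}V>0$, this gives
$$c_0 := \inf_{|x-x_\ell|\geq\gamma_0} V_\ell(x) > 0.$$
Shrinking $h_0$ further so that $Mh_0^{2\beta}\leq c_0$ concludes the proof. The order of quantifiers is respected since $L$ depends only on $M$, while $h_0$ is allowed to depend on $M$, $L$, $\gamma_0$, $c_0$, and $\beta$. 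I do not anticipate any serious obstacle: the whole argument is a Taylor expansion plus a compactness/continuity step.
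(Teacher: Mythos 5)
Your proof is correct, and it is the natural elementary argument the paper has in mind (the paper omits the proof, noting only that Lemmas \ref{lem:fatigue} and \ref{lem:chih_subsolution} are ``elementary''). The quantifier order is respected: $L=\sqrt{M/c}$ depends only on $M$ and the fixed data, while $h_0$ is allowed to depend on $\beta$ as well, so letting $h_0$ satisfy both $Lh_0^\beta\leq\gamma_0$ and $Mh_0^{2\beta}\leq c_0$ is legitimate.
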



\begin{lemma}
	\label{lem:chih_subsolution}
	Let $M > 0$, let $L > 0$ and $h_0 > 0$. Then there exists $C > 0$ such that for all $h \in (0,h_0)$, if $\Phi$ is an $(Mh)$-subsolution associated to $\chi_{L,h}$ for the potential $V_\ell$, and if in addition, $\Phi(x_\ell) = 0$, then
	$$\|\chi_{L,h} e^{\Phi/h}\|_{\infty} \leq C.$$ 
\end{lemma}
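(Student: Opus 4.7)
The plan is to exploit the two ingredients in the hypothesis: the pointwise bound on $\Phi'$ coming from the subsolution inequality, and the normalization $\Phi(x_\ell)=0$, which turns that bound into an $L^\infty$ bound on $\Phi$ itself over the support of $\chi_{L,h}$.

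First I would drop the non-positive term $-Mh(1-\chi_{L,h}^2)$ from the subsolution inequality to obtain the crude but uniform pointwise estimate
\[
|\Phi'(x)| \leq \cos(\alpha/2)\sqrt{V_\ell(x)}\qquad \text{for all } x \in \R .
\]
Next, since $V_\ell$ has a non-degenerate zero at $x_\ell$ (by construction $V_\ell = V + \Sigma_\ell$ with $\Sigma_\ell$ supported away from $x_\ell$, and $V$ vanishes quadratically at $x_\ell$), a Taylor expansion gives a constant $C_0>0$ and a neighborhood $\mathcal{U}$ of $x_\ell$ such that $V_\ell(x) \leq C_0 (x-x_\ell)^2$ on $\mathcal{U}$. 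For $h$ small enough (depending only on $L$ and $h_0$), the support of $\chi_{L,h}$, which is contained in $[x_\ell - 2Lh^{1/2},\,x_\ell + 2Lh^{1/2}]$, lies inside $\mathcal{U}$. On this set we therefore get
\[
|\Phi'(x)| \leq \cos(\alpha/2)\sqrt{C_0}\,|x-x_\ell| \leq 2\cos(\alpha/2)\sqrt{C_0}\,L\,h^{1/2}.
\]

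Then I would use $\Phi(x_\ell)=0$ to integrate: for every $x \in \supp \chi_{L,h}$,
\[
\Phi(x) = \int_{x_\ell}^x \Phi'(s)\,ds,
\]
so $|\Phi(x)| \leq \|\Phi'\|_{L^\infty(\supp\chi_{L,h})} \cdot |x-x_\ell| \leq 4\cos(\alpha/2)\sqrt{C_0}\,L^2\,h$. Dividing by $h$ eliminates the $h$-dependence entirely and gives $|\Phi(x)/h| \leq C_1$ on $\supp \chi_{L,h}$, hence $\chi_{L,h}(x)e^{\Phi(x)/h}$ is bounded by $e^{C_1}$ pointwise; since $\chi_{L,h}$ vanishes elsewhere, this is the desired uniform $L^\infty$ bound.

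I do not expect any genuine obstacle here: the scaling $|x-x_\ell| \sim h^{1/2}$ of the support of $\chi_{L,h}$ is precisely calibrated so that a quadratic potential produces $|\Phi'|\sim h^{1/2}$, and integrating over a window of length $h^{1/2}$ produces $\Phi \sim h$, which is exactly the size needed so that $\Phi/h$ stays bounded. The only care needed is in choosing $h_0$ small enough so that $\supp \chi_{L,h}$ remains inside the neighborhood $\mathcal{U}$ on which $V_\ell$ admits the quadratic upper bound; this is automatic once the quadratic bound constant is fixed in terms of $V''(x_\ell)$ and a safety margin.
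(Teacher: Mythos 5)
Your proof is correct and is almost certainly what the authors had in mind: the paper states only that the proof of this lemma (together with Lemma \ref{lem:fatigue}) is ``elementary'' and omits it. The mechanism you identify is exactly right: dropping the non-positive term from the subsolution inequality gives $|\Phi'|\leq\cos(\alpha/2)\sqrt{V_\ell}$; the quadratic vanishing of $V_\ell$ at $x_\ell$ gives $|\Phi'|\lesssim|x-x_\ell|$; on the $h^{1/2}$-sized support of $\chi_{L,h}$ this is $O(h^{1/2})$; and one more integration of length $O(h^{1/2})$, anchored at $\Phi(x_\ell)=0$, yields $\Phi=O(h)$, so $\Phi/h$ is bounded on $\supp\chi_{L,h}$ and the claim follows.

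One small point of phrasing rather than substance: you cannot ``choose $h_0$ small enough,'' since $h_0$ is given in the statement. The clean fix, which your ``safety margin'' remark already hints at, is to observe that $V_\ell$ is smooth, $V_\ell>0$ away from $x_\ell$, and $V_\ell(x)/(x-x_\ell)^2 \to V''(x_\ell)/2$ as $x\to x_\ell$; hence $V_\ell(x)\leq C_0(x-x_\ell)^2$ holds on any prescribed compact interval, in particular on $[x_\ell-2Lh_0^{1/2},\,x_\ell+2Lh_0^{1/2}]$, which contains $\supp\chi_{L,h}$ for every $h\in(0,h_0)$. With $C_0$ chosen that way the rest of your argument applies uniformly over the whole range of $h$, and the resulting constant $C=\exp\bigl(4\cos(\alpha/2)\sqrt{C_0}L^2\bigr)$ depends only on $L$ and $h_0$, as required.
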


\begin{lemma}[The Agmon distance is a subsolution]
	\label{lem:Phieps_Msubsolution}
	For all $\varepsilon \in (0,1)$ and $M > 0$, there exists $L > 0$ and $h_0 > 0$ such that for all $h \in (0,h_0)$, the function $\Phi_\varepsilon$ defined by \eqref{ex:Phieps} is an $(Mh)$-subsolution associated to $\chi_{L,h}$ for the potential $V_\ell$ in the sense of Definition \ref{def:M_solution}.
\end{lemma}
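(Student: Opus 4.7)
\textbf{Proof plan for Lemma \ref{lem:Phieps_Msubsolution}.}

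The plan is to reduce the desired subsolution inequality to an estimate of the form $V_\ell \gtrsim h(1-\chi_{L,h}^2)$, which is precisely what Lemma \ref{lem:fatigue} provides. First, I would differentiate the definition \eqref{ex:Phieps} of $\Phi_\varepsilon$ to get the pointwise identity
\begin{equation*}
\Phi_\varepsilon'(x)^2 = (1-\varepsilon)\cos^2(\alpha/2)\, V_{\ell,A}(x), \qquad x \in \R.
\end{equation*}
Thus, unwrapping Definition \ref{def:M_solution}, showing that $\Phi_\varepsilon$ is an $(Mh)$-subsolution associated to $\chi_{L,h}$ for the potential $V_\ell$ amounts to establishing
\begin{equation*}
\cos^2(\alpha/2)\big(V_\ell(x) - (1-\varepsilon) V_{\ell,A}(x)\big) \geq Mh\big(1 - \chi_{L,h}(x)^2\big), \qquad x \in \R.
\end{equation*}

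Next, I would exploit the fact that $0 \leq V_{\ell,A} \leq V_\ell$ by construction, which gives the uniform lower bound
\begin{equation*}
V_\ell(x) - (1-\varepsilon) V_{\ell,A}(x) \geq V_\ell(x) - (1-\varepsilon) V_\ell(x) = \varepsilon V_\ell(x).
\end{equation*}
Consequently it suffices to prove that, for $L$ large enough and $h$ small enough,
\begin{equation*}
\varepsilon \cos^2(\alpha/2)\, V_\ell(x) \geq Mh\big(1 - \chi_{L,h}(x)^2\big), \qquad x \in \R.
\end{equation*}

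Finally, I would apply Lemma \ref{lem:fatigue} with $\beta = 1/2$ and with the constant $M' := M/(\varepsilon \cos^2(\alpha/2)) > 0$ in place of $M$: this produces $L > 0$ and $h_0 > 0$ such that
\begin{equation*}
V_\ell(x) \geq M' h \big(1 - \chi_{L,h}(x)^2\big), \qquad x \in \R,\ h \in (0,h_0),
\end{equation*}
which is exactly the required inequality after multiplying by $\varepsilon \cos^2(\alpha/2)$. There is no genuine obstacle in this argument; the only point to watch is that the $(1-\varepsilon)$ prefactor in $\Phi_\varepsilon$ produces precisely the $\varepsilon V_\ell$ reservoir of ellipticity that we spend to absorb the $Mh(1-\chi_{L,h}^2)$ term via Lemma \ref{lem:fatigue}, so the entire point of the parameter $\varepsilon$ is to leave this margin.
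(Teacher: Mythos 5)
Your proof is correct and follows the same approach as the paper: compute $\Phi_\varepsilon'^2$, use $V_{\ell,A}\leq V_\ell$ to extract the margin $\varepsilon\cos^2(\alpha/2)V_\ell$, and absorb the $Mh(1-\chi_{L,h}^2)$ term by applying Lemma \ref{lem:fatigue} with $\beta=\tfrac12$ and the rescaled constant $M'=M/(\varepsilon\cos^2(\alpha/2))$. The paper states this more tersely, but the argument is identical.
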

\begin{proof}
	Since $V_{\ell,A} \leq V_\ell$, one has
	$$\Phi_\varepsilon'^2 - \cos(\alpha/2)^2 V_\ell \leq -\varepsilon \cos(\alpha/2)^2 V_\ell,$$
	and the conclusion follows from Lemma \ref{lem:fatigue} applied with $\beta = \frac12$. 
\end{proof}


\begin{corollary}[Agmon estimate]
	\label{cor:Agmon1}
	Let $R > 0$, $\varepsilon > 0$, and let $\Phi_\varepsilon$ be defined by \eqref{ex:Phieps}. Then, there exists $C > 0$ and $h_0 > 0$ such that the estimate
	\begin{equation}
		\label{eq:LocExp}
		\|e^{\Phi_\varepsilon/h}\psi\|_{H^2_h}\leq C\|\psi\|_{L^2}.
	\end{equation}
	holds for all $h \in (0,h_0)$ and $\psi \in H^2(\R)$ satisfying $(\mathscr{L}_\ell(h) - \mu)\psi = 0$ with $\mu \in D(0,Rh)$. 
\end{corollary}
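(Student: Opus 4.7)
The plan is to apply the elliptic estimate of Proposition~\ref{prop:estimee_elliptique_conj} to the conjugated operator $P^{\Phi_\varepsilon}(h)$ associated with $P(h) = \mathscr{L}_\ell(h)$ (so $U = V_\ell$), acting on $w := e^{\Phi_\varepsilon/h}\psi$. First, I would observe that $w$ actually lies in $H^2(\R)$ so that the hypotheses of Proposition~\ref{prop:estimee_elliptique_conj} are met: thanks to the compact support of $V_{\ell,A}$, the weight $\Phi_\varepsilon$ is bounded on $\R$, hence multiplication by $e^{\pm\Phi_\varepsilon/h}$ preserves $H^2(\R)$ for each fixed $h > 0$.

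Second, given $R > 0$, fix any $M > 2R$; by Lemma~\ref{lem:Phieps_Msubsolution}, one can choose $L > 0$ and $h_0 > 0$ such that, for every $h \in (0,h_0)$, $\Phi_\varepsilon$ is an $(Mh)$-subsolution associated to the cutoff $\chi_{L,h}$ for the potential $V_\ell$. Third, a straightforward conjugation computation using $(\mathscr{L}_\ell(h) - \mu)\psi = 0$ yields $(P^{\Phi_\varepsilon}(h) - \mu)w = 0$, so the elliptic estimate \eqref{eq.elliptic_estimate}, applied with $z = \mu \in D(0,Rh)$, reduces to
$$\|w\|_{H^2_h} \leq C(R,h_0)\,\|\chi_{L,h} w\|_{L^2}.$$

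To finish, I would bound the right-hand side using Lemma~\ref{lem:chih_subsolution}. Since $\Phi_\varepsilon(x_\ell) = 0$ by definition \eqref{ex:Phieps}, Lemma~\ref{lem:chih_subsolution} furnishes a constant $C > 0$ such that $\|\chi_{L,h} e^{\Phi_\varepsilon/h}\|_\infty \leq C$ for all $h \in (0,h_0)$; hence
$$\|\chi_{L,h} w\|_{L^2} = \|\chi_{L,h} e^{\Phi_\varepsilon/h}\psi\|_{L^2} \leq C\|\psi\|_{L^2},$$
and combining these inequalities gives \eqref{eq:LocExp}.

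Honestly, I do not expect any single step of this argument to constitute a real obstacle: the corollary is essentially an assembly of previously established ingredients. The conceptual heart of the work has already been carried out in Proposition~\ref{prop:estimee_elliptique_conj}, whose proof exploits the key positivity identity \eqref{eq.key}; what remains is only to verify that $\Phi_\varepsilon$ qualifies as a subsolution (Lemma~\ref{lem:Phieps_Msubsolution}) and that the residual cutoff term $\|\chi_{L,h} w\|$ can be absorbed in the $L^2$-norm of $\psi$ via Lemma~\ref{lem:chih_subsolution}. The mild point to watch is the compatibility of constants: one must pick $M > 2R$ \emph{before} invoking Lemma~\ref{lem:Phieps_Msubsolution} to produce $L$, and all subsequent constants $C, h_0$ depend on this choice but not on $\psi$ or on $\mu \in D(0,Rh)$, which is precisely the uniformity required by the statement.
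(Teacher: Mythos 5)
Your proposal is correct and follows exactly the same route as the paper's proof: invoke Lemma~\ref{lem:Phieps_Msubsolution} (with $M>2R$ fixed first) to certify that $\Phi_\varepsilon$ is an $(Mh)$-subsolution, apply Proposition~\ref{prop:estimee_elliptique_conj} to $u=e^{\Phi_\varepsilon/h}\psi$ noting that the conjugated term vanishes, and absorb the remaining cutoff term via Lemma~\ref{lem:chih_subsolution}. (Incidentally, the paper's proof contains a self-referential typo, citing ``Corollary~\ref{cor:Agmon1}'' where it means Lemma~\ref{lem:Phieps_Msubsolution}, which you correctly identified as the lemma to use.)
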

\begin{proof}
	Let $M > 2R$. By \rev{Lemma \ref{lem:Phieps_Msubsolution}}, we can choose $L,h_0 > 0$ such that $\Phi_\varepsilon$ is an $(Mh)$-subsolution associated to $\chi_{L,h}$ for all $h \in (0,h_0)$. Applying Proposition \ref{prop:estimee_elliptique_conj} with $\Phi := \Phi_\varepsilon$, $z := \mu$ and $u := e^{\Phi_\varepsilon/h}\psi$ (which indeed belongs to $H^2$ since $\Phi_\varepsilon$ is bounded on $\R$) we obtain
	\begin{align*}
		\|e^{\Phi_\varepsilon/h} \psi \|_{H^2_h} &\leq Ch^{-1} \|(\mathscr{L}^{\Phi_\varepsilon}(h) - \mu) e^{\Phi_\varepsilon/h} \psi \| + C\|\chi_{L,h} e^{\Phi_\varepsilon/h}\psi\|\,,
	\end{align*}
	where $\mathscr{L}^{\Phi_\varepsilon}(h) := e^{\Phi_\varepsilon/h} \mathscr{L}(h) e^{-\Phi_\varepsilon/h}$. Noticing that $(\mathscr{L}_\ell^{\Phi_\varepsilon}(h) -\mu) e^{\Phi_\varepsilon/h} \psi =  e^{\Phi_\varepsilon/h} (\mathscr{L}_\ell(h) - \mu)\psi = 0$, and using Lemma \ref{lem:chih_subsolution}, the result follows. 
\end{proof}
\begin{corollary}[Exponential localization of eigenfunctions near $x_\ell$]
		\label{cor:small_far}
		For every $R > 0$ \rev{and $N > 0$}, the following properties hold.  
		\begin{itemize}
		\item[(i)] There exists $h_0$, $\gamma_0$, $C > 0$ such that the inequality 
		$$\|\mathbf{1}_{(x_\ell - \gamma_0,x_\ell + \gamma_0)} (x-x_\ell)^{N} \psi\|_{L^2} \leq C h^{N/2} \|\psi\|_{L^2}$$
		holds for all $h \in (0,h_0)$ and $\psi \in H^2(\R)$ satisfying $(\mathscr{L}_\ell(h) - \mu) \psi = 0$ with $\mu\in D(0,Rh)$. 
		\item[(ii)] Given $\gamma > 0, N_0 > 0, \delta > 0$ and $\widetilde{\chi}_\ell^\pm \in C^\infty(\R,[0,1])$ such that
		$$\supp (\widetilde{\chi}_\ell^-) \subset (-\infty,x_\ell - \gamma]\,, \quad \supp (\widetilde{\chi}_\ell^+) \subset [x_\ell + \gamma,+\infty)\,,$$
		there exists $h_0 > 0$ and $C > 0$ such that the inequality
		\begin{equation}\label{eq:loc vecteur propre}
			\max_{0 \leq N \leq N_0}\left\|\widetilde{\chi}_\ell^{\pm}(x-x_\ell)^N\psi\right\|_{H^1_h}\leq Ce^{-(S_\pm(\gamma)-\delta)/h}\|\psi\|_{L^2},
		\end{equation}
		with $S_\pm(\gamma)$ defined by \eqref{eq:defSgamma}, 
		holds for all $h \in (0,h_0)$ and all $\psi \in L^2(\mathbb{R})$ satisfying $(\mathscr{L}_\ell(h) - \mu)\psi = 0$ with $\mu \in D(0,Rh)$. 
		\end{itemize}

	\end{corollary}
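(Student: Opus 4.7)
The plan is to exploit the two ingredients already in place: the Agmon estimate of Corollary \ref{cor:Agmon1}, which says that $e^{\Phi_\varepsilon/h}\psi$ is controlled in $H^2_h$ by $\|\psi\|_{L^2}$, and the pointwise bounds of Lemma \ref{lem:decayPhi} on the weights $(x-x_\ell)^N e^{-\Phi_\varepsilon/h}$ (localized either near $x_\ell$ or away from it). Each statement will follow by factorizing
$$\widetilde{\chi}(x-x_\ell)^N \psi = \bigl[\widetilde{\chi}(x-x_\ell)^N e^{-\Phi_\varepsilon/h}\bigr]\cdot \bigl[e^{\Phi_\varepsilon/h}\psi\bigr]$$
and estimating the first factor in $L^\infty$ (or $W^{1,\infty}$) and the second in $L^2$ (or $H^1_h$).

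For (i), fix any $\varepsilon \in (0,1)$. Lemma \ref{lem:decayPhi}(i) delivers $\gamma_0 > 0$ such that the $L^\infty$ norm of $\mathbf{1}_{(x_\ell-\gamma_0,x_\ell+\gamma_0)}(x-x_\ell)^N e^{-\Phi_\varepsilon/h}$ is $\mathscr{O}(h^{N/2})$. Multiplying by $e^{\Phi_\varepsilon/h}\psi$ and taking the $L^2$ norm, then invoking Corollary \ref{cor:Agmon1} to bound $\|e^{\Phi_\varepsilon/h}\psi\|_{L^2}\leq C\|\psi\|_{L^2}$, gives (i).

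For (ii), given $\delta>0$, first use Lemma \ref{lem:decayPhi}(ii) to select $\varepsilon>0$ so that, for the prescribed cutoffs $\widetilde{\chi}_\ell^\pm$, the function $f_N:=\widetilde{\chi}_\ell^\pm(x-x_\ell)^N e^{-\Phi_\varepsilon/h}$ satisfies $\|f_N\|_{W^{1,\infty}} \leq C e^{-(S_\pm(\gamma)-\delta)/h}$ uniformly in $0\leq N\leq N_0$ and $h\in(0,h_0)$. Writing $\widetilde{\chi}_\ell^\pm(x-x_\ell)^N \psi = f_N \cdot g$ with $g:=e^{\Phi_\varepsilon/h}\psi$, the $L^2$ part of the $H^1_h$ norm is handled exactly as in (i). For the $hD_x$ part, use the Leibniz rule
$$hD_x(f_N\,g) = (hD_x f_N)\,g + f_N\,(hD_x g),$$
and bound each term in $L^2$: the first by $\|hD_x f_N\|_{L^\infty}\,\|g\|_{L^2}\leq Ch\,e^{-(S_\pm(\gamma)-\delta)/h}\|\psi\|_{L^2}$ using the $W^{1,\infty}$ control on $f_N$, and the second by $\|f_N\|_{L^\infty}\,\|hD_xg\|_{L^2}\leq Ce^{-(S_\pm(\gamma)-\delta)/h}\|\psi\|_{L^2}$, where $\|hD_xg\|_{L^2}\leq \|g\|_{H^2_h}$ is controlled via Corollary \ref{cor:Agmon1}.

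I do not anticipate a genuine obstacle: the decay input has been packaged into Lemma \ref{lem:decayPhi} and the only operator-theoretic content (the bound on the weighted eigenfunction) into Corollary \ref{cor:Agmon1}. The only mild subtlety is bookkeeping: in (ii) one must choose $\varepsilon$ \emph{in terms of $\delta$} (and only then fix the resulting $h_0$, $L$ from Lemma \ref{lem:Phieps_Msubsolution}), so that the loss in the exponent from replacing the ``true'' Agmon weight $|\int_{x_\ell}^{\cdot}\cos(\alpha/2)\sqrt{V_{\ell,A}}|$ by its $\sqrt{1-\varepsilon}$-multiple is swallowed by the slack $\delta$; the choice has to be made uniformly in $0\leq N\leq N_0$, but since $N$ only contributes a polynomial factor this is harmless.
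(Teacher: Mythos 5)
Your proof is correct and follows exactly the paper's argument: factor $\widetilde{\chi}(x-x_\ell)^N\psi$ as the product of the pointwise weight $(x-x_\ell)^N e^{-\Phi_\varepsilon/h}$ (controlled by Lemma~\ref{lem:decayPhi}) and the exponentially weighted eigenfunction $e^{\Phi_\varepsilon/h}\psi$ (controlled in $H^2_h$ by Corollary~\ref{cor:Agmon1}), estimate the first in $L^\infty$ or $W^{1,\infty}$ and the second in $L^2$ or $H^1_h$. The paper states the $H^1_h$ estimate directly via the inequality $\|\widetilde{\chi}_\ell^\pm(x-x_\ell)^N\psi\|_{H^1_h}\leq C\|\widetilde{\chi}_\ell^\pm(x-x_\ell)^Ne^{-\Phi_\varepsilon/h}\|_{W^{1,\infty}}\|e^{\Phi_\varepsilon/h}\psi\|_{H^1_h}$, and you simply unpack this with the Leibniz rule; the order-of-quantifiers point (choosing $\varepsilon$ depending on $\delta$) that you flag is indeed the one subtlety and is handled identically.
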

	
	\begin{proof}
		Let $\varepsilon > 0$ be small enough and let $\Phi_\varepsilon$ be defined as in \eqref{ex:Phieps} where $A$ is chosen large enough. Then
		$$
		\|\mathbf{1}_{(x_\ell-\gamma_0,x_\ell+\gamma_0)} (x-x_\ell)^N \psi\|_{L^2} \leq \|\mathbf{1}_{(x_\ell-\gamma_0,x_\ell+\gamma_0)} (x-x_\ell)^N e^{-\Phi_\varepsilon/h}\|_{L^\infty} \cdot \|e^{\Phi_\varepsilon/h} \psi\|_{L^2}\,,
		$$
		$$
		\left\|\widetilde{\chi}^\pm_{\ell}(x-x_\ell)^N\psi\right\|_{H^1_h}\leq C\left\|\widetilde{\chi}^+_{\ell}(x-x_\ell)^N e^{-\Phi_\varepsilon/h} \right\|_{W^{1,\infty}}\|e^{\Phi_\varepsilon/h} \psi\|_{H^1_h}
		$$
		and the conclusion follows immediately from Corollary \ref{cor:Agmon1} and Lemma \ref{lem:decayPhi}.
	\end{proof}
	
	\subsubsection{Quasimodes of the double-well operator and its quadratic approximation}
	Let $V_\ell^{[2]}$ be the quadratic approximation of $V_\ell$ at $x = x_\ell$, i.e., $V_\ell^{[2]} := \frac{V''(x_\ell)}{2}(x-x_\ell)^2$, and let
	\begin{equation}
		\label{eq:defL[2]}
		\mathscr{L}_{\ell}^{[2]}(h) := (hD_x)^2 + e^{i\alpha} V_\ell^{[2]}.
	\end{equation}
	Moreover, define
	\begin{equation}
		\label{eq:def_Seta}
		S_\eta := \int_{x_\ell}^{x_r - \eta} \cos(\alpha/2) \sqrt{V(x)}\,dx
	\end{equation}
	(Recall that $\eta$ is the parameter chosen in Section \ref{s:simple_well0} such that the support of $\Sigma_\ell = V - V_\ell$ lies in $(x_r - \eta,x_r + \eta)$, see also Figure \ref{fig:chiell} below).  
	
	\begin{corollary}[Quasimodes of $\mathscr{L}^{[2]}_\ell(h)$ and $\mathscr{L}(h)$]
		\label{cor:quasimodes}
		Let $\chi_\ell \in C^\infty(\R,[0,1])$ be such that $\chi_\ell \equiv 1$ on $(-\infty,x_r - \eta]$ and $\chi_\ell \perp \Sigma_\ell$, and let $R > 0$, $\delta > 0$. Then, there exists $C,h_0 > 0$ such that the inequalities
		\begin{equation}
			\label{eq:quasimode_exp}
			\|(\mathscr{L}(h)-\mu)\psi\| + \|(\mathscr{L}(h)-\mu)(\chi_\ell\psi)\|\leq C e^{-(S_\eta-\delta)/h}\|\psi\|\,,
		\end{equation}
		(where $\mathscr{L}(h)$ is the original, double-well operator), and
		\begin{equation}
			\label{eq:quasimode_h32}
			\|(\mathscr{L}^{[2]}_\ell(h)-\mu)\psi\| \leq C h^{3/2}\|\psi\|\,,
		\end{equation}
		hold for all $h \in (0,h_0)$ and all $\psi \in H^2(\R)$ satisfying $(\mathscr{L}_\ell(h) - \mu)\psi = 0$ with $\mu \in D(0,Rh)$. 
	\end{corollary}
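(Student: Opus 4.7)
The two bounds come from the same principle: apply the localization estimates of Corollary~\ref{cor:small_far} to $\psi$, after rewriting the left-hand sides in terms of $\mathscr{L}_\ell(h)-\mu$ applied to $\psi$ (which vanishes) plus an explicit error.

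\medskip
\emph{Exponential bound \eqref{eq:quasimode_exp}.}
First I would write $\mathscr{L}(h)=\mathscr{L}_\ell(h)-e^{i\alpha}\Sigma_\ell$, so that
$(\mathscr{L}(h)-\mu)\psi=-e^{i\alpha}\Sigma_\ell\psi$. Since $\supp \Sigma_\ell\subset(x_r-\eta,x_r+\eta)$, I pick a cutoff $\widetilde{\chi}_\ell^{+}\in C^\infty(\R,[0,1])$ that is $\equiv 1$ on $\supp \Sigma_\ell$ and supported in $[x_r-\eta-\varepsilon',\infty)$ with $\varepsilon'>0$ chosen so small that, taking $A$ large enough in the definition of $V_{\ell,A}$,
$$S_+(x_r-\eta-\varepsilon'-x_\ell)=\int_{x_\ell}^{x_r-\eta-\varepsilon'}\cos(\alpha/2)\sqrt{V(s)}\,ds>S_\eta-\delta/2.$$
Then $\|\Sigma_\ell\psi\|\leq \|\Sigma_\ell\|_\infty\|\widetilde{\chi}_\ell^+\psi\|$, and Corollary~\ref{cor:small_far}(ii) with $N=0$ gives the desired $e^{-(S_\eta-\delta)/h}\|\psi\|$ bound. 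For the second term I compute
$$(\mathscr{L}(h)-\mu)(\chi_\ell\psi)=\chi_\ell(\mathscr{L}(h)-\mu)\psi+[h^2D_x^2,\chi_\ell]\psi=-e^{i\alpha}\chi_\ell\Sigma_\ell\psi+h^2\bigl(-\chi_\ell''-2\chi_\ell'\partial_x\bigr)\psi,$$
and the first summand vanishes since $\chi_\ell\perp\Sigma_\ell$. The remaining commutator involves $\chi_\ell'$ and $\chi_\ell''$, both supported in a compact region lying strictly to the right of $x_\ell+\gamma$ for some $\gamma>0$ with $\gamma<x_r-\eta-x_\ell$; applying Corollary~\ref{cor:small_far}(ii) to $\widetilde{\chi}_\ell^+\equiv 1$ on $\supp\chi_\ell'\cup\supp\chi_\ell''$ (with $\gamma$ chosen so that $S_+(\gamma)>S_\eta-\delta$) and to $N=0$ in $H^1_h$ yields the exponential bound as well.

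\medskip
\emph{$h^{3/2}$ quasimode bound \eqref{eq:quasimode_h32}.}
I write $\mathscr{L}_\ell^{[2]}(h)=\mathscr{L}_\ell(h)+e^{i\alpha}(V_\ell^{[2]}-V_\ell)$, so
$(\mathscr{L}_\ell^{[2]}(h)-\mu)\psi=e^{i\alpha}(V_\ell^{[2]}-V_\ell)\psi$, and I have to bound $\|(V_\ell-V_\ell^{[2]})\psi\|$. Let $\gamma_0>0$ be the radius furnished by Corollary~\ref{cor:small_far}(i), and fix $\chi\in C^\infty_c(\R,[0,1])$ with $\chi\equiv 1$ on $(x_\ell-\gamma_0/2,x_\ell+\gamma_0/2)$ and $\supp\chi\subset(x_\ell-\gamma_0,x_\ell+\gamma_0)$. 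On $\supp\chi$, smoothness of $V_\ell$ at $x_\ell$ together with $V_\ell(x_\ell)=V_\ell'(x_\ell)=0$ gives $|V_\ell-V_\ell^{[2]}|\leq C|x-x_\ell|^3$, so Corollary~\ref{cor:small_far}(i) with $N=3$ yields
$$\|(V_\ell-V_\ell^{[2]})\chi\psi\|\leq C\|\mathbf{1}_{(x_\ell-\gamma_0,x_\ell+\gamma_0)}(x-x_\ell)^3\psi\|\leq Ch^{3/2}\|\psi\|.$$
On $\supp(1-\chi)$ the difference satisfies $|V_\ell-V_\ell^{[2]}|\leq C(1+(x-x_\ell)^2)$ (since $V_\ell$ is bounded), and I split $1-\chi=\widetilde{\chi}_\ell^-+\widetilde{\chi}_\ell^+$ with $\widetilde{\chi}_\ell^\pm$ supported in $\{\pm(x-x_\ell)\geq \gamma_0/2\}$. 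Then Corollary~\ref{cor:small_far}(ii) applied with $N=0,2$ gives an $\mathscr{O}(e^{-c/h})$ bound, which is absorbed in $\mathscr{O}(h^{3/2})$.

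\medskip
\emph{Main obstacle.}
There is no serious obstacle: the content is entirely bookkeeping of supports and cutoffs. The only subtlety is to tune $A$ and the cutoff parameters $\varepsilon',\gamma$ so that the exponent produced by Corollary~\ref{cor:small_far}(ii) is genuinely $S_\eta-\delta$ (using that $V_{\ell,A}=V$ on $[x_\ell,x_r-\eta]$ for $A$ large), and—for \eqref{eq:quasimode_h32}—to recognize that the cubic remainder of the Taylor expansion, paired with Corollary~\ref{cor:small_far}(i) at $N=3$, gives exactly $h^{3/2}$.
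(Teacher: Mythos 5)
Your proof is correct and follows essentially the same strategy as the paper: both arguments rewrite the left-hand sides so that $(\mathscr{L}_\ell(h)-\mu)\psi=0$ can be invoked, reducing \eqref{eq:quasimode_exp} to a commutator/$\Sigma_\ell$ term localized away from $x_\ell$ (handled by Corollary~\ref{cor:small_far}(ii) with $\gamma$ essentially $x_r-\eta-x_\ell$) and reducing \eqref{eq:quasimode_h32} to a cubic Taylor remainder near $x_\ell$ (handled by Corollary~\ref{cor:small_far}(i) with $N=3$) plus an exponentially small far-field contribution. The only differences are cosmetic (you write out the commutator coefficients $\chi_\ell',\chi_\ell''$ explicitly and keep track of the parameters $\varepsilon',\gamma$ a bit more pedantically, while the paper packages the commutator as $-[\mathscr{L}_\ell(h),\widetilde{\chi}_\ell](\underline{\widetilde{\chi}}_\ell\psi)$).
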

	
	\begin{proof}
		Let $\gamma := x_r - \eta$, and denote $\widetilde{\chi}_\ell := 1 - \chi_\ell$. Then $\supp \widetilde{\chi}_\ell \subset (x_r - \eta,+\infty)$. Let $\underline{\widetilde{\chi}}_\ell$ be such that $\widetilde{\chi}_\ell \prec \underline{\widetilde{\chi}}_\ell$ and $\supp(\underline{\chi}_\ell) \subset (x_r - \eta,+\infty)$. These functions are sketched in Figure \ref{fig:chiell} below.
		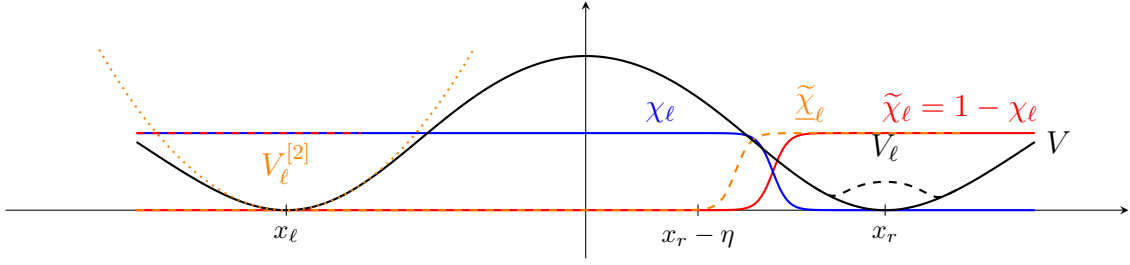
\begin{figure}[H]
			\centering
			\begin{tikzpicture}[domain=-6:7,samples=300,declare function={plateau(\x) = 0.5*(1+tanh(5*(\x))); V(\x)  =  4*(1 - (\x)^2)^2/(2 + (\x)^4);  bump(\x) =exp(-1/(1-(\x)^2)^2); V2(\x) = 16/3*(\x+1)^2 ;}]
				\begin{axis}[
					width=\textwidth,
					height=5cm,
					axis x line=center,
					axis y line=center, 
					xtick = \empty,
					ytick = \empty,
					extra x ticks = {-4,1.5,4},
					extra x tick labels = {$x_\ell$,$x_r-\eta$,$x_r$},
					enlarge x limits={rel=0.1},
					enlarge y limits={rel=0.3},
					]
					\addplot[domain=-6:6,color=black,thick,solid,color=red] {plateau((\x)-2.5)};
					\addplot[domain=-6:6,color=black,thick,solid,color=blue] {1 - plateau((\x)-2.5)};
					\addplot[domain=-6:5,thick,solid,color=orange,dashed] {plateau((\x)-2)};
					\addplot[domain=-6:6,color=black,thick,solid,color=black] {V((\x)/4)} node[right] {$V$};
					\addplot[domain=-6.5:-1.5,color=black,thick,dotted,color=orange] {V2((\x)/4)};
					\node[above,color=orange] at (-4,0.2) {$V_\ell^{[2]}$};
					\addplot[domain=3.05:4.95,color=black,thick,dashed,color=black] {V((\x)/4) + bump(2*((\x)/2 - 2))};
					\node[above] at (4,0.5) {$V_\ell$};
					\node[above,color=red] at (5,1) {$\widetilde{\chi}_\ell = 1 - \chi_\ell$};
					\node[above,color=orange] at (3,1) {$\underline{\widetilde{\chi}}_\ell$};
					\node[above,color=blue] at (1,1) {$\chi_\ell$};
					\addplot[domain=-6:-3,color=red,dashed,thick]{(\x)/(\x)};
				\end{axis}
			\end{tikzpicture}
			\caption{The cutoff functions $\tilde{\chi}_\ell$ and $\underline{\tilde{\chi}_\ell}$ in Corollary \ref{cor:quasimodes}.}
			\label{fig:chiell}
		\end{figure}%
		\noindent Since $\chi_\ell \perp \Sigma_\ell$ and since $(\mathscr{L}_\ell(h) - \mu) \psi = 0$, we have
		\begin{align*}
			(\mathscr{L}(h) - \mu)(\chi_\ell \psi)  
			& = -[\mathscr{L}_\ell(h),\widetilde{\chi}_\ell] (\underline{\widetilde{\chi}}_\ell \psi).
		\end{align*} 
		Thus, by the property (ii) in Corollary \ref{cor:small_far},
		$$\|(\mathscr{L}(h)-\mu)(\chi_\ell \psi)\| \leq C \|\underline{\widetilde{\chi}}_\ell \rev{\psi} \|_{H^1_h} \leq C e^{-(S_\eta-\delta)/h}\|\psi\|,$$
		and also, since $(\mathscr{L} - \mu)\psi = \Sigma_\ell \psi = (1-\chi_\ell)\Sigma_\ell \psi$, 
		$$\|(\mathscr{L}(h) - \mu)\psi\| \leq Ce^{-(S_\eta - \delta)/h} \|\psi\|.$$
		The previous two estimates give \eqref{eq:quasimode_exp}.  On the other hand, let $\gamma_0 > 0$ be small enough and let
		$\chi_{\ell,\gamma_0} \in C^\infty_c(\R)$ be such that $\chi_{\ell,\gamma_0} \equiv 1$ on $[-\gamma_0/2,\gamma_0/2]$ and $\supp \chi_{\ell,\gamma_0} \subset (x_\ell-\gamma_0,x_\ell+\gamma_0)$. Let $\widetilde{\chi}_{\ell,\gamma_0}:= 1 - \chi_{\ell,\gamma_0}$. Then
		\begin{align}
			(\mathscr{L}^{[2]}_\ell(h)-\mu)\psi
			=e^{i\alpha}\chi_{\ell,\gamma_0}(\rev{V_\ell^{[2]}-V_\ell})\psi+e^{i\alpha}\widetilde{\chi}_{\ell,\gamma_0}(\rev{V_\ell^{[2]}-V_\ell})\psi.
			\label{eq:quasimode_NSA_2parts}
		\end{align} 
		The first term in the right-hand side can be estimated via
		\begin{align*}
			\|e^{i\alpha}\chi_{\ell,\gamma_0}(\rev{V_\ell^{[2]}-V_\ell})\psi\|
			& \leq C\|V_\ell^{(3)}\|_{\infty} \| \mathbf{1}_{(x_\ell-\gamma_0,x_\ell+\gamma_0)}(x - x_\ell)^3 \psi\|_{L^2} \leq C h^{3/2}\|\psi\|_{L^2},
		\end{align*}
		by the property (i) in Corollary \ref{cor:small_far}, and the second, via
		\begin{align*}
			\|e^{i\alpha}\rev{\widetilde{\chi}_{\ell,\gamma_0}}(\rev{V_\ell^{[2]}-V_\ell})\psi\|
			& \leq \|V \widetilde{\chi}_{\ell,\gamma_0} \psi\| + \|V_\ell^{[2]}\widetilde{\chi}_{\ell,\gamma_0} \psi\|\\
			&\leq C\sup_{N \leq 2}\|\widetilde{\chi}_{\ell,\gamma_0}(x-x_\ell)^N\psi\| \leq C e^{-s/h}\|\psi\|_{L^2}
		\end{align*}
		for some constant $s > 0$, by the property (ii) in Corollary \ref{cor:small_far} (applied, e.g., with $\gamma = \gamma_0/4$). Using these two estimates in \eqref{eq:quasimode_NSA_2parts} yields \eqref{eq:quasimode_h32}, concluding the proof.		
	\end{proof}
	
	\section{Resolvent of the simple-well operator}\label{sec:3}

	In this section, we study the resolvent $z \mapsto (\mathscr{L}_\ell(h) - z)^{-1}$ for $z \in D(0,Rh)$. Its poles -- the eigenvalues of $\mathscr{L}_\ell(h)$ -- are found asymptotically as $h \to 0$, and the behaviour of $(\mathscr{L}_\ell(h) - z)^{-1}$ near these poles is described with the help of the associated Riesz projectors, see Proposition \ref{prop:fourre_tout}. These results will allow us to ``bypass'' the spectral theorem in the subsequent analysis. To establish these properties, a key role is played by the complex harmonic oscillator, as can be guessed from Corollary \ref{cor:quasimodes} above. We thus begin by some background on this operator.

	\subsection{The complex harmonic oscillator}
	\label{sec:davies}
	Given $h > 0$ and $\alpha \in (-\pi,\pi)$, we consider the complex harmonic oscillator
	\[\mathscr{H}(h)=(hD_x)^2+e^{i\alpha}x^2\,,\]
	see, e.g., \cite{davies1982pseudo}, \cite[Section 14.5]{davies2007linear}, \cite[Section 14.4]{helffer2013spectral} and \cite[section 7.4]{krejcirik2015pseudospectra}. Denote by 
	$$\textup{dom}(\mathscr{H}(h)) := \{u \in H^2(\mathbb{R}) \, \mid\, xu \in H^1(\R) \textup{ and } x^2 u \in L^2(\R)\}$$
	its domain, see \cite[Eq.~14.4.3]{helffer2013spectral}. The point is that the operator $\mathscr{L}_\ell^{[2]}(h)$ of the previous paragraph is unitarily equivalent to $\mathscr{H}(ah)$. More precisely, 
	$$\mathcal{U}^* \mathscr{L}_\ell^{[2]} \mathcal{U}  =\mathscr{H}(ah)\,,$$
	$$\mathcal{U}f(x) = \sqrt{a}f(a(x-x_\ell)) \quad \textup{where}\quad a := \sqrt{V''(x_\ell)/2}\,^.$$
	Recall that $\mathscr{H}(h)$ has a purely discrete spectrum consisting of algebraically simple eigenvalues which are given by 
	\begin{equation}
		\label{eq:defnuh}
		\nu_n(h) := \displaystyle(2n-1)e^{i\frac{\alpha}{2}} h, \quad n = 1,2,...
	\end{equation} 
	\begin{proposition}[Resolvent bound for the complex harmonic operator]
		\label{prop:resboundH}
		For all $R>0$, there exists $C>0$ such that, for all $h > 0$ and $z\in D(0,Rh) \setminus \textup{sp}\big(\mathscr{H}(h)\big)$,
		\[\|(\mathscr{H}(h)-z)^{-1}\| \leq \frac{C}{\dist\big(z,\mathrm{sp}(\mathscr{H}(h))\big)}\,.\]
	\end{proposition}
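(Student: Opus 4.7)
My plan is to first reduce the assertion to the fixed-$h$ case $h=1$ by a unitary semiclassical rescaling, and then to exploit the fact that only finitely many algebraically simple eigenvalues of $\mathscr{H}(1)$ lie in any bounded disk, so that the resolvent can be controlled by its local Laurent expansion at each pole plus a globally bounded holomorphic remainder.

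\textbf{Rescaling.} First I would introduce the unitary dilation $(U_h u)(x) := h^{-1/4} u(x/\sqrt{h})$ and check by a direct change of variables that $\mathscr{H}(h) = h\, U_h \mathscr{H}(1) U_h^{-1}$; consequently $(\mathscr{H}(h)-z)^{-1} = h^{-1} U_h (\mathscr{H}(1) - z/h)^{-1} U_h^{-1}$ for $z \notin \textup{sp}(\mathscr{H}(h))$. Since $\textup{sp}(\mathscr{H}(h)) = h\,\textup{sp}(\mathscr{H}(1))$ and $z \in D(0,Rh) \Leftrightarrow z/h \in D(0,R)$, both the left-hand side and the right-hand side of the claimed bound pick up a matching factor of $h^{-1}$; the statement therefore reduces to proving
\[ \|(\mathscr{H}(1) - w)^{-1}\| \leq \frac{C}{\dist(w,\textup{sp}(\mathscr{H}(1)))}\,, \quad w \in D(0,R)\setminus \textup{sp}(\mathscr{H}(1)). \]

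\textbf{Pole-by-pole bound.} Next I would invoke the fact, recalled right before the statement, that $\mathscr{H}(1)$ has compact resolvent and a purely discrete spectrum $\{\nu_n = (2n-1)e^{i\alpha/2}\}_{n \geq 1}$ of algebraically simple eigenvalues. Only finitely many of them, say $\nu_1,\ldots,\nu_N$, lie in $\overline{D(0,R+1)}$, and each is an isolated simple pole of the resolvent with rank-one Riesz projector $\Pi_n$. The meromorphic auxiliary function
\[ G(w) := (\mathscr{H}(1) - w)^{-1} + \sum_{n=1}^N \frac{\Pi_n}{w - \nu_n} \]
then has removable singularities at every $\nu_n \in D(0,R+1)$, hence is holomorphic in a neighborhood of $\overline{D(0,R)}$ and uniformly bounded there by some $M = M(R)$.

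\textbf{Conclusion and main obstacle.} To conclude I would perform a routine case split on $w \in D(0,R)\setminus\textup{sp}(\mathscr{H}(1))$: using the pairwise separation $|\nu_n - \nu_m| \geq 2$, for $w$ within distance $1$ of the (uniquely defined) nearest eigenvalue $\nu_{n^*}$ one has $\dist(w,\textup{sp}(\mathscr{H}(1))) = |w - \nu_{n^*}|$, and the $n = n^*$ term in the decomposition $\|(\mathscr{H}(1) - w)^{-1}\| \leq M + \sum_n \|\Pi_n\|/|w - \nu_n|$ dominates, while the other contributions and $M$ are uniformly bounded and absorbed by multiplying by $1/\dist \geq 1$; if instead the distance to the spectrum exceeds $1$, the whole estimate collapses to a constant and is rewritten in the required form using the upper bound $\dist(w,\textup{sp}(\mathscr{H}(1))) \leq R + |\nu_1|$ on $D(0,R)$. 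The only genuinely nontrivial input in this strategy is the algebraic simplicity of the eigenvalues of $\mathscr{H}(1)$: this is precisely what produces bona fide simple poles of the resolvent and hence the linear scaling $1/\dist$, rather than a higher power that would appear from Jordan blocks. This is classical (see the references cited before the statement) and is taken for granted.
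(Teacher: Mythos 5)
Your rescaling step is the same as the paper's: you conjugate by the semiclassical dilation to reduce to $h=1$, noting that distances to the spectrum scale linearly under $z \mapsto z/h$ so that both sides of the bound pick up a matching $h^{-1}$. Where you diverge is in how the fixed-$h$ bound
\[
\|(\mathscr{H}(1)-w)^{-1}\| \leq \frac{C}{\dist(w,\textup{sp}(\mathscr{H}(1)))}, \qquad w \in D(0,R),
\]
is obtained. The paper quotes the stronger estimate $\|(\mathscr{H}(1)-w)^{-1}\| \leq C_1 e^{C_2|w|}/\dist(w,\textup{sp}(\mathscr{H}(1)))$ from the pseudospectral literature (Krej\v{c}i\v{r}\'{\i}k et al.), and then observes that the exponential factor is just a constant on $D(0,R)$. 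You instead prove the local bound from scratch: subtract the principal parts $\Pi_n/(w-\nu_n)$ at the finitely many simple poles in $D(0,R+1)$, observe that the remainder $G$ is holomorphic hence norm-bounded on the compact set $\overline{D(0,R)}$, and then recombine, using the uniform $\geq 2$ spacing $|\nu_n - \nu_m| = 2|n-m|$ to show that at most one of the subtracted terms is large. Both arguments are correct and hinge on the same input you rightly single out — algebraic simplicity of the eigenvalues, which makes the Riesz projectors rank one and the poles simple, so the blow-up is exactly $1/\dist$ rather than a higher power. Your route is more elementary and self-contained (no external pseudospectral bound needed), and it makes visible why the estimate fails on all of $\mathbb{C}\setminus\textup{sp}(\mathscr{H}(1))$: the holomorphic remainder $G$ is unbounded at infinity, which is precisely the pseudospectral effect the paper's Remark flags. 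The paper's route is a one-liner once the cited bound is granted, and is arguably more in the spirit of the rest of the section, which already treats $\mathscr{H}$ as a black box.

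Two small points you should make explicit if you write this up: (1) the Laurent expansion $(\mathscr{H}(1)-w)^{-1} = -\Pi_n/(w-\nu_n) + \text{holom.}$ near a simple $\nu_n$ carries a minus sign, so your $G(w) = (\mathscr{H}(1)-w)^{-1} + \sum_n \Pi_n/(w-\nu_n)$ is the right combination (a sign slip here would wreck the argument); (2) boundedness of $G$ on $\overline{D(0,R)}$ is not "by holomorphy" alone but because operator-valued holomorphic functions are norm-continuous, hence bounded on compacta contained in the domain of holomorphy. Both are routine, but both deserve a sentence.
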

	\begin{proof}
	We have
	$$\mathscr{H}(h) - z= h \mathcal{U}^{-1} (\mathscr{H}(1) - h^{-1}z) \mathcal{U}$$
	where $\mathcal{U}$ is the isometry of $L^2$ defined by
	$$\mathcal{U} v := h^{1/2} v(h^{1/2}x).$$
	The conclusion then follows from the fact that $\textup{dist}(h^{-1}z,\textup{sp}(\mathscr{H}(1))) = h^{-1} \textup{dist}(z,h\,\textup{sp}(\mathscr{H}(1))) = h^{-1} \textup{dist}(z,\textup{sp}(\mathscr{H}(h)))$ together with the bound
	$$\|(\mathscr{H}(1) - z)^{-1}\| \leq \frac{C_1 e^{C_2 |z|}}{\textup{dist}(z,\textup{sp}(\mathscr{H}(1)))}$$
	which can be found in  \cite[section 7.4]{krejcirik2015pseudospectra}. 
	\end{proof}
	
	\begin{remark}
	We highlight that the resolvent estimate of Proposition \ref{prop:resboundH} famously fails if $z$ is allowed to be arbitrary in $\mathbb{C} \setminus \textup{sp}\big(\mathscr{H}(h)\big)$. This follows from the pseudospectral properties of $\mathscr{H}(h)$ analyzed, e.g., in \cite{davies1982pseudo}.
	\end{remark}
	\begin{proposition}
		\label{prop:resboundDx}
		Let $R> 0$. There exists $C > 0$ such that for all $z \in D(0,Rh) \setminus \textup{sp}\big(\mathscr{H}(h)\big)$, \begin{align*}
			&\|(\mathscr{H}(h) - z)^{-1}(hD_x)\| + \|(\mathscr{H}(h) - z)^{-1}x\| + \|(hD_x)(\mathscr{H}(h) - z)^{-1}\| + \|x (\mathscr{H}(h) - z)^{-1}\|\\
			&\hspace{5cm} \leq C \left(\frac{1}{\sqrt{\textup{dist}(z,\textup{sp}(\mathscr{H}(h)))}} + \frac{\sqrt{h}}{\textup{dist}(z,\textup{sp}(\mathscr{H}(h)))}\right).
		\end{align*} 
	\end{proposition}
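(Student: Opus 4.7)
The whole argument rests on a rotation trick already used in Proposition \ref{prop:estimee_elliptique_conj}: although $\mathscr{H}(h)$ is non-selfadjoint, the real part of $e^{-i\alpha/2}\mathscr{H}(h)$ is coercive in both $(hD_x)u$ and $xu$. Indeed, since $D_x$ and $x$ are symmetric on $L^2(\R)$, one checks that for $u\in\textup{dom}(\mathscr{H}(h))$,
\[
\Re\langle e^{-i\alpha/2}\mathscr{H}(h)u,u\rangle \;=\; \cos(\alpha/2)\bigl(\|(hD_x)u\|^2 + \|xu\|^2\bigr).
\]

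Given $v\in L^2(\R)$, I would set $u:=(\mathscr{H}(h)-z)^{-1}v$, so that $\mathscr{H}(h)u = v+zu$ and $(hD_x)u,xu\in L^2$ (thanks to the explicit description of $\textup{dom}(\mathscr{H}(h))$ recalled in \S\ref{sec:davies}). Substituting in the identity above and using Cauchy--Schwarz together with $|z|\leq Rh$ yields
\[
\cos(\alpha/2)\bigl(\|(hD_x)u\|^2+\|xu\|^2\bigr) \;\leq\; \|v\|\,\|u\| + Rh\,\|u\|^2.
\]
The resolvent bound of Proposition \ref{prop:resboundH} gives $\|u\|\leq C\|v\|/d$, where $d:=\textup{dist}(z,\textup{sp}(\mathscr{H}(h)))$, hence $\|(hD_x)u\|^2+\|xu\|^2 \leq C\bigl(\|v\|^2/d + h\,\|v\|^2/d^2\bigr)$, and $\sqrt{a+b}\leq\sqrt{a}+\sqrt{b}$ delivers the required estimates for $\|(hD_x)(\mathscr{H}(h)-z)^{-1}\|$ and $\|x(\mathscr{H}(h)-z)^{-1}\|$.

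For the two remaining norms, I would argue by duality. The adjoint of $\mathscr{H}(h)$ is $(hD_x)^2 + e^{-i\alpha}x^2$, i.e.\ an operator of the same form with $\alpha$ replaced by $-\alpha$; its spectrum is $\overline{\textup{sp}(\mathscr{H}(h))}$, so $\textup{dist}(\bar z,\textup{sp}(\mathscr{H}(h)^*))=d$ and $\bar z\in D(0,Rh)$. Combining the identities
\[
\bigl((\mathscr{H}(h)-z)^{-1}(hD_x)\bigr)^* = (hD_x)(\mathscr{H}(h)^*-\bar z)^{-1}, \qquad \bigl((\mathscr{H}(h)-z)^{-1}x\bigr)^* = x(\mathscr{H}(h)^*-\bar z)^{-1},
\]
with the bounds already proved, now applied to $\mathscr{H}(h)^*$ at the point $\bar z$, closes the argument.

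I do not anticipate a substantive obstacle: the proof is essentially a quantitative repackaging of the form-coercivity of $e^{-i\alpha/2}\mathscr{H}(h)$ combined with Proposition \ref{prop:resboundH}. The only care required is in verifying that $(\mathscr{H}(h)-z)^{-1}(hD_x)$ and $(\mathscr{H}(h)-z)^{-1}x$, which are a priori only densely defined, extend to bounded operators on $L^2$ with the stated norms; this is a standard consequence of the adjoint identities above.
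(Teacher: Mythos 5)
Your proof is correct and follows essentially the same route as the paper's: both use the coercivity $\Re\langle e^{-i\alpha/2}\mathscr{H}(h)u,u\rangle = \cos(\alpha/2)\left(\|(hD_x)u\|^2+\|xu\|^2\right)$, combine it with the resolvent bound of Proposition~\ref{prop:resboundH}, and recover the two remaining operator norms by duality via the substitution $\alpha\mapsto-\alpha$. Your write-up is in fact slightly more careful than the paper's in treating the $-\Re(e^{-i\alpha/2}z)\|u\|^2$ term as a bound rather than an equality, but the argument is otherwise identical.
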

	\begin{proof}
		Let $z \in D(0,Rh)$ and let $d := \textup{dist}(z,\textup{sp}(\mathscr{H}(h)))$. For all $\psi \in \textup{dom}(\mathscr{H}(h))$, 
		$$\textup{Re} \langle e^{-i\alpha/2}(\mathscr{H}(h) - z) \psi,\psi \rangle \rev{\,\leq\,} \cos(\alpha/2)(\|(hD_x) \psi\|^2 + \|x \psi\|^2) - Rh \|\psi\|^2.$$
		Thus, by Proposition \ref{prop:resboundH} above, 
		\begin{align*}
			\cos(\alpha/2) \left(\|(hD_x) \psi\|^2 + \|x\psi\|^2\right)
			&\leq \|\psi\| \|(\mathscr{H}(h) - z)\psi\| + Rh \|\psi\|^2\\
			&\leq C \left(d^{-1} + hd^{-2}\right) \|(\mathscr{H}(h) - z)\psi\|^2,
		\end{align*}
		which gives the bound for the terms
		$\|(hD_x)(\mathscr{H}(h) - z)^{-1} \|$ and $\|x(\mathscr{H}(h) - z)^{-1}\|$.  The other two estimates follow from these by duality, after switching $\alpha \to -\alpha$. 	
	\end{proof}
	
	Finally, we will need the following pseudo-local behaviour of $(\mathscr{H}(h) - z)^{-1}$ at scales $h^{\beta}$ for $0 \leq \beta < \frac12$ for small complex numbers $z$ at a safe distance from the spectrum.\footnote{This result will be used only with $\tau = 1$, where $\tau$ is the parameter appearing in the statement of the proposition, but we state the more general version here since it does not significantly complicate the proof.}
	\begin{proposition}[Pseudo-locality of $(\mathscr{H}(h) - z)^{-1}$]
		\label{prop:pseudolocH}
		Let $R,c, N > 0$, $\beta \in [0,\frac12)$, 
		$\tau < \frac32-\beta$, and $\chi, \widetilde{\chi} \in C^\infty(\R)$ with $\chi \perp \widetilde{\chi}$. Then there exists $C,h_0 > 0$ such that the inequality
		$$\|\chi_h (\mathscr{H}(h) - z)^{-1} \widetilde{\chi}_h\| \leq C h^{N}$$
		holds for all $h \in (0,h_0)$ and $z \in D(0,Rh)$ satisfying $\dist\big(z,\textup{sp}(\mathscr{H}(h))\big) \geq ch^{\tau}$, with $\chi_h = \chi(h^{-\beta}x)$ and $\widetilde{\chi}_h =\widetilde{\chi}(h^{-\beta}x)$. 
	\end{proposition}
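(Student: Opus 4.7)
The plan is to carry out a standard iterated commutator argument, suitable for semiclassical pseudo-locality. Set $R := (\mathscr{H}(h) - z)^{-1}$. Since $\chi_h\widetilde\chi_h = 0$ for $h$ small (the supports are separated at scale $h^\beta$) and $R(\mathscr{H}(h)-z) = I$, one has the fundamental identity
\[\chi_h R\widetilde\chi_h = R[\mathscr{H}(h), \chi_h]R\widetilde\chi_h = -2ih R\chi_h'(hD_x)R\widetilde\chi_h - h^2 R \chi_h''R\widetilde\chi_h,\]
so each iteration produces a factor $h$ at the cost of a cutoff derivative (scaling as $h^{-\beta}$) and one extra resolvent. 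Iterating $K$ times should trade resolvent bounds for positive powers of $h$.

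To iterate cleanly, I would first introduce a chain of cutoffs $\chi = \chi^{(0)} \prec \chi^{(1)} \prec \cdots \prec \chi^{(K)}$ in $C^\infty_c(\R)$ with each $\chi^{(j)} \perp \widetilde\chi$ (possible since $\chi \perp \widetilde\chi$). Setting $\chi^{(j)}_h := \chi^{(j)}(h^{-\beta}\cdot)$, one has $\chi^{(j)}_h \widetilde\chi_h = 0$ for $h$ small, $\|(\chi^{(j)}_h)^{(m)}\|_\infty \leq C_m h^{-m\beta}$, the identity $(\chi^{(j-1)}_h)^{(m)}\chi^{(j)}_h = (\chi^{(j-1)}_h)^{(m)}$ for $m\geq 0$, and the crucial vanishing relation $(\chi^{(j-1)}_h)' (\chi^{(j)}_h)' \equiv 0$. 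Combining these support properties with the above identity and commuting $\chi^{(j)}_h$ across $(hD_x)$ delivers the clean recursion
\[\chi^{(j-1)}_h R\widetilde\chi_h = -2ih R(\chi^{(j-1)}_h)'(hD_x)\chi^{(j)}_h R\widetilde\chi_h - h^2 R(\chi^{(j-1)}_h)''\chi^{(j)}_h R\widetilde\chi_h,\]
which reduces level $j-1$ to level $j$ at the cost of one resolvent and one cutoff derivative.

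I would then run a simultaneous induction on $N_k := \|\chi^{(k)}_h R\widetilde\chi_h\|$ and $N_k' := \|(hD_x)\chi^{(k)}_h R\widetilde\chi_h\|$, using Propositions \ref{prop:resboundH} and \ref{prop:resboundDx} in the base case $k = K$ to bound $N_K \leq Ch^{-\tau}$ and $N_K' \leq Ch^{-\sigma}$, with $\sigma := \tau/2$ if $\tau \leq 1$ and $\sigma := \tau - \tfrac12$ if $\tau > 1$. Applying the recursion to both quantities (placing an extra $(hD_x)$ on the left for the $N_{k-1}'$ bound, which substitutes $\|(hD_x)R\|\leq Ch^{-\sigma}$ for $\|R\|$) gives
\[N_{k-1} \lesssim h^{1-\tau-\beta}N_k' + h^{2-\tau-2\beta}N_k, \qquad N_{k-1}' \lesssim h^{1-\sigma-\beta}N_k' + h^{2-\sigma-2\beta}N_k.\]
A direct computation shows that both bounds improve by a factor $h^{\gamma}$ per step, with $\gamma := 1 - \beta - \sigma = \min(1-\beta-\tau/2,\, \tfrac32 - \beta - \tau)$. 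The hypotheses $\beta < \tfrac12$ and $\tau < \tfrac32 - \beta$ are precisely what ensure $\gamma > 0$ in both regimes. Iterating $K$ steps then yields $N_0 \leq C_K h^{K\gamma - \tau}$, and choosing $K$ with $K\gamma - \tau \geq N$ concludes the argument.

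The main obstacle will be the careful bookkeeping of the induction — in particular, having to carry the auxiliary bound on $N_k'$ alongside $N_k$, and splitting the two regimes $\tau \leq 1$ and $\tau > 1$ where the shape of the optimal $\|(hD_x)R\|$ bound from Proposition \ref{prop:resboundDx} changes. The specific design of the cutoff chain together with the vanishing identity $(\chi^{(j-1)}_h)'(\chi^{(j)}_h)' \equiv 0$ are what allow the reduction to land cleanly on the next level without spurious error terms.
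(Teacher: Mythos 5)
Your proof is correct and follows essentially the same iterated-commutator strategy as the paper: a chain of nested rescaled cutoffs, the commutator identity $[(hD_x)^2,\chi_h]$, and the resolvent bounds of Propositions~\ref{prop:resboundH} and~\ref{prop:resboundDx}, yielding the same per-step gain $\gamma$ under the hypotheses $\beta<\tfrac12$, $\tau<\tfrac32-\beta$. The only difference is bookkeeping: the paper shifts $(hD_x)$ to sit next to the resolvent so that a single quantity $\|\chi^{(k)}_h u\|$ can be iterated, which avoids having to carry the auxiliary quantity $N_k'$ and the case split on $\tau\lessgtr 1$ that your two-quantity induction requires.
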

	
	\begin{proof}
		Let $v \in L^2(\R)$ and let $u$ solve
		\begin{equation}
			\label{eq:aux_uv}
			(\mathscr{H}(h) - z) u = \widetilde{\chi}_h v.
		\end{equation}
		Multiplying \eqref{eq:aux_uv} by $\chi_h$ and commuting $\mathscr{H}(h)$ and $\chi_h$, we find
		$$(\mathscr{H}(h) - z) (\chi_h u) = [(hD_x)^2,\chi_h] (\underline{\chi_h}u)$$
		where $\underline{\chi_h}(x) := \underline{\chi_0}(h^{-\beta}x)$ for some $\underline{\chi} \in C^\infty(\R)$ with $\chi \prec \underline{\chi}$.
		Applying the resolvent estimates of Propositions \ref{prop:resboundH} and \ref{prop:resboundDx}, we deduce that
		\begin{align*}
			\|\chi_h u\| 
			&\leq \left(h^{2-2\beta} \|D_x^2 \chi\|_{\infty} \|(\mathscr{H}(h) - z)^{-1}\| + h^{1-\beta} \|D_x\chi\|_\infty \|(\mathscr{H}(h) - z)^{-1}(hD_x)\|\right) \|\underline{\chi_h}u\|.\\
			& \leq C (h^{2-2\beta - \tau} + h^{1 - \beta-\tau/2} + h^{\frac32 - \beta -\tau})\|\underline{\chi_h}u\|\\
			& \leq C h^{\varepsilon} \|\underline{\chi_h}u\|,
		\end{align*}
		for some $\varepsilon > 0$, since $\tau < \frac32 - \beta$ and $\beta < \frac12$. 
		Repeating this argument for a sequence of functions $\underline{\chi_0} \prec \chi_1 \prec \chi_2 \prec \ldots \prec \chi_M$ with $\chi_M \perp \widetilde{\chi}$, and using Proposition \ref{prop:resboundH}, we conclude that 
		$$\|\chi_h u\| \leq C h^{M\varepsilon} \|\chi_{M,h} u\| \leq C h^{\varepsilon M-\tau} \|\widetilde{\chi}_h v\| \leq C h^{N} \|v\|$$
		where $\chi_{M,h}(x) = \chi_M(h^{-\beta}x)$ and  $N = \varepsilon M-\tau$ can be made arbitrarily large.
	\end{proof}

	\subsection{Nature and a priori location of the spectrum}
	\label{sec:nature_a_priori}
	We now locate asymptotically the poles of $(\mathscr{L}_\ell - z)^{-1}$ in $D(0,Rh)$.
	\begin{proposition}
		\label{prop:point_spectrum}
		For all $R > 0$, there exists $h_0 > 0$ such that for all $h \in (0,h_0)$, the spectrum of $\mathscr{L}_\ell(h)$ in $D(0,Rh)$ is purely discrete.
	\end{proposition}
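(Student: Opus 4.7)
The plan is to apply the analytic Fredholm theorem on the disk $D(0, Rh)$: this requires Fredholmness of $\mathscr{L}_\ell(h) - z$ throughout $D(0, Rh)$ together with invertibility at at least one point of that disk.

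For Fredholmness, I would apply Proposition \ref{prop:estimee_elliptique_conj} with the trivial weight $\Phi \equiv 0$ and potential $U = V_\ell$. Since $\liminf_{|x| \to \infty} V_\ell > 0$, there exist $c_0, A > 0$ with $V_\ell(x) \geq c_0$ for $|x| \geq A$. Picking $\chi \in C^\infty_c(\R, [0,1])$ with $\chi \equiv 1$ on $[-A,A]$ and fixing $M > 2R$, the elementary inequality $0 \leq \cos^2(\alpha/2) V_\ell - Mh(1 - \chi^2)$ holds on $\R$ whenever $h \leq \cos^2(\alpha/2) c_0 / M$, i.e. $\Phi \equiv 0$ becomes an $(Mh)$-subsolution associated to $\chi$ for $h$ small enough. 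Proposition \ref{prop:estimee_elliptique_conj} then yields
\begin{equation*}
    \|u\|_{H^2_h} \leq C\bigl(h^{-1}\|(\mathscr{L}_\ell(h) - z)u\| + \|\chi u\|\bigr), \quad z \in D(0, Rh),\ u \in H^2(\R).
\end{equation*}
Since $\chi$ has compact support, the map $u \mapsto \chi u : H^2(\R) \to L^2(\R)$ is compact (Rellich--Kondrachov). A classical Fredholm criterion then gives that $\mathscr{L}_\ell(h) - z$ has finite-dimensional kernel and closed range. Applying the same reasoning to the adjoint $\mathscr{L}_\ell(h)^* = h^2 D_x^2 + e^{-i\alpha} V_\ell$ (the same structure, with $\alpha \to -\alpha$), I would deduce Fredholmness of $\mathscr{L}_\ell(h) - z$ for every $z \in D(0, Rh)$.

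For invertibility at one point, I would exploit the numerical range. The identity
\begin{equation*}
    \Re\bigl\langle e^{-i\alpha/2}\mathscr{L}_\ell(h) u, u\bigr\rangle = \cos(\alpha/2)\Bigl(\|(hD_x)u\|^2 + \int_\R V_\ell |u|^2\,dx\Bigr) \geq 0
\end{equation*}
shows that the numerical range $W(\mathscr{L}_\ell(h))$ lies in the half-plane $\{z \in \C : \Re(e^{-i\alpha/2}z) \geq 0\}$. The point $z_\ast := -Rh/2$ belongs to $D(0,Rh)$ and sits at distance $(Rh/2)\cos(\alpha/2) > 0$ from this half-plane; the standard bound $\|(\mathscr{L}_\ell(h) - z_\ast)u\| \geq \dist(z_\ast, \overline{W(\mathscr{L}_\ell(h))})\|u\|$, together with the analogous bound for the adjoint (whose numerical range is the conjugate set, still disjoint from $z_\ast$), yields invertibility of $\mathscr{L}_\ell(h) - z_\ast$. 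The analytic Fredholm theorem, applied to the holomorphic family $z \mapsto \mathscr{L}_\ell(h) - z$ of Fredholm operators on the connected open set $D(0, Rh)$ with a regular point at $z_\ast$, then shows that $(\mathscr{L}_\ell(h) - z)^{-1}$ extends meromorphically on $D(0, Rh)$, so the spectrum of $\mathscr{L}_\ell(h)$ in $D(0, Rh)$ consists of isolated eigenvalues of finite algebraic multiplicity.

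I do not expect a serious technical obstacle in this argument; the most delicate point is the calibration between the support of $\chi$ and the region where $V_\ell$ is small, so that $\Phi \equiv 0$ genuinely qualifies as an $(Mh)$-subsolution in the sense of Definition \ref{def:M_solution}, which is precisely where the assumption $\liminf_{|x|\to\infty} V_\ell > 0$ enters.
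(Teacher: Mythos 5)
Your proof is correct and follows essentially the same strategy as the paper's: establish Fredholmness of $\mathscr{L}_\ell(h)-z$ throughout a suitable disk, find a regular point (via the numerical range), and then invoke the analytic Fredholm theorem. The only real difference is the mechanism used for the Fredholm step. You invoke Proposition~\ref{prop:estimee_elliptique_conj} with $\Phi\equiv 0$ to get an a priori estimate of the form $\|u\|_{H^2_h} \leq C(h^{-1}\|(\mathscr{L}_\ell(h)-z)u\|+\|\chi u\|)$, use the compactness of $u\mapsto\chi u$ to deduce finite kernel and closed range, and handle the adjoint separately; index zero then comes from the regular point by constancy of the index. The paper instead adds a compactly supported nonnegative potential $\chi$ making $\mathscr{L}_\ell(h)-z+\chi$ strictly accretive after rotation by $e^{-i\alpha/2}$, concludes that $\mathscr{L}_\ell(h)-z+\chi$ is an isomorphism, and removes $\chi$ as a relatively compact perturbation — giving Fredholm of index zero directly without a separate adjoint argument. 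Both variants are valid, and yours has the small advantage of reusing the elliptic estimate already proved in the paper rather than redoing the coercivity computation; the paper's is slightly shorter because the index is immediate.
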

	\begin{proof}
		We show that there exists $r > 0$ such that for all $z\in D(0,r)$, $\mathscr{L}_\ell(h) - z$ is Fredholm of index $0$. The conclusion then follows by the analytic Fredholm theory (see e.g. the argument in the proof of \cite[Theorem 5.11]{cheverry2021guide}) and taking $h_0$ small enough.
		
		Let $V_\infty := \lim \inf_{x \to \pm\infty}V_{\ell}(x) > 0$ and let $z \in D(0,V_\infty/3)$.
		Since $K := \{x \in \R \,\mid\, V(x) - \frac{V_\infty}{3} \leq \frac {V_\infty}{3}\}$ is compact, there exists
		$\chi \in C^\infty_c(\R)$ with $\chi \geq 0$ and $\chi \equiv 2\frac{V_\infty}{3}$
		on $K$. We then have
		$V + \chi - |z| \geq \frac{V_\infty}{3}$ on $\R$, and thus
		$$\Re\big\langle e^{-i\alpha/2}\big(\mathscr{L}_\ell(h) - z + \chi\big)u,u\big\rangle \geq \cos(\alpha/2) \left(\|(hD_x)u\|^2 + \frac{V_\infty}{3} \|u\|^2\right).$$
		It follows that 
		$\|(\mathscr{L}_\ell(h) + \chi - z) u\| \geq c\|u\|$
		for some $c > 0$, which implies that $\mathscr{L}_\ell(h) - z + \chi$ is injective with closed range \cite[Proposition 2.14]{cheverry2021guide}. Switching $\alpha$ to $-\alpha$ and $z$ to $\overline{z}$, the same argument shows that $(\mathscr{L}_\ell(h) - z + \chi)^*$ is injective with closed range; thus, $\mathscr{L}_\ell(h) - z + \chi$ is an isomorphism. Since $u \mapsto \chi u$ is compact from $H^2(\R) \to L^2(\R)$, we conclude that $\mathscr{L}_{\ell}(h) - z$ is Fredholm of index $0$. 
	\end{proof}

	\begin{proposition}
		\label{prop:locspec}
		Let $R > 0$. There exist $C,h_0>0$ such that, for all $h\in(0,h_0)$,
		\[\mathrm{sp}(\mathscr{L}_\ell(h)) \cap D(0,Rh)\subset\bigcup_{n=1}^{N(R)} D(\nu_n (ah),Ch^{\frac32})\,,\]
		where $N(R) :=  \lfloor \frac{R+a}{2a}\rfloor$, $\nu_n(h)$ are defined by \eqref{eq:defnuh}, and where we recall that $a = \sqrt{V''(x_\ell)/2}$.
	\end{proposition}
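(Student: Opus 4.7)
The plan is to deduce this from the $\mathscr{O}(h^{3/2})$ quasimode estimate \eqref{eq:quasimode_h32} of Corollary~\ref{cor:quasimodes} together with the resolvent bound Proposition~\ref{prop:resboundH} for the complex harmonic oscillator, exploiting the unitary equivalence $\mathcal{U}^*\mathscr{L}_\ell^{[2]}(h)\mathcal{U}=\mathscr{H}(ah)$ recalled in \S\ref{sec:davies}.

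First, fix $\mu \in \mathrm{sp}(\mathscr{L}_\ell(h))\cap D(0,Rh)$. By Proposition~\ref{prop:point_spectrum}, this spectrum is discrete, so $\mu$ is an eigenvalue and there is a normalized eigenfunction $\psi \in H^2(\R)$. Estimate \eqref{eq:quasimode_h32} gives $\|(\mathscr{L}_\ell^{[2]}(h)-\mu)\psi\|\leq Ch^{3/2}$. Setting $\widetilde{\psi} := \mathcal{U}^*\psi$, which is again a unit vector, the unitary equivalence yields
\[
\|(\mathscr{H}(ah)-\mu)\widetilde{\psi}\| \leq Ch^{3/2}.
\]

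Next, I would apply Proposition~\ref{prop:resboundH} to $\mathscr{H}(ah)$ (using the rescaled radius $R/a$, so that $D(0,Rh) = D(0,(R/a)\cdot ah)$). If $\mu\notin\mathrm{sp}(\mathscr{H}(ah))$,
\[
1 = \|\widetilde{\psi}\| \leq \|(\mathscr{H}(ah)-\mu)^{-1}\|\cdot Ch^{3/2} \leq \frac{C'h^{3/2}}{\mathrm{dist}\bigl(\mu,\mathrm{sp}(\mathscr{H}(ah))\bigr)},
\]
so $\mathrm{dist}(\mu,\mathrm{sp}(\mathscr{H}(ah)))\leq C'h^{3/2}$ (the same bound holds trivially if $\mu\in\mathrm{sp}(\mathscr{H}(ah))$). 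Recalling that $\mathrm{sp}(\mathscr{H}(ah))=\{\nu_n(ah):n\geq 1\}$ with $\nu_n(ah)=(2n-1)e^{i\alpha/2}ah$, we conclude that there exists $n\geq 1$ with $|\mu-\nu_n(ah)|\leq C'h^{3/2}$.

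It then remains to check that necessarily $n\leq N(R)$. By the definition $N(R)=\lfloor(R+a)/(2a)\rfloor$ we have $(2N(R)+1)a>R$ (and the gap $(2N(R)+1)a-R$ is a fixed positive number depending only on $R$ and $a$). Hence for every $n\geq N(R)+1$ and every $\mu\in D(0,Rh)$,
\[
|\mu-\nu_n(ah)| \geq (2n-1)ah - Rh \geq c(R)\, h
\]
with $c(R)>0$ independent of $h$. For $h$ small enough this strictly exceeds $C'h^{3/2}$, so such $n$ are ruled out. The main (and only) subtlety of the argument lies in this last piece of bookkeeping, which is why the index $N(R)$ is tailored to the floor function: it records exactly how many harmonic-oscillator eigenvalues can possibly sit within $\mathscr{O}(h^{3/2})$ of $D(0,Rh)$ in the semiclassical limit.
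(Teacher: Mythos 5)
Your proposal is correct and takes essentially the same route as the paper: invoke Proposition~\ref{prop:point_spectrum} for discreteness, use Corollary~\ref{cor:quasimodes}~\eqref{eq:quasimode_h32} to see that the eigenfunction is an $\mathscr{O}(h^{3/2})$ quasimode for $\mathscr{L}_\ell^{[2]}(h)$, transfer to $\mathscr{H}(ah)$ via the unitary $\mathcal{U}$, and close with the resolvent bound of Proposition~\ref{prop:resboundH}. The only (harmless) difference is that you spell out the final bookkeeping step $n \leq N(R)$ explicitly, whereas the paper handles it implicitly by first recording that $\textup{sp}(\mathscr{H}(ah)) \cap D(0,Rh) = \{\nu_n(ah)\}_{1\le n\le N(R)}$.
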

	\begin{proof}
		First observe that for all $h_0 > 0$ and for all $h \in (0,h_0)$, 
		$$\textup{sp}(\mathscr{H}(ah)) \cap D(0,Rh) = \{\nu_n(ah)\}_{1 \leq n \leq N(R)}.$$
		Let $\mu \in \textup{sp}(\mathscr{L}_\ell) \cap D(0,Rh)$ and, by Proposition \ref{prop:point_spectrum}, let $\psi$ be an associated eigenvector with unit norm. Using the resolvent bound for $\mathscr{H}(h)$ in Proposition \ref{prop:resboundH} and the fact that $\psi$ is a $\mathscr{O}(h^{3/2})$ quasimode of $\mathscr{H}(h)$ (Corollary \ref{cor:quasimodes}),  we get
		\begin{align*}
			d(\mu,\textup{sp}(\mathscr{H}(ah)))& \leq C
			\|(\mathscr{H}(ah) - \mu) \mathcal{U}^* \psi\|\\
			& = C\big\|\mathcal{U}^* \big((\mathscr{L}^{[2]}_\ell(h) - \mu)\psi \big)\big\| \leq Ch^{3/2}.
		\end{align*}
		Thus, there exists $n \in \{1,\ldots,N(R)\}$ such that $|\mu - \nu_n(ah) | \leq C h^{3/2}$.
	\end{proof}
	
	\subsection{Resolvent bounds for $\mathscr{L}_\ell(h)$} 
\label{sec:spectrum}
We are now in a position to define Riesz projectors for each disk $D(\nu_n(ah),Ch^{3/2})$ and state the main result of this section, Proposition \ref{prop:fourre_tout}.
 
 Namely, Proposition \ref{prop:locspec} implies that for all $R > 0$\rev{\footnote{Il suffirait de dire pour tout $n$}}, there exists $\varepsilon > 0$ and $h_0 > 0$ small enough such that, for all $h \in (0,h_0)$, the following Riesz projectors 
	\begin{equation}
		\label{eq:defPiln}
		\Pi_{\ell,n}(h)=\frac{1}{2i\pi}\int_{\mathscr{C}(\nu_n(ah),\epsilon h)}(z-\mathscr{L}_\ell(h))^{-1}dz\,,
	\end{equation}
	\begin{equation}
		\label{eq:defPi[2]ln}
		\Pi_{\ell,n}^{[2]}(h)=\frac{1}{2i\pi}\int_{\mathscr{C}(\nu_n(ah),\epsilon h)}(z-\mathscr{L}^{[2]}_\ell(h))^{-1}dz\,,
	\end{equation}
	are well defined for $n \in \{1,\ldots,N(R)\}$. The definition does not depend on the choice of $\varepsilon > 0$ small enough. Since the eigenvalues of $\mathscr{L}^{[2]}_\ell(h)$ are algebraically simple, $\textup{rank}(\Pi_{\ell,n}^{[2]}(h)) = 1$. 	
	\begin{proposition}
	\label{prop:fourre_tout}
	Let $R > 0$. There exists $C$, $h_0$, $\varepsilon > 0$ such that the following properties hold for all $n \in \{1,\ldots,N(R)\}$ and $h \in (0,h_0)$:
	\begin{itemize}
	\item[(i)] $\textup{rank}\big(\Pi_{\ell,n}(h)\big) = 1$ 
	\item[(ii)] $\|\Pi_{\ell,n}(h)\| \leq C$,
	\item[(iii)] There exists a unique eigenvalue of $\mathscr{L}_\ell(h)$ lying in $\mathscr{C}(\nu_n(ah),Ch^{3/2})$, denoted by $\mu_{\ell,n}(h)$. It is algebraically simple, and
	\begin{equation}
		\label{eq:Pi_vrai_projecteur}
		\textup{Ran}(\Pi_{\ell,n}(h)) =\textup{Ker}\big(\mathscr{L}_\ell(h) - \mu_{\ell,n}(h)\big).
	\end{equation}
	Moreover, for any \rev{non-zero} $\psi \in \textup{Ran}(\Pi_{\ell,n}(h))$, one has $\langle \psi,\overline{\psi} \rangle \neq 0$ and
	$$\Pi_{\ell,n}(h) = \frac{\langle \cdot,\overline{\psi}\rangle }{\langle \psi,\overline{\psi}\rangle} \psi.$$
	\item[(iv)] For all $u \in H^2(\R)$ and $z \in D(\nu_n(ah),\varepsilon h)$,
	$$\|(\textup{Id} - \Pi_{\ell,n}(h)) u\| \leq Ch^{-1} \big\|\big(\mathscr{L}_\ell(h) - z\big) u\big\|$$,
	\item[(v)] For all $u \in H^2(\R)  \setminus \{0\}$ and $z \in D(\nu_n(ah),\varepsilon h)$,
	$$|\mu_{\ell,n}(h) - z| \leq \frac{\|(\mathscr{L}_\ell(h) - z)u\|}{\|u\|}.$$
	\end{itemize} 
	\end{proposition}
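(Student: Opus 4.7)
The proof transfers spectral information from the model operator $\mathscr{H}(ah)$, unitarily equivalent to $\mathscr{L}_\ell^{[2]}(h)$, to $\mathscr{L}_\ell(h)$, exploiting the closeness of these two operators on functions concentrated near $x_\ell$ and the fact that eigenfunctions of $\mathscr{L}_\ell(h)$ are $\mathcal{O}(h^{3/2})$ quasimodes of $\mathscr{L}_\ell^{[2]}(h)$ (Corollary~\ref{cor:quasimodes}). The cornerstone is a uniform resolvent bound
\begin{equation}\label{eq:planres}
\|(\mathscr{L}_\ell(h)-z)^{-1}\|\leq Ch^{-1},\qquad z\in D(\nu_n(ah),\varepsilon h)\setminus\textup{sp}(\mathscr{L}_\ell(h)),
\end{equation}
for some $\varepsilon>0$ independent of $h$. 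Establishing \eqref{eq:planres} is the most technical step. Given $(\mathscr{L}_\ell(h)-z)u=f$, the plan is to introduce a cutoff $\chi_\ell\prec\underline{\chi_\ell}$ centered at $x_\ell$ with $\chi_\ell\perp\Sigma_\ell$: the part $(1-\underline{\chi_\ell})u$ is controlled by Proposition~\ref{prop:estimee_elliptique_conj} with $\Phi\equiv 0$, since $V_\ell$ is bounded below away from $x_\ell$. For $\chi_\ell u$, I would write $\mathscr{L}_\ell(h)=\mathscr{L}_\ell^{[2]}(h)+W$ with $W:=e^{i\alpha}(V_\ell-V_\ell^{[2]})$ vanishing to order three at $x_\ell$, further localize at scale $h^{\beta}$ with $\beta\in(0,1/2)$, use Taylor expansion of $W$ and the concentration of resolvent images, and bootstrap with Propositions~\ref{prop:resboundH} and~\ref{prop:pseudolocH} applied to $\mathscr{L}_\ell^{[2]}(h)$.

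Once \eqref{eq:planres} is in hand, part (ii) follows immediately by integrating on the contour of length $\sim h$. For (i) and (iii), I would compare $\Pi_{\ell,n}(h)$ with the rank-one projector $\Pi_{\ell,n}^{[2]}(h)$ attached to the simple eigenvalue $\nu_n(ah)$ of $\mathscr{L}_\ell^{[2]}(h)$. The normalized eigenfunction $\psi_n^{[2]}$ is concentrated at scale $\sqrt h$ around $x_\ell$, so $W\psi_n^{[2]}=\mathcal{O}(h^{3/2})$ by Taylor expansion, whence $(\mathscr{L}_\ell(h)-\nu_n(ah))\psi_n^{[2]}=\mathcal{O}(h^{3/2})$; combining with \eqref{eq:planres} and a residue computation on the contour yields $\Pi_{\ell,n}(h)\psi_n^{[2]}=\psi_n^{[2]}+\mathcal{O}(h^{1/2})$, which is nonzero for $h$ small, so $\textup{rank}\,\Pi_{\ell,n}\geq 1$ and at least one eigenvalue of $\mathscr{L}_\ell(h)$ lies in $D(\nu_n(ah),\varepsilon h)$. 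Conversely, any eigenvector $v$ of $\mathscr{L}_\ell(h)$ at an eigenvalue $\mu\in D(\nu_n(ah),Ch^{3/2})$ (given by Proposition~\ref{prop:locspec}) is, by Corollary~\ref{cor:quasimodes}, an $\mathcal{O}(h^{3/2})$ quasimode of $\mathscr{L}_\ell^{[2]}(h)$ at $\nu_n(ah)$, so $\Pi_{\ell,n}^{[2]}(h)v=v+\mathcal{O}(h^{1/2})$ by Proposition~\ref{prop:resboundH}: the restriction of $\Pi_{\ell,n}^{[2]}$ to $\textup{Ker}(\mathscr{L}_\ell-\mu)$ is injective into the one-dimensional space $\textup{Ran}(\Pi_{\ell,n}^{[2]})$, and a similar argument handles potential generalized eigenvectors. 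This forces $\textup{rank}\,\Pi_{\ell,n}=1$, pinning down a unique algebraically simple eigenvalue $\mu_{\ell,n}(h)$ in $D(\nu_n(ah),\varepsilon h)$ with $\textup{Ran}(\Pi_{\ell,n})=\textup{Ker}(\mathscr{L}_\ell-\mu_{\ell,n})$. The biorthogonal formula in (iii) then follows from the complex-conjugation symmetry $J\mathscr{L}_\ell(h)=\mathscr{L}_\ell(h)^*J$, which gives $\textup{Ker}(\mathscr{L}_\ell^*-\overline{\mu_{\ell,n}})=\textup{span}(\overline{\psi})$; the non-vanishing $\langle\psi,\overline{\psi}\rangle\neq 0$ is equivalent to $\Pi_{\ell,n}\neq 0$.

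Parts (iv) and (v) follow from \eqref{eq:planres} and the commutation $\Pi_{\ell,n}(\mathscr{L}_\ell-z)=(\mathscr{L}_\ell-z)\Pi_{\ell,n}$. For (iv), $\mathscr{L}_\ell-z$ preserves $\textup{Ran}(\textup{Id}-\Pi_{\ell,n})$ and has no spectrum there in $D(\nu_n(ah),\varepsilon h)$; the same localization argument as in Step~1, restricted to this invariant subspace (where the spectrum of $\mathscr{L}_\ell-z$ is bounded away from $0$ by $\sim h$), delivers a $C/h$-bound on its inverse, which applied to $(\textup{Id}-\Pi_{\ell,n})u$ yields (iv). For (v), decomposing $u=\Pi_{\ell,n}u+(\textup{Id}-\Pi_{\ell,n})u$ and combining the eigenvector identity $(\mathscr{L}_\ell-z)\Pi_{\ell,n}u=(\mu_{\ell,n}-z)\Pi_{\ell,n}u$ with the commutation, part (iv), and $\|\Pi_{\ell,n}\|\leq C$, yields an estimate of the form $|\mu_{\ell,n}-z|\|u\|\leq C(1+|\mu_{\ell,n}-z|h^{-1})\|(\mathscr{L}_\ell-z)u\|$. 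The stated inequality is then recovered by shrinking $\varepsilon$ enough to absorb the constants.
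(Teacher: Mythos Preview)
Your overall strategy---establish the resolvent bound \eqref{eq:planres} via localization near $x_\ell$ and comparison with $\mathscr{L}_\ell^{[2]}(h)$, then derive the projector properties---matches the paper's. Your treatments of (ii), the biorthogonal formula in (iii), and (v) are correct and essentially identical to the paper's. Two points, however, deserve correction.

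For (i), your quasimode argument only shows that each eigenvalue in the disk has \emph{geometric} multiplicity at most one: Corollary~\ref{cor:quasimodes} applies only to $\psi$ with $(\mathscr{L}_\ell-\mu)\psi=0$, so the claim that ``a similar argument handles potential generalized eigenvectors'' is not justified---the Agmon estimates do not directly apply to them. Nor does your argument exclude two distinct eigenvalues in the disk, since in the non-selfadjoint setting eigenvectors at different eigenvalues may be nearly parallel, so injectivity of $\Pi_{\ell,n}^{[2]}$ on each eigenspace separately does not bound the total rank. The paper sidesteps this entirely: the same localization ingredients you use for \eqref{eq:planres} actually give the stronger estimate $\|D_h(z)\|\leq Ch^{1/5}/\mathrm{dist}(z,\mathrm{sp}(\mathscr{L}_\ell^{[2]}))$ on the contour (Lemma~\ref{lem:controlD}), which upon integration yields $\|\Pi_{\ell,n}(h)-\Pi_{\ell,n}^{[2]}(h)\|=o(1)$; rank equality then follows from the standard fact that projections at distance $<1$ have equal rank.

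For (iv), ``re-running the localization argument on the invariant subspace $\mathrm{Ran}(\mathrm{Id}-\Pi_{\ell,n})$'' is problematic because the cutoff functions do not preserve that subspace. The clean route, once \eqref{eq:planres} holds on the contour, is the Cauchy-integral identity
\[
(\mathrm{Id}-\Pi_{\ell,n})u=\frac{1}{2\pi i}\int_{\mathscr{C}(\nu_n(ah),\varepsilon h)}(z-z_0)^{-1}(z-\mathscr{L}_\ell)^{-1}(\mathscr{L}_\ell-z_0)u\,dz,
\]
which the paper isolates as Proposition~\ref{prop:general_spectral_result} and which gives (iv) in one line from \eqref{eq:planres}. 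Equivalently, one may observe that $(\mathscr{L}_\ell-z)^{-1}(\mathrm{Id}-\Pi_{\ell,n})$ extends holomorphically across the disk and invoke the maximum principle; either way, no second localization is needed.
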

	The proof can be found in \S\ref{sec:proof_2_props}. It relies on an estimate for the difference 
	\begin{equation}
	\label{eq:defDh(z)}
	D_h(z) := (\mathscr{L}_\ell(h)-z)^{-1}-(\mathscr{L}_\ell^{[2]}(h)-z)^{-1},
	\end{equation}
	established in \S\ref{sec:distance_resolvents} below.
%
%
%
%
%
%
%
 		
 	\subsubsection{Distance to the harmonic resolvent}
 	\label{sec:distance_resolvents}
 	We estimate $D_h(z)$ for $z$ at a safe distance of the spectrum of $\mathscr{L}_\ell^{[2]}(z)$, by decomposing into the region far from the well $x_\ell$, where both operators are elliptic, and close to $x_\ell$, where the operators are close to each other. 
  	
	\begin{lemma}[Ellipticity away from the well]
		\label{lem:elliptic_far}
		Let $R$, $N > 0$ and let $\beta\in [0,1/2)$. Let $\chi_0 \in C^\infty(\mathbb{R})$ be such that $\chi_0 \equiv 1$ in a neighborhood of $0$ and denote $\chi_h := \chi_0(h^{-\beta}(x-x_\ell))$. There exists $C$, $h_0>0$ such that the inequality
		\begin{equation}
			\label{eq:ellip_farL}
			\|(1-\chi_h)(\mathscr{L}_\ell(h)-z)^{-1}v\|\leq  Ch^{-2\beta} \|(1-\chi_h) v\|+ Ch^N\|(\mathscr{L}_\ell(h)-z)^{-1}\| \,\|v\|.
		\end{equation}
		holds for all $h\in(0,h_0)$, $z\in D(0,Rh) \setminus \textup{sp}(\mathscr{L}_\ell(h))$ and $v \in L^2(\mathbb{R})$, while 
		\begin{equation}
			\label{eq:ellip_farL[2]}
			\|(1-\chi_h)(\mathscr{L}^{[2]}_\ell(h)-z)^{-1}v\|\leq  Ch^{-2\beta} \|(1-\chi_h) v\|+ Ch^N\|(\mathscr{L}^{[2]}_\ell(h)-z)^{-1}\| \,\|v\|\,,
		\end{equation}
		holds for all $h \in (0,h_0)$, $z \in D(0,Rh) \setminus\textup{sp}(\mathscr{L}_\ell^{[2]}(h))$ and $v \in L^2(\R)$,   
	\end{lemma}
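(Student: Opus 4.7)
The plan is to prove (i) and (ii) in parallel; the arguments are identical up to replacing $V_\ell$ by $V_\ell^{[2]}$, which satisfies the same lower bound (in fact growing at infinity). The driving idea is that $V_\ell(x)\gtrsim h^{2\beta}$ outside an $h^\beta$-neighborhood of $x_\ell$, so $\mathscr{L}_\ell(h)-z$ is elliptic there with inversion loss only $h^{-2\beta}$.

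\emph{Elliptic estimate with trivial weight, then commutator.} By Lemma~\ref{lem:fatigue}, for any $M>0$ there exist $L,h_0>0$ such that $V_\ell\geq Mh^{2\beta}(1-\chi_{L,h,\beta}^2)$; thus $\Phi\equiv 0$ is an $(Mh^{2\beta})$-subsolution associated to $\chi_{L,h,\beta}$, and since $Mh^{2\beta}>2|z|$ for $z\in D(0,Rh)$ when $h_0$ is small enough (as $\beta<1/2$), rerunning the energy argument in the proof of Proposition~\ref{prop:estimee_elliptique_conj} with $\kappa=Mh^{2\beta}$ in place of $Mh$ yields
\begin{equation*}
\|\psi\|_{H^2_h}\leq Ch^{-2\beta}\|(\mathscr{L}_\ell(h)-z)\psi\|+C\|\chi_{L,h,\beta}\psi\|\quad \forall\,\psi\in H^2(\R),\,z\in D(0,Rh),
\end{equation*}
and the same holds for $\mathscr{L}_\ell^{[2]}$. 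Now let $u:=(\mathscr{L}_\ell(h)-z)^{-1}v$, and choose $M$ (and thus $L$) small enough that $\chi_{L,h,\beta}\prec\chi_h$; then $\chi_{L,h,\beta}(1-\chi_h)=0$ and applying the estimate to $\psi=(1-\chi_h)u$ together with the identity $(\mathscr{L}_\ell-z)((1-\chi_h)u)=(1-\chi_h)v-[\mathscr{L}_\ell,\chi_h]u$ gives
\begin{equation*}
\|(1-\chi_h)u\|\leq Ch^{-2\beta}\|(1-\chi_h)v\|+Ch^{-2\beta}\|[\mathscr{L}_\ell,\chi_h]u\|.
\end{equation*}

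\emph{Iteration.} The commutator $[\mathscr{L}_\ell,\chi_h]=-2ih\chi_h'(hD_x)-h^2\chi_h''$ is supported in $\supp\chi_h'$, an $h^\beta$-scale annulus around $x_\ell$. To absorb it into a $Ch^N\|(\mathscr{L}_\ell-z)^{-1}\|\|v\|$-type error for arbitrary $N$, I iterate along a chain of nested cutoffs $\chi_h=\chi^{(0)}_h\prec\chi^{(1)}_h\prec\cdots\prec\chi^{(M)}_h$, all at scale $h^\beta$, in the spirit of the proof of Proposition~\ref{prop:pseudolocH}. At each step I use the sharper $(hD_x)$-resolvent bound
\begin{equation*}
\|(hD_x)(\mathscr{L}_\ell(h)-z)^{-1}\|\leq C\bigl(\|(\mathscr{L}_\ell(h)-z)^{-1}\|^{1/2}+h^{1/2}\|(\mathscr{L}_\ell(h)-z)^{-1}\|\bigr),
\end{equation*}
which follows from the energy identity $\Re\langle e^{-i\alpha/2}(\mathscr{L}_\ell-z)u,u\rangle=\cos(\alpha/2)(\|(hD_x)u\|^2+\langle V_\ell u,u\rangle)-\Re(e^{-i\alpha/2}z)\|u\|^2$, exactly as in Proposition~\ref{prop:resboundDx}. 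Each step then gains a positive power $h^\varepsilon$ of $h$ (with $\varepsilon>0$ uniform in $\beta\in[0,1/2)$); after $M\gtrsim N/\varepsilon$ steps the accumulated commutator error is bounded by $Ch^N\|(\mathscr{L}_\ell(h)-z)^{-1}\|\|v\|$, as required.

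\emph{Main obstacle.} The technically delicate point is extracting a uniformly positive gain per iteration. A naive commutator bound $\|[\mathscr{L}_\ell,\chi_h]u\|\leq Ch^{1-\beta}\|u\|_{H^1_h}+Ch^{2-2\beta}\|u\|$, combined with the elliptic loss $h^{-2\beta}$, only produces $h^{1-3\beta}$, which fails to be positive for $\beta\geq 1/3$. The sharper $(hD_x)$-resolvent bound above supplies an extra $h^{1/2}$ factor and restores a strictly positive gain for all $\beta<1/2$, mirroring exactly the mechanism that underlies Proposition~\ref{prop:pseudolocH} for the complex harmonic oscillator.
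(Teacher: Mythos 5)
Your overall plan is structurally close to the paper's: both of you exploit the polynomial ellipticity $V_\ell \gtrsim h^{2\beta}(1-\chi_{L,h,\beta}^2)$ (via Lemma~\ref{lem:fatigue}), apply the energy argument with the trivial weight $\Phi\equiv 0$, and iterate over a chain of $h^\beta$-scale cutoffs in the spirit of Proposition~\ref{prop:pseudolocH}. However, your ``Iteration'' step has a genuine gap, and the mechanism you propose to close it does not work as described.

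The problem is your appeal to the global bound $\|(hD_x)(\mathscr{L}_\ell(h)-z)^{-1}\|\leq C(\Lambda^{1/2}+h^{1/2}\Lambda)$ with $\Lambda:=\|(\mathscr{L}_\ell(h)-z)^{-1}\|$. This is a one-shot estimate: since for $z\in D(0,Rh)$ one has $\Lambda\gtrsim h^{-1}$, it yields $\|(hD_x)u\|\lesssim h^{1/2}\Lambda\|v\|$, and inserting it into the first commutator error gives at best $h^{-2\beta}\cdot h^{1-\beta}\cdot h^{1/2}\Lambda\|v\| = h^{3/2-3\beta}\Lambda\|v\|$. The exponent $3/2-3\beta$ is indeed positive for all $\beta<1/2$, but it is a \emph{fixed} number; repeating the same global resolvent estimate on the next scale does not increase it. The compounding you want, $h^{M\varepsilon}$, can only come from the \emph{locality} of each commutator, which requires tracking a localized gradient quantity (something like $\|(hD_x)\,(1-\chi^{(k)}_h)u\|$) through the iteration alongside the localized $L^2$ quantity; the $H^1_h$ part of your modified elliptic estimate would in principle allow this, but you never set up such a two-variable recursion, and your stated ``$\varepsilon>0$ uniform in $\beta\in[0,1/2)$'' is in any case incorrect (the achievable gain $1-2\beta$ degenerates as $\beta\to\tfrac12$). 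A secondary point: your chain is written $\chi_h=\chi^{(0)}_h\prec\chi^{(1)}_h\prec\cdots$, which goes the wrong way; to dominate $\mathbf{1}_{\supp(\chi^{(k)}_h)'}$ by $1-\chi^{(k+1)}_h$ one needs progressively \emph{smaller} inner cutoffs.

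The paper sidesteps the whole difficulty by never producing a raw $(hD_x)u$ term in the first place. After adding a localized perturbation $h^{2\beta}\rho_h$ (with $\rho_h\,\widetilde\chi_h=0$) to restore full ellipticity, they test the equation for $\widetilde\chi_h u_1$ against $\widetilde\chi_h u_1$ and use the anti-symmetry of the commutator $[(hD_x)^2,\widetilde\chi_h]$ in the $L^2$ pairing to move the derivative from $\underline{\widetilde\chi_h}u_1$ onto $\widetilde\chi_h u_1$, where it is absorbed by the gradient part of the quadratic form. This yields directly, after Young,
\[
\|\widetilde\chi_h u_1\|^2 \leq Ch^{-4\beta}\|\widetilde\chi_h v\|^2 + Ch^{2-4\beta}\|\underline{\widetilde\chi_h}u_1\|^2,
\]
a scalar recursion with gain $h^{2-4\beta}$ (in squared norm) that closes cleanly. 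They also pre-split $v=\underline{\widetilde\chi_h}v+(1-\underline{\widetilde\chi_h})v$ and handle the two pieces separately, which you omit; with your setup, the source contributions from the intermediate iterations are $\|(1-\chi^{(k)}_h)v\|\leq\|v\|$ rather than $\|(1-\chi_h)v\|$, and reconciling those with the target right-hand side is another point you would need to address. I recommend adopting the paper's anti-symmetry trick (and the $v$-decomposition) rather than trying to patch the $(hD_x)$-resolvent route.
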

	\begin{proof}
		The proofs of \eqref{eq:ellip_farL} and \eqref{eq:ellip_farL[2]} being identical, we only prove the former. Write $\widetilde{\chi}_0 := 1 - \chi_0$ and let $\underline{\widetilde{\chi}_0}$ be chosen such that $\widetilde{\chi}_0 \prec \underline{\widetilde{\chi}_0}$ and $0 \notin \supp(\underline{\widetilde{\chi}_0})$. Let 
		$\widetilde{\chi}_h := 1 - \chi_h$ and 
		$$\underline{\widetilde{\chi}_h} := \underline{\widetilde{\chi}_0}\left((x - x_\ell)/{h^\beta}\right)\,.$$
		Given $v \in L^2(\mathbb{R})$, let $u_1,u_2$ be solutions of
		$$(\mathscr{L}_\ell(h)-z)u_1 =  \underline{\widetilde{\chi}_h} v\,, \quad (\mathscr{L}_\ell(h) - z)u_2 = (1 - \underline{\widetilde{\chi}_h}) v.$$
		After summing up and inverting, we get
		$$\|\widetilde{\chi}_h (\mathscr{L}-z)^{-1}v\| = \|\widetilde{\chi}_h (u_1 + u_2)\| \leq \|\widetilde{\chi}_h u_1\|_{L^2} + \|\widetilde{\chi}_h u_2\|_{L^2},$$
		and thus it suffices to show that
		\begin{equation}
		\label{eq:suffice_ellip_far}
		\|\widetilde{\chi}_h u_1\|_{L^2} \leq Ch^{-2\beta} \|\widetilde{\chi}_h v\|_{L^2} + Ch^N \|u_1\|_{L^2}\quad \text{and} \quad \|\widetilde{\chi}_h u_2\|_{L^2} \leq Ch^N \|u_2\|_{L^2}\,.
		\end{equation}
		
		The idea is that $\mathscr{L}_\ell(h) - z$ can be made elliptic by adding a perturbation localized away from the support of $\widetilde{\chi}_h$. Namely, let $\rho_0$ be chosen such that 
		$\rho_0 \perp \widetilde{\chi}_0$, $\rho_0 \equiv 1$ near $0$, and let
		$$\rho_h := \rho_0\left((x - x_\ell)/h^{\beta}\right).$$ 
		Notice that, by Lemma \ref{lem:fatigue} and since $\beta < \frac12$, we have $V_\ell + h^{2\beta} \rho_h-Rh \geq ch^{2\beta}$ on $\mathbb{R}$ for $h$ small enough.
		Thus, 
		\begin{align}
			\textup{Re}\big\langle e^{-i\alpha/2} \left(\mathscr{L}_\ell +  h^{2\beta}\rho_h - z \right) u,u\big\rangle 
			&\geq \cos(\alpha/2) \left( \|(hD_x) u\|^2_{L^2} + ch^{2\beta}\|u\|^2_{L^2} \right)
			\label{eq:ellip_far}
		\end{align}
	 	To exploit this property, we observe that since $\rho_h \widetilde{\chi}_h = 0$, 
		$$\widetilde{\chi}_h(\mathscr{L}_\ell + h^{2\beta}\rho_h - z) u_1 = \widetilde{\chi}_h v\,; 
		$$
		therefore, 
		$$(\mathscr{L}_\ell + h^{2\beta}\rho_h - z)(\widetilde{\chi}_h u_1) = \widetilde{\chi}_h v + [h^2 D_x^2,\widetilde{\chi}_h](\underline {\widetilde{\chi}_h} u_1).$$
		Testing by $\widetilde{\chi}_h u_1$, applying \eqref{eq:ellip_far} and using that $[(h D_x)^2,\widetilde{\chi}_h]$ is anti-symmetric in the $L^2$ scalar product, we deduce that 
		\begin{align*}
			&\cos(\alpha/2) \left(\|(hD_x) (\widetilde{\chi}_h u_1)\|^2_{L^2} + ch^{2\beta} \|\widetilde{\chi}_h u_1\|^2_{L^2}\right) \\
			&\quad\leq \left|\big\langle\widetilde{\chi}_h v,\widetilde{\chi}_h u_1\big\rangle\right| + \Big|\big\langle\underline {\widetilde{\chi}_h}u_1,[(h D_x)^2,\widetilde{\chi}_h](\widetilde{\chi}_h u_1)\big\rangle\Big|\\
			&\quad \leq \|\widetilde{\chi}_h v\|_{L^2} \|\widetilde{\chi}_h u_1\|_{L^2} + C\|\underline {\widetilde{\chi}_h}u_1\|_{L^2} \Big(h^{1 - \beta} \|(hD_x)\widetilde{\chi}_h u_1\|_{L^2} + h^{2 - 2\beta} \|\widetilde{\chi}_h u_1\|_{L^2}\Big)\\
			& \quad \leq  \varepsilon \Big(h^{2\beta} \|\widetilde{\chi}_h u_1\|^2_{L^2} + \|(hD_x)(\widetilde{\chi}_h u_1)\|^2_{L^2}\Big) \\
			& \hspace{1cm} + C \varepsilon^{-1}h^{2\beta} \Big(h^{-4\beta}\|\widetilde{\chi}_h v\|^2_{L^2} + C(h^{4 - 8\beta} + h^{2 - 4\beta})\|\underline{\widetilde{\chi}_{h}}u_1\|_{L^2}^2\Big),
		\end{align*}
		for any $\varepsilon \in (0,1)$. Choosing $\varepsilon > 0$ small enough, we deduce that
		\begin{align*}
			h^{2\beta} \|\widetilde{\chi}_h u_1\|^2_{L^2}  \leq Ch^{2\beta} \left(h^{-4\beta} \|\widetilde{\chi}_h v\|^2_{L^2} + h^{2 - 4\beta} \|\underline{\widetilde{\chi}_h} u_1\|^2_{L^2}\right)
		\end{align*}
		i.e.
		$$\|\widetilde{\chi}_h u_1\|^2_{L^2} \leq Ch^{-4\beta} \|\widetilde{\chi}_h v\|^2_{L^2} +C h^{\varepsilon} \|\underline{\widetilde{\chi}_h} u_1\|^2_{L^2},$$
		where $\varepsilon := 2 - 4\beta > 0$. 
		Since $\underline{\widetilde{\chi}_h}$ still satisfies the assumptions of the lemma, this argument can be repeated as many times as necessary as in the proof of Proposition \ref{prop:pseudolocH} (with $\underline{\widetilde{\chi}_h}$ playing the role of $\widetilde{\chi}_h$ in the next iteration). This gives the first estimate in \eqref{eq:suffice_ellip_far}; the second can be shown similarly. 
	\end{proof}
	\begin{lemma}[Estimate of $D_h(z)$]
		\label{lem:controlD}
		Given $R>0$ and $c > 0$, there exist $C,h_0>0$ such that for all $h\in(0,h_0)$ and for all $z\in D(0,Rh)$,
		\[\mathrm{dist}\big(z,\mathrm{sp}(\mathscr{L}^{[2]}_\ell(h))\big)\geq ch \implies \|D_h(z)\|\leq \frac{Ch^{\frac15}}{\mathrm{dist}\big(z,\mathrm{sp}(\mathscr{L}^{[2]}_\ell(h))\big)}\,,\]
		where $D_h(z)$ is defined by \eqref{eq:defDh(z)}.
	\end{lemma}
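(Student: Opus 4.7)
The plan is to apply the second resolvent identity, writing
\begin{equation*}
D_h(z) = -(\mathscr{L}_\ell(h) - z)^{-1}\, W\, (\mathscr{L}_\ell^{[2]}(h) - z)^{-1}, \qquad W := e^{i\alpha}(V_\ell - V_\ell^{[2]}),
\end{equation*}
which is legitimate because, by Proposition~\ref{prop:locspec} applied with a slightly enlarged value of $R$ to cover a neighborhood of $D(0, Rh)$, one has $\dist(z, \mathrm{sp}(\mathscr{L}_\ell(h))) \geq \rho - Ch^{3/2} \geq \rho/2$ for $h$ small, where $\rho := \dist(z, \mathrm{sp}(\mathscr{L}_\ell^{[2]}(h))) \geq ch$. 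I will exploit the fact that $W$ is cubically small near $x_\ell$, with $|W(x)| \leq C|x - x_\ell|^3$ by Taylor's formula, while growing only quadratically at infinity.

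Introduce a cutoff $\chi_h$ equal to $1$ on $[x_\ell - h^\beta, x_\ell + h^\beta]$ and supported in $[x_\ell - 2h^\beta, x_\ell + 2h^\beta]$, for a parameter $\beta \in (1/3, 1/2)$ to be optimized (the optimal value will be $\beta = 2/5$). Splitting $W = \chi_h W + (1-\chi_h) W$ gives $D_h(z) = T_1 + T_2$ with $T_1 := -(\mathscr{L}_\ell - z)^{-1}\chi_h W(\mathscr{L}_\ell^{[2]} - z)^{-1}$ and $T_2 := -(\mathscr{L}_\ell - z)^{-1}(1 - \chi_h) W(\mathscr{L}_\ell^{[2]} - z)^{-1}$. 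The term $T_1$ is straightforward: since $\|\chi_h W\|_\infty \leq Ch^{3\beta}$, Proposition~\ref{prop:resboundH} (via the unitary conjugation to $\mathscr{H}(ah)$) gives
\begin{equation*}
\|T_1\| \leq \frac{Ch^{3\beta}}{\rho}\|(\mathscr{L}_\ell(h) - z)^{-1}\|.
\end{equation*}

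The hard part will be the estimate of $T_2$, where $W$ is not bounded on $\supp(1 - \chi_h)$. My plan is to introduce further cutoffs $\chi_h^{(1)} \prec \chi_h^{(2)} \prec \chi_h$ at the same scale $h^\beta$, decompose $\mathrm{Id} = \chi_h^{(i)} + (1 - \chi_h^{(i)})$ on both sides of $W$, and estimate the resulting pieces by combining three ingredients: (i) the pseudo-locality of $(\mathscr{L}_\ell^{[2]} - z)^{-1}$ (Proposition~\ref{prop:pseudolocH}) for terms in which cutoffs with disjoint supports meet through the harmonic resolvent; (ii) the ellipticity estimate (Lemma~\ref{lem:elliptic_far}) for $(\mathscr{L}_\ell - z)^{-1}$ on the left; and (iii) the boundedness of $W(\mathscr{L}_\ell^{[2]} - z)^{-1}$, which follows from the $x^2$-control built into the domain of the complex harmonic oscillator (via the graph-norm bound $\|x^2 u\| \leq C\|(\mathscr{H}(ah) - z) u\|(1 + |z|/\rho)$). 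Combined with a careful commutator analysis, propagating the pseudo-local bounds through the quadratically-growing weight $W$ by means of commutators between $(hD_x)^2$ and the cutoffs, these should yield
\begin{equation*}
\|T_2\| \leq Ch^N\bigl(\|(\mathscr{L}_\ell(h) - z)^{-1}\| + 1/\rho\bigr),
\end{equation*}
for any fixed $N$, which is absorbable in the bootstrap below.

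The final step is a bootstrap using $(\mathscr{L}_\ell - z)^{-1} = (\mathscr{L}_\ell^{[2]} - z)^{-1} + D_h(z)$, i.e.\ $\|(\mathscr{L}_\ell - z)^{-1}\| \leq C/\rho + \|D_h(z)\|$. Substituting into $\|D_h\| \leq \|T_1\| + \|T_2\|$, the prefactor $Ch^{3\beta}/\rho$ is $o(1)$ for $\beta > 1/3$ and $\rho \geq ch$, so the $\|D_h\|$-term on the right can be absorbed, yielding $\|D_h(z)\| \leq Ch^{3\beta}/\rho^2 + Ch^N/\rho$. The inequality $h^{3\beta}/\rho^2 \leq Ch^{1/5}/\rho$ for $\rho \geq ch$ is equivalent to $h^{3\beta - 6/5} \leq Cc$, forcing $\beta \geq 2/5$; taking $\beta = 2/5$ (and $N \geq 1/5$) gives the desired bound $\|D_h(z)\| \leq Ch^{1/5}/\rho$. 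The exponent $1/5$ thus arises from balancing the Taylor size $h^{3\beta}$ of $W$ near $x_\ell$ against the smallness of $\rho$ required to absorb the term in the bootstrap.
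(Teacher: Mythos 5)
Your overall strategy mirrors the paper's: second resolvent identity $D_h(z) = -(\mathscr{L}_\ell - z)^{-1}W(\mathscr{L}_\ell^{[2]}-z)^{-1}$, a cutoff at scale $h^\beta$ to split $W$, ellipticity away from the well plus pseudo-locality of the harmonic resolvent, and a bootstrap. (The paper places the harmonic resolvent on the left and localizes $D_h$ itself with a cutoff before applying the resolvent identity, but the two orderings are equivalent.) Your $T_1$ estimate and the a-priori lower bound $\dist(z,\textup{sp}(\mathscr{L}_\ell(h)))\gtrsim h$ via Proposition~\ref{prop:locspec} are fine, and so is the graph-norm observation that $W(\mathscr{L}_\ell^{[2]}-z)^{-1}$ is $O(1)$.

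The problem is the claimed bound $\|T_2\| \leq Ch^N(\|(\mathscr{L}_\ell - z)^{-1}\| + 1/\rho)$, which you describe only as something the cutoff manipulations ``should yield''. I do not believe that bound is true, and the tools you cite do not produce it. Decompose $T_2 = \chi_h^{(3)}T_2 + (1-\chi_h^{(3)})T_2$ with $\chi_h^{(3)}\prec\chi_h$. The first piece is indeed $O(h^N\|(\mathscr{L}_\ell-z)^{-1}\|)$, because in Lemma~\ref{lem:elliptic_far} the $h^{-2\beta}$-term disappears when the right-hand data is supported inside $\chi_h$, i.e.\ $(1-\chi_h)\chi_h^{(3)}=0$. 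But for the second piece $(1-\chi_h^{(3)})(\mathscr{L}_\ell-z)^{-1}(1-\chi_h)W(\mathscr{L}_\ell^{[2]}-z)^{-1}$ all three cutoffs overlap, there is no disjoint-support gain, and Lemma~\ref{lem:elliptic_far} gives only $\|(1-\chi_h^{(3)})T_2\| \leq Ch^{-2\beta}\|W(\mathscr{L}_\ell^{[2]}-z)^{-1}\| + Ch^N\|(\mathscr{L}_\ell-z)^{-1}\| \leq Ch^{-2\beta} + Ch^N\|(\mathscr{L}_\ell-z)^{-1}\|$. There is no mechanism to improve $h^{-2\beta}$ to $h^N$ here: in the annulus $|x-x_\ell|\sim h^\beta$ the potential is only $\sim h^{2\beta}$, so the resolvents localized there genuinely have size $\sim h^{-2\beta}$, with $W\sim h^{3\beta}$ giving no further help once both resolvents are far from the well.

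This gap is not cosmetic, because the $h^{-2\beta}$ term is precisely what forces $\beta = 2/5$. The paper balances $h^{3\beta-1}$ (from $T_1$-type estimate combined with $\rho\gtrsim h$) against $h^{-2\beta}$, which written as $d^{-1}h^{1-2\beta}$ gives $3\beta-1 = 1-2\beta$, i.e.\ $\beta = 2/5$ and exponent $1/5$. Your final inequality $\|D_h\| \leq Ch^{3\beta}/\rho^2 + Ch^N/\rho$ is monotone in $\beta$: if $T_2$ were genuinely $O(h^N)$, you could take $\beta$ close to $\tfrac12$ and obtain $\|D_h\| \lesssim h^{1/2-\varepsilon}/\rho$, much stronger than the lemma. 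Your observation that ``$\beta\geq 2/5$'' only ensures that the bound you are trying to match is not violated; it does not derive $1/5$. To fix the argument, replace the $T_2$ claim by $\|T_2\| \leq Ch^{-2\beta} + Ch^N\|(\mathscr{L}_\ell - z)^{-1}\|$ (as above), carry the $h^{-2\beta}$ through the bootstrap, and then the optimization over $\beta\in(1/3,1/2)$ genuinely pins down $\beta = 2/5$ and the exponent $1/5$, exactly as in the paper's proof.
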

	\begin{proof}
		Let $z\in D(0,Rh)$ be such that $d:=\textup{dist}\big(z,\mathrm{sp}(\mathscr{L}_\ell^{[2]}(h))\big)\geq ch$; note that $d \leq Ch$ as well since $\textup{dist}(\nu_1(ah),D(0,Rh)) \leq (R +a)h$. Let $\beta \in [0,\frac12)$ be a parameter to be optimized later. By Lemma \ref{lem:elliptic_far}, 
		\begin{equation}
		\label{eq:start_Dh}
		\|D_h(z)\| \leq \|\chi_\ell D_h(z)\| + Ch^{-2\beta} + Ch^N \|(\mathscr{L}_\ell(h) - z)^{-1}\|,
		\end{equation}
		where $\chi_\ell(x)=\chi_0(h^{-\beta}(x-x_\ell))$ and $\chi_0\in C^\infty_c(\R)$ equals $1$ on $[-1,1]$ and satisfies $\supp(\chi_0) \subset [-2,2]$. To estimate $\|\chi_\ell D_h(z)\|$, we write
		\begin{align}
		\nonumber
			\chi_h D_h(z) 
			&= \chi_\ell (\mathscr{L}_\ell^{[2]}-z)^{-1} \underline{\chi_\ell} (V^{[2]}_\ell - V) (\mathscr{L}_\ell - z)^{-1} \\
			& \qquad + \chi_\ell (\mathscr{L}_\ell^{[2]}-z)^{-1}(1 - \underline{\chi_\ell}) (V^{[2]}_\ell - V) (\mathscr{L}_\ell - z)^{-1},
			\label{eq:chih_Dh_2terms}
		\end{align}
		where $\underline{\chi_\ell} := \underline{\chi_0}(h^{-\beta}(x-x_\ell))$ for some $\underline{\chi_0} \in C^\infty(\R)$ chosen such that $\chi_0 \prec \underline{\chi_0}$. Applying the resolvent estimate for $(\mathscr{L}^{[2]}_\ell - z)^{-1}$ (Proposition \ref{prop:resboundH}) in the first term of the right-hand side of \eqref{eq:chih_Dh_2terms}, and the pseudo-locality of $\mathscr{L}_\ell^{[2]}(h)$ (Proposition \ref{prop:pseudolocH} with $\tau = 1$ -- this is legitimate since $1 < \frac32 - \beta$) in the second term, we obtain
		\begin{align*}
			\|\chi_\ell D_h(z)\| 
			&\leq C \left( d^{-1}\|\underline{\chi_\ell} (V - V_\ell^{[2]})\|_{\infty} + h^N\right) \|(\mathscr{L}_\ell - z)^{-1}\| \\
			&\leq Cd^{-1} h^{3\beta}\|V^{(3)}\|_\infty\|(\mathscr{L}_\ell - z)^{-1}\|
		\end{align*}
		Inserting this bound in \eqref{eq:start_Dh}, recalling that $ch \leq d \leq Ch$, and choosing $N$ large enough,
		\begin{align*}
		\|D_h(z)\| &\leq Ch^{3\beta-1}\|(\mathscr{L}_\ell(h)- z)^{-1}\| + h^{-2\beta}\\
		& \leq C h^{3\beta-1} \big(\|(\mathscr{L}^{[2]}_\ell(h) - z)^{-1}\| + \|D_h(z)\|\big) + h^{-2\beta}.
		\end{align*}
		Provided that $\beta$ is such that $3 \beta - 1 > 0$, we deduce that for $h$ small enough,
		$$\|D_h(z)\| \leq C h^{3\beta - 2} + Ch^{-2\beta} \leq C d^{-1} \left(h^{3\beta - 1} + h^{1 - 2\beta}\right).$$
		Optimizing in $\beta$ leads to $\beta := \frac25$ (which indeed lies in $[0,\frac12)$ and satisfies $3\beta - 1 > 0$), giving
		$$\|D_h(z)\| \leq C d^{-1} h^{\frac15}. \qedhere$$
	\end{proof}

%

\subsubsection{Proof of Proposition \ref{prop:fourre_tout}}

\label{sec:proof_2_props}
We first give two immediate consequences of Lemma \ref{lem:controlD}.
\begin{corollary}
\label{cor:almostSpectralTheoremLl}
Given $R > 0$, $c > 0$, there exists $C,h_0 > 0$ such that for all $h \in (0,h_0)$ and for all $z \in D(0,Rh)$, 
	\begin{equation*}
		\mathrm{dist}\big(z,\mathrm{sp}(\mathscr{L}^{[2]}_\ell(h))\big)\geq 	ch \implies \|(\mathscr{L}_\ell(h) - z)^{-1}\| \leq \frac{C}{\mathrm{dist}\big(z,\mathrm{sp}(\mathscr{L}^{[2]}_\ell(h))\big)}.
	\end{equation*}
\end{corollary}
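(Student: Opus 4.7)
The plan is to derive the bound by combining Lemma \ref{lem:controlD} with the resolvent bound for the quadratic approximation $\mathscr{L}^{[2]}_\ell(h)$, using the identity
\begin{equation*}
(\mathscr{L}_\ell(h)-z)^{-1}=(\mathscr{L}^{[2]}_\ell(h)-z)^{-1}+D_h(z).
\end{equation*}
The content of the corollary beyond the lemma is thus very modest; the only subtle point is to ensure that the left-hand side is actually well-defined, and to absorb the residual factor $h^{1/5}$ from $D_h(z)$ into the bound for the harmonic resolvent.

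First, I would establish invertibility of $\mathscr{L}_\ell(h)-z$ under the assumption $\mathrm{dist}(z,\mathrm{sp}(\mathscr{L}^{[2]}_\ell(h)))\geq ch$. By Proposition~\ref{prop:locspec}, any eigenvalue of $\mathscr{L}_\ell(h)$ in $D(0,Rh)$ lies within $Ch^{3/2}$ of some $\nu_n(ah)\in\mathrm{sp}(\mathscr{L}^{[2]}_\ell(h))$, so for $h$ small enough the disk $D(z,ch/2)$ avoids $\mathrm{sp}(\mathscr{L}_\ell(h))$ and $z\in D(0,Rh)\setminus\mathrm{sp}(\mathscr{L}_\ell(h))$ is in the resolvent set (recalling from Proposition~\ref{prop:point_spectrum} that the spectrum in $D(0,Rh)$ is purely discrete, so the resolvent is defined on its complement).

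Next, I would bound $\|(\mathscr{L}^{[2]}_\ell(h)-z)^{-1}\|$ by transferring Proposition~\ref{prop:resboundH} through the unitary equivalence $\mathcal{U}^*\mathscr{L}^{[2]}_\ell(h)\mathcal{U}=\mathscr{H}(ah)$ recorded in \S\ref{sec:davies}. This gives
\begin{equation*}
\|(\mathscr{L}^{[2]}_\ell(h)-z)^{-1}\| \leq \frac{C}{\mathrm{dist}(z,\mathrm{sp}(\mathscr{L}^{[2]}_\ell(h)))},
\end{equation*}
since $\mathrm{sp}(\mathscr{L}^{[2]}_\ell(h))=\mathrm{sp}(\mathscr{H}(ah))$ and the unitary preserves distances and operator norms. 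Combining this with Lemma~\ref{lem:controlD} and the triangle inequality, I obtain
\begin{equation*}
\|(\mathscr{L}_\ell(h)-z)^{-1}\| \leq \frac{C(1+h^{1/5})}{\mathrm{dist}(z,\mathrm{sp}(\mathscr{L}^{[2]}_\ell(h)))},
\end{equation*}
which gives the desired bound after taking $h_0$ small enough and renaming the constant. There is no real obstacle here; the substance of the argument lies in Lemma~\ref{lem:controlD}.
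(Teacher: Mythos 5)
Your proof is correct and follows essentially the same route as the paper: a triangle inequality applied to $(\mathscr{L}_\ell(h)-z)^{-1} = (\mathscr{L}^{[2]}_\ell(h)-z)^{-1} + D_h(z)$, combined with Lemma~\ref{lem:controlD} and the resolvent bound of Proposition~\ref{prop:resboundH} transferred to $\mathscr{L}^{[2]}_\ell(h)$ via the unitary equivalence with $\mathscr{H}(ah)$. You additionally spell out why $z$ lies in the resolvent set of $\mathscr{L}_\ell(h)$ using Propositions~\ref{prop:point_spectrum} and~\ref{prop:locspec}, a point the paper's one-line proof leaves implicit.
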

\begin{proof}
	Indeed, by Lemma \ref{lem:controlD} and Proposition \ref{prop:resboundH}, 
	$$\|(\mathscr{L}_\ell - z)^{-1}\| \leq  \|(\mathscr{L}^{[2]}_\ell - z)^{-1}\| + \|D_h(z)\|\leq C\frac{1 + h^{\frac15}}{\textup{dist}(z,\textup{sp}(\mathscr{L}_\ell^{[2]}))} \leq \frac{C'}{\textup{dist}(z,\textup{sp}(\mathscr{L}_\ell^{[2]}))}. \qedhere$$
\end{proof}

\begin{corollary}
	\label{cor:limPi}
	$\lim_{h \to 0} \|\Pi_{\ell,n}(h) - \Pi_{\ell,n}^{[2]}(h)\|_{L^2} = 0.$
\end{corollary}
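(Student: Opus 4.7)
The plan is to write the difference of Riesz projectors as a single contour integral of $D_h(z)$, and then bound it using the estimate from Lemma \ref{lem:controlD}. Concretely, by the definitions \eqref{eq:defPiln} and \eqref{eq:defPi[2]ln}, one has
\begin{equation*}
\Pi_{\ell,n}(h) - \Pi_{\ell,n}^{[2]}(h) = -\frac{1}{2i\pi} \int_{\mathscr{C}(\nu_n(ah),\varepsilon h)} D_h(z)\,dz,
\end{equation*}
with $D_h(z)$ defined in \eqref{eq:defDh(z)}. The contour $\mathscr{C}(\nu_n(ah),\varepsilon h)$ has length $2\pi\varepsilon h$, so the main task is to control $\|D_h(z)\|$ uniformly on this contour.

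The key observation is that, since the eigenvalues $\nu_m(ah)$ of $\mathscr{L}_\ell^{[2]}(h)$ are separated by $2ah$, by taking $\varepsilon < a$ and $\varepsilon$ small enough we ensure that the only eigenvalue of $\mathscr{L}_\ell^{[2]}(h)$ in the closed disk of radius $\varepsilon h$ around $\nu_n(ah)$ is $\nu_n(ah)$ itself. Consequently, for $z \in \mathscr{C}(\nu_n(ah),\varepsilon h)$,
\begin{equation*}
\mathrm{dist}\big(z,\mathrm{sp}(\mathscr{L}_\ell^{[2]}(h))\big) = \varepsilon h.
\end{equation*}
In particular, this distance is bounded below by $\varepsilon h$, so that Lemma \ref{lem:controlD} applies (with $c := \varepsilon$) and yields
\begin{equation*}
\|D_h(z)\| \leq \frac{C h^{1/5}}{\varepsilon h} \qquad \text{for all } z \in \mathscr{C}(\nu_n(ah),\varepsilon h)
\end{equation*}
for $h$ small enough.

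Combining these two ingredients, we obtain
\begin{equation*}
\|\Pi_{\ell,n}(h) - \Pi_{\ell,n}^{[2]}(h)\| \leq \frac{1}{2\pi} \cdot 2\pi\varepsilon h \cdot \frac{C h^{1/5}}{\varepsilon h} = C h^{1/5} \xrightarrow[h \to 0]{} 0,
\end{equation*}
which is the desired conclusion. There is no real obstacle here: the entire content of the corollary is packaged inside Lemma \ref{lem:controlD}, and the only mild care needed is in choosing $\varepsilon$ small enough that the circle $\mathscr{C}(\nu_n(ah),\varepsilon h)$ sits at distance exactly $\varepsilon h$ from $\mathrm{sp}(\mathscr{L}_\ell^{[2]}(h))$, so that Lemma \ref{lem:controlD} is applicable uniformly along the contour.
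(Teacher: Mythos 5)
Your proof is correct and follows the same route as the paper: expressing the difference of Riesz projectors as a contour integral of $D_h(z)$, checking that $\operatorname{dist}(z,\operatorname{sp}(\mathscr{L}_\ell^{[2]}(h)))=\varepsilon h$ along $\mathscr{C}(\nu_n(ah),\varepsilon h)$ so that Lemma \ref{lem:controlD} applies, and bounding the integral by the contour length times the pointwise bound. Your write-up is in fact slightly more explicit than the paper's (which compresses the same estimate into two lines), and you correctly record the quantitative rate $\mathscr{O}(h^{1/5})$ rather than the mere $o(1)$ stated in the corollary.
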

\begin{proof}
	By Lemma \ref{lem:controlD}, we have for $\varepsilon$ small enough (in view Proposition \ref{prop:locspec}),
	\begin{align*}
	\|\Pi_{\ell,n}(h) - \Pi_{\ell,n}^{[2]}(h)\| 
	&\leq \frac{1}{2\pi} \left|\int_{\mathscr{C}(\nu_n(ah),\varepsilon h)} D_h(z)\,dz\right|\\
	& \leq \frac{1}{2\pi} \frac{C h^{\frac15}}{ \displaystyle\inf_{z \in \mathscr{C}(\nu_n(ah),\varepsilon h)} |z - \nu_{\rev{n}}(ah)|}\cdot 2\pi \varepsilon h =  o_{h \to 0}(1). \qedhere
	\end{align*}
\end{proof}	

\begin{proof}[Proof of Proposition \ref{prop:fourre_tout}]\
\begin{itemize}
\item[(i)] For $h$ small enough, $\|\Pi_{\ell,n}(h) - \Pi_{\ell,n}^{[2]}\| < 1$ by Proposition \ref{cor:limPi}, and thus $\textup{rank}(\Pi_{\ell,n}(h)) = \textup{rank}(\Pi_{\ell,n}^{[2]}(h)) = 1$ by \cite[Lemma 1.5.5]{davies2007linear}. 
\item[(ii)] This follows from Proposition \ref{cor:limPi} by noticing that, thanks to Proposition \ref{prop:resboundH}, 
$$\|\Pi_{\ell,n}^{[2]}(h)\| \leq \frac{1}{2\pi} \frac{2\pi \varepsilon h}{c \varepsilon h} \leq \frac{1}{c}.$$
\item[(iii)] The first statements follow from (ii). Moreover, 
We have $\textup{Ran}(\Pi_{\ell,n}) \supset \textup{Ker}\left(\mathscr{L}_\ell(h) - \mu_{\ell,n}(h)\right)$, and the equality follows since $$1 \leq \dim \big( \textup{Ker}\left(\mathscr{L}_\ell(h) - \mu_{\ell,n}(h)\Id\right)\big) \leq \dim\big(\textup{Ran}(\Pi_{\ell,n}(h))\big) = 1.$$
Next, we have $\textup{Ker}({\Pi}_\ell(h)) = \textup{Ran}(\Pi_\ell(h)^*)^\perp$ and we observe that $\Pi_{\ell,n}(h)^*$ is equal to the Riesz projector of $\mathscr{L}_\ell(h)^*$ associated to the eigenvalue $\overline{\mu_{\ell,n}(h)}$. Thus, if $\psi_\ell \in \textup{Ran}(\Pi_{\ell,n}(h))$, the image of $\Pi_\ell(h)^*$ is spanned by $\overline{\psi_\ell}$, since $$(\mathscr{L}_\ell^* - \overline{\mu_{\ell,n}(h)}) \overline{\psi_\ell} = \overline{(\mathscr{L}_\ell(h) - \mu_{\ell,n}(h))\psi_\ell} = 0.$$
Hence, $\textup{Ker}(\Pi_\ell(h)) = \textup{Span}(\{\overline{\psi_\ell}\})^\perp$ and in particular, there exists $c \neq 0$ such that
$$\Pi_{\ell,n}(h) = c \langle \cdot,\overline{\psi_\ell}\rangle \psi_\ell.$$ 
Finally, one has $c\langle \psi_\ell,\overline{\psi}_\ell\rangle = 1$ since $\Pi_{\ell,n}(h) \psi_\ell = \psi_\ell$. 
\item[(iv)] By Proposition \ref{cor:almostSpectralTheoremLl}, $\sup_{z \in \mathscr{C}(\nu_n(ah),2\varepsilon h)} \|(\mathscr{L}_\ell(h) -z)^{-1}\| \leq Ch^{-1}$. The result thus follows from Proposition \ref{prop:general_spectral_result} with $\Omega := D(\nu_n(ah),2\varepsilon h)$ and $K := \overline{D(\nu_n(ah),\varepsilon h)}$. 
\item[(v)] Given $u \in H^2(\R)$,
we write
\begin{align*}
(\mu_{\ell,n}(h) - z)u &= (\Pi_{\ell,n}(h) + \Id - \Pi_{\ell,n}(h))(\mu_{\ell,n}(h) - z)  u\\
&= \Pi_{\ell,n}(h)(\mathscr{L}_\ell(h) - z)  u + (\mu_{\ell,n}(h) - z)(\Id - \Pi_{\ell,n}(h)) u
\end{align*}
where we have used that $\Pi_{\ell,n}(h) u$ is (equal to zero or) an eigenvector of $\mathscr{L}_\ell(h)$ by (iii), and that $\Pi_{\ell,n}(h)$ commutes with $\mathscr{L}_\ell(h)$. Thus, for $z \in D(\nu_n(ah),\varepsilon h)$,
we deduce by (i) and (iv) that
\begin{align*}
|\mu_{\ell,n}(h) - z|\|u\| &\leq \|\Pi_{\ell,n}(h)\|\|(\mathscr{L}_\ell(h)- z) u\| +  |\mu_{\ell,n}(h) - z| \|(\Id - \Pi_{\ell,n}(h))u\|\\
& \leq C \big(1 + h^{-1}|\mu_{\ell,n}(h) - z|\big) \|(\mathscr{L}_\ell(h) - z)\rev{u}\| \\
& \leq C\|(\mathscr{L}_{\ell}(h) - z)u\|.\qedhere
\end{align*}
\end{itemize}
\end{proof}

	\section{WKB approximation for the simple-well operator}
		
		\label{sec:proofWKB}

		The goal of this section is to give a WKB-type approximation of the low-lying eigenvalues and eigenfunctions of the simple-well operator. This is a classical tool in the analysis of semiclassical tunneling in selfadjoint settings, \cite{HS84, Robert,Hel88, DiSj99, DR25} and the references therein. These approximations take the form 
		$$(\mathscr{L}_\ell(h) - \mu(h)) (e^{-\varphi_\ell(x)/h} a(x,h)) = 0$$
		for some phase function $\varphi_\ell$ and 
		$$a(h)\sim a_0(x) + h a_1(x) + h^2 a_2(x) \ldots\,, \quad \mu(h) \sim \mu_0 + h \mu_1 + h^2 \mu_2 + \ldots.$$
		Cancelling the leading order term, one obtains the eikonal equation $(\varphi_\ell')^2  = e^{i\alpha} V_\ell$, which is satisfied by 
		\begin{equation}
		\label{eq:defPhil}
		\varphi_\ell(x) := e^{i\alpha/2} \abs{\int_{x_\ell}^x \sqrt{V_{\ell}(s)}\,ds}.
		\end{equation}
		Completing the construction, one obtains the following standard result (proven in Appendix~\ref{sec:WKBconstruction}).
		\begin{proposition}[Formal WKB expansion]
			\label{prop:WKBansatz}
			Let $n \in \mathbb{N}^*$. For every $J \in \N^*$, there exists 
			\begin{equation}
				\label{eq:defWKBamplitude}
				\mu_n^{\rm wkb}(h) = \sum_{j = 1}^J \mu_{n,j}h^j\,, \quad 
				a^{\rm wkb}_{n}(x;h) = \sum_{j = 0}^{J} a_{n,j}(x)h^j
			\end{equation}
			with $\mu_{n,j} \in \C$ and $a_{n,j} \in C^\infty(\R)$ and 
			\begin{equation}
				\label{eq:def_mun1}
				\mu_{n,1} := (2n-1) e^{i\alpha/2} 	\sqrt{\frac{V''(x_\ell)}{2}}\,,
			\end{equation}
			\begin{equation}
				\label{eq:def_an0}
				a_{n,0} := 	\left(\frac{\varphi'_\ell(x)}{\varphi_\ell''(x_\ell)}\right)^{n-1}\exp\left(-(2n-1)\int_{x_\ell}^x \frac{\varphi_\ell''(s) - \varphi_\ell''(x_\ell)}{2\varphi_\ell'(s)}\,ds\right)
			\end{equation}
			such that, letting 
			\begin{equation}
			\label{eq:defWKBpsi}
			\psi^{\rm wkb}_n(x;h) := e^{-\frac{\varphi_\ell(x)}{h}} a^{\rm wkb}_n(x;h),
			\end{equation}
			the following holds. 
			For any compact interval $K \subset \R$, there exists $C$, $h_0 > 0$ such that
			$$\big\|e^{\varphi_\ell/h}\big(\mathscr{L}_\ell(h)- \mu_{n}^{\rm wkb}(h)\big) \psi_n^{\rm wkb}(\cdot;h)\big\|_{L^\infty(K)} \leq C h^{J+1}\,,\qquad h \in (0,h_0).$$
			Moreover, there exists $c_n > 0$ such that for any interval $I \subset \R$ containing $x_\ell$ in its interior, there exists $h_0 >0$ such that 
			\begin{equation}
			\label{eq:lowerBoundWKB}
			\big\|\psi_n^{\rm wkb}(\cdot;h)\|_{L^2(I)} \geq c_n h^{\frac{n}{2} - \frac14}, \qquad \textup{for all } h \in (0,h_0),
			\end{equation}
			Finally, one has the estimates
			\begin{equation}
			\label{eq:norm_wkb_1}
			\|\psi_1^{\rm wkb}\| \sim h^{1/4}\left(\frac{\pi}{\cos(\alpha/2)} \sqrt{\frac{2}{V''(x_\ell)}}\right)^{1/4}
			\end{equation}
			\begin{equation}
			\label{eq:psipsibar_wkb}
			\frac{\|\psi_1^{\rm wkb}\|^2}{\langle \psi_1^{\rm wkb},\overline{\psi_1^{\rm wkb}}\rangle} \sim \frac{e^{i\alpha/4}}{\sqrt{\cos (\alpha/2)}}\,.
			\end{equation}
		\end{proposition}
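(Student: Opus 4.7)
The plan is to construct the expansion by a standard WKB iteration after conjugating the operator by $e^{-\varphi_\ell/h}$. First, I compute the conjugated operator: using the identity $e^{\varphi_\ell/h}(hD_x)e^{-\varphi_\ell/h} = hD_x + i\varphi_\ell'$ together with the eikonal equation $(\varphi_\ell')^2 = e^{i\alpha}V_\ell$, the potential term cancels and I obtain
\begin{equation*}
e^{\varphi_\ell/h}(\mathscr{L}_\ell(h)-\mu)e^{-\varphi_\ell/h} = -h^2\partial_x^2 + 2h\varphi_\ell'\partial_x + h\varphi_\ell'' - \mu.
\end{equation*}
Dividing by $h$, the equation $(\mathscr{L}_\ell(h)-\mu)\psi_n^{\rm wkb} = 0$ becomes $(\mathcal{T}-h^{-1}\mu)a_n^{\rm wkb} = h\partial_x^2 a_n^{\rm wkb}$, where $\mathcal{T} := 2\varphi_\ell'\partial_x + \varphi_\ell''$ is a first-order transport operator with a regular singularity at $x_\ell$.

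Next, I substitute $a_n^{\rm wkb}(x;h) = \sum_{j=0}^J h^j a_{n,j}(x)$ and $h^{-1}\mu_n^{\rm wkb}(h) = \sum_{j=0}^{J-1} h^j \mu_{n,j+1}$ and match powers of $h$. At order $h^0$ I obtain $(\mathcal{T}-\mu_{n,1})a_{n,0} = 0$. Since $\varphi_\ell'$ vanishes linearly at $x_\ell$ with slope $\varphi_\ell''(x_\ell) = e^{i\alpha/2}\sqrt{V''(x_\ell)/2}$, the indicial form of $\mathcal{T}$ at $x_\ell$ sends $(x-x_\ell)^k$ to $(2k+1)\varphi_\ell''(x_\ell)(x-x_\ell)^k$. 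A solution smooth across $x_\ell$ therefore exists only if $\mu_{n,1} = (2n-1)\varphi_\ell''(x_\ell)$ for some positive integer $n$, and the resulting $a_{n,0}$ is unique up to scalar and is given by direct integration of $(\mathcal{T}-\mu_{n,1})a_{n,0}=0$; the split $\mu_{n,1} - \varphi_\ell''(s) = 2(n-1)\varphi_\ell''(x_\ell) - (\varphi_\ell''(s)-\varphi_\ell''(x_\ell))$ of the integrand yields the explicit form \eqref{eq:def_an0}. At order $h^j$ with $j\geq 1$, the equation reads
\begin{equation*}
(\mathcal{T}-\mu_{n,1})a_{n,j} = \partial_x^2 a_{n,j-1} + \mu_{n,j+1}a_{n,0} + \sum_{k=1}^{j-1}\mu_{n,k+1}a_{n,j-k},
\end{equation*}
and the free parameter $\mu_{n,j+1}$ is uniquely fixed by the Fredholm-type condition that the right-hand side be orthogonal to the kernel of the formal adjoint of $\mathcal{T}-\mu_{n,1}$ at $x_\ell$; this allows a smooth solution $a_{n,j}$ on $\R$, using that $\varphi_\ell'$ vanishes only at $x_\ell$ for global solvability. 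Plugging the truncated ansatz back into the conjugated equation, the remainder is of order $h^{J+1}$ uniformly on any compact set $K$, which gives the first estimate of the proposition.

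Finally, the norm estimates reduce to Laplace-type expansions of integrals concentrated near $x_\ell$. Expanding $\varphi_\ell(x) = \tfrac12 e^{i\alpha/2}\sqrt{V''(x_\ell)/2}(x-x_\ell)^2 + O((x-x_\ell)^3)$ and using $a_{n,0}(x) \sim c_n(x-x_\ell)^{n-1}$, the integral $\int (x-x_\ell)^{2(n-1)} e^{-\cos(\alpha/2)\sqrt{V''(x_\ell)/2}(x-x_\ell)^2/h}\,dx$ is of order $h^{n-1/2}$, which gives both the lower bound \eqref{eq:lowerBoundWKB} and, for $n=1$, the sharp asymptotic \eqref{eq:norm_wkb_1}. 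For \eqref{eq:psipsibar_wkb}, the same method applied to $(\psi_1^{\rm wkb})^2$ instead of $|\psi_1^{\rm wkb}|^2$ produces a Gaussian integral with complex quadratic phase $e^{i\alpha/2}\sqrt{V''(x_\ell)/2}(x-x_\ell)^2/h$; evaluating it yields an extra factor $e^{-i\alpha/4}$, and dividing by $\|\psi_1^{\rm wkb}\|^2$ produces the announced ratio. The main technical point throughout the construction is the inductive verification, at each order $j$, that $\mu_{n,j+1}$ can be chosen so that $a_{n,j}$ is smooth at $x_\ell$; this is the only non-routine step and is essentially identical to the selfadjoint WKB construction (as in Helffer--Sjöstrand or Dimassi--Sjöstrand), since the geometry of the zero of $\varphi_\ell'$ at $x_\ell$ is unchanged modulo the global factor $e^{i\alpha/2}$.
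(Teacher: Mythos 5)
Your proposal follows essentially the same route as the paper's Appendix~\ref{sec:WKBconstruction}: conjugate by $e^{-\varphi_\ell/h}$, cancel the potential via the eikonal equation, match powers of $h$ to get a hierarchy of transport equations driven by $\mathcal{T} = 2\varphi_\ell'\partial_x + \varphi_\ell''$, solve the homogeneous leading-order equation via an indicial analysis at the regular singular point $x_\ell$, then close the induction by fixing $\mu_{n,j+1}$ so that the $j$-th transport equation has a solution smooth at $x_\ell$; the norm estimates are then handled by a Laplace-type expansion after rescaling $x - x_\ell = \sqrt{h}\,y$.

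Two small differences in exposition are worth noting, neither of which is a substantive gap. First, where you invoke ``orthogonality to the kernel of the formal adjoint of $\mathcal{T}-\mu_{n,1}$,'' the paper (Lemma~\ref{lem:WKB_ordreN}) instead solves the inhomogeneous equation by quadrature and chooses $\mu_{n,j+1}$ to kill the coefficient of $(x-x_\ell)^{-1}$ in the resulting integrand. That residue cancellation is the concrete incarnation of your Fredholm condition; it is more elementary and avoids having to make sense of an adjoint kernel which is in fact singular at $x_\ell$ (behaving like $(x-x_\ell)^{-n}$), so the literal orthogonality pairing is not obviously well-defined without the paper's explicit computation. Second, your derivation of the lower bound~\eqref{eq:lowerBoundWKB} invokes only the leading term $a_{n,0}(x)\sim c_n (x-x_\ell)^{n-1}$; for $n>1$ one must also rule out interference from the higher corrections $h^j a_{n,j}$, whose contributions at the scale $x-x_\ell\sim\sqrt h$ can be of comparable size. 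The paper's Lemma~\ref{lem:normWKB} handles this by expanding the full amplitude as a polynomial $p(x-x_\ell,h)$ in both variables, substituting $x-x_\ell=\sqrt h\,y$, and showing that the coefficient of $y^{n-1}$ in the relevant homogeneous piece comes only from $a_{n,0}$ and is therefore nonzero. Your computation correctly identifies the dominant Gaussian integral, but to make the lower bound rigorous one needs this multi-scale bookkeeping. The sharp constants for $n=1$ and the ratio~\eqref{eq:psipsibar_wkb} are obtained exactly as you say.
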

		\begin{definition}[WKB quasimode]
			\label{def:WKBansatz}
			If $a_n^{\rm wkb}(x;h)$, $\mu^{\rm wkb}_n(h)$ and $\psi_n^{\rm wkb}(x;h)$ satisfy the properties of Proposition \ref{prop:WKBansatz}, we say that $\psi_n^{\rm wkb}$ is a {\em WKB quasimode with parameter $(n,J)$}, $a_n(x;h)$ is the {\em WKB amplitude} of $\psi_n^{\rm wkb}$, and $\mu_n^{\rm wkb}(h)$ is the associated {\em WKB approximate eigenvalue}. 
		\end{definition}
		The main result of this section is that, as in the selfadjoint setting, the WKB approximations describe with an \rev{exponential} accuracy the spectrum of $\mathscr{L}_\ell(h)$. More precisely:
		\begin{proposition}[Optimality of the WKB approximations]
			\label{prop:optimalWKB}
			\label{prop:WKB_estim}
			For all $R>0$, there exists $h_0>0$ such that for all $h\in(0,h_0)$, the spectrum of $\mathscr{L}_\ell$ in $D(0,Rh)$ consists of $N$ algebraically simple eigenvalues $\mu_{1}(h),\ldots,\mu_{N}(h)$ satisfying
			\[\mu_{n}(h)=\displaystyle(2n-1)e^{i\frac{\alpha}{2}}h\sqrt{\frac{V''(x_\ell)}{2}}+\mathscr{O}(h^{2})\,, \quad 1 \leq n \leq N\,,\]
			where $N = \lfloor \frac{R+a}{2a}\rfloor$ and $a = \sqrt{\frac{V''(x_\ell)}{2}}$. 
			Moreover, for any $k > 0$, there exists $J$ large enough and $C, h_0 > 0$ such that if $\mu_{n}^{\rm wkb}(h)$ and $\psi_{n}^{\rm wkb}(\cdot;h)$ are WKB Ansätze with parameter $(n,J)$, then for all $h \in (0,h_0)$, 
			\begin{equation}\label{eq.L2approx}
			|\mu_n(h) - \mu^{\rm wkb}_n(h)| \leq C h^k\,, \quad \|\psi_{n}(\cdot;h)- C_n(h)\psi_{n}^{\rm wkb}(\cdot;h)\|_{L^2(\R)} \leq Ch^{k},
			\end{equation}
			for some normalized eigenfunction $\psi_{n}(\cdot;h)$ of $\mathscr{L}_\ell(h)$ and associated to $\mu_{n}(h)$, and some positive constant $C_n(h)$ satisfying 
			$$C_n(h) \sim \frac{1}{\|\psi_n^{\rm wkb}(\cdot;h)\|}.$$	
			Furthermore, for any compact interval $K \subset \R$, there exists $C,h_0 > 0$ such that  
			\begin{equation}\label{eq.approxWKB}
				\Big\|e^{\frac{\textup{Re}(\varphi_\ell)}{h}}(hD_x)^m\big(\psi_n(h)- C_n(h)\psi_{n}^{\rm wkb}(\cdot;h)\big)\Big\|_{L^\infty(K)} \leq C h^k\,, \quad m = 0,1 \,, \quad h \in (0,h_0)\,.\end{equation}
		\end{proposition}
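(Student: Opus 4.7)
The argument combines the WKB Ansätze from Proposition~\ref{prop:WKBansatz} with the spectral machinery of Proposition~\ref{prop:fourre_tout}. Fix $R>0$, $k>0$, and let $J\in\N^*$ be a parameter to be tuned large at the end. Choose a cutoff $\chi_\ell\in C^\infty_c(K,[0,1])$ equal to $1$ near $x_\ell$, where $K$ is the compact set from Proposition~\ref{prop:WKBansatz}, and set $\widetilde{\psi}_n^{\rm wkb}:=\chi_\ell\psi_n^{\rm wkb}$. Since $[\mathscr{L}_\ell(h),\chi_\ell]$ is supported on $\supp \chi_\ell'$ where $\textup{Re}(\varphi_\ell)\geq s>0$, combining this exponential decay with the weighted WKB remainder of Proposition~\ref{prop:WKBansatz} yields
\[
\bigl\|(\mathscr{L}_\ell(h)-\mu_n^{\rm wkb}(h))\widetilde{\psi}_n^{\rm wkb}\bigr\|_{L^2}\leq C\bigl(h^{J+1}+e^{-s/h}\bigr).
\]
Together with the lower bound $\|\widetilde{\psi}_n^{\rm wkb}\|\geq c_n h^{n/2-1/4}$ from~\eqref{eq:lowerBoundWKB}, Proposition~\ref{prop:fourre_tout}(v) applied with $z=\mu_n^{\rm wkb}(h)$ produces $|\mu_{\ell,n}(h)-\mu_n^{\rm wkb}(h)|\leq Ch^{J-n/2+5/4}$. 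For $J$ large, this places $\mu_n^{\rm wkb}(h)\in D(\nu_n(ah),\varepsilon h)$, so the framework of Proposition~\ref{prop:fourre_tout} applies and one gets the eigenvalue approximation of~\eqref{eq.L2approx}.

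For the $L^2$ eigenfunction approximation, Proposition~\ref{prop:fourre_tout}(iv) gives $\|(\mathrm{Id}-\Pi_{\ell,n}(h))\widetilde{\psi}_n^{\rm wkb}\|\leq Ch^{J}$. Hence $\phi_n:=\Pi_{\ell,n}(h)\widetilde{\psi}_n^{\rm wkb}$ is an eigenfunction of $\mathscr{L}_\ell(h)$ at $\mu_{\ell,n}(h)$ sitting at $L^2$-distance $\mathscr{O}(h^J)$ from $\widetilde{\psi}_n^{\rm wkb}$, hence from $\psi_n^{\rm wkb}$ (the difference $(1-\chi_\ell)\psi_n^{\rm wkb}$ being exponentially small in $L^2$). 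Normalizing $\psi_n:=\phi_n/\|\phi_n\|$ and setting $C_n(h):=1/\|\phi_n\|$, the $L^2$ bound in~\eqref{eq.L2approx} follows, with $C_n(h)\sim 1/\|\psi_n^{\rm wkb}\|$ thanks to Proposition~\ref{prop:WKBansatz}.

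The delicate part is the pointwise, exponentially weighted bound~\eqref{eq.approxWKB}. The plan is to apply Proposition~\ref{prop:estimee_elliptique_conj} to $r_n(h):=\psi_n(h)-C_n(h)\psi_n^{\rm wkb}(h)$ with a carefully chosen $(Mh)$-subsolution $\Phi=\Phi(h)$ produced by the forthcoming Lemma~\ref{lem:optimal_weight}. The key property of $\Phi$ is that $\textup{Re}(\varphi_\ell)-\Phi$ remains small enough on $K$ for $e^{(\textup{Re}(\varphi_\ell)-\Phi)/h}$ to be controlled by a negative power of $h$. Computing
\[
(\mathscr{L}_\ell(h)-\mu_{\ell,n}(h))r_n=C_n(h)\bigl((\mu_n^{\rm wkb}(h)-\mu_{\ell,n}(h))\psi_n^{\rm wkb}-(\mathscr{L}_\ell(h)-\mu_n^{\rm wkb}(h))\psi_n^{\rm wkb}\bigr),
\]
both summands become $\mathscr{O}(h^{N})$ after multiplication by $e^{\Phi/h}$, via the eigenvalue estimate above and the WKB weighted remainder $\|e^{\textup{Re}(\varphi_\ell)/h}(\mathscr{L}_\ell(h)-\mu_n^{\rm wkb})\psi_n^{\rm wkb}\|_{L^\infty(K)}\leq Ch^{J+1}$. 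The term $\|\chi r_n\|$ in Proposition~\ref{prop:estimee_elliptique_conj} is absorbed by the $L^2$ bound obtained above. Proposition~\ref{prop:estimee_elliptique_conj} then yields $\|e^{\Phi/h}r_n\|_{H^2_h}\leq Ch^{N}$, and a one-dimensional Sobolev embedding $H^1_h\hookrightarrow L^\infty$ (costing a factor $h^{-1/2}$) converts this into~\eqref{eq.approxWKB} for $m=0,1$ upon choosing $J$ large.

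The main obstacle is the construction of the subsolution in Lemma~\ref{lem:optimal_weight}: the plain Agmon weight $\Phi_\varepsilon$ from~\eqref{ex:Phieps} loses a fixed fraction of $\textup{Re}(\varphi_\ell)$ and cannot deliver a polynomial loss $e^{(\textup{Re}(\varphi_\ell)-\Phi)/h}=\mathscr{O}(h^{-C})$. One must refine the construction (in the spirit of~\cite{DR25}) so that $\Phi$ agrees with $\textup{Re}(\varphi_\ell)$ up to $\mathscr{O}(h\log(1/h))$, while still respecting the $(Mh)$ margin of Definition~\ref{def:M_solution} near $x_\ell$, where the eikonal equation is saturated and $V_\ell$ degenerates quadratically. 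Balancing this margin against the vanishing of $V_\ell$ at $x_\ell$ is the technical heart of the proof.
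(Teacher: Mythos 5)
Your plan follows the paper's proof of Proposition~\ref{cor:WKB_approx_n=1} essentially step by step: cut off the WKB quasimode, feed it into the resolvent machinery of Proposition~\ref{prop:fourre_tout} (parts (iv) and (v)) to obtain the eigenvalue and $L^2$ approximations, then run the elliptic estimate of Proposition~\ref{prop:estimee_elliptique_conj} against the optimal subsolution of Lemma~\ref{lem:optimal_weight} and finish with a Sobolev embedding and Lemma~\ref{lem:optimal_weight}(iii). You also correctly single out the design of the subsolution with $\Phi_h-\textup{Re}(\varphi_\ell)\geq Bh\log h$ near $x_\ell$ as the technical heart of the argument.

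A few points are slightly off but not fatal. Applying Proposition~\ref{prop:fourre_tout}(v) with $z=\mu_n^{\rm wkb}(h)$ presupposes $\mu_n^{\rm wkb}(h)\in D(\nu_n(ah),\varepsilon h)$; this holds because $\mu_n^{\rm wkb}(h)=\nu_n(ah)+\mathscr{O}(h^2)$ by construction, not because $J$ is large, so the ordering of that sentence should be reversed. Your error term $r_n:=\psi_n-C_n(h)\psi_n^{\rm wkb}$ should carry the cutoff ($r_n:=\psi_n-C_n(h)\chi_\ell\psi_n^{\rm wkb}$, or equivalently the paper's $\chi\psi^{\rm wkb}-\Pi_{\ell,1}(\chi\psi^{\rm wkb})$ up to normalization): $\psi_n^{\rm wkb}$ itself need not be in $L^2(\R)$, and the WKB remainder bound is only stated on compacta, so the elliptic estimate cannot be applied on $\R$ without first localizing. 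Once cut off, the commutator term $[\mathscr{L}_\ell,\chi_\ell]\psi_n^{\rm wkb}$ must be folded in, but it is exponentially small as you note. Your displayed identity for $(\mathscr{L}_\ell-\mu_{\ell,n})r_n$ has a sign error on the $(\mu_n^{\rm wkb}-\mu_{\ell,n})$ term; it does not affect the estimate. Also note that for $m=1$ a plain $H^1_h\hookrightarrow L^\infty$ does not suffice: after the elliptic estimate one has $e^{\Phi_h/h}r_n$ controlled in $H^2_h$, and transferring the weight to $e^{\textup{Re}(\varphi_\ell)/h}$ for the $hD_x$ term produces an additional $\Phi_h'/h$ contribution (bounded by $|\Re\varphi_\ell'|/h$), costing one more power of $h$; the paper handles this explicitly and you should too. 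Finally, you do not explicitly derive the first assertion of the proposition (location of the $N$ eigenvalues with the $\mathscr{O}(h^2)$ expansion); this is immediate from Proposition~\ref{prop:fourre_tout}(i),(iii), Proposition~\ref{prop:locspec}, and the eigenvalue approximation you establish, but it should be stated.
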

	
		We prove Proposition \ref{prop:WKB_estim} in the case where $n = 1$, since the general case does not present any more difficulties. The result for $n = 1$ follows immediately from Proposition \ref{cor:WKB_approx_n=1} below. The main ingredient for the proof is the construction of a suitable set of subsolutions (in the sense of Definition \ref{def:M_solution}), which we present now.

	\subsection{A particular family of subsolutions}
	
	The presentation of this paragraph is drawn from \cite[Lemma 4.3 \& Corollary 4.5]{DR25}, see also the older original reference \cite{HS84} and the lecture notes \cite[Theorem 4.4.4]{Hel88} and \cite[Prop. A.2]{DiSj99}.
	
	Let $\chi \in C^\infty_c(\R)$ with $\supp(\chi) \subset [-1,1]$ and $\chi \equiv 1$ on $[-\frac{1}{2},\frac12]$, and for $L > 0$, denote 
	$$\chi_{L,h}(x) := \chi\left(h^{-1/2}(x-x_\ell)/L\right).$$ 
	\begin{lemma}[The functions $\Phi_h$]
	\label{lem:optimal_weight}
	Given $A, M > 0$, there exists $L > 0$, $h_0 > 0$ and, a family $(\Phi_h)_{h \in (0,h_0)}$ of smooth, bounded, real-valued functions on $\R$ such that for all $h \in (0,h_0)$:
		\begin{itemize}
		\item[(i)] $\Phi_h$ is an $(Mh)$-subsolution associated to $\chi_{L,h}$ for the potential $V_\ell$.
		\item[(ii)] There exists $c > 0$ such that 
		$$\Phi_h(x) - \textup{Re}(\varphi_\ell(x)) \leq -c\,, \qquad x \in \R \setminus [-2A,2A].$$
		\item[(iii)] There exists $B > 0$ such that 
		$$\Phi_h(x) - \textup{Re}(\varphi_\ell(x)) \geq B h \log(h)\,, \qquad x \in [-A,A].$$
		\end{itemize}
	\end{lemma}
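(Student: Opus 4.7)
The plan is to construct $\Phi_h$ as a modified primitive of $\cos(\alpha/2)\sqrt{V_{\ell,A}}$ in which a small $h$-dependent quantity is subtracted from the integrand in order to provide the margin $Mh$ required by the subsolution inequality of Definition \ref{def:M_solution}. Setting $a := \sqrt{V''(x_\ell)/2}$ and picking $A$ so that $V_{\ell,A}$ is supported in $[-2A,2A]$ and agrees with $V_\ell$ on $[-A,A]$, I would define
\begin{equation*}
m_h(s) := \frac{Mh}{\cos^2(\alpha/2)}\bigl(1-\chi_{L,h}(s)^2\bigr), \qquad \Phi_h(x) := \int_{x_\ell}^x \mathrm{sign}(s-x_\ell)\,\cos(\alpha/2)\,\sqrt{V_{\ell,A}(s)-m_h(s)}\,ds.
\end{equation*}
The support of $m_h$ is contained in $\{Lh^{1/2}/2\leq |s-x_\ell|\leq Lh^{1/2}\}$, and a Taylor expansion of $V_\ell$ at $x_\ell$ gives $V_\ell(s)\geq (a^2/4)L^2 h$ there for small $h$; picking $L$ large enough (e.g.\ $L^2\geq 8M/(a^2\cos^2(\alpha/2))$) thus yields $V_{\ell,A}-m_h\geq (a^2/8)L^2 h>0$ on the support of $m_h$. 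The integrand is then smooth on $\R\setminus\{x_\ell\}$; at $x_\ell$ smoothness follows because $m_h$ vanishes in a neighborhood of $x_\ell$ and $\mathrm{sign}(s-x_\ell)\sqrt{V_\ell(s)}$ is smooth there, by non-degeneracy of the minimum of $V_\ell$. Boundedness of $\Phi_h$ comes from the compact support of $V_{\ell,A}$.

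Property (i) reads directly from $(\Phi_h')^2 = \cos^2(\alpha/2)(V_{\ell,A}-m_h) \leq \cos^2(\alpha/2)V_\ell - Mh(1-\chi_{L,h}^2)$, using $V_{\ell,A}\leq V_\ell$. Property (ii) follows because $\Phi_h$ is constant for $|x|\geq 2A$ (the integrand vanishes outside the support of $V_{\ell,A}$), and the value $\Phi_h(\pm 2A)$ is strictly smaller than $\textup{Re}(\varphi_\ell(\pm 2A))$ by an amount independent of $h$: the deficit comes from the strict inequality $V_{\ell,A}<V_\ell$ on a positive-measure subset of $(A,2A)$, producing a uniformly positive gap, while $\textup{Re}(\varphi_\ell)$ continues to grow on $[2A,\infty)$.

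For property (iii), on $[-A,A]$ the identity $V_{\ell,A}=V_\ell$ yields
\begin{equation*}
\Phi_h(x)-\textup{Re}(\varphi_\ell(x)) = -\int_{x_\ell}^x \mathrm{sign}(s-x_\ell)\,\cos(\alpha/2)\,\frac{m_h(s)}{\sqrt{V_\ell(s)-m_h(s)}+\sqrt{V_\ell(s)}}\,ds.
\end{equation*}
On the support of $m_h$ the denominator is $\gtrsim L\sqrt{h}$ while $m_h=O(h)$, so the integrand is $O(\sqrt h)$; integrating over an interval of width $O(h^{1/2})$ gives $|\Phi_h-\textup{Re}(\varphi_\ell)|\leq C h$ on $[-A,A]$, and since $-Ch\geq Bh\log h$ for any fixed $B>0$ once $h$ is sufficiently small, (iii) follows. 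The main obstacle is the joint choice of $L$ and the localization scale $h^{1/2}$ of $\chi_{L,h}$: $L$ must be large enough both to secure non-negativity of $V_{\ell,A}-m_h$ and to produce the margin $Mh$ in (i), while $m_h$ must stay sufficiently localized near $x_\ell$ so that $\Phi_h-\textup{Re}(\varphi_\ell)$ remains of order $h$ on $[-A,A]$; this is possible precisely because $V_\ell$ is quadratic near $x_\ell$, so that $V_\ell\sim h$ on the scale $h^{1/2}$.
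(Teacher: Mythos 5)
Your construction is essentially a reparametrization of the paper's: you perturb the integrand $\sqrt{V_{\ell,A}}$ by subtracting $m_h$ under the square root, while the paper perturbs the derivative $\widetilde{\Phi}_\ell'$ multiplicatively by a factor $1-\varepsilon(x,h)$; since $\sqrt{V_{\ell,A}-m_h}=\sqrt{V_{\ell,A}}\big(1-\tfrac{m_h}{2V_{\ell,A}}+\dots\big)$, these are to leading order the same. However, your proof contains a genuine error in the identification of $\supp m_h$, which then contaminates the estimate in (iii). You wrote that $\supp m_h\subset\{Lh^{1/2}/2\le|s-x_\ell|\le Lh^{1/2}\}$, i.e.\ an annulus of width $O(\sqrt h)$. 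But $\chi_{L,h}$ is supported \emph{inside} the ball $\{|s-x_\ell|\le L\sqrt h\}$, so $1-\chi_{L,h}^2$ is supported in the \emph{exterior} region $\{|s-x_\ell|\ge L\sqrt h/2\}$ (it equals $1$ identically for $|s-x_\ell|\ge L\sqrt h$). Consequently the integral in (iii) is not over an interval of width $O(\sqrt h)$, but from $|s-x_\ell|=L\sqrt h/2$ out to $A$. On $[-A,A]$ the integrand is $\sim m_h(s)/\sqrt{V_\ell(s)}\sim h/|s-x_\ell|$, and integrating this from $L\sqrt h/2$ to $A$ produces $\sim h\log(A/(L\sqrt h))=O(h|\log h|)$, not $O(h)$. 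This is precisely why (iii) is stated with $Bh\log h$ rather than $Bh$: the bound you claimed is not obtainable and, fortuitously, the sharper correct bound is exactly the one the lemma needs.

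A secondary gap: your $\Phi_h$ is not well defined (or at least not smooth) near $\pm 2A$ and beyond, because on $\supp m_h$ (which, correctly, extends to infinity) one has $V_{\ell,A}\to 0$ while $m_h\sim Mh/\cos^2(\alpha/2)>0$, so $V_{\ell,A}-m_h<0$ and the square root is imaginary. You would need either to truncate $m_h$ to a compact set where $V_{\ell,A}\ge m_h$ (and then argue smoothness at the matching point), or to follow the paper's route of first flattening $\Re\varphi_\ell$ into a function $\widetilde{\Phi}_\ell$ that is constant outside $[-2A,2A]$, which automatically confines the perturbation to a region where the argument under the square root stays positive. As written, your $\Phi_h$ is not a smooth real-valued function on all of $\R$, which is required for both (i) and (ii).
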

	\begin{proof}
	One can first construct a function $\widetilde{\Phi}_\ell \in C^\infty(\R)$ which is equal to $\Re(\varphi_\ell)$ on $[-A,A]$, constant on $\R \setminus [-2A,2A]$, such that 
	$\widetilde{\Phi}_\ell - \Re(\varphi_\ell)  \leq -c$ on $\R \setminus [-2A,2A]$ for some $c > 0$, and such that $\widetilde{\Phi}_\ell'(x)^2 \leq \Re(\varphi_\ell'(x))^2$ on $\R$. Moreover, we can take $\widetilde{\Phi}_\ell$ such that $\widetilde{\Phi}_\ell'(x)(x-x_\ell) \geq 0$ for all $x \in \R$ (see \cite[Lemma 4.3]{DR25}). We then look for a function $\Phi_h$ saturating the inequality
	$$\Phi_h'(x)^2 - \widetilde{\Phi}_\ell'(x)^2 \leq -Mh (1 - \chi_{L,h}^2)$$
	(which is the requirement for being an $(Mh)$-subsolution up to the irrelevant behavior far from $x_\ell$).  Seeking $\Phi_h'$ under the form $(1 - \varepsilon(x,h))\widetilde{\Phi}'_\ell$ with $\varepsilon(x,h) \approx 0$ leads to
	$$(1 - \varepsilon(x,h))^2 - 1 = \frac{-Mh (1 - \chi_{L,h}(x)^2)}{(\widetilde{\Phi}'_\ell(x))^2} \implies \varepsilon(x,h) \approx\frac{Mh (1 - \chi_{L,h}(x)^2)}{2(\widetilde{\Phi}'_\ell(x))^2}$$
	to first order in $\varepsilon$. This and the next calculations motivate setting 
	$$ \varepsilon(x,h) := \frac{Mh(1 - \chi_{L,h}(x)^2)}{ \Re(\varphi_\ell'(x))^2} \quad \textup{and} \quad \Phi'_h := (1 - \varepsilon(x,h)) \widetilde{\Phi}_\ell'(x)$$
	(it is convenient to have $\Re(\varphi'_\ell)^2$ in the denominator of $\varepsilon(x,h)$ as can be seen below), i.e.,
	\begin{equation}
	\label{eq:defPhih}
	\Phi_h(x) := \widetilde{\Phi}_\ell(x) - Mh \int_{\rev{x_\ell}}^{\rev{x}} \frac{\widetilde{\Phi}'_\ell(s)}{\Re(\varphi_\ell'(s))^2}(1 - \chi_{L,h}^2(s)) \,ds\,.
	\end{equation}

	Using the properties of $\varphi_\ell$, we can choose $L$ large enough and $h_0$ small enough to ensure that $\varepsilon(x,h) \leq \frac12$ for all $x \in \R$ and $h \in (0,h_0)$. Using this and the fact that $\widetilde{\Phi}_\ell'(x)^2 \leq \Re(\varphi_\ell(x))^2$, we obtain
	\begin{align*}
	\Phi'_h(x)^2 - \Re(\varphi_\ell'(x))^2 
	&\leq \Big(\big(1 - \varepsilon(x,h)\big)^2 - 1\Big) \Re(\varphi_\ell'(x))^2\\
	& =  -\left(1 - \frac{\varepsilon(x,h)}{2}\right)2Mh(1 - \chi_{L,h}^2) \\
	& \leq - Mh(1 - \chi_{L,h}^2).
	\end{align*}
	Since $\Re(\varphi'_\ell)^2 = \cos(\alpha/2)^2 V_\ell$, this establishes the property (i).

	Next, since by definition of $\widetilde{\Phi}'_\ell$, the integral term in the right-hand side of \eqref{eq:defPhih} is positive, the property (ii) follows immediately from the definition of $\widetilde{\Phi}_\ell$. 
		
	Finally, since there exists $c > 0$ such that $\Re(\varphi'_\ell(x)) \sim c(x-x_\ell)$ as $x \to x_\ell$ and $\Re(\varphi'_\ell(x)) > 0$ for $x \neq x_\ell$, one can find $g \in C^\infty(\R)$ with $g(x) \geq 0$ such that for all $x \neq x_\ell$,
	$$\frac{1}{\Re(\varphi'_\ell(x))} = \frac{g(x)}{x - x_\ell}.$$ 
	Hence, for all $x \in [-A,A]$ and $h \in (0,h_0)$, we have (assuming that $x > x_\ell$, the case $x < x_\ell$ being similar), 
	\begin{align*}
	\Phi_h(x) - \Re(\varphi_\ell(x)) = \Phi_h(x) - \widetilde{\Phi}_\ell(x) 
	&= -Mh\int_{x_\ell}^{x} \frac{\widetilde{\Phi}_\ell'(s)}{\Re(\varphi'_\ell(s))^2}(1 - \chi_{L,h}^2)\,ds\\
	&\geq -Mh\int_{x_\ell + L/2\sqrt{h}}^{x} \frac{1}{\Re(\varphi_\ell'(s))}\,ds\\
	& = -Mh\int_{x_\ell+L/2\sqrt{h}}^x \frac{g(s)}{s-x_\ell}\,ds\\
	& \geq -Mh(\max_{[-A,A]} g) \log\left(\frac{2A}{L\sqrt{h}}\right)\\
	& \geq Bh \log h
	\end{align*}
	for some constant $B > 0$. This shows the property (iii). 
	\end{proof}
	
	\subsection{Proof of the WKB approximation}
	
	\begin{proposition}[WKB approximation]
		\label{cor:WKB_approx_n=1}
		For every $k > 0$, there exists an integer $J \geq 0$ such that the following holds. For any compact interval $K\subset \R$ containing $x_\ell$ and any smooth compactly supported function $\chi \in C^\infty_c(\R)$ with $\chi \equiv 1$ on $K$, there exist $C$, $h_0 > 0$ such that
		\begin{equation}
		\label{eq:WKB_estimates_weak}
		|\mu_{\ell,1}(h) - \mu^{\rm wkb}(h)| \leq C h^k\,, \quad \| \psi^{\rm wkb} - \Pi_{\ell,1}(\chi \psi^{\rm wkb})\|_{L^2} \leq Ch^{k}\,, 
		\end{equation}
		\begin{equation}
		\label{eq:WKB_estimates_strong}
		\left\|e^{\frac{\textup{Re}(\varphi_\ell)}{h}} D_x^m\big(\psi^{\rm wkb}-\Pi_{\ell,1}(\chi\psi^{\rm wkb})\big)\right\|_{L^\infty(K)}  \leq C h^k\,,
		\end{equation}
		for any $h \in (0,h_0)$, $m \in \{0,1\}$, and any WKB quasimode $\psi^{\rm wkb}(x;h)$ of parameter $(1,J)$ associated to the WKB eigenvalue $\mu^{\rm wkb}(h)$.
	\end{proposition}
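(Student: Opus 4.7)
\emph{Proof plan.} I would combine the formal WKB construction of Proposition \ref{prop:WKBansatz}, the resolvent and projector properties of Proposition \ref{prop:fourre_tout}, and the conjugated elliptic estimate of Proposition \ref{prop:estimee_elliptique_conj} applied with the sharp subsolution $\Phi_h$ from Lemma \ref{lem:optimal_weight}. The argument proceeds in two phases: first establish the $L^2$-type estimate \eqref{eq:WKB_estimates_weak} via a quasimode--resolvent argument, then upgrade to the pointwise weighted estimate \eqref{eq:WKB_estimates_strong} by exploiting the fact that $\Phi_h$ essentially coincides with $\Re\varphi_\ell$ near $x_\ell$.

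For the $L^2$ step, set $v_h := \chi\psi^{\rm wkb}$ and decompose
\[(\mathscr{L}_\ell(h) - \mu^{\rm wkb}(h))v_h = \chi\,(\mathscr{L}_\ell(h)-\mu^{\rm wkb}(h))\psi^{\rm wkb} + [\mathscr{L}_\ell(h),\chi]\psi^{\rm wkb}.\]
By Proposition \ref{prop:WKBansatz}, the first term is pointwise $\mathscr{O}(h^{J+1}e^{-\Re\varphi_\ell/h})$ on $\supp\chi$, so its $L^2$ norm is $\mathscr{O}(h^{J+5/4})$ by the Laplace-type bound $\|e^{-\Re\varphi_\ell/h}\|_{L^2(\supp\chi)} \leq Ch^{1/4}$ (since $\Re\varphi_\ell$ vanishes quadratically at $x_\ell$), while the commutator is exponentially small because $\Re\varphi_\ell \geq c > 0$ on $\supp\chi'$. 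Hence $\|(\mathscr{L}_\ell(h)-\mu^{\rm wkb}(h))v_h\|_{L^2}\leq C h^{J+5/4}$. Noting that $\mu^{\rm wkb}(h) = \nu_1(ah) + \mathscr{O}(h^2)$ lies in $D(\nu_1(ah),\varepsilon h)$ for $h$ small and that $\|v_h\|_{L^2} \geq c h^{1/4}$ by \eqref{eq:lowerBoundWKB}, parts (v) and (iv) of Proposition \ref{prop:fourre_tout} yield
\[|\mu_{\ell,1}(h) - \mu^{\rm wkb}(h)| \leq Ch^{J+1}\quad\text{and}\quad \|(\Id - \Pi_{\ell,1}(h))v_h\|_{L^2}\leq Ch^{J+1/4};\]
combined with $\|(1-\chi)\psi^{\rm wkb}\|_{L^2}=\mathscr{O}(e^{-c/h})$, this proves \eqref{eq:WKB_estimates_weak} as soon as $J \geq k$.

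For \eqref{eq:WKB_estimates_strong}, I would first replace $\chi$ in the proof by an enlarged cutoff $\widehat\chi$, with $\chi \prec \widehat\chi$ and $\{\widehat\chi = 1\} \supset [-2A,2A]$, where $A$ is the parameter of Lemma \ref{lem:optimal_weight} chosen so that $K \subset [-A,A]$; the associated Riesz projection differs from the original $\Pi_{\ell,1}(h)(\chi\psi^{\rm wkb})$ by $\mathscr{O}(e^{-c/h})$ in $L^2$, so it suffices to control $\widehat w_h := \widehat\chi\psi^{\rm wkb} - \Pi_{\ell,1}(h)(\widehat\chi\psi^{\rm wkb})$ in the weighted pointwise norm. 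Fix $M > 2R$ and let $\Phi_h$ be the $(Mh)$-subsolution of Lemma \ref{lem:optimal_weight}. A first application of Proposition \ref{prop:estimee_elliptique_conj} to $\psi_\ell := \Pi_{\ell,1}(h)(\widehat\chi\psi^{\rm wkb})$ with $z = \mu_{\ell,1}(h)$ and vanishing forcing gives the sharp decay
\[\|e^{\Phi_h/h}\psi_\ell\|_{H^2_h} \leq C\|\chi_{L,h}e^{\Phi_h/h}\psi_\ell\|_{L^2} \leq C\|\psi_\ell\|_{L^2} \leq Ch^{1/4},\]
since $e^{\Phi_h/h} = \mathscr{O}(1)$ on the $\mathscr{O}(\sqrt h)$-support of $\chi_{L,h}$. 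A second application to $\widehat w_h$ with $z = \mu^{\rm wkb}(h)$ controls $\|e^{\Phi_h/h}\widehat w_h\|_{H^2_h}$ by the weighted forcing plus $\|\chi_{L,h}e^{\Phi_h/h}\widehat w_h\|_{L^2}\leq C\|\widehat w_h\|_{L^2}\leq Ch^{J+1/4}$. The commutator piece $e^{\Phi_h/h}[\mathscr{L}_\ell(h),\widehat\chi]\psi^{\rm wkb}$ appearing in the forcing is supported in $\R\setminus[-2A,2A]$, where Lemma \ref{lem:optimal_weight}(ii) gives the gain $e^{(\Phi_h-\Re\varphi_\ell)/h}\leq e^{-c/h}$, making it exponentially small; the remaining pieces are $\mathscr{O}(h^{J+1-C})$ (WKB accuracy combined with $e^{\Phi_h/h}\leq h^{-C}e^{\Re\varphi_\ell/h}$ on compact sets) and $|\mu_{\ell,1}-\mu^{\rm wkb}|\,\|e^{\Phi_h/h}\psi_\ell\|_{L^2} = \mathscr{O}(h^{J+5/4})$. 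Altogether $\|e^{\Phi_h/h}\widehat w_h\|_{H^2_h}\leq Ch^{J-C}$. Using $e^{\Re\varphi_\ell/h}\leq h^{-C}e^{\Phi_h/h}$ on $K$ (Lemma \ref{lem:optimal_weight}(iii)), a one-dimensional Sobolev embedding, and a Leibniz-type transfer of $e^{\Re\varphi_\ell/h}$ through $D_x$ (which costs an additional $h^{-1}$ factor), one then concludes \eqref{eq:WKB_estimates_strong} by taking $J$ large enough in terms of $k$.

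The main obstacle I anticipate is the weight-management in this last step: because $\Phi_h$ agrees with $\Re\varphi_\ell$ only up to $\mathscr{O}(h|\log h|)$ on $K$, the exponential weights produce polynomial-in-$h^{-1}$ losses at several junctures, each of which must be absorbed into the $h^{J+1}$ margin from the WKB construction. Two ingredients are crucial for the scheme to close: (a) applying Proposition \ref{prop:estimee_elliptique_conj} directly to the eigenfunction $\psi_\ell$ with the sharp weight $\Phi_h$, which yields a decay rate arbitrarily close to $\Re\varphi_\ell$ (as opposed to the weaker rate $\Phi_\varepsilon = (1-\sqrt\varepsilon)\Re\varphi_\ell$ from Section \ref{sec:elliptic1D}); and (b) enlarging the cutoff from $\chi$ to $\widehat\chi$ so that the cutoff commutators live in the region $\R\setminus[-2A,2A]$ where Lemma \ref{lem:optimal_weight}(ii) provides the essential exponential gain.
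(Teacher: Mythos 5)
Your proposal is correct and follows essentially the same route as the paper: the $L^2$ estimate via the quasimode bound combined with parts (iv)--(v) of Proposition~\ref{prop:fourre_tout}, then the pointwise estimate by applying the conjugated elliptic estimate of Proposition~\ref{prop:estimee_elliptique_conj} with the sharp subsolution $\Phi_h$ of Lemma~\ref{lem:optimal_weight} (after enlarging the cutoff so that its derivative is supported in $\R\setminus[-2A,2A]$), followed by the Sobolev embedding and the conversion from $\Phi_h$ to $\Re\varphi_\ell$ via property~(iii). You are in fact a bit more explicit than the paper at one juncture, namely in first applying the elliptic estimate with vanishing forcing to the eigenfunction $\Pi_{\ell,1}(\widehat\chi\psi^{\rm wkb})$ to bound $\|e^{\Phi_h/h}\Pi_{\ell,1}(\chi\psi^{\rm wkb})\|$, a step the paper uses implicitly when it writes $|\mu_{\ell,1}(h)-\mu^{\rm wkb}|\,\|\Pi_{\ell,1}(\chi\psi^{\rm wkb})\|$ with the weight already absorbed.
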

	\begin{proof}
	 	Since $\Re(\varphi_\ell) \geq c > 0$ for $x \notin K$, we have
		\begin{align*}
		\big\|(\mathscr{L}_\ell-\mu^{\mathrm{wkb}})(\chi\psi^{\mathrm{wkb}})\big\| 
		& \leq \big\|e^{\Re(\varphi_\ell)/h} (\mathscr{L}_\ell - \rev{\mu}^{\rm wkb})\psi^{\rm wkb}\big\|_{L^2(K)} + e^{-c/h} \big\|(\mathscr{L}_\ell^{\varphi_\ell} - \mu^{\rm wkb})(\chi a)\big\| 
		\end{align*}
		where $a = a(x;h)$ is the WKB amplitude of $\psi^{\rm \rev{wkb}}$, and where $\mathscr{L}_\ell^{\varphi_\ell} := e^{\varphi_\ell/h} \mathscr{L}_\ell e^{-\varphi_\ell/h}$. Hence, since $\mathscr{L}_\ell^{\varphi_\ell}$ is a differential operator with smooth and bounded coefficients (by \eqref{eq:expression_Lphi}), we deduce from Proposition \ref{prop:WKBansatz} that
		$$\|\big(\mathscr{L}_\ell(h) - \rev{\mu}^{\rm wkb}(h)\big) (\chi \rev{\psi}^{\rm wkb})\| \leq C h^J.$$
		By properties (iv) and (v) of Proposition \ref{prop:fourre_tout}, it follows that
		\begin{equation}
		\label{eq:(I-Pi)wkb}
		\|(\Id - \Pi_{\ell,1})(\chi \rev{\psi}^{\rm wkb})\| \leq C h^{J - 1}
		\end{equation}
		\begin{equation}
		\label{eq:mu_wkb-mu}
		|\mu_{\ell,1}(h)-\mu^{\mathrm{wkb}}| \leq C h^{J-\frac14}.
		\end{equation}
	 	since $\|\rev{\psi}^{\rm wkb}\| \geq ch^{\frac14}$ (by \eqref{eq:lowerBoundWKB} in Proposition \ref{prop:WKBansatz}).
		Letting $J > k+1$, this gives the estimates in \eqref{eq:WKB_estimates_weak} using the exponential decay to approximate $\chi \rev{\psi^{\rm wkb}}$ by $\rev{\psi^{\rm wkb}}$ in \eqref{eq:(I-Pi)wkb} up to a neglectable term.
		
		Next, let $A > 0$ be such that $K \subset [-A,A]$, let $M > 2R$ and let $(\Phi_h)_{h \in (0,h_0)}$ be as in Lemma \ref{lem:optimal_weight}. Let $\chi \equiv 1$ on $[-2A,2A]$ and put 
		$$u_h := e^{\Phi_h/h}\big(\chi\psi^{\rm wkb}- \Pi_{\ell,1}(\chi \psi^{\rm wkb})\big) \in H^2(\R).$$ 
		Using the property (ii) of $\Phi_h$ from Lemma \ref{lem:optimal_weight}, the bound \eqref{eq:mu_wkb-mu} on $|\mu_{\ell,1}(h) - \mu^{\rm wkb}|$ above, and the bound on $\Pi_{\ell,1}(h)$ (property (ii) of Proposition \ref{prop:fourre_tout}), 
		\begin{align*}
		&\big\|\big(\mathscr{L}_\ell^{\Phi_h} - \mu^{\rm wkb}\big)u_h\big\|\\
		&\quad \leq \big\|(\mathscr{L}_\ell^{\Phi_h} - \mu^{\rm wkb})(e^{\frac{\Phi}{h}}\chi \psi^{\rm wkb})\big\| + |\mu_{\ell,1}(h) - \mu^{\rm wkb}| \cdot \big\|\Pi_{\ell,1}(\chi \psi^{\rm wkb})\big\|\\
		&\quad \leq \big\|e^{\frac{\Phi_h -\varphi_\ell}{h}} e^{\frac{\varphi_\ell}{h}}\big(\mathscr{L}_\ell- \mu^{\rm wkb} \big) \big(\chi \psi^{\rm wkb}\big)\big\| + Ch^{J-\frac14}\\
		&\quad \leq \big\|e^{\frac{\Phi_h - \Re(\varphi_\ell)}{h}}\big\|_{L^\infty([-A,A])}\,\big\|e^{\frac{\varphi_\ell}{h}}\big(\mathscr{L}_\ell- \mu^{\rm wkb} \big) (\chi \psi^{\rm wkb})\big\|_{L^2([-A,A])}\\
		&\quad\qquad  + \big\|e^{\frac{\Phi_h - \Re(\varphi_\ell)}{h}}\big\|_{L^\infty(\R \setminus [-2A,2A])} \big\|(\mathscr{L}_\ell^{\varphi_\ell} - \mu^{\rm wkb})(\chi a)\big\|_{L^2(\supp \chi) \setminus [-A,A]} +Ch^{J-\frac14}\\
		&\quad \leq C\big\|e^{\frac{\varphi_\ell}{h}}\big(\mathscr{L}_\ell- \mu^{\rm wkb} \big) \psi^{\rm wkb}\big\|_{L^\infty([-A,A])} + Ce^{-c/h} \big\|(\mathscr{L}_\ell^{\varphi_\ell} - \mu^{\rm wkb})(\chi a)\big\|_{L^\infty(\supp \chi)} + Ch^{J -\frac14}.
		\end{align*}
		Recalling again from \eqref{eq:expression_Lphi} that $\mathscr{L}_\ell^{\varphi_\ell}(h)$ is a smooth differential operator and using Proposition \ref{prop:WKBansatz}, we deduce that
		\begin{equation}
		\label{eq:canicule1}
		\big\|\big(\mathscr{L}_\ell^{\Phi_h}(h) - \rev{\mu}^{\rm wkb}(h)\big)u_h\big\| \leq Ch^{J-\frac14}.
		\end{equation}
		Moreover, using \eqref{eq:(I-Pi)wkb} and Lemma \ref{lem:chih_subsolution}, 
		\begin{equation}
		\label{eq:canicule2}
		\|\chi_{L,h}u_h\| \leq C h^{J-1}
		\end{equation}
		By property (i) of Lemma \ref{lem:optimal_weight}, we may apply Proposition \ref{prop:estimee_elliptique_conj}, and conclude from the  bounds \eqref{eq:canicule1} and \eqref{eq:canicule2} that
		$$\|u_{\rev h}\|_{H^2_h} \leq Ch^{J-2}.$$
		By the Sobolev embedding $H^2(K) \subset W^{1,\infty}(K)$, it follows that
		$$\|u_h\|_{W^{1,\infty}(K)} \leq \|u_h\|_{H^2(K)} \leq h^{-2}\|u_h\|_{H^2_h} \leq Ch^{J-4}.$$
		Therefore, by property (iii) of Lemma \ref{lem:optimal_weight},
		\begin{align*}
		\big\|e^{\Re(\varphi_\ell)/h} (\chi \psi^{\rm wkb} - \Pi_{\ell,1}(\chi \psi^{\rm wkb})\big\|_{L^\infty(K)}&\leq \big\|e^{\frac{\Re(\varphi_\ell)-\Phi_h}{h}}\big\|_{L^\infty(K)} \|e^{\Phi_h/h}(\chi\psi^{\rm wkb} - \Pi_{\ell,1}(\chi\psi^{\rm wkb})\|_{L^\infty(K)} \\
		& \leq e^{-Bh\log(h)}\|u_h\|_{W^{1,\infty}(K)}\\
		& \leq Ch^{J-4-B}.
		\end{align*}
		By choosing $J > \rev{k} + 4 + B$, this shows \eqref{eq:WKB_estimates_strong} for $m = 0$. The case $m = 1$ is obtained similarly by writing 
		$$e^{\Phi_h/h} D_x (\chi \psi^{\rm wkb} - \Pi_{\ell,1}( \chi \psi^{\rm wkb})) = D_x u_h - \frac{\Phi'_h}{h} u_h $$ 
		and using the fact that $|\Phi'_h| \leq |\Re(\varphi'_\ell)|$ is bounded on $\R$. This concludes the proof. 	
	\end{proof}
	
	\section{Spectral gap for the double-well operator}
	\label{sec:double_well1D}
	
	In this section, we return to the analysis of the double-well operator $\mathscr{L}(h)$, and complete the proof of Theorem \ref{thm:gap}. In \S \ref{sec:quasi_ortho} we prove the almost orthogonality of the eigenfunctions and projectors associated to the left and right wells. An accurate description of the spectrum of the $\mathscr{L}(h)$ is then obtained in \S \ref{sec:precise_spectrum} (Proposition \ref{prop:precise_spectrum_L}), and the eigenvalue gap is finally estimated in \S \ref{sec:gap}, where the proof of Theorem \ref{thm:gap} can be found.
	
	Let us fix some notation. For $h$ small enough and $n \in \mathbb{N}^*$, we denote by $\widetilde{\mu}_n(h) := \mu_{\ell,n}(h)$ the unique eigenvalue of $\mathscr{L}_\ell(h)$ lying in $D(\nu_n(ah),h^{3/2})$, where we recall that
	$$\nu_n(ah) = (2n-1)e^{i\alpha/2} ah\,, \quad a := \sqrt{\frac{V''(x_\ell)}{2}}$$
	(see Proposition \ref{prop:WKB_estim}).  We will be mainly interested in the case $n = 1$ -- although the arguments apply equally for $n \geq 1$ -- and for this reason we denote $\widetilde{\mu}(h) := \widetilde{\mu}_1(h)$.

		Recall from Section \ref{s:simple_well0} that the definition of the simple-well operator $\mathscr{L}_\ell(h)$ depends on a choice of a parameter $\eta > 0$ and a function $\Sigma_\ell$ supported on $(x_r - \eta,x_r + \eta)$. As in Corollary \ref{cor:quasimodes} and Figure \ref{fig:chiell}, let $\chi_\ell \in C^\infty(\R)$ be a function chosen such that 
		\begin{equation}
		\label{eq:cond_chil}
		\supp(\chi_\ell) \subset (-\infty,x_r]\,, \quad \chi_{\ell|(-\infty,x_r - \eta]} \equiv 1\,, \quad \textup{and} \quad \chi_\ell \perp \Sigma_\ell.
		\end{equation}
		
		Considering the symmetry 
		\begin{equation}
		\label{eq:defsigma}
		(\sigma f)(x) := f(-x),
		\end{equation}
		we define $\chi_r := \sigma \chi_\ell$, $\mathscr{L}_r(h) := \sigma \mathscr{L}_\ell(h) \sigma$ and $\Pi_{r,n} :=  \sigma \Pi_{\ell,n} \sigma$ where $\Pi_{\ell,n}$ is the Riesz projector defined by \eqref{eq:defPiln}. We will often use the counterpart for the right well operator $\mathscr{L}_r(h)$, of the results shown for $\mathscr{L}_\ell(h)$; in this case, we will refer to the results for $\mathscr{L}_\ell(h)$ without further comment. Finally, recall the complex quantity $S(\alpha)$ defined in \eqref{eq:defSalpha} which will be important in what follows.

\subsection{Quasi-orthogonality}
\label{sec:quasi_ortho}
The key fact that permits the study the spectrum of the double-well operator is that the eigenvectors $\psi_\ell$ and $\psi_r$ of the right and left well (which are exponential quasimodes of $\mathscr{L}(h)$ by Corollary \ref{cor:quasimodes}) are almost orthogonal, in the sense of the next proposition.
\begin{proposition}[Quasi-orthogonality of the simple-well eigenfunctions]
	\label{prop:quasi_ortho}
	For $h$ sufficiently small, let $\psi_\ell(h)$ and $\psi_r(h)$ denote eigenvectors of $\mathscr{L}_\ell(h)$ and $\mathscr{L}_r(h)$ associated to the eigenvalue $\widetilde{\mu}_n(h)$. Then for all $\delta > 0$,
	$$\abs{\langle \psi_\ell(h), \psi_r(h)\rangle} = \mathscr{O}(e^{-(\Re S(\alpha)-\delta)/h}).$$
\end{proposition}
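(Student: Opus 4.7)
My plan is to combine the Agmon estimate of Corollary \ref{cor:Agmon1}, applied to each of $\psi_\ell(h)$ and $\psi_r(h)$, with a Cauchy--Schwarz argument, thereby reducing the proof to a purely geometric lower bound on the sum of the two Agmon weights. Let $\Phi_\varepsilon^\ell := \Phi_\varepsilon$ be the weight defined in \eqref{ex:Phieps}, and let $\Phi_\varepsilon^r$ be the analogous weight attached to the right well (obtained by replacing $x_\ell$ and $V_{\ell,A}$ by $x_r$ and the symmetric truncation $V_{r,A}$ of $V_r := V+\Sigma_r$, with $\Sigma_r := \sigma\Sigma_\ell$). Since $\psi_r = \sigma\psi_\ell$ is an eigenfunction of $\mathscr{L}_r(h) = \sigma\mathscr{L}_\ell(h)\sigma$ associated to the same eigenvalue, Corollary \ref{cor:Agmon1} applied on both sides yields
\begin{equation}
\label{eq:plan_agmon}
\|e^{\Phi_\varepsilon^\ell/h}\psi_\ell\|_{L^2} + \|e^{\Phi_\varepsilon^r/h}\psi_r\|_{L^2} \leq C\bigl(\|\psi_\ell\| + \|\psi_r\|\bigr),
\end{equation}
uniformly for $h$ small.

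Next I would estimate, by the Cauchy--Schwarz inequality with weight $\Phi_\varepsilon^\ell/h$,
\[
|\langle \psi_\ell,\psi_r\rangle| \leq \int_{\R}|\psi_\ell|\,|\psi_r|\,dx \leq \|e^{\Phi_\varepsilon^\ell/h}\psi_\ell\|_{L^2}\cdot \|e^{-\Phi_\varepsilon^\ell/h}\psi_r\|_{L^2}.
\]
The first factor is $\mathscr{O}(1)$ by \eqref{eq:plan_agmon}; for the second, I would insert $e^{\Phi_\varepsilon^r/h}e^{-\Phi_\varepsilon^r/h}$ and pull out the $L^\infty$ norm of the product weight,
\[
\|e^{-\Phi_\varepsilon^\ell/h}\psi_r\|_{L^2} \leq \bigl\|e^{-(\Phi_\varepsilon^\ell+\Phi_\varepsilon^r)/h}\bigr\|_{L^\infty(\R)}\cdot \|e^{\Phi_\varepsilon^r/h}\psi_r\|_{L^2} \leq C\exp\!\Bigl(-\tfrac{1}{h}\inf_{x\in\R}\bigl(\Phi_\varepsilon^\ell(x)+\Phi_\varepsilon^r(x)\bigr)\Bigr).
\]
All the substance of the proof is then concentrated in the geometric lower bound
\begin{equation}
\label{eq:plan_geom}
\inf_{x \in \R}\bigl(\Phi_\varepsilon^\ell(x) + \Phi_\varepsilon^r(x)\bigr) \geq \sqrt{1-\varepsilon}\,\Re S(\alpha),
\end{equation}
after which the result follows by choosing $\varepsilon > 0$ so small that $\sqrt{1-\varepsilon}\,\Re S(\alpha) \geq \Re S(\alpha) - \delta$.

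The main obstacle, and the only non-routine step, is \eqref{eq:plan_geom}. To prove it, I would fix $A$ large enough that $V_{\ell,A}=V_\ell$ and $V_{r,A}=V_r$ on $[x_\ell,x_r]$, and exploit the pointwise inequalities $V_{\ell,A}\geq V$ and $V_{r,A}\geq V$ on $\R$, together with $\Sigma_\ell\equiv 0$ on $[x_\ell,x_r-\eta]$ and (by symmetry) $\Sigma_r\equiv 0$ on $[x_\ell+\eta,x_r]$. For $x\in[x_\ell+\eta,x_r-\eta]$, both integrands reduce to $\sqrt{V}$ and
\[
\Phi_\varepsilon^\ell(x)+\Phi_\varepsilon^r(x) = \sqrt{1-\varepsilon}\cos(\alpha/2)\int_{x_\ell}^{x_r}\sqrt{V(s)}\,ds = \sqrt{1-\varepsilon}\,\Re S(\alpha);
\]
in the two thin boundary layers near $x_\ell$ and $x_r$ the same identity becomes an inequality because the sealed potential appearing in the relevant integral is pointwise $\geq V$; and for $x\notin[x_\ell,x_r]$, the triangle inequality
\[
\Bigl|\int_{x_\ell}^x\sqrt{V_{\ell,A}(s)}\,ds\Bigr|+\Bigl|\int_{x_r}^x\sqrt{V_{r,A}(s)}\,ds\Bigr|\geq \int_{x_\ell}^{x_r}\sqrt{V(s)}\,ds
\]
(again using $V_{\ell,A},V_{r,A}\geq V$) closes the argument. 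No other step should present any real difficulty.
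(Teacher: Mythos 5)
Your proposal is correct and follows essentially the same route as the paper: Cauchy--Schwarz with the Agmon weight, the estimate of Corollary \ref{cor:Agmon1} applied to $\psi_\ell$ and (by symmetry) to $\psi_r$, and the pointwise lower bound $\Phi_\varepsilon^\ell+\Phi_\varepsilon^r\geq\sqrt{1-\varepsilon}\,\Re S(\alpha)$ coming from $V_\ell,V_r\geq V$. You merely spell out the geometric bound in somewhat more detail than the paper does.
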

\begin{proof}
	Let $\varepsilon > 0$ and let $\Phi_{\ell,\varepsilon}$ be defined by \eqref{ex:Phieps}. Let $\Phi_{r,\varepsilon}$ be defined analogously. By the Cauchy-Schwarz inequality and the Agmon estimate (Corollary \ref{cor:Agmon1})
	\begin{align*}
	\abs{\langle \psi_\ell(h),\psi_r(h)\rangle} 
	&\leq \|e^{-(\Phi_{\ell,\varepsilon} + \Phi_{r,\varepsilon})}\|_\infty \|e^{\Phi_{\ell,\varepsilon}} \psi_\ell(h)\|  \|e^{\Phi_{r,\varepsilon}} \psi_r(h)\|\\
	&\leq C  \|e^{-(\Phi_{\ell,\varepsilon} + \Phi_{r,\varepsilon})}\|_\infty.
	\end{align*}
	But since $V_\ell \geq V$ and $V_r \geq V$, one has 
	$$\Phi_{\ell,\varepsilon}(x) + \Phi_{r,\varepsilon}(x) \geq (1- \varepsilon)^{1/2}\,\Re S(\alpha)\,, \qquad x \in \R.$$
	The conclusion follows by choosing $\varepsilon$ small enough. 
\end{proof}

\begin{corollary}[Quasi-orthogonality of the simple-well Riesz projectors]
	\label{cor:PilPir}
	Let $R > 0$ and let $n \in \rev{\{1,\ldots,N(R)\}}$. \rev{Then 
	for all $\delta > 0$,} 
	$$\Pi_{\ell,n}(h) \Pi_{r,n}(h) = \mathscr{O}(e^{-(\Re S(\alpha)-\delta)/h}).$$
\end{corollary}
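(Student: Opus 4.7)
The plan is to exploit the rank-one structure of the Riesz projectors provided by Proposition \ref{prop:fourre_tout}(iii) in order to reduce the operator bound to an inner product of the two simple-well eigenfunctions, which can then be controlled by essentially the same Agmon argument as Proposition \ref{prop:quasi_ortho}.

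First, let $\psi_\ell(h)$ and $\psi_r(h)$ be $L^2$-normalized eigenvectors of $\mathscr{L}_\ell(h)$ and $\mathscr{L}_r(h)$ associated with $\widetilde{\mu}_n(h)$; by the symmetry $\mathscr{L}_r(h) = \sigma \mathscr{L}_\ell(h) \sigma$, one may choose $\psi_r(h) = \sigma \psi_\ell(h)$. Proposition \ref{prop:fourre_tout}(iii) (applied on both sides, using the symmetry to transfer the statement from the left well to the right well) gives
\[
\Pi_{\ell,n}(h) u = \frac{\langle u,\overline{\psi_\ell}\rangle}{\langle \psi_\ell,\overline{\psi_\ell}\rangle}\,\psi_\ell\,, \qquad \Pi_{r,n}(h) u = \frac{\langle u,\overline{\psi_r}\rangle}{\langle \psi_r,\overline{\psi_r}\rangle}\,\psi_r\,,
\]
so that, for any $u\in L^2(\R)$,
\[
\Pi_{\ell,n}(h)\Pi_{r,n}(h) u = \frac{\langle u,\overline{\psi_r}\rangle\,\langle \psi_r,\overline{\psi_\ell}\rangle}{\langle \psi_r,\overline{\psi_r}\rangle\,\langle \psi_\ell,\overline{\psi_\ell}\rangle}\,\psi_\ell\,.
\]
Taking operator norms and using $\|\psi_\ell\| = \|\psi_r\| = 1$, we obtain
\[
\|\Pi_{\ell,n}(h)\Pi_{r,n}(h)\| \leq \frac{|\langle \psi_r,\overline{\psi_\ell}\rangle|}{|\langle \psi_\ell,\overline{\psi_\ell}\rangle|\,|\langle \psi_r,\overline{\psi_r}\rangle|}\,.
\]

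Next, I would bound the numerator by re-running the proof of Proposition \ref{prop:quasi_ortho}: since $\langle \psi_r,\overline{\psi_\ell}\rangle = \int \psi_r \psi_\ell$, the Cauchy--Schwarz estimate
\[
\Bigl|\int \psi_r \psi_\ell\Bigr| \leq \|e^{-(\Phi_{\ell,\varepsilon}+\Phi_{r,\varepsilon})/h}\|_\infty\,\|e^{\Phi_{\ell,\varepsilon}/h}\psi_\ell\|\,\|e^{\Phi_{r,\varepsilon}/h}\psi_r\|
\]
applies verbatim (the modulus of $\psi_\ell$ and $\psi_r$ is what enters, so the complex conjugate plays no role), and Corollary \ref{cor:Agmon1} combined with $\Phi_{\ell,\varepsilon}+\Phi_{r,\varepsilon}\geq (1-\varepsilon)^{1/2}\Re S(\alpha)$ on $\R$ yields, for any $\delta'>0$,
\[
|\langle \psi_r,\overline{\psi_\ell}\rangle| = \mathscr{O}\bigl(e^{-(\Re S(\alpha)-\delta')/h}\bigr)\,.
\]

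For the denominators, I would invoke the WKB description provided by Proposition \ref{prop:WKB_estim} together with the identities \eqref{eq:norm_wkb_1} and \eqref{eq:psipsibar_wkb} of Proposition \ref{prop:WKBansatz}: since $\psi_\ell(h)$ is $L^2$-close (to any polynomial accuracy in $h$) to $\psi_1^{\rm wkb}/\|\psi_1^{\rm wkb}\|$, one has
\[
\langle \psi_\ell,\overline{\psi_\ell}\rangle = \frac{\langle \psi_1^{\rm wkb},\overline{\psi_1^{\rm wkb}}\rangle}{\|\psi_1^{\rm wkb}\|^2} + o(1) \sim e^{-i\alpha/4}\sqrt{\cos(\alpha/2)}\,,
\]
which is bounded away from $0$ uniformly for $h$ small; the same holds for $\langle \psi_r,\overline{\psi_r}\rangle$ by symmetry. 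Putting everything together, $\|\Pi_{\ell,n}(h)\Pi_{r,n}(h)\| = \mathscr{O}(e^{-(\Re S(\alpha)-\delta')/h})$, and choosing $\delta' < \delta$ gives the claim.

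The only mild obstacle is verifying that the quantities $|\langle \psi_\ell,\overline{\psi_\ell}\rangle|$ and $|\langle \psi_r,\overline{\psi_r}\rangle|$ do not degenerate as $h \to 0$; this is handled cleanly thanks to \eqref{eq:psipsibar_wkb}, which even provides a nonzero limit. Everything else is a bookkeeping exercise on top of results already established in the previous sections.
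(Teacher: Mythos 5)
Your proof is essentially correct and uses the same core mechanism as the paper: reduce $\Pi_{\ell,n}\Pi_{r,n}$ via the rank-one formulas of Proposition~\ref{prop:fourre_tout}(iii) to the overlap $\langle\psi_r,\overline{\psi_\ell}\rangle=\int\psi_\ell\psi_r$, which the Agmon argument of Proposition~\ref{prop:quasi_ortho} (applied to $|\psi_\ell\psi_r|$, as you correctly note) makes exponentially small.

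The place where you diverge is the denominator bound. You invoke Proposition~\ref{prop:WKB_estim} and the normalization \eqref{eq:psipsibar_wkb} to show $|\langle\psi_\ell,\overline{\psi_\ell}\rangle|$ is bounded below. This works, but it is significantly heavier machinery than needed and, as literally stated, \eqref{eq:norm_wkb_1} and \eqref{eq:psipsibar_wkb} are only given for $n=1$, whereas the corollary is claimed for all $n\in\{1,\ldots,N(R)\}$; one would need the corresponding asymptotics for the higher WKB quasimodes, which the paper does not spell out. The paper instead observes that, when $\psi_\ell$ is normalized, the identity $\Pi_{\ell,n}(h)=\langle\cdot,\overline{\psi_\ell}\rangle\langle\psi_\ell,\overline{\psi_\ell}\rangle^{-1}\psi_\ell$ forces $\|\Pi_{\ell,n}(h)\|=|\langle\psi_\ell,\overline{\psi_\ell}\rangle|^{-1}$, and the uniform bound $\|\Pi_{\ell,n}(h)\|\leq C$ from Proposition~\ref{prop:fourre_tout}(ii) already gives $|\langle\psi_\ell,\overline{\psi_\ell}\rangle|\geq C^{-1}$ for every $n\leq N(R)$. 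This closes the argument without touching Section~\ref{sec:proofWKB} and covers all $n$ at once; I'd suggest replacing your WKB step with this observation.
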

\begin{proof}
This follows from Proposition \ref{prop:quasi_ortho} since the range of $\Pi_{\ell/r,n}$ lies in the eigenspace of $\mathscr{L}_{\ell/r}(h)$ (by \eqref{eq:Pi_vrai_projecteur}), using the bound on $\Pi_{\ell,n}$ (property (ii) of Proposition \ref{prop:fourre_tout}).
\end{proof}

We shall also use the following result:
	\begin{lemma}
	\label{lem:loc_Pil}
	Let $R > 0,\gamma > 0$ and let $n \in \{1,\ldots,N(R)\}$ where $N(R)$ is defined as in Proposition \ref{prop:locspec}.\rev{\footnote{\rev{Il suffirait d'écrire ``Let $n \in \N$ and $\gamma > 0$.''}}}  
	Let $\phi_\ell \in C^\infty_c(\R)$ satisfy $\phi_\ell \equiv 1$ on $(-\infty,x_\ell + \gamma]$ for some $\gamma > 0$. Then, there exists $C > 0$ such that
	$$\big\|(1 - \phi_\ell) \Pi_{\ell,n}(h)\big\| + \big\|\Pi_{\ell,n}(h)(1 - \phi_\ell)\big\|  = \mathscr{O}(e^{-C/h}).$$
	\end{lemma}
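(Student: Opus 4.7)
My plan is to exploit the rank-one structure of $\Pi_{\ell,n}(h)$ given by property (iii) of Proposition \ref{prop:fourre_tout}. Let $\psi_\ell$ be a normalized eigenfunction of $\mathscr{L}_\ell(h)$ associated to $\mu_{\ell,n}(h)$; then $\Pi_{\ell,n}(h)u = \langle u,\overline{\psi_\ell}\rangle\,\psi_\ell/\langle\psi_\ell,\overline{\psi_\ell}\rangle$, so Cauchy--Schwarz yields, for every $u\in L^2(\R)$,
\begin{align*}
\|(1-\phi_\ell)\Pi_{\ell,n}(h)u\| &\leq \frac{\|(1-\phi_\ell)\psi_\ell\|}{|\langle\psi_\ell,\overline{\psi_\ell}\rangle|}\,\|u\|,\\
\|\Pi_{\ell,n}(h)(1-\phi_\ell)u\| &\leq \frac{\|(1-\phi_\ell)\overline{\psi_\ell}\|}{|\langle\psi_\ell,\overline{\psi_\ell}\rangle|}\,\|u\|.
\end{align*}
The proof thus reduces to establishing (a) a uniform lower bound $|\langle\psi_\ell,\overline{\psi_\ell}\rangle|\geq c_0>0$ as $h\to 0$, and (b) exponential upper bounds on both $\|(1-\phi_\ell)\psi_\ell\|$ and $\|(1-\phi_\ell)\overline{\psi_\ell}\|$.

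For (b), I would invoke the exponential localization near $x_\ell$. Since $\phi_\ell\equiv 1$ on $(-\infty,x_\ell+\gamma]$, the cutoff $\widetilde{\chi}_\ell^+:=1-\phi_\ell$ is supported in $[x_\ell+\gamma,+\infty)$, so property (ii) of Corollary \ref{cor:small_far} (applied with $N=0$ and any $0<\delta<S_+(\gamma)$) directly gives $\|(1-\phi_\ell)\psi_\ell\|\leq Ce^{-(S_+(\gamma)-\delta)/h}$. For the bound on $\overline{\psi_\ell}$, I would observe that $\overline{\psi_\ell}$ is an eigenfunction of $\mathscr{L}_\ell(h)^*=(hD_x)^2+e^{-i\alpha}V_\ell$ with eigenvalue $\overline{\mu_{\ell,n}(h)}\in D(0,Rh)$; since all the Agmon estimates of Section \ref{sec:elliptic1D} are invariant under $\alpha\mapsto-\alpha$ (the Agmon distance \eqref{ex:Phieps} depends only on $\cos(\alpha/2)$), the same corollary applies verbatim to $\overline{\psi_\ell}$.

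For (a), which I expect to be the main obstacle, the naive Cauchy--Schwarz bound yields only $|\langle\psi_\ell,\overline{\psi_\ell}\rangle|\leq 1$, and one genuinely needs the WKB structure of $\psi_\ell$ to rule out cancellations in this complex bilinear pairing. I would invoke Proposition \ref{prop:WKB_estim} with, say, $k=1$: this gives $\psi_\ell = C_n(h)\psi_n^{\mathrm{wkb}} + r$ with $\|r\|_{L^2}\leq Ch$ and $C_n(h)\sim\|\psi_n^{\mathrm{wkb}}\|^{-1}$. Expanding via the bilinearity of $(f,g)\mapsto \int fg\,dx$ and using $\|\psi_n^{\mathrm{wkb}}\|\gtrsim h^{1/4}$ (see \eqref{eq:lowerBoundWKB}), one obtains
\begin{equation*}
\langle\psi_\ell,\overline{\psi_\ell}\rangle = C_n(h)^2\,\langle\psi_n^{\mathrm{wkb}},\overline{\psi_n^{\mathrm{wkb}}}\rangle + \mathscr{O}(h),
\end{equation*}
and the asymptotic \eqref{eq:psipsibar_wkb} (or its analogue for $n\geq 1$) then gives $\langle\psi_\ell,\overline{\psi_\ell}\rangle\to e^{-i\alpha/4}\sqrt{\cos(\alpha/2)}\neq 0$ as $h\to 0$. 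Combining (a) and (b) concludes the proof, with any $0<C<S_+(\gamma)$.
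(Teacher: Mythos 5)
Your proof is correct, and the core mechanism is the same as the paper's: exponential Agmon decay of the eigenfunction away from $x_\ell$ (Corollary~\ref{cor:small_far}~(ii)) combined with a uniform bound on the projector. The one place where you have done substantially more work than necessary is step~(a). Since $\psi_\ell$ is normalized, the rank-one formula in Proposition~\ref{prop:fourre_tout}~(iii) gives $\|\Pi_{\ell,n}(h)\|=|\langle\psi_\ell,\overline{\psi_\ell}\rangle|^{-1}$ exactly, so the lower bound $|\langle\psi_\ell,\overline{\psi_\ell}\rangle|\geq c_0$ is \emph{literally} the statement $\|\Pi_{\ell,n}(h)\|\leq C$ of Proposition~\ref{prop:fourre_tout}~(ii), which is already available from Section~\ref{sec:3} via the comparison with the complex harmonic oscillator; there is no need to re-derive it through the WKB expansion of Section~\ref{sec:proofWKB}. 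The paper simply writes $\|(1-\phi_\ell)\Pi_{\ell,n}(h)u\|\leq Ce^{-C/h}\|\Pi_{\ell,n}(h)u\|\leq Ce^{-C/h}\|u\|$ and handles $\Pi_{\ell,n}(h)(1-\phi_\ell)$ by duality ($\alpha\mapsto-\alpha$), as you also suggest. A further minor simplification for step~(b): since $\phi_\ell$ is real-valued and $|\overline{\psi_\ell}(x)|=|\psi_\ell(x)|$, one has $\|(1-\phi_\ell)\overline{\psi_\ell}\|=\|(1-\phi_\ell)\psi_\ell\|$ directly, without invoking the adjoint operator. None of this is a gap — your WKB detour is valid and not circular (Proposition~\ref{prop:WKB_estim} does not depend on Lemma~\ref{lem:loc_Pil}) — it is just heavier machinery than the situation requires.
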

	\begin{proof}
	Let $u \in L^2(\R)$. By \eqref{eq:Pi_vrai_projecteur}, $\Pi_{\ell,n} u$ is an eigenvector of $\mathscr{L}_\ell$ associated to the eigenvalue $\mu_{\ell,n}(h)$. Thus, by the exponential decay of eigenfunctions (property (ii) of Corollary \ref{cor:small_far}) and the bound on $\Pi_{\ell,n}(h)$ (Proposition \ref{prop:fourre_tout} (ii))
	$$\|(1 - \phi_\ell) \Pi_{\ell,n}(h)u\| \leq Ce^{-C/h}\|\Pi_{\ell,n}(h)u\| \leq C e^{-C/h} \|u\|.$$
	Therefore, $\|(1 - \phi_\ell)\Pi_{\ell,n}\| \leq C e^{-C/h}$, and the same bound follows for $\|\Pi_{\ell,n}(1 - \phi_\ell)\|$ by duality, since all results shown on $\mathscr{L}_\ell$ apply equally to $\mathscr{L}_\ell^*$ by switching $\alpha$ to $-\alpha$. 
	\end{proof}

	\subsection{Spectrum of the double-well operator}
	\label{sec:precise_spectrum}
	
	We establish the following result. Its proof can be found at the end of this paragraph. 
	\begin{proposition}
		\label{prop:precise_spectrum_L}
		Let $R > 0$ \rev{and} let N(R) be as in Proposition \ref{prop:locspec}.
		Then, for all $\delta > 0$, there exists $C > 0$ and $h_0 > 0$ such that \rev{for all $h \in (0,h_0)$},
		$$\textup{sp}(\mathscr{L}(h)) \cap D(0,Rh) \subset \bigcup_{n = 1}^{N(R)} D\big(\widetilde{\mu}_n(h),Ce^{-(\Re S(\alpha)-\delta)/h}\big)$$
		and each small disk $D\big(\widetilde{\mu}_n(h),Ce^{-(\Re S(\alpha)-\delta)/h}\big)$ contains exactly two eigenvalues of $\mathscr{L}(h)$, count\rev{ed} with multiplicities. 
	\end{proposition}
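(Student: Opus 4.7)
The plan is to analyze the Riesz projector
\[
\Pi_n(h) := \frac{1}{2i\pi}\int_{\gamma_n}(z-\mathscr{L}(h))^{-1}\,dz
\]
associated with the contour $\gamma_n := \mathscr{C}(\nu_n(ah),\varepsilon h)$ for a small fixed $\varepsilon \in (0,a)$. By Proposition \ref{prop:locspec} and Proposition \ref{prop:WKB_estim} applied to both $\mathscr{L}_\ell$ and $\mathscr{L}_r$, for $h$ small the contours $\gamma_1,\ldots,\gamma_{N(R)}$ are pairwise disjoint, and each encloses $\widetilde{\mu}_n(h)$ as the unique element of $\mathrm{sp}(\mathscr{L}_\ell(h))\cup\mathrm{sp}(\mathscr{L}_r(h))$ in its interior. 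The central step is the approximation
\[
\Pi_n(h) = \Pi_{\ell,n}(h) + \Pi_{r,n}(h) + \mathscr{O}(e^{-(\Re S(\alpha) - \delta)/h}),
\]
which I would establish by constructing a parametrix $R_n(z) = \chi_\ell^\flat (\mathscr{L}_\ell(h) - z)^{-1}\chi_\ell + \chi_r^\flat (\mathscr{L}_r(h) - z)^{-1}\chi_r$, where $\{\chi_\ell,\chi_r\}$ is a partition of unity with $\chi_{\ell/r} \perp \Sigma_{\ell/r}$ and $\chi_{\ell/r}^\flat$ is a wider cutoff equal to $1$ near $\operatorname{supp}\chi_{\ell/r}$ and satisfying $\chi_{\ell/r}^\flat \perp \Sigma_{\ell/r}$. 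The identity $(\mathscr{L}(h) - z) R_n(z) = \mathrm{Id} + E_n(z)$, with $E_n(z) = [\mathscr{L}_\ell,\chi_\ell^\flat](\mathscr{L}_\ell - z)^{-1}\chi_\ell + [\mathscr{L}_r,\chi_r^\flat](\mathscr{L}_r - z)^{-1}\chi_r$, reduces matters to showing $\|E_n(z)\| = \mathscr{O}(e^{-(\Re S(\alpha)-\delta)/h})$ uniformly on $\gamma_n$.

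Bounding $E_n(z)$ is the main technical obstacle. I would split $(\mathscr{L}_\ell(h) - z)^{-1} = (\widetilde{\mu}_n(h) - z)^{-1}\Pi_{\ell,n}(h) + Q_\ell(z)$ with $\|Q_\ell(z)\| \leq Ch^{-1}$ on $\gamma_n$ by Proposition \ref{prop:fourre_tout}(iv). Since the commutator $[\mathscr{L}_\ell,\chi_\ell^\flat]$ is supported far from $x_\ell$ while $\Pi_{\ell,n}(h)$ has range concentrated near $x_\ell$, the projector contribution is bounded by $\mathscr{O}(e^{-(\Re S(\alpha) - \delta)/h})$ via Lemma \ref{lem:loc_Pil}, combined with the sharper Agmon-type constant from Corollary \ref{cor:small_far}(ii). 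The regular part $[\mathscr{L}_\ell,\chi_\ell^\flat]Q_\ell(z)\chi_\ell$ requires a pseudo-locality argument analogous to Proposition \ref{prop:pseudolocH}, transferred to $\mathscr{L}_\ell$ through the comparison of Lemma \ref{lem:controlD}, again yielding an exponentially small bound. Integrating around $\gamma_n$ then produces the desired approximation of $\Pi_n(h)$. Taking traces,
\[
\operatorname{rank}(\Pi_n(h)) = \operatorname{tr}(\Pi_n(h)) = \operatorname{tr}(\Pi_{\ell,n}(h)) + \operatorname{tr}(\Pi_{r,n}(h)) + o(1) = 2,
\]
using Proposition \ref{prop:fourre_tout}(i). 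Hence $\mathscr{L}(h)$ admits exactly two eigenvalues (with algebraic multiplicity) inside $\gamma_n$.

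To pinpoint them, I would set $\widetilde{\psi}_{\ell/r} := \Pi_n(h)\psi_{\ell/r}$, where $\psi_{\ell/r}$ are normalized eigenfunctions of $\mathscr{L}_{\ell/r}(h)$ at $\widetilde{\mu}_n(h)$. Proposition \ref{prop:quasi_ortho} together with the approximation of $\Pi_n(h)$ above implies that $\{\widetilde{\psi}_\ell, \widetilde{\psi}_r\}$ is a basis of $\operatorname{Ran}(\Pi_n(h))$ with Gram matrix $I_2 + \mathscr{O}(e^{-(\Re S(\alpha)-\delta)/h})$. Corollary \ref{cor:quasimodes}, with $\eta$ chosen so that $S_\eta$ is arbitrarily close to $\Re S(\alpha)$, gives $\|(\mathscr{L}(h) - \widetilde{\mu}_n(h))\psi_{\ell/r}\|_{L^2} = \mathscr{O}(e^{-(\Re S(\alpha) - \delta)/h})$, so the matrix of $\mathscr{L}(h)|_{\operatorname{Ran}(\Pi_n(h))}$ in this basis equals $\widetilde{\mu}_n(h) I_2 + \mathscr{O}(e^{-(\Re S(\alpha)-\delta)/h})$. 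Its two eigenvalues — which are exactly the two eigenvalues of $\mathscr{L}(h)$ inside $\gamma_n$ — therefore lie in $D(\widetilde{\mu}_n(h), C e^{-(\Re S(\alpha)-\delta)/h})$. Finally, the same parametrix argument, applied to $z \in D(0,Rh) \setminus \bigcup_n \overline{\mathrm{int}(\gamma_n)}$ where $\|(\mathscr{L}_{\ell/r}(h) - z)^{-1}\| \leq C h^{-1}$ by Proposition \ref{prop:fourre_tout}(iv)--(v), shows that $\mathscr{L}(h) - z$ is invertible there, so no eigenvalue of $\mathscr{L}(h)$ in $D(0,Rh)$ lies outside $\bigcup_n \gamma_n$.
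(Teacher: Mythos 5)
Your high-level strategy (Riesz projector, decomposition into left/right-well projectors, Gram matrix of quasimodes) matches the paper's, but your key technical step has a genuine gap. You attempt to construct a two-term parametrix $R_n(z) = \chi_\ell^\flat(\mathscr{L}_\ell - z)^{-1}\chi_\ell + \chi_r^\flat(\mathscr{L}_r - z)^{-1}\chi_r$ and to show directly that the error $E_n(z)$ is $\mathscr{O}(e^{-(\Re S(\alpha)-\delta)/h})$ uniformly on the contour. For the term involving $(\widetilde{\mu}_n - z)^{-1}\Pi_{\ell,n}$ this is plausible by Agmon localization of the eigenfunctions. But for the ``regular part'' $[\mathscr{L}_\ell,\chi_\ell^\flat]\,Q_\ell(z)\,\chi_\ell$, the only size control you have on $Q_\ell(z) = (\mathscr{L}_\ell - z)^{-1}(\Id - \Pi_{\ell,n})$ is $\|Q_\ell(z)\| \leq Ch^{-1}$ from Proposition \ref{prop:fourre_tout}(iv); to beat this down to an exponentially small bound you would need an exponential pseudo-locality estimate for $(\mathscr{L}_\ell - z)^{-1}$ at $\mathscr{O}(1)$ spatial scales. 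Neither of the two lemmas you invoke delivers this: Proposition \ref{prop:pseudolocH} is a pseudo-locality estimate for the \emph{harmonic oscillator} at scales $h^\beta$, $\beta < \tfrac12$, and its conclusion is only $\mathscr{O}(h^N)$, not exponential; and Lemma \ref{lem:controlD} yields a bound $\|D_h(z)\| \leq Ch^{1/5}/\mathrm{dist}$, again polynomial. So the claim ``again yielding an exponentially small bound'' is not justified by the cited tools, and an argument of Combes--Thomas type for $(\mathscr{L}_\ell - z)^{-1}$ itself would have to be supplied.

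The paper's proof sidesteps this difficulty entirely and is worth understanding for that reason. It uses a \emph{three-term} parametrix $R_h(z) = (\mathscr{L}_\ell - z)^{-1}\phi_\ell + (\mathscr{L}_r - z)^{-1}\phi_r + (1-\phi_\ell-\phi_r)(\mathscr{L} - z)^{-1}$ (Lemma \ref{lem:approx_inverse}), accepting only an $\mathscr{O}(h^{-1/2})$ error; this is enough to conclude $\|(\mathscr{L}(h)-z)^{-1}\| \leq Ch^{-1}$ on the contour and hence to prove Corollary \ref{cor:resolventI-PiL}. The exponential precision of $\Pi_n - \Pi_{\ell,n} - \Pi_{r,n}$ is then obtained \emph{not} from a better parametrix but by an algebraic bootstrap (Proposition \ref{cor:PiPilPir}): one first shows $\widetilde{R}_h := \Pi_n - \Pi_{\ell,n} - \Pi_{r,n}$ satisfies $\|\widetilde{R}_h\| = \mathscr{O}(\sqrt h)$, then uses $\Pi\,\Pi_\ell = \Pi_\ell + \mathscr{O}(e^{-(S_\eta-\delta)/h})$ (from Corollary \ref{cor:resolventI-PiL}, duality, and the fact that eigenfunctions of $\mathscr{L}_\ell$ are exponential quasimodes of $\mathscr{L}$) together with the quasi-orthogonality $\Pi_\ell \Pi_r = \mathscr{O}(e^{-(\Re S(\alpha)-\delta)/h})$ to compute $\widetilde{R}_h^2 - \widetilde{R}_h = \mathscr{O}(e^{-(S_\eta-\delta)/h})$. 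Since $\|\widetilde{R}_h\| < 1$, this forces $\|\widetilde{R}_h\| = \mathscr{O}(e^{-(S_\eta-\delta)/h})$. If you want to keep your parametrix-based argument, you would need to supply a genuine exponential pseudo-locality estimate for $(\mathscr{L}_\ell - z)^{-1}$; otherwise you should switch to the paper's bootstrap.

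Two smaller remarks. First, your trace computation $\operatorname{rank}(\Pi_n) = \operatorname{tr}(\Pi_n)$ is fine in principle, but you would need to justify that $\Pi_n(h)$ is trace-class (it has finite rank only after you know the spectrum in $D(0,Rh)$ is discrete and bounded, which requires the resolvent bound you are in the middle of establishing); the paper instead establishes $\operatorname{rank} \leq 2$ via a duality argument (Corollary \ref{cor:rank_at_most2}) and $\geq 2$ from the Gram matrix (Corollary \ref{cor:rank2}). Second, the final claim that no eigenvalue lies in $D(0,Rh)$ outside the small disks needs the rough a priori localization $\operatorname{sp}(\mathscr{L}(h))\cap D(0,Rh) \subset \bigcup_n D(\nu_n(ah),Ch^{3/2})$; the paper obtains this by repeating the $\mathscr{O}(h^{3/2})$-quasimode argument of Proposition \ref{prop:locspec}, which is simpler than the resolvent-invertibility argument you sketch.
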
	
	
	First, some rough information on the spectrum of $\mathscr{L}(h)$ can be obtained in the same way as in Proposition \ref{prop:locspec}: for every $R > 0$, there exists $h_0 > 0$ and $C > 0$, such that 
	\begin{equation}
	\label{eq:roughLocSpecLh}
	\textup{sp}(\mathscr{L}(h)) \cap D(0,Rh) \subset \bigcup_{n = 1}^{N(R)} D(\nu_n(ah),Ch^{3/2}), \qquad h \in (0,h_0).
	\end{equation}
	This allows again to define Riesz projectors for $\mathscr{L}(h)$. More precisely, given \rev{$n\in \mathbb{N}$}, \rev{\eqref{eq:roughLocSpecLh} ensures that} there exists $\varepsilon > 0$ and $h_0 > 0$ such that
	$$\Pi_n(h) := \frac{1}{2\pi i} \int_{\mathscr{C}(\nu_n(ah),\varepsilon h)} (z - \mathscr{L}(h))^{-1}\,dz$$
	is well-defined for all $h \in (0,h_0)$. Let us denote $\Pi(h) := \Pi_1(h)$ and $\Pi_{\ell/r}(h) := \Pi_{\ell/r,1}(h)$.

	\subsubsection{Approximation of $(\mathscr{L}(h)-z)^{-1}$}

	Our first step to analyse $\Pi_n(h)$ is to approximate the resolvent $(\mathscr{L}(h)-z)^{-1}$ by
	\begin{equation}
	\label{eq:Rh(z)}
	R_h(z) := (\mathscr{L}_\ell(h) - z)^{-1} \phi_\ell + (\mathscr{L}_r(h) - z)^{-1} \phi_r
	\end{equation}
	for $z \in \mathscr{C}(\nu_n(ah),\varepsilon h)$. Here, \rev{$\phi_\ell,\phi_r \in C^\infty(\R,[0,1])$ are such that $\phi_\ell + \phi_r = 1$ and $\phi_{\ell/r} \perp \Sigma_{\ell/r}$ (where $\Sigma_{\ell/r}$ is the well-sealing perturbation introduced in \S\ref{sec:Agmon}) so that $\phi_{\ell/r} \mathscr{L}(h) = \mathscr{L}_{\ell/r}(h)$}. 
	
%

\begin{lemma}
	\label{lem:approx_inverse}
	Let $R > 0$. There exists $h_0$, $\varepsilon$, and $C > 0$ such that \rev{for all $h \in (0,h_0)$}, if $z \in D(0,Rh)$ satisfies $\textup{dist}(z,\textup{sp}(\mathscr{L}_\ell(h))) \geq \varepsilon h$, then  
	$$\|(\mathscr{L}(h) - z)^{-1}\| \leq C h^{-1} \quad \textup{and}\quad \|(\mathscr{L}(h) - z)^{-1} - R_h(z)\| \leq Ch^{-1/2}\,,\qquad h \in (0,h_0)$$
	\rev{with} $R_h(z)$ defined by \eqref{eq:Rh(z)}.
\end{lemma}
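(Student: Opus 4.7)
My plan splits into two steps. Step 1: establish $\|(\mathscr{L}(h)-z)^{-1}\|\leq Ch^{-1}$, which in particular places $z$ in the resolvent set of $\mathscr{L}(h)$ and so legitimizes the formula defining $R_h(z)$. Step 2: with $(\mathscr{L}(h)-z)^{-1}$ now at hand, bound $R_h(z) - (\mathscr{L}(h)-z)^{-1}$ by an explicit algebraic identity.

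For Step 1, I would proceed by a bare-hands a priori estimate, decomposing $u = \phi_\ell u + \phi_r u + \rho u$ with $\rho := 1 - \phi_\ell - \phi_r$. By construction $\supp\phi_\ell$ lies to the left of $\supp\Sigma_\ell$, so $\phi_\ell\Sigma_\ell=0$ and hence $\mathscr{L}(h)(\phi_\ell u) = \mathscr{L}_\ell(h)(\phi_\ell u)$, giving
\[
(\mathscr{L}_\ell(h)-z)(\phi_\ell u) = \phi_\ell(\mathscr{L}(h)-z)u + [\mathscr{L}_\ell(h),\phi_\ell]u\,.
\]
Combining the resolvent bound $\|(\mathscr{L}_\ell(h)-z)^{-1}\|\leq Ch^{-1}$ of Corollary \ref{cor:almostSpectralTheoremLl} with the explicit commutator $[\mathscr{L}_\ell(h),\phi_\ell] = -2ih\phi'_\ell(hD_x) - h^2\phi''_\ell$ yields
\[
\|\phi_\ell u\|\leq Ch^{-1}\|(\mathscr{L}(h)-z)u\| + C\|(hD_x)u\| + Ch\|u\|\,,
\]
and symmetrically for $\|\phi_r u\|$. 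Since $V\geq c_0>0$ on $\supp\rho$, a direct coercivity estimate applied to $(\mathscr{L}(h)-z)(\rho u) = \rho(\mathscr{L}(h)-z)u + [\mathscr{L}(h),\rho]u$ produces a bound of the same shape for $\|\rho u\|$. Controlling $\|(hD_x)u\|$ via the standard energy identity $\cos(\alpha/2)\|(hD_x)u\|^2 \leq \|(\mathscr{L}(h)-z)u\|\|u\| + Rh\|u\|^2$ and using Young's inequality to absorb the $\|u\|$-contributions on the right then delivers $\|u\|\leq Ch^{-1}\|(\mathscr{L}(h)-z)u\|$. Combined with the Fredholm-of-index-$0$ property for $\mathscr{L}(h)-z$, established exactly as in Proposition \ref{prop:point_spectrum} using that $V$ is bounded below at infinity, injectivity forces invertibility, and the bound $\|(\mathscr{L}(h)-z)^{-1}\|\leq Ch^{-1}$ follows.

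For Step 2, I would use the identity
\[
R_h(z) - (\mathscr{L}(h)-z)^{-1} = -\sum_{\bullet\in\{\ell,r\}}(\mathscr{L}_\bullet(h)-z)^{-1}[\mathscr{L}_\bullet(h),\phi_\bullet](\mathscr{L}(h)-z)^{-1}\,,
\]
obtained by first rewriting $R_h(z) - (\mathscr{L}(h)-z)^{-1} = \sum_\bullet\big[(\mathscr{L}_\bullet(h)-z)^{-1}\phi_\bullet - \phi_\bullet(\mathscr{L}(h)-z)^{-1}\big]$ (the $\rho$-term cancels against $\rho(\mathscr{L}(h)-z)^{-1}$), and then applying $(\mathscr{L}_\bullet(h)-z)$ to each bracket, using $\mathscr{L}_\ell(h) = \mathscr{L}(h) + e^{i\alpha}\Sigma_\ell$ together with $\phi_\ell\Sigma_\ell = 0$. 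The commutator contributes $h\cdot\varphi\cdot(hD_x) + h^2\cdot\varphi''$, so combining the $\mathscr{O}(h^{-1})$ resolvent bounds of Step 1 with the $\mathscr{O}(h^{-1/2})$ gradient-resolvent bounds $\|(hD_x)(\mathscr{L}(h)-z)^{-1}\|, \|(\mathscr{L}_\ell(h)-z)^{-1}(hD_x)\| \leq Ch^{-1/2}$ (which follow from the same energy identity as in Proposition \ref{prop:resboundDx}) gives each summand of size $Ch\cdot h^{-1/2}\cdot h^{-1} = Ch^{-1/2}$, as claimed.

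The main obstacle is the circularity built into the statement: because $R_h(z)$ itself contains $(\mathscr{L}(h)-z)^{-1}$, it cannot be used as a genuine parametrix to prove the very existence of $(\mathscr{L}(h)-z)^{-1}$, and the resolvent bound of Step 1 must be obtained by hand. The delicate point there is marrying the quantitative resolvent bound for $\mathscr{L}_\ell(h)$ near each well to the ellipticity of $\mathscr{L}(h)$ on $\supp(\rho)$ while carefully handling the cross-term $\|(hD_x)u\|$, which can only be absorbed after an application of Young's inequality.
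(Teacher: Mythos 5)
Your proposal is correct, and your observation about the built-in circularity of using $R_h(z)$ directly as a parametrix is well-taken. Your Step 2 is essentially the paper's argument: the identity $R_h(z)(\mathscr{L}(h)-z) = \Id + A_h(z)$ with $A_h(z) = \sum_{\bullet\in\{\ell,r\}}(\mathscr{L}_\bullet(h)-z)^{-1}[\phi_\bullet,(hD_x)^2]$ is, after a sign flip, exactly your identity $R_h(z) - (\mathscr{L}(h)-z)^{-1} = -\sum_\bullet(\mathscr{L}_\bullet(h)-z)^{-1}[\mathscr{L}_\bullet(h),\phi_\bullet](\mathscr{L}(h)-z)^{-1}$, and both you and the paper estimate the commutator contribution as $\mathscr{O}(h)\cdot \mathscr{O}(h^{-1/2})\cdot\mathscr{O}(h^{-1}) = \mathscr{O}(h^{-1/2})$ via the gradient-resolvent bound. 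Where you genuinely diverge is Step 1. The paper does \emph{not} do an a priori estimate there: it relies on the rough localization of $\textup{sp}(\mathscr{L}(h))\cap D(0,Rh)$ in the disks $D(\nu_n(ah),Ch^{3/2})$, stated just before the lemma, so that the hypothesis $\dist(z,\textup{sp}(\mathscr{L}_\ell(h)))\geq\varepsilon h$ already ensures $z$ lies in the resolvent set of $\mathscr{L}(h)$ and the parametrix identity is well-posed. It then inverts $\Id + A_h(z)$ and runs a bootstrap using the ellipticity-away-from-the-wells estimate (Lemma \ref{lem:elliptic_far} with $\beta=0$) on the term $(1-\phi_\ell-\phi_r)(\mathscr{L}(h)-z)^{-1}$ to get $\|R_h(z)\|\leq Ch^{-1}$, from which both asserted bounds drop out. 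Your IMS-type localization of $u$ by $\phi_\ell,\phi_r,\rho$, with coercivity on $\supp\rho$ and absorption of $\|(hD_x)u\|$ by the energy identity and Young, is an equally valid and arguably more self-contained alternative: it proves injectivity directly, and with the Fredholm index-zero property this yields invertibility without leaning on the preliminary spectral localization of $\mathscr{L}(h)$. The paper's route is shorter once that localization is in hand; yours buys logical independence from it at the price of the explicit absorption computation.
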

\begin{proof}
 	One has
 	\begin{equation}
 	\label{eq:Rh(L-z)}
 	R_{h}(z) (\mathscr{L}(h)-z) = \Id + A_h(z)
 	\end{equation}
	where $A_h(z)$ is given by
	$$A_h(z) = (\mathscr{L}_\ell(h) - z)^{-1} [\phi_\ell,(hD_x)^2] +  (\mathscr{L}_r(h) - z)^{-1} [\phi_r,(hD_x)^2].$$
	Moreover, $\|(\mathscr{L}_\ell(h) - z)^{-1}\| \leq Ch^{-1}$ for all $z \in \gamma_n$ (Corollary \ref{cor:almostSpectralTheoremLl}), and thus, by the same proof as in Proposition \ref{prop:resboundDx}, $\|(\mathscr{L}_\ell(h) - z)^{-1}(hD_x)\| \leq Ch^{-1/2}$. Therefore, there exists $C > 0$ such that for all $h \in (0,h_0)$ and $z \in \gamma_n$, 
	\begin{equation}
		\label{eq:normA}
		\|A_h(z)\| \leq C\sqrt{h}.
	\end{equation}
 	In particular, $\Id + A_h(z)$ is invertible for $h$ small enough, and
	$$(\mathscr{L}(h)-z)^{-1} = (\Id + A_h(z))^{-1} R_h(z).$$
	\rev{Moreover, again by Corollary \ref{cor:almostSpectralTheoremLl},}
	\begin{align*}
	\rev{\sup_{z \in \gamma}}\|R_h(z)\| &\leq \|(\mathscr{L}_\ell(h)-z)^{-1}\phi_\ell\| +\|(\mathscr{L}_r(h)-z)^{-1}\phi_r\|
	\leq Ch^{-1}.
	\end{align*}
	\rev{Using this estimate and \eqref{eq:normA} in \eqref{eq:Rh(L-z)} gives the result.}
\end{proof}

\subsubsection{Approximation of the Riesz projector $\Pi_n(h)$}

As a first consequence of Lemma \ref{lem:approx_inverse}, we learn that $(\mathscr{L}(h) - z)^{-1}$ is ``tame'' on the integration contour $\mathscr{C}(\nu_n(ah),\varepsilon h)$. In the same way as in Section \ref{sec:3}, this allows to see, via the next corollary, that any quasi-mode must lie exponentially close to some element of $\textup{Ran}(\Pi_n(h))$. In particular, since we have two quasi-modes at hand ($\chi_\ell \psi_\ell$ and $\chi_r \psi_r$, see Corollary \ref{cor:quasimodes}), this will allow to see that the rank of $\Pi(h)$ is at least two.
\begin{corollary}
	\label{cor:resolventI-PiL}
	There exists $C$, $h_0$, and $\varepsilon > 0$ such that the estimate 
	$$\|(\Id - \Pi_n(h))u\| \leq Ch^{-1} \big\|\big(\mathscr{L}(h) - z\big)u\big\|$$
	holds for all $h \in (0,h_0)$, $z \in D(\nu_n(ah),\varepsilon h)$ and $u \in H^2(\R)$.
\end{corollary}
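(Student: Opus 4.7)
The plan is to mimic the proof of Proposition~\ref{prop:fourre_tout}(iv): the statement is a ``general spectral result'' (cf.\ Proposition~\ref{prop:general_spectral_result} as invoked earlier for the simple-well operator) whose only nontrivial input is a uniform $\mathscr{O}(h^{-1})$ bound on the resolvent of $\mathscr{L}(h)$ along the integration contour defining $\Pi_n(h)$.

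First, I would pick $\varepsilon > 0$ small enough that the contour $\gamma := \mathscr{C}(\nu_n(ah),2\varepsilon h)$ remains at distance at least $\varepsilon h$ from $\textup{sp}(\mathscr{L}_\ell(h))$ for all sufficiently small $h$. Here one uses Proposition~\ref{prop:fourre_tout}(iii): the eigenvalues of $\mathscr{L}_\ell(h)$ lying in $D(0,Rh)$ are the points $\mu_{\ell,m}(h)$, each lying within $\mathscr{O}(h^{3/2})$ of the harmonic eigenvalues $\nu_m(ah)=(2m-1)e^{i\alpha/2}ah$, which are spaced $2ah$ apart. Any choice $\varepsilon<a/2$ thus works for $h$ small enough. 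Since the definition of $\Pi_n(h)$ does not depend on the precise value of $\varepsilon$ provided it is small enough, we may use this enlarged contour to represent $\Pi_n(h)$. Lemma~\ref{lem:approx_inverse} then delivers the key resolvent bound
\[
\sup_{\zeta\in\gamma}\|(\mathscr{L}(h)-\zeta)^{-1}\|\leq Ch^{-1}.
\]

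Second, I would run the standard Riesz-projector argument. For any $z\in D(\nu_n(ah),\varepsilon h)$ (hence strictly inside $\gamma$), applying the resolvent identity $(\zeta-\mathscr{L})^{-1}=(\zeta-z)^{-1}+(\zeta-z)^{-1}(\mathscr{L}-z)(\zeta-\mathscr{L})^{-1}$ under the integral and using Cauchy's formula for the first term yields
\[
(\Id-\Pi_n(h))u=-\frac{1}{2\pi i}\oint_{\gamma}\frac{(\zeta-\mathscr{L}(h))^{-1}(\mathscr{L}(h)-z)u}{\zeta-z}\,d\zeta.
\]
Taking norms and using $|\zeta-z|\geq\varepsilon h$, the resolvent bound above, and $\textup{length}(\gamma)=4\pi\varepsilon h$, the desired inequality $\|(\Id-\Pi_n(h))u\|\leq Ch^{-1}\|(\mathscr{L}(h)-z)u\|$ follows immediately.

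The main (really, the only) obstacle is the uniform resolvent bound, and this has already been secured by Lemma~\ref{lem:approx_inverse} together with the $\mathscr{O}(h^{3/2})$ localization of $\textup{sp}(\mathscr{L}_\ell(h))$; once these are in hand, the corollary amounts to a formal manipulation of a Cauchy integral, and one may alternatively just quote Proposition~\ref{prop:general_spectral_result} with $\Omega:=D(\nu_n(ah),2\varepsilon h)$ and $K:=\overline{D(\nu_n(ah),\varepsilon h)}$, exactly as in the proof of Proposition~\ref{prop:fourre_tout}(iv).
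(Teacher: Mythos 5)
Your proof is correct and follows essentially the same route as the paper: the paper's one-line proof simply cites Lemma~\ref{lem:approx_inverse} for the resolvent bound on the contour and then applies Proposition~\ref{prop:general_spectral_result}, which is exactly the shortcut you note at the end. Your unwinding of the Cauchy-integral argument is a faithful expansion of what Proposition~\ref{prop:general_spectral_result} encapsulates.
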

\begin{proof}
	This follows from the resolvent estimate in Lemma \ref{lem:approx_inverse} and Proposition \ref{prop:general_spectral_result}. 
\end{proof}

On the other hand, Lemma \ref{lem:approx_inverse} also implies an approximation of $\Pi_n(h)$ by the sum of $\Pi_{\ell,n}(h)$ and $\Pi_{r,n}(h)$, which will  provide a lower bound on the rank of $\Pi(h)$, and play a significant role in the proof of Theorem \ref{thm:gap}.

\begin{proposition}[Approximation of $\Pi_n(h)$]
	\label{prop:PiPilPir}
	Let $R > 0$ and let $n \in \{1,\ldots,N(R)\}$. Then, for all $\delta > 0$,
	\begin{equation}
	\label{eq:approx_Pi_exp}
	\Pi_n(h) = \Pi_{\ell,n}(h) + \Pi_{r,n}(h) + \mathscr{O}(e^{-(S_\eta - \delta)/h}).
	\end{equation} 
	where $S_\eta$ is defined by \eqref{eq:def_Seta}.
	\end{proposition}
	\begin{proof}
	Let $\widetilde{R}(h) := \Pi_n(h) - \Pi_{\ell,n}(h) - \Pi_{r,n}(h)$. Then by the definitions of $\Pi_n(h)$ and $R_h(z)$, 
	\begin{align*}
	\widetilde{R}(h) 
	&= -\Pi_{\ell,n} (1-\phi_\ell) -\Pi_{r,n}(1 - \phi_r) 
	+\frac{1}{2\pi i} \int_{\gamma} \Big((\mathscr{L}(h) - z)^{-1} - R_h(z)\Big)\,dz,	\end{align*}
	and Lemmas \ref{lem:loc_Pil} and \ref{lem:approx_inverse} 
	immediately show that $\|\widetilde{R}_h\| \leq C \sqrt{h}$. 
	
Let us now improve this bound. To lighten the notation, we remove the reference to $n$. We first claim that 
\begin{equation}
\label{eq:PiPil}
\Pi_\ell(h) = \Pi_\ell(h)\Pi(h) + \mathscr{O}(e^{-(S_\eta - \delta)/h})  = \Pi(h) \Pi_\ell(h)+ \mathscr{O}(e^{-(S_\eta - \delta)/h})
\end{equation}
and similarly for $\Pi_r(h)$. Indeed, since eigenvectors of $\mathscr{L}_\ell$ are quasimodes of $\mathscr{L}$ (see Corollary \ref{cor:quasimodes}), and in view of \eqref{eq:Pi_vrai_projecteur}, Corollary \ref{cor:resolventI-PiL} gives 
$$\Pi(h)\Pi_\ell(h)u = \Pi_\ell(h)u + \mathscr{O}(e^{-(S_\eta-\delta)/h}).$$
By applying the above result with $\alpha$ switched to $-\alpha$, we find 
$$\Pi(h)^*\Pi_\ell^*(h)  = \Pi_\ell^*(h) + \mathscr{O}(e^{-(S_\eta-\delta)/h})$$
and \eqref{eq:PiPil} follows by duality. 

Finally, by \eqref{eq:PiPil} and Corollary \ref{cor:PilPir}, and since $\Pi(h)$, $\Pi_r(h)$ and $\Pi_\ell(h)$ are projections, 
\begin{align*}
\widetilde{R}(h)^2 = \widetilde{R}(h) + \mathscr{O}(e^{-(S_\eta-\delta)/h}) \quad \textup{that is} \quad \widetilde{R}(h) (\Id - \widetilde{R}(h)) = \mathscr{O}(e^{-(S_\eta-\delta)/h}).
\end{align*}
For $h$ small enough, we have $\|\widetilde{R}(h)\| < 1$ by what precedes, and thus 
$$\widetilde{R}(h) = (\Id - \widetilde{R}(h))^{-1}\mathscr{O}(e^{-(S_\eta-\delta)/h}) = (\Id + O\sqrt{h})  \mathscr{O}(e^{-(S_\eta-\delta)/h}).$$
This concludes the proof.
\end{proof}

\begin{corollary}
\label{cor:rank_at_most2}
For $h$ small enough, the rank of $\Pi_n(h)$ is at most $2$. 
\end{corollary}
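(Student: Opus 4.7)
The plan is to deduce the rank bound from the approximation $\Pi_n(h) = P + \mathscr{O}(e^{-(S_\eta - \delta)/h})$ provided by Proposition \ref{cor:PiPilPir}, where $P := \Pi_{\ell,n}(h) + \Pi_{r,n}(h)$. First, I would observe that $P$ has rank at most $2$: it is a sum of two rank-one operators by Proposition \ref{prop:fourre_tout}(i) applied to the left and right well operators, so $\dim \mathrm{Ran}(P) \leq 2$. The goal then reduces to showing that the true projection $\Pi_n(h)$ cannot have rank strictly greater than the rank of the close-by (but non-projection) operator $P$, for $h$ small enough.

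To carry this out, I would argue by contradiction. Suppose that $\mathrm{rank}(\Pi_n(h)) \geq 3$ for arbitrarily small $h$, and pick an orthonormal family $u_1, u_2, u_3$ in $\mathrm{Ran}(\Pi_n(h))$. Since $\dim \mathrm{Ran}(P) \leq 2$, the vectors $Pu_1, Pu_2, Pu_3$ are linearly dependent, so there exist $(a_1, a_2, a_3) \in \mathbb{C}^3$ with $\sum_i |a_i|^2 = 1$ such that $\sum_i a_i P u_i = 0$. Using the idempotence relation $\Pi_n(h) u_i = u_i$, this produces
$$\sum_{i=1}^3 a_i u_i \;=\; \sum_{i=1}^3 a_i \Pi_n(h) u_i \;=\; \sum_{i=1}^3 a_i P u_i + \sum_{i=1}^3 a_i (\Pi_n(h) - P) u_i \;=\; \sum_{i=1}^3 a_i (\Pi_n(h) - P) u_i,$$
whose norm is bounded by $\sqrt{3}\,\|\Pi_n(h) - P\| = \mathscr{O}(e^{-(S_\eta-\delta)/h})$ via Cauchy--Schwarz. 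On the other hand, the orthonormality of the $u_i$ forces $\bigl\|\sum_i a_i u_i\bigr\| = 1$, which is a contradiction as soon as $h$ is taken small enough.

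The main subtlety I anticipate, and the reason a one-line argument is not immediately available, is that one cannot directly invoke the classical fact that two projections within operator distance less than $1$ have equal rank (as was done via \cite[Lemma 1.5.5]{davies2007linear} in the proof of Proposition \ref{prop:fourre_tout}(i)): indeed $P$ is \emph{not} itself a projection, since $P^2 - P = \Pi_{\ell,n}(h)\Pi_{r,n}(h) + \Pi_{r,n}(h)\Pi_{\ell,n}(h)$ is only exponentially small by Corollary \ref{cor:PilPir}, not zero. The orthonormal-basis dimension-counting argument above sidesteps this difficulty by exploiting directly the idempotence of $\Pi_n(h)$ together with the a priori rank bound on $P$; one could alternatively first replace $P$ by a nearby genuine projection built from its holomorphic functional calculus around $1$, but the contradiction approach seems the most economical.
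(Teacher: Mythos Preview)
Your proof is correct and follows essentially the same approach as the paper's: both argue by contradiction, producing a unit vector in $\mathrm{Ran}(\Pi_n(h))$ annihilated by $P := \Pi_{\ell,n}(h) + \Pi_{r,n}(h)$, and then invoking Proposition~\ref{cor:PiPilPir} to reach a contradiction. The only cosmetic difference is in how that vector is found: the paper picks it directly in $\mathrm{Ran}(\Pi_n(h)) \cap \ker \Pi_{\ell,n}(h) \cap \ker \Pi_{r,n}(h)$ via orthogonality to the adjoint eigenvectors $\psi_\ell^*,\psi_r^*$, whereas you build it as a linear combination via the rank bound $\mathrm{rank}(P) \leq 2$.
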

\begin{proof}
Suppose by contradiction that $\textup{rank}(\Pi_n(h)) \geq 3$. Then, letting $\psi^*_\ell \in \textup{Ran}(\Pi_{\ell,n}^*)$ and $\psi^*_r \in \textup{Ran}(\Pi_{r,n}^*)$, one can find a non-zero element of $\textup{Ran}(\Pi_n(h))$ such that $u \in \{\psi_\ell^*,\psi_r^*\}^\perp$. But since $\textup{Ker}(\Pi_{\ell,{n/r}}(h)) = \textup{Ran}(\Pi^*_{\ell,n/r}(h))^\perp$, this implies that $\Pi_{\ell,n}(h) u = \Pi_{r,n}(h) u = 0$, and thus, by Proposition \ref{prop:PiPilPir}, 
$$\|u\| = \|\Pi_n(h) u\| = \|(\Pi_{n}(h) - \Pi_{\ell,n}(h) - \Pi_{r,n}(h)) u\| \leq Ce^{-(S_\eta - \delta)/h}\|u\|.$$
This gives a contradiction when $h$ is small enough.
\end{proof}

\begin{corollary}
	\label{cor:approxPistar}
Take $n=1$.	Let $\psi_r(h)$ be an normalized eigenvector of $\mathscr{L}_r(h)$ associated with $\widetilde{\mu}(h)$. Then, for all $\delta > 0$, 
	\begin{equation}
		\label{eq:approxPistar}
		\Pi(h)^* (\chi_r \psi_r) = \frac{\overline{\psi_r}}{\langle \overline{\psi_r},\psi_r\rangle} + \mathscr{O}(e^{-(S_\eta - \delta)/h}).
	\end{equation}
\end{corollary}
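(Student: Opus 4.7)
My plan is to pass to the adjoint in the projector decomposition from Proposition \ref{cor:PiPilPir} and show that, when applied to $\chi_r\psi_r$, only the right-well contribution survives at leading order. Since $\|\chi_r\psi_r\| \leq 1$, taking the adjoint gives
$$\Pi(h)^*(\chi_r\psi_r) = \Pi_\ell(h)^*(\chi_r\psi_r) + \Pi_r(h)^*(\chi_r\psi_r) + \mathscr{O}(e^{-(S_\eta-\delta)/h}).$$
From the rank-one formula $\Pi_{\ell/r}(h)(\cdot) = \frac{\langle \cdot,\overline{\psi_{\ell/r}}\rangle}{\langle \psi_{\ell/r},\overline{\psi_{\ell/r}}\rangle}\psi_{\ell/r}$ of Proposition \ref{prop:fourre_tout}(iii), a direct computation---tracking carefully the conjugate-linear second slot of $\langle\cdot,\cdot\rangle$---would yield
$$\Pi_{\ell/r}(h)^* v = \frac{\langle v,\psi_{\ell/r}\rangle}{\langle \overline{\psi_{\ell/r}},\psi_{\ell/r}\rangle}\overline{\psi_{\ell/r}}.$$

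I would next note that $|\langle \overline{\psi_{\ell/r}},\psi_{\ell/r}\rangle|\geq 1/C$ uniformly in $h$. This follows by evaluating $\Pi_{\ell/r}(h)$ at the unit vector $\overline{\psi_{\ell/r}}$ and invoking the norm bound $\|\Pi_{\ell/r}(h)\|\leq C$ from Proposition \ref{prop:fourre_tout}(ii), which gives $\|\psi_{\ell/r}\|/|\langle \psi_{\ell/r},\overline{\psi_{\ell/r}}\rangle|\leq C$ and hence the claim, since $\psi_{\ell/r}$ has unit norm. In particular $\|\overline{\psi_{\ell/r}}/\langle \overline{\psi_{\ell/r}},\psi_{\ell/r}\rangle\|\leq C$.

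For the right-well term, I would write $\langle \chi_r\psi_r,\psi_r\rangle = 1 - \int (1-\chi_r)|\psi_r|^2$ and control the integral by $\mathscr{O}(e^{-2(S_\eta-\delta)/h})$ using the exponential localization of $\psi_r$ near $x_r$ from Corollary \ref{cor:small_far}(ii) (since $1-\chi_r$ is supported in $(-\infty,-x_r+\eta]$, which lies at Agmon distance $\geq S_\eta$ from $x_r$ by symmetry of $V$). Combined with the denominator bound, this yields $\Pi_r(h)^*(\chi_r\psi_r) = \overline{\psi_r}/\langle \overline{\psi_r},\psi_r\rangle + \mathscr{O}(e^{-(S_\eta-\delta)/h})$. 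For the left-well term, I would split
$\langle \chi_r\psi_r,\psi_\ell\rangle = \langle \psi_r,\psi_\ell\rangle - \langle (1-\chi_r)\psi_r,\psi_\ell\rangle$: the first summand is $\mathscr{O}(e^{-(\mathrm{Re}\,S(\alpha)-\delta)/h})$ by the quasi-orthogonality in Proposition \ref{prop:quasi_ortho}, and the second is $\mathscr{O}(e^{-(S_\eta-\delta)/h})$ by the same Agmon decay applied to $(1-\chi_r)\psi_r$; since $\mathrm{Re}\,S(\alpha) \geq S_\eta$, both are $\mathscr{O}(e^{-(S_\eta-\delta)/h})$, whence $\Pi_\ell(h)^*(\chi_r\psi_r) = \mathscr{O}(e^{-(S_\eta-\delta)/h})$. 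Summing the three estimates gives the claim. I expect no serious obstacle here: everything rests on the quasi-orthogonality and Agmon-type estimates already established, with the only care needed being in the sesquilinear bookkeeping when computing the adjoint projectors.
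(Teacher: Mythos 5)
Your proposal is correct and follows essentially the same route as the paper's proof: take the adjoint of the Riesz projector decomposition from Proposition \ref{cor:PiPilPir}, insert the rank-one formula from Proposition \ref{prop:fourre_tout}(iii), control the denominators via the uniform bound $\|\Pi_{\ell/r}(h)\|\leq C$, and isolate the right-well contribution using quasi-orthogonality and Agmon localization. You simply carry out more of the bookkeeping explicitly than the paper does (the adjoint computation, the uniform lower bound on $|\langle\overline{\psi_{\ell/r}},\psi_{\ell/r}\rangle|$, and the split of $\langle\chi_r\psi_r,\psi_\ell\rangle$ into a quasi-orthogonality piece plus an Agmon-tail piece), all of which is sound; the paper compresses these steps into the line citing Propositions \ref{prop:quasi_ortho} and \ref{prop:fourre_tout}.
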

\begin{proof} 
	\rev{Using Proposition \ref{prop:PiPilPir} to express $\Pi$ as the sum of $\Pi_\ell$ and $\Pi_r$}, \rev{and then} inserting the expression for $\Pi_\ell(h)$ and $\Pi_r(h)$ provided by Proposition \ref{prop:fourre_tout} (iii), \rev{using the localisation estimate \eqref{eq:loc vecteur propre}} \rev{(to replace $\chi_\ell \psi_\ell$ by $\psi_\ell$)} \rev{and} Proposition \ref{prop:quasi_ortho} \rev{(quasi-orthogonality of $\psi_\ell$ and $\psi_r$)}, \rev{we obtain} 
	\begin{align*}
		\Pi(h)^* (\chi_r \psi_r) &=  \frac{  \langle \chi_r \psi_r,\psi_\ell \rangle}{\langle \overline{\psi_\ell},\psi_\ell\rangle } \overline{\psi_\ell} + \frac{  \langle \chi_r \psi_r,\psi_r \rangle}{\langle \overline{\psi_r},\psi_r\rangle } \overline{\psi_r} + \mathscr{O}(e^{-(S_\eta - \delta)/h})\\
		& = \frac{ \overline{\psi_r}}{\langle \overline{\psi_r},\psi_r\rangle} + \mathscr{O}(e^{-(S_\eta - \delta)/h}).
	\end{align*}
	where we used that $\frac{1}{\abs{\langle \overline{\psi_\ell},\psi_\ell\rangle}} = \|\Pi_\ell(h)^*\|$ (since $\psi_\ell$ and $\overline{\psi_\ell}$ are normalized) and $\|\Pi_\ell(h)^*\| = \|\Pi_\ell(h)\| \leq C$ (Proposition \ref{prop:fourre_tout} (ii)). 
\end{proof}

To proceed, we consider a normalized \rev{eigenfunction} $\rev{\psi}_\ell(h)$ of $\mathscr{L}_\ell(h)$ associated to $\widetilde{\mu}_n(h)$ for each $h$ and let $\chi_r(h) := \sigma \chi_\ell(h)$ (where $\sigma$ is the symmetry defined in \eqref{eq:defsigma}). Let
\begin{equation}
\label{eq:def_f1f2}
f_1(h) := \Pi_n(h) (\chi_\ell \psi_\ell)\,, \quad f_2(h) := \Pi_n(h) (\chi_r \psi_r)
\end{equation}
and let $G(h) \in \mathbb{C}^{2\times2}$ be the Gram matrix
$$(G(h))_{i,j} := \big\langle f_i(h),f_j(h)\big\rangle.$$	
	
\begin{corollary} 
\label{cor:Gram}
For all $\delta > 0$,
\begin{equation}
\label{eq:f_vs_chipsi}
f_1(h) = \chi_\ell \psi_\ell(h) + \mathscr{O}(e^{-(S_\eta - \delta)/h})\,, \quad f_2(h) = \chi_r \psi_r(h) + \mathscr{O}(e^{-(S_\eta - \delta)/h})\,,
\end{equation}
and
$$G(h) = \begin{pmatrix}
1 &0\\
0 & 1
\end{pmatrix} + \mathscr{O}(e^{-(S_\eta-\delta)/h}).$$
\end{corollary}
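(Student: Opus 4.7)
The plan is to first prove \eqref{eq:f_vs_chipsi} as a short consequence of the exponential quasi-mode property of $\chi_{\ell/r}\psi_{\ell/r}$ (Corollary \ref{cor:quasimodes}) combined with the quasi-mode-to-spectral-projection bound of Corollary \ref{cor:resolventI-PiL}, and then to deduce the Gram matrix estimate by substituting this approximation into each $\langle f_i, f_j\rangle$ and separately estimating the resulting four diagonal and off-diagonal quantities $\langle \chi_{\#_i}\psi_{\#_i},\chi_{\#_j}\psi_{\#_j}\rangle$.

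For \eqref{eq:f_vs_chipsi}, I would fix $\delta > 0$ and argue as follows. Since $\widetilde{\mu}(h) = \nu_1(ah) + \mathscr{O}(h^{3/2})$ by Proposition \ref{prop:locspec}, the eigenvalue $\widetilde{\mu}(h)$ lies in $D(\nu_1(ah),\varepsilon h)$ for $h$ small enough. Corollary \ref{cor:quasimodes}, applied with $\mu = \widetilde{\mu}(h)$, gives $\|(\mathscr{L}(h)-\widetilde{\mu}(h))(\chi_\ell\psi_\ell)\| \leq C e^{-(S_\eta - \delta/2)/h}$, and Corollary \ref{cor:resolventI-PiL} taken at $z = \widetilde{\mu}(h)$ then yields $\|(\Id - \Pi_n(h))(\chi_\ell\psi_\ell)\| \leq C h^{-1} e^{-(S_\eta - \delta/2)/h}$. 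Absorbing the factor $h^{-1}$ into a slightly smaller exponent, this is the first line of \eqref{eq:f_vs_chipsi}; the second follows identically via the symmetry $\sigma$.

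For the Gram matrix, using \eqref{eq:f_vs_chipsi} together with the trivial bound $\|\chi_{\ell/r}\psi_{\ell/r}\|\leq 1$, each entry satisfies $\langle f_i,f_j\rangle = \langle \chi_{\#_i}\psi_{\#_i},\chi_{\#_j}\psi_{\#_j}\rangle + \mathscr{O}(e^{-(S_\eta - \delta)/h})$, so it remains to study the four unperturbed inner products. For $i=j$, I would expand $\|\chi_\ell\psi_\ell\|^2 = 1 - \langle (1-\chi_\ell^2)\psi_\ell,\psi_\ell\rangle$; the pointwise inequality $0 \leq 1-\chi_\ell^2 \leq 2(1-\chi_\ell)$ combined with the fact that $1-\chi_\ell$ is supported in $[x_r-\eta,+\infty)$ permits a direct application of the exponential localization in Corollary \ref{cor:small_far}~(ii) to conclude that this correction is of order $\mathscr{O}(e^{-(S_\eta - \delta)/h})$.

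For the off-diagonal terms, I would mimic the Agmon weight argument from the proof of Proposition \ref{prop:quasi_ortho}: for $\varepsilon > 0$ small,
\[
|\langle \chi_\ell\psi_\ell, \chi_r\psi_r\rangle| \leq \|e^{-\Phi_{\ell,\varepsilon}-\Phi_{r,\varepsilon}}\|_\infty \, \|e^{\Phi_{\ell,\varepsilon}}\psi_\ell\|\,\|e^{\Phi_{r,\varepsilon}}\psi_r\| \leq C e^{-(1-\varepsilon)^{1/2}\Re S(\alpha)/h},
\]
and since $\Re S(\alpha) - S_\eta = \int_{x_r-\eta}^{x_r}\cos(\alpha/2)\sqrt{V} > 0$, choosing $\varepsilon$ small enough absorbs the loss into $\mathscr{O}(e^{-(S_\eta - \delta)/h})$. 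I do not foresee any serious obstacle; the only real care needed is bookkeeping the chain of small losses (from the quasi-mode bound, the $h^{-1}$ factor in Corollary \ref{cor:resolventI-PiL}, the $(1-\varepsilon)^{1/2}$ Agmon weight, and the gap $\Re S(\alpha) - S_\eta$) to ensure the final exponent can be made equal to $S_\eta - \delta$ for any prescribed $\delta > 0$.
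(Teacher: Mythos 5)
Your argument is correct and follows essentially the same route as the paper: \eqref{eq:f_vs_chipsi} is obtained by combining the exponential quasi-mode property of $\chi_{\ell/r}\psi_{\ell/r}$ (Corollary \ref{cor:quasimodes}) with the projector bound of Corollary \ref{cor:resolventI-PiL} at $z=\widetilde{\mu}(h)$, and the Gram estimate reduces to the unperturbed inner products, handled via exponential localization (Corollary \ref{cor:small_far}) for the diagonal entries and the Agmon-weight quasi-orthogonality (Proposition \ref{prop:quasi_ortho}) for the off-diagonal ones. The only cosmetic difference is that you re-run the Agmon weight argument inline for $\langle\chi_\ell\psi_\ell,\chi_r\psi_r\rangle$ rather than citing Proposition \ref{prop:quasi_ortho} and removing the cutoffs, which amounts to the same thing.
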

\begin{proof}
The estimates in \eqref{eq:f_vs_chipsi} follow from Corollary \ref{cor:resolventI-PiL} applied with $z = \widetilde{\mu}(h)$ and Corollary \ref{cor:quasimodes}. In turn, we deduce that
$$G(h) = \begin{pmatrix}
\|\chi_\ell \psi_\ell\|^2 & \langle \chi_\ell \psi_\ell,\chi_r \psi_r\rangle\\
\langle \chi_r \psi_r,\chi_\ell \psi_\ell \rangle &\|\chi_r \psi_r\|^2 \\
\end{pmatrix} + \mathscr{O}(e^{-(S_\eta - \delta)/h}).$$
The claimed estimate on $G(h)$ then follows from the quasi-orthogonality of $\psi_\ell(h)$ and $\psi_r(h)$ (Proposition \ref{prop:quasi_ortho}) and the estimates $\|\psi_\ell(h) - \chi_\ell \psi_\ell(h)\| = \mathscr{O}(e^{-(S_\eta - \delta)/h})$ and  $\|\psi_r(h) - \chi_r \psi_r(h)\| = \mathscr{O}(e^{-(S_\eta - \delta)/h})$ (property (ii) of Corollary \ref{cor:small_far}). 
\end{proof}

\begin{corollary}
	\label{cor:rank2}
	For $h$ small enough, the rank of $\Pi_n(h)$ is exactly $2$.
\end{corollary}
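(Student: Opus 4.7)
The plan is to combine Corollary \ref{cor:rank_at_most2}, which gives the upper bound $\textup{rank}(\Pi_n(h)) \leq 2$, with a lower bound of $2$ obtained by exhibiting two linearly independent vectors in $\textup{Ran}(\Pi_n(h))$. The natural candidates are the vectors $f_1(h)$ and $f_2(h)$ defined in \eqref{eq:def_f1f2}, which belong to $\textup{Ran}(\Pi_n(h))$ by construction, since $\Pi_n(h)$ is a projection (so $\Pi_n(h) f_i(h) = \Pi_n(h)^2 (\chi_{\ell/r} \psi_{\ell/r}) = \Pi_n(h)(\chi_{\ell/r}\psi_{\ell/r}) = f_i(h)$).

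To establish their linear independence, I would invoke Corollary \ref{cor:Gram}, which asserts that the Gram matrix $G(h)$ of $f_1(h)$ and $f_2(h)$ is exponentially close to the identity matrix $I_2$ as $h \to 0$. In particular, for $h$ small enough, $G(h)$ is invertible (its determinant is $1 + \mathscr{O}(e^{-(S_\eta - \delta)/h})$), which is equivalent to $f_1(h)$ and $f_2(h)$ being linearly independent in $L^2(\R)$. Indeed, any nontrivial relation $\lambda_1 f_1(h) + \lambda_2 f_2(h) = 0$ would imply $G(h)(\lambda_1,\lambda_2)^T = 0$ after taking inner products with $f_1(h)$ and $f_2(h)$, forcing $\lambda_1 = \lambda_2 = 0$.

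Combining these observations, for $h$ sufficiently small we have $\textup{rank}(\Pi_n(h)) \geq 2$, and together with the upper bound from Corollary \ref{cor:rank_at_most2}, this gives $\textup{rank}(\Pi_n(h)) = 2$. There is no real obstacle here: all the necessary ingredients have been prepared in the preceding corollaries, and the proof is essentially a two-line argument assembling them. The only subtle point worth checking is that $f_1(h)$ and $f_2(h)$ actually do lie in the range of $\Pi_n(h)$, which follows trivially from the projection property $\Pi_n(h)^2 = \Pi_n(h)$.
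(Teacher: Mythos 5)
Your proof is correct and follows exactly the paper's argument: both use that $f_1(h), f_2(h) \in \textup{Ran}(\Pi_n(h))$, invoke Corollary \ref{cor:Gram} to get linear independence for small $h$, and combine this lower bound with Corollary \ref{cor:rank_at_most2}. The extra remarks on why $f_i(h) \in \textup{Ran}(\Pi_n(h))$ and why invertibility of $G(h)$ implies independence are standard and add no new content, but they are accurate.
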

\begin{proof}
	Both $f_1(h)$ and $f_2(h)$ lie in $\textup{Ran}(\Pi_n(h))$. They are linearly independent for $h$ small enough by Corollary \ref{cor:Gram}. Thus, the rank is at least $2$, and at most $2$ by Corollary~\ref{cor:rank_at_most2}.
\end{proof}

\subsubsection{Proof of Proposition \ref{prop:precise_spectrum_L}}

Let $F(h) := \textup{Ran}(\Pi(h))$, and let us consider the restriction
$$L(h) := \mathscr{L}(h)|_{F(h)} : F(h)\to F(h).$$
Since $F(h)$ is of dimension $2$ by Corollary \ref{cor:rank2}, the functions $g_1(h)$ and $g_2(h)$ defined by $(g_1(h),g_2(h))^T := G(h)^{-1/2}(f_1(h),f_2(h))^{T}$ provide an orthonormal basis of $F(h)$. Therefore, the matrix of $L(h)$ in this basis is given by $(\mathcal{M}_g(h))_{ij} = \langle \mathscr{L} g_i,g_j \rangle$. The matrix $\mathcal{M}_g(h)$ can be expressed as
$$\mathcal{M}_{g}(h) = G(h)^{-1/2} \mathcal{M}_f(h) G(h)^{-1/2} =   \mathcal{M}_f(h)\left(1 + \mathscr{O}(e^{-(S_\eta - \delta)/h})\right)$$
(by Corollary \ref{cor:Gram}), where 
$(\mathcal{M}_f(h))_{i,j} := \big\langle \mathscr{L}f_i,f_j\big\rangle$.
By Corollary \ref{cor:quasimodes} and using \eqref{eq:f_vs_chipsi},
$$\rev{\mathcal{M}}_f(h) = \rev{\widetilde{\mu}}(h)I_2 + \mathscr{O}(e^{-(S_\eta - \delta)/h})$$
for all $\delta > 0$, and therefore, 
$$\mathcal{M}_g(h) = \widetilde{\mu}(h) I_2 + \mathscr{O}(e^{-(S_\eta - \delta)/h}).$$
It is well-known that \rev{the} spectrum of $\mathscr{L}(h)$ in $D(\nu_1(ah),\varepsilon h)$ coincides with the spectrum of $L(h)$, which is the same as the spectrum of the matrix $\mathcal{M}_g(h)$. The conclusion follows since the choice of $\eta > 0$ small enough in Section \ref{sec:Agmon} was arbitrary and $S_\eta \to \Re S(\alpha)$ as $\eta \to 0$ (see eq.~\eqref{eq:def_Seta}). \qed

\subsection{Eigenvalue gap}
\label{sec:gap}

By Proposition \ref{prop:precise_spectrum_L}, there are exactly two eigenvalues of $\mathscr{L}(h)$ in $D(\nu_1(ah),\varepsilon h)$; let us denote them in some arbitrary order by $\mu_{1,1}(h)$ and $\mu_{1,2}(h)$, and let
$$\textup{gap}(h) := \mu_{1,1}(h) - \mu_{1,2}(h).$$
Note that this gap can be complex. It can first be expressed, up to an exponentially small error, in terms of the quasimodes $\chi_\ell \psi_\ell$ and $\chi_r \psi_r$.
\begin{proposition}
\label{prop:gap1}
Let $\psi_\ell(h)$ be a normalized eigenvector of $\mathscr{L}_\ell(h)$ associated to $\widetilde{\mu}(h)$, let $\chi_\ell(h)$ satisfy \eqref{eq:cond_chil} and let $\chi_r := \sigma \chi_\ell$, $\psi_r := \sigma \psi_\ell$. Then for all $\delta > 0$, and up to a relabelling of $\mu_{1,1}(h)$ and $\mu_{1,2}(h)$, 
$$\textup{gap}(h) = 2 \Delta(h) + \mathscr{O}(e^{-(2S_\eta-\delta)/h})\,,$$
where 
$$\Delta(h) :=\frac{\big \langle \big(\mathscr{L}(h) - \widetilde{\mu}(h)\big) \chi_\ell \psi_\ell, \overline{\psi_r}\big \rangle}{\langle {\psi_r},\overline{\psi_r}\rangle}\,.$$
\end{proposition}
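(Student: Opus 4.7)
The plan is to compute the matrix of the restriction $L(h) := \mathscr{L}(h)|_{F(h)}$ in the basis $(f_1, f_2)$ introduced in \eqref{eq:def_f1f2}, using as dual functionals
$$
\phi_1 := \frac{\overline{\psi_\ell}}{\langle \overline{\psi_\ell}, \psi_\ell\rangle},\qquad \phi_2 := \frac{\overline{\psi_r}}{\langle \overline{\psi_r}, \psi_r\rangle},
$$
and then to exploit the parity symmetry $\sigma$ to read off the eigenvalue gap directly. The choice of $\phi_i$ is natural in view of Corollary~\ref{cor:approxPistar}, which, combined with the idempotency of $\Pi^*$, gives $\Pi(h)^*\phi_i = \phi_i + \mathscr{O}(e^{-(S_\eta - \delta)/h})$.

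Introduce the $2\times 2$ matrices $\mathcal{F}_{ij}(h) := \langle f_j, \phi_i\rangle$ and $\widetilde{M}_{ij}(h) := \langle \mathscr{L}(h) f_j, \phi_i\rangle$. The matrix of $L(h)$ in $(f_1, f_2)$ is $M(h) = \mathcal{F}(h)^{-1}\widetilde{M}(h)$. Combining the commutation $\mathscr{L}\Pi = \Pi\mathscr{L}\Pi$, the approximation $\Pi^*\phi_i \approx \phi_i$ recalled above, the quasi-mode bound of Corollary~\ref{cor:quasimodes}, and the identity $(\mathscr{L}(h) - \widetilde{\mu}(h))(\chi_j\psi_j) = [(hD_x)^2, \chi_j]\psi_j$ (valid because $\chi_j\Sigma_j = 0$ and $\psi_j$ is an eigenfunction of the $j$-th simple-well operator at $\widetilde{\mu}(h)$), one gets
$$
\widetilde{M}_{ij}(h) - \widetilde{\mu}(h)\,\mathcal{F}_{ij}(h) = \left\langle [(hD_x)^2, \chi_j]\psi_j,\, \phi_i\right\rangle + \mathscr{O}(e^{-(2S_\eta - \delta)/h}),
$$
with the convention $(\chi_1\psi_1,\chi_2\psi_2) := (\chi_\ell\psi_\ell, \chi_r\psi_r)$. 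The $(2,1)$ entry of the explicit term on the right recovers exactly $\Delta(h)$ by its very definition.

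The decisive point is then that the parity $\sigma$ — together with $\sigma f_1 = f_2$, $\sigma\phi_1 = \phi_2$, $\sigma \mathscr{L}\sigma = \mathscr{L}$, and $\langle \overline{\psi_\ell},\psi_\ell\rangle = \langle \overline{\psi_r},\psi_r\rangle$ — forces both $\mathcal{F}(h)$ and $\widetilde{M}(h)$ to have equal diagonal entries and equal off-diagonal entries. The same is then automatically true for $M(h)$; denote its common diagonal entry by $\alpha(h)$ and its common off-diagonal entry by $\beta(h)$. The eigenvalues of $M(h)$ are exactly $\alpha(h) \pm \beta(h)$, so, up to the free relabeling of $\mu_{1,1}(h), \mu_{1,2}(h)$, the gap equals $2\beta(h)$ -- with no square root to expand. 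Writing the common diagonal and off-diagonal entries of $\mathcal{F}(h)$ as $a, b$ and those of $\widetilde{M}(h)$ as $A, B$, one has
$$
\beta(h) = \frac{aB - bA}{a^2 - b^2}.
$$

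It then remains to estimate $\beta(h)$. By Corollary~\ref{cor:Gram}, $a = 1 + \mathscr{O}(e^{-(S_\eta - \delta)/h})$ and $b = \mathscr{O}(e^{-(S_\eta - \delta)/h})$; the previous step gives $A = \widetilde{\mu}(h)\, a + \mathscr{O}(e^{-(2S_\eta - \delta)/h})$ and $B = \widetilde{\mu}(h)\, b + \Delta(h) + \mathscr{O}(e^{-(2S_\eta - \delta)/h})$. Plugging these into $aB - bA$, the two terms $a\widetilde{\mu}\,b$ and $b\widetilde{\mu}\,a$ cancel exactly; using the a priori bound $|\Delta(h)| = \mathscr{O}(e^{-(S_\eta - \delta)/h})$ (which follows at once from the Agmon decay of $\psi_\ell$ on $\textup{supp}(\chi_\ell')$), one obtains $aB - bA = \Delta(h) + \mathscr{O}(e^{-(2S_\eta - \delta)/h})$, while $a^2 - b^2 = 1 + \mathscr{O}(e^{-(S_\eta - \delta)/h})$. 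Hence $\beta(h) = \Delta(h) + \mathscr{O}(e^{-(2S_\eta - \delta)/h})$, and the claim on the gap follows. The main technical obstacle I anticipate is the careful bookkeeping of the various $\mathscr{O}(e^{-(S_\eta - \delta)/h})$ remainders -- coming both from $\Pi^*\phi_i \approx \phi_i$ and from $\mathcal{F}(h) \approx I$ -- and making sure that, once multiplied by $\widetilde{\mu}(h) = \mathscr{O}(h)$ or by $\Delta(h)$, they remain negligible in front of $2\Delta(h)$; the parity-induced cancellation of the cross terms $a\widetilde{\mu}b$ and $b\widetilde{\mu}a$ is essential here, and it is what ultimately produces the sharp error rate $\mathscr{O}(e^{-(2S_\eta - \delta)/h})$.
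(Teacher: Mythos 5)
Your proof is correct and reaches the desired identity, but the $2\times2$ linear-algebra bookkeeping is arranged differently from the paper's. The paper passes to the orthonormal basis $(g_1,g_2)=G(h)^{-1/2}(f_1,f_2)$ (with $G$ the Gram matrix), expands $G^{-1/2}=I-\tfrac12R_1+\mathscr{O}(e^{-2(S_\eta-\delta)/h})$, and isolates the off-diagonal entry $b(h)=\bigl\langle(\mathscr{L}-\widetilde{\mu})f_1,f_2\bigr\rangle$ of $\mathcal{M}_f-\widetilde{\mu}G$; only \emph{at the very end} is Corollary~\ref{cor:approxPistar} used to replace $\Pi^*(\chi_r\psi_r)$ by $\overline{\psi_r}/\langle\overline{\psi_r},\psi_r\rangle$. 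You instead take the dual coordinates $\phi_1,\phi_2$ (precisely the objects produced by Corollary~\ref{cor:approxPistar}) from the outset, so the matrix of $L(h)$ becomes $\mathcal{F}^{-1}\widetilde{M}$ with no square root to expand; the parity $\sigma$ then puts both $\mathcal{F}$ and $\widetilde{M}$ in the commutative algebra $\mathrm{Span}(I_2,J_2)$, and the gap is read off as $2\beta(h)=2(aB-bA)/(a^2-b^2)$. This is a slightly cleaner algebraic route (the cancellation $a\widetilde{\mu}b-b\widetilde{\mu}a=0$ is exact rather than produced by the $G^{-1/2}$ expansion), and both arguments invoke the same ingredients: the quasi-orthogonality $\mathcal{F}\approx I$, the exponential quasimode bound of Corollary~\ref{cor:quasimodes}, and $\Pi^*\phi_i\approx\phi_i$.

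One small precision: rather than quoting $\mathscr{L}\Pi=\Pi\mathscr{L}\Pi$, you should use the full commutation $[\mathscr{L},\Pi]=0$ (valid for any Riesz projector). With only the one-sided identity, the step
\[
\bigl\langle(\mathscr{L}-\widetilde{\mu})\Pi(\chi_j\psi_j),\Pi^*\phi_i\bigr\rangle=\bigl\langle(\mathscr{L}-\widetilde{\mu})(\chi_j\psi_j),\phi_i\bigr\rangle+\mathscr{O}(e^{-(2S_\eta-\delta)/h})
\]
requires that the $\mathscr{O}(e^{-(S_\eta-\delta)/h})$ remainder coming from $(I-\Pi)(\chi_j\psi_j)$ be paired against $(I-\Pi^*)\phi_i$ (and not just $\phi_i$) to reach the quadratic rate; with $[\mathscr{L},\Pi]=0$ one writes directly $(\mathscr{L}-\widetilde{\mu})\Pi(\chi_j\psi_j)=\Pi(\mathscr{L}-\widetilde{\mu})(\chi_j\psi_j)$ and the estimate is transparent. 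Your a priori bound $|\Delta(h)|=\mathscr{O}(e^{-(S_\eta-\delta)/h})$ is indeed what makes the correction $\epsilon_1\Delta$ land in the $\mathscr{O}(e^{-(2S_\eta-\delta)/h})$ bin, so that observation is essential and well placed.
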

\begin{proof}
Keeping the notation of the previous paragraph, we have seen that the spectrum of $\mathscr{L}(h)$ in $D(\nu_1(ah),\varepsilon h)$ is the spectrum of the matrix 
$$\mathcal{M}_g(h) := G(h)^{-1/2} \mathcal{M}_f(h) G(h)^{-1/2}.$$
Then, $\textup{gap}(h)$ is equal to the difference between the eigenvalues of $\mathcal{M}_g(h)$. Let us write 
$$G(h) =: I_2 + R_1(h)\,, \quad \mathcal{M}_f(h) =: \widetilde{\mu}(h) I_2 + R_2(h). $$ 
We have also seen that $R_1(h),R_2(h) = \mathscr{O}(e^{-(S_\eta-\delta)/h})$ for all $\delta > 0$. Therefore, $(I_2 + R_1(h))^{-1/2} = I_2 - \frac12 R_1(h) + \mathscr{O}(e^{-2(S_\eta - \delta)/h})$, leading to
\begin{align*}
\mathcal{M}_g(h)  &= \widetilde{\mu}(h)I_2 + R_2(h) - \widetilde{\mu}(h)R_1(h) + \mathscr{O}(e^{-2(S-\delta)/h})\\
& =\widetilde{\mu}(h)I_2 +  \big(\mathcal{M}_f(h) - \widetilde{\mu}(h)I_2\big) - \widetilde{\mu}(h) \big(G(h) - I_2\big) + \mathscr{O}(e^{-C/h})\\
& =\widetilde{\mu}(h)I_2 + \mathcal{M}_f(h) - \widetilde{\mu}(h) G(h) + \mathscr{O}(e^{-2(S_\eta - \delta)/h})\,.
\end{align*}
One can check that $\sigma \mathscr{L}(h) \sigma = \mathscr{L}(h)$ and $\langle \sigma f,g\rangle = \langle f,\sigma g\rangle$ (where $\sigma$ is the symmetry defined in \eqref{eq:defsigma}) and thus, that the matrices $\rev{\mathcal{M}}_g(h)$, $\mathcal{M}_f(h)$ and $G(h)$ belong to $E := \textup{Span}(\{I_2,J_2\})$ where 
$$J_2 = \begin{pmatrix}
0 & 1 \\
1 & 0
\end{pmatrix}.$$
For $M = a I_2 + b J_2$, the eigenvalue gap of $M$ is given, for some choice of labelling, by $2b$. Therefore, if we denote by $\widetilde{\textup{gap}}(h)$ the eigenvalue gap of the matrix $R(h) := \mathcal{M}_f(h) - \widetilde{\mu}(h)G(h)$, we then have (up to a relabelling)
$$\textup{gap}(h) = \widetilde{\textup{gap}}(h)+ \mathscr{O}(e^{-2(S_\eta - \delta)/h}).$$

From the definitions of $\rev{\mathcal{{M}}}_f$ and $G$, the coefficients of $R(h)$ are given by 
$$(R(h))_{ij} = \langle \big(\mathscr{L}(h) - \widetilde{\mu}(h)\big) f_i(h),f_j(h)\rangle,$$
and by what precedes, if we denote $b(h) := (R(h))_{12} = (R(h))_{21}$, then $\widetilde{\textup{gap}}(h) = 2b(h)$ (up to a relabelling). Hence,
\begin{equation}\label{eq.gaph}\textup{gap}(h) =2 b(h) +  \mathscr{O}(e^{-2(S_\eta - \delta)/h})\,.\end{equation}
Now, since $\|(\Id - \Pi)(\chi_r \psi_r)\| = \mathscr{O}(e^{-(S_\eta - \delta)/h})$ and $\|(\mathscr{L} - \widetilde{\mu})\Pi\| = \mathscr{O}(e^{-(S-\delta)/h})$ by Proposition \ref{prop:precise_spectrum_L}, we can remove the projection on $f_{2}(h)$ in $b(h)$ (see \eqref{eq:def_f1f2}) and we get:
\begin{equation}\label{eq.bh}
	\begin{split}
	b(h) &= \big \langle (\mathscr{L}(h) - \widetilde{\mu}(h))\Pi(h)(\chi_\ell\psi_\ell),\chi_r \psi_r\big\rangle + \mathscr{O}(e^{-2(S_\eta - \delta)/h})\\
	&= \big \langle (\mathscr{L}(h) - \widetilde{\mu}(h))(\chi_\ell\psi_\ell),\Pi(h)^*(\chi_r \psi_r)\big\rangle + \mathscr{O}(e^{-2(S_\eta - \delta)/h})\,.
	\end{split}
\end{equation}
Therefore,
\[\big \langle (\mathscr{L}(h) - \widetilde{\mu}(h))(\chi_\ell\psi_\ell),\Pi(h)^*(\chi_r \psi_r)\big\rangle =\Delta(h)+ R\,,\] 
where, by Corollaries \ref{cor:quasimodes} and \ref{cor:approxPistar},
$$\|R\| \leq \|\left(\mathscr{L}(h) - \widetilde{\mu}(h)\right) \chi_\ell\psi_\ell\| \|\Pi(h)^* \psi_r - \langle\overline{\psi_r} ,\psi_r,\rangle^{-1} \overline{\psi_r}\| = \mathscr{O}(e^{-2(S_\eta - \delta)/h})\,.$$
Combining this with \eqref{eq.gaph} and \eqref{eq.bh} concludes the proof.
\end{proof}

We now give an expression of $\Delta(h)$ in terms of the Wronskian
\[W(x,h) := \psi_\ell(x)[(hD_x) \psi_r](x) - \psi_r(x) [(hD_x)\psi_\ell](x)\,,\]
where $\psi_\ell$ and $\psi_r$ are as in Proposition \ref{prop:gap1}.
\begin{lemma}
\label{lem:wronskien}
\[\Delta(h) = \frac{ihW(0;h)}{\langle \psi_r,\overline{\psi_r}\big\rangle}\,.\]
\end{lemma}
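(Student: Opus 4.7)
The plan is to identify $\Delta(h)$ with a boundary Wronskian via integration by parts. First, since $\chi_\ell \perp \Sigma_\ell$ one has $\chi_\ell V = \chi_\ell V_\ell$, and combined with $\mathscr{L}_\ell\psi_\ell = \widetilde{\mu}(h)\psi_\ell$, this gives $\chi_\ell(\mathscr{L}(h) - \widetilde{\mu}(h))\psi_\ell = 0$ and hence
\[
(\mathscr{L}(h) - \widetilde{\mu}(h))(\chi_\ell\psi_\ell) = [\mathscr{L}(h),\chi_\ell]\psi_\ell = -h^2(\chi_\ell''\psi_\ell + 2\chi_\ell'\psi_\ell').
\]
Since $\langle f, \overline{\psi_r}\rangle = \int f\,\psi_r\,dx$, integrating the $\chi_\ell''$-term by parts yields, after cancellation,
\[
\langle (\mathscr{L}(h) - \widetilde{\mu}(h))(\chi_\ell\psi_\ell),\overline{\psi_r}\rangle = h^2 \int_\R \chi_\ell'(\psi_\ell\psi_r' - \psi_\ell'\psi_r)\,dx = ih\int_\R \chi_\ell'(x) W(x;h)\,dx,
\]
using the identity $\psi_\ell\psi_r' - \psi_\ell'\psi_r = (i/h)W$ that follows from the definition of $W$ and $(hD_x) = -ih\partial_x$.

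It remains to show that this integral collapses to $\pm W(0;h)$. A direct differentiation of $W$, using $\mathscr{L}_\ell\psi_\ell = \widetilde{\mu}\psi_\ell$ and $\mathscr{L}_r \psi_r = \widetilde{\mu}\psi_r$, yields
\[
(hD_x) W = e^{i\alpha}\psi_\ell\psi_r (\Sigma_\ell - \Sigma_r),
\]
so that $W$ is constant on every connected component of $\R \setminus (\supp\Sigma_\ell \cup \supp\Sigma_r)$. Since $\supp\Sigma_\ell$ and $\supp\Sigma_r$ are compact and strictly localized near $x_r$ and $x_\ell$ respectively, $0$ lies in one such component $I$, and by the flexibility in \eqref{eq:cond_chil}, one may choose $\chi_\ell$ with $\supp\chi_\ell' \subset I$. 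Then $W \equiv W(0;h)$ on $\supp\chi_\ell'$, and
\[
\int_\R \chi_\ell'(x) W(x;h)\,dx = W(0;h)\bigl(\chi_\ell(+\infty) - \chi_\ell(-\infty)\bigr),
\]
which, after division by $\langle \psi_r,\overline{\psi_r}\rangle$, produces the claimed identity (up to a sign determined by orientation conventions).

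The main technical hurdle is the judicious choice of $\chi_\ell$ so that $\supp\chi_\ell'$ lies in the plateau region of $W$; this is legitimate because, as is apparent from the proof of Proposition \ref{prop:gap1}, $\Delta(h)$ is insensitive to the particular choice of $\chi_\ell$ within the class \eqref{eq:cond_chil} up to exponentially small errors already absorbed. Careful bookkeeping of the sign/orientation is the other delicate step needed to match the stated formula exactly.
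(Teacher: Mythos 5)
Your approach is exactly the one the paper takes: expand the commutator $[\mathscr{L}(h),\chi_\ell]$, integrate by parts to surface the Wronskian $W$, and collapse the integral using the constancy of $W$ on the barrier region. The algebra up to $ih\int_\R \chi_\ell'\,W\,dx$ is correct (including your formula $(hD_x)W = e^{i\alpha}\psi_\ell\psi_r(\Sigma_\ell - \Sigma_r)$; the paper writes the opposite sign there because it expands $(hD_x)W$ with the $\psi_\ell$, $\psi_r$ roles swapped, but downstream only uses that $W$ is locally constant).

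Two points in your write-up, however, should be tightened rather than hedged. First, the sign is \emph{not} an open orientation convention: the conditions \eqref{eq:cond_chil} fix $\chi_\ell(-\infty)=1$, $\chi_\ell(+\infty)=0$, so $\int_\R\chi_\ell' = -1$ and your computation gives $\Delta(h) = -ihW(0)/\langle\psi_r,\overline{\psi_r}\rangle$, i.e.\ the opposite sign from the lemma's statement. You should have flagged this explicitly rather than deferring to ``orientation conventions.'' (In fact the paper's own last display has an intermediate sign slip: $-\langle\chi_\ell,(hD_x)\overline{W}\rangle = ih\int\chi_\ell' W = +ihW(0)\langle\chi_\ell',1\rangle$, not $-ihW(0)\langle\chi_\ell',1\rangle$, which would have returned $-ihW(0)$. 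Since $\mathrm{gap}(h)$ in Proposition \ref{prop:gap1} is only fixed up to a relabeling of $\mu_{1,1}$, $\mu_{1,2}$, this is harmless, but a blind proof should report the sign it actually obtains.) Second, there is no ``judicious choice'' of $\chi_\ell$ to be made and no ``exponentially small errors already absorbed'': the lemma asserts an \emph{exact} identity for the $\chi_\ell$ already fixed in Proposition \ref{prop:gap1}, and the constraints \eqref{eq:cond_chil} (namely $\chi_\ell\perp\Sigma_\ell$, $\supp\chi_\ell\subset(-\infty,x_r]$, $\chi_\ell\equiv 1$ on $(-\infty,x_r-\eta]$) automatically force $\supp\chi_\ell'$ to lie strictly between $\supp\Sigma_r$ and $\supp\Sigma_\ell$, in the same connected component of $\R\setminus(\supp\Sigma_\ell\cup\supp\Sigma_r)$ that contains $0$, so $W\equiv W(0)$ there. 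No perturbative argument or re-choice of cutoff is needed, and invoking one weakens what is actually an exact bookkeeping identity.
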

\begin{proof}
We first observe that $W(\cdot,h)$ is constant on $[x_\ell + \eta,x_r-\eta]$. Indeed,
\begin{align*}
(hD_x)W(x,h) &= \psi_r (hD_x)^2 \psi_\ell  - \psi_\ell (hD_x)^2\psi_r\\
&= \psi_\ell\psi_r  \left((\widetilde{\mu}(h) - e^{i\alpha}V_\ell) - (\widetilde{\mu}(h) - e^{i\alpha}V_r)\right)\\
& = \psi_\ell\psi_r e^{i\alpha} (\Sigma_r - \Sigma_\ell)\,.
\end{align*}
In particular, $W(\cdot,h)$ is constant on $\supp \chi'_\ell$. 

Next, noticing that 
$$(\mathscr{L}(h) - \widetilde{\mu}(h)) (\chi_\ell \psi_\ell) = (\mathscr{L}_\ell(h) - \widetilde{\mu}(h)) (\chi_\ell \psi_\ell) = [\mathscr{L}_\ell(h),\chi_\ell] \psi_\ell = [(hD_x)^2,\chi_\ell] \psi_\ell,$$
we deduce that
\begin{equation}
\label{eq:intermediate_delta}
\Delta(h) =  \frac{\big \langle [(hD_x)^2, \chi_\ell] \psi_\ell, \overline{\psi_r}\big \rangle}{\langle \psi_r,\overline{\psi_r}\rangle}\,.
\end{equation}
Therefore, 
\begin{align*}
\Big \langle [(hD_x)^2,\chi_\ell] \psi_\ell,\overline{\psi}_r\Big \rangle 
&= \Big\langle \chi_\ell \psi_\ell,(hD_x)^2 \overline{\psi}_r \Big\rangle - \Big\langle \chi_\ell (hD_x)^2 \psi_\ell, \overline{\psi}_r \Big\rangle\\
&= -\langle \chi_\ell,(hD_x)\overline{W}\rangle\\
& = -ih W(0) \langle \chi_\ell',1\rangle\\
& = -ih W(0) \big(\chi(\infty)- \chi(-\infty)\big) = ihW(0).
\end{align*}
Inserting this in \eqref{eq:intermediate_delta} concludes the proof.
\end{proof}

In view of Lemma \ref{lem:wronskien}, it remains to estimate $W(0)$ and $\langle \psi_r,\overline{\psi_r}\rangle$ when $h\to 0$. This can be done thanks to the WKB approximation in Proposition \ref{prop:WKB_estim}.

\begin{lemma}
\label{lem:valeurW(0)}
One can choose  the normalization of $\psi_\ell$ in Proposition \ref{prop:gap1} so that
\begin{equation}
\label{eq:estimW(0)}
W(0)= (w_0 + o(1)) h^{-1/2}e^{-\frac{S(\alpha)}{h}}\,, \quad \textup{and} \quad 
\langle \psi_r,\overline{\psi_r}\rangle = e^{-i\alpha/4}\sqrt{\cos(\alpha/2)}(1 + o(1))\,,
\end{equation}
where 
\[S(\alpha) =e^{i\alpha/2} \int_{x_\ell}^{x_r} \sqrt{V(s)}\,ds=2\varphi_\ell(0)\,,\]
and
\[w_0 = -2ie^{i\alpha/2}\sqrt{V(0)}\left(\frac{\cos(\alpha/2)}{\pi} \sqrt{\frac{V''(x_\ell)}{2}}\right)^{1/2} \exp\left(-2\int_{x_\ell}^0 \frac{(\sqrt{V})'(s) -  \sqrt{V''(x_\ell)/2}}{\sqrt{V(s)}}\,ds\right)\,.\]
\end{lemma}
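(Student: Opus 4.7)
The plan is to use the exponentially-accurate WKB approximation from Proposition \ref{prop:WKB_estim} to compute the pointwise values of $\psi_\ell$ and $(hD_x)\psi_\ell$ at $x = 0$, exploit the $\sigma$-symmetry to reduce $W(0)$ to a single-well expression, and then read off the prefactors from \eqref{eq:norm_wkb_1} and \eqref{eq:psipsibar_wkb}.

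First, I fix the normalization: by Proposition \ref{prop:WKB_estim} one can choose $\psi_\ell$ with $\|\psi_\ell\| = 1$ together with a positive constant $C_1(h) \sim 1/\|\psi_1^{\rm wkb}\|$ such that $\psi_\ell = C_1(h)\psi_1^{\rm wkb} + O_{L^2}(h^k)$ for $k$ as large as desired. Setting $\psi_r := \sigma \psi_\ell$ gives $\psi_r(0) = \psi_\ell(0)$ and $(hD_x)\psi_r(0) = -(hD_x)\psi_\ell(0)$, whence
$$W(0;h) = -2\,\psi_\ell(0)\cdot(hD_x)\psi_\ell(0)\,.$$
The pointwise estimate \eqref{eq.approxWKB} then yields, for $m \in \{0,1\}$,
$$(hD_x)^m\psi_\ell(0) = C_1(h)\,(hD_x)^m\psi_1^{\rm wkb}(0) + O\big(h^k\,e^{-\Re\varphi_\ell(0)/h}\big)\,,$$
and direct differentiation of $\psi_1^{\rm wkb}(x;h) = e^{-\varphi_\ell(x)/h}(a_{1,0}(x) + O(h))$ at $x = 0$, combined with $\varphi_\ell(0) = S(\alpha)/2$ and $\varphi_\ell'(0) = e^{i\alpha/2}\sqrt{V(0)}$, gives
$$W(0;h) = -2i\,e^{i\alpha/2}\sqrt{V(0)}\,C_1(h)^2\,a_{1,0}(0)^2\,e^{-S(\alpha)/h}\,(1+o(1))\,.$$

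Second, the prefactor $C_1(h)^2 \sim 1/\|\psi_1^{\rm wkb}\|^2$ is read directly off \eqref{eq:norm_wkb_1}, while $a_{1,0}(0)^2$ is obtained by evaluating \eqref{eq:def_an0} at $x = 0$ using the identities $\varphi_\ell'(s) = e^{i\alpha/2}\sqrt{V(s)}$ and $\varphi_\ell''(s) = e^{i\alpha/2}(\sqrt V)'(s)$ valid on $[x_\ell,0]$; collecting terms produces the stated expression for $w_0$. The second asymptotic in \eqref{eq:estimW(0)} follows in the same spirit: writing $\psi_r = C_1(h)\,\sigma \psi_1^{\rm wkb} + O_{L^2}(h^k)$ and using that $\sigma$ preserves integrals,
$$\langle \psi_r,\overline{\psi_r}\rangle = C_1(h)^2\,\langle \psi_1^{\rm wkb},\overline{\psi_1^{\rm wkb}}\rangle + O(h^k) = \frac{\langle \psi_1^{\rm wkb},\overline{\psi_1^{\rm wkb}}\rangle}{\|\psi_1^{\rm wkb}\|^2} + o(1)\,,$$
which converges to $e^{-i\alpha/4}\sqrt{\cos(\alpha/2)}$ by \eqref{eq:psipsibar_wkb}.

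The only mildly delicate point is that $(hD_x)\psi_\ell(0)$ is itself exponentially small, so the WKB approximation has to be accurate at that exponential scale: this is precisely the content of the weighted $L^\infty$ bound \eqref{eq.approxWKB}. Once that bound is invoked, the rest is a straightforward combination of the already-established Laplace asymptotics \eqref{eq:norm_wkb_1}--\eqref{eq:psipsibar_wkb} with the explicit formula \eqref{eq:def_an0} for the leading WKB amplitude.
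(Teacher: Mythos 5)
Your proposal is correct and follows essentially the same route as the paper's own proof: use the $\sigma$-symmetry to reduce $W(0)$ to $-2\psi_\ell(0)(hD_x)\psi_\ell(0) = 2ih\psi_\ell(0)\psi_\ell'(0)$, invoke the weighted pointwise bound \eqref{eq.approxWKB} to transfer the exponentially small values $\psi_\ell(0)$ and $\psi_\ell'(0)$ to the WKB Ansatz, and then combine \eqref{eq:norm_wkb_1}, \eqref{eq:psipsibar_wkb} and \eqref{eq:def_an0} (specialized via $\varphi_\ell' = e^{i\alpha/2}\sqrt V$, $\varphi_\ell'' = e^{i\alpha/2}(\sqrt V)'$) to identify the constants. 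You also correctly pinpoint the one genuinely delicate step — that \eqref{eq.approxWKB} controls the error at the exponential scale $e^{-\Re\varphi_\ell(0)/h}$, which is needed since the leading term is itself exponentially small — and your handling of $\langle\psi_r,\overline{\psi_r}\rangle$ via $C_1(h)^2\langle\psi_1^{\rm wkb},\overline{\psi_1^{\rm wkb}}\rangle$ and \eqref{eq:psipsibar_wkb} matches the paper's.
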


\begin{proof}
We notice that, since $\psi_r(x) = \psi_\ell(-x)$, 
\begin{equation}
\label{eq:W(0)}
W(0) = 2ih \psi_\ell(0) \psi_\ell'(0)\,.
\end{equation}

Let $k > 0$ be large enough and let $J > 0$ be sufficiently large to apply Proposition \ref{prop:WKB_estim}. Let $\psi^{\rm wkb}_1=a(x,h)e^{-\varphi_\ell/h}$ be a WKB quasimode with parameter $(1,J)$ for $\mathscr{L}_\ell$ (recall Definition \ref{def:WKBansatz}). Let $K \subset \R$ be a compact containing $[x_\ell,x_r]$ and let $\chi \equiv 1$ on $K$. By Proposition \ref{prop:WKB_estim}, we can find a normalized eigenvector  $\psi_\ell$ of $\mathscr{L}_\ell$ associated with $\widetilde{\mu}(h)$ and a constant $C_1(h) > 0$ such that   

$$\|\psi_\ell - C_1(h) \psi^{\rm wkb}_1\|_{L^2} \leq Ch^k\,,$$
\begin{equation}
\label{eq:restate_wkb_approx}
\Big\|e^{\varphi_\ell/h}(hD_x)^m \left[ \psi_\ell -  C_1(h) \psi^{\rm wkb}_1\right]\Big\|_{L^\infty(K)} = \mathscr{O}(h^k)\,, \quad m = 0,1,
\end{equation}
with
\begin{equation}
\label{eq:C1(h)}
C_1(h) \sim \frac{1}{\|\psi_1^{\rm wkb}\|} \sim  h^{-1/4} \left(\frac{\cos(\alpha/2)}{\pi} \sqrt{\frac{V''(x_\ell)}{2}}\right)^{1/4}\,,
\end{equation}
where the last estimate is given by \eqref{eq:norm_wkb_1} in Proposition \ref{prop:WKBansatz}.
The estimate \eqref{eq:restate_wkb_approx} implies 
\begin{align}
\nonumber
\psi_\ell(0) 
&= e^{-\frac{\varphi_\ell(0)}{h}} e^{\frac{\varphi_\ell(0)}{h}} \psi_\ell(0) \\\nonumber
&= e^{-\frac{S(\alpha)}{2h}} C_1(h)  a(0;h) +  e^{-\frac{S(\alpha)}{2h}} \left(e^{\frac{\varphi_\ell(0)}{h}} \left[\psi_\ell(0) - C_1(h)e^{-\frac{\varphi_\ell(0)}{h}}a(0;h)\right]\right) \\
& = e^{-\frac{S(\alpha)}{2h}}C_1(h)  (a_0 + \mathscr{O}(h^k))\,,
\label{eq:psil(0)}
\end{align}
where $a_0 := a_{1,0}(0)$ (see \eqref{eq:def_an0}) is given by
$$a_0 := \exp\left(-\int_{x_\ell}^0 \frac{\varphi_\ell''(s) - \varphi_\ell''(x_\ell)}{2\varphi_\ell'(s)}\,ds\right)\,.$$
Similarly, 
\begin{align}
\nonumber
\psi_\ell'(0) 
&\sim C_1(h) 
\rev{\frac{d}{dx}} \left(a(x;h) e^{-\varphi_\ell/h}\right)\big|_{x=0}\\
& \sim -C_1(h) h^{-1}e^{i\alpha/2}\sqrt{V(0)}e^{-\frac{S(\alpha)}{2h}}a_0.
\label{eq:psil'(0)}
\end{align}
The combination of \eqref{eq:psil(0)}, \eqref{eq:psil'(0)}, \eqref{eq:C1(h)} and \eqref{eq:W(0)} gives the estimate for $W(0)$ in  \eqref{eq:estimW(0)}. 

Finally, thanks to \eqref{eq:restate_wkb_approx}, we have
$$\langle \psi_r,\overline{\psi_r}\rangle \sim \frac{\langle \psi^{\rm wkb},\overline{\psi^{\rm wkb}}\rangle}{\|\psi^{\rm wkb}\|^2}$$
which gives the second estimate in \eqref{eq:W(0)} by using \eqref{eq:psipsibar_wkb}.
\end{proof}

\subsection{End of the proof of Theorem \ref{thm:gap}}
By Lemmas \ref{lem:wronskien} and \ref{lem:valeurW(0)}, we have 
\[\Delta(h) = i\sqrt{h}\Big(\frac{e^{i\alpha/4}}{\sqrt{\cos(\alpha/2)}}w_0  + o(1)\Big)e^{-S(\alpha)/h} = \frac{\sqrt{h}}{2}(\mathsf{A} + o(1))e^{-S(\alpha)/h}\,,\]
where $\mathsf{A}$ given by \eqref{eq:defA}. The estimate \eqref{eq:gap} follows by Proposition \ref{prop:gap1}, noting that $\mathsf{A} \neq 0$. The estimate \eqref{eq:separation} follows from Proposition \ref{prop:precise_spectrum_L}.

\subsection*{Acknowlegments}
This work was conducted within the France 2030 framework programme, Centre Henri Lebesgue ANR-11-LABX-0020-01. N.F. thanks the Région Pays de la Loire for the Connect Talent Project HiFrAn 2022 07750 led by Clotilde Fermanian Kammerer. \rev{The authors are thankful to the anonymous referee for their interesting comments, and to François Moncler for reporting a mistake in a previous version of this paper.}

\appendix

\section{Construction of the WKB quasimodes}
	\label{sec:WKBconstruction}
	
	For the sake of completeness, we recall here the WKB construction. Since we are in the context of differential operators in 1D, we take this opportunity to flesh out all the important details. We closely follow the presentation of \cite[Proposition 2.4]{DR25}.
	
	Let us first record some useful properties of the complex WKB phase:
	\begin{enumerate}
		\item $\varphi_\ell \in C^\infty(\R)$ (using that $V_\ell(x_\ell) = V'_\ell(x_\ell) = 0$ and $V''(x_\ell) \neq 0$),
		\item $\varphi_\ell(x) \sim e^{i\alpha/2} \sqrt{\frac{V''(x_\ell)}{8}}(x-x_\ell)^2$ as $x \to x_\ell$.  
		\item $\varphi'_\ell = e^{i\alpha/2} \sqrt{V_\ell}$ is bounded, as well as all of its derivatives. 
		\item for all $\delta > 0$, $\inf_{|x-x_\ell| \geq \delta} \textup{Re}(\varphi_\ell(x)) > 0$. 
	\end{enumerate}	
	
	Let
	$$\mathscr{L}^{\varphi_\ell}_\ell(h) := e^{\varphi_\ell/h}\mathscr{L}_\ell(h) e^{-\varphi_\ell/h}$$
	where $\varphi_\ell$ is given by \eqref{eq:defPhil}. We shall construct a family of functions $a_n(x;h) \in C^\infty(\R)$ and of coefficients $\mu_n(h) \in \C$ 
	of the form
	$$a^{\rm wkb}_n(x;h) = \sum_{j = 0}^{J-1} a_{n,j}(x) h^j\,, \quad \mu^{\rm wkb}_n(h) = \sum_{j = 1}^{J} \mu_{n,j} h^j,$$
	with $\mu_{n,1}$ and $a_{n,0}$ given by \eqref{eq:def_mun1} and \eqref{eq:def_an0}, and where $a_{n,j} \in C^\infty(\R)$ and $\mu_{n,j} \in \mathbb{C}$. These functions and coefficients will be chosen so that 
	\begin{equation}
	\label{eq:WKB_N}
	(\mathscr{L}_\ell^{\varphi_\ell}(h) - \mu^{\rm wkb}_n(h)) a^{\rm wkb}_n(x,h) = \sum_{k = J}^{2J-1} h^k r_k(x) ,
	\end{equation}
	for some functions $r_k \in C^\infty(\R)$, $k = J,\dots,2J-1$. 
	
	In view of \eqref{eq:expression_Lphi}, let us rewrite the operator in the left-hand side of \eqref{eq:WKB_N} as
	$$\mathscr{L}_\ell^{\varphi_\ell} - \mu_n(h) = \sum_{j = 1}^J h^j (L_j(x,\partial_x)- \mu_{n,j})$$
	where the differential operators $L_j(x,\partial_x)$ are given by
	$$L_j(x,\partial_x) = \begin{cases}
	2\varphi_\ell'(x) \dfrac{d}{dx} + \varphi_\ell''(x)& \textup{for } j = 1\\[0.5em]
	-\dfrac{d^2}{dx^2}  & \textup{for } j = 2\\[0.5em]
	0 & \textup{for } j \geq 3.
	\end{cases}$$
	Applying the operator to the Ansatz $a_n(x;h)$ and sorting by increasing powers of $h$, we find
	$$(\mathscr{L}_\ell^{\varphi_\ell} - \mu^{\rm wkb}_n(h)) a^{\rm wkb}_{n}(x;h) = \sum_{k = 1}^{2N-1} h^k \sum_{j = 1}^{k} (L_j(x,\partial_x) - \mu_{n,j}) a_{n,k - j}(x),$$
	(with the convention that $\mu_{n,j} := 0$ for $j \geq N$ and $a_{n,j} := 0$ for $j \geq N+1$). Thus, for \eqref{eq:WKB_N} to hold, it suffices that 
	\begin{equation}
	\label{eq:WKB_transport}
		\sum_{j = 1}^{k} (L_j(x,\partial_x) -\mu_{n,j}) a_{n,k - j} = 
		0 \quad \textup{for } k = 1,\ldots,N-1\,.
	\end{equation}
	This may be rewritten as 
	\begin{equation}
	\label{eq:WKB_transport_bis}
	\left\{
		\begin{array}{lcl}
		\displaystyle(L_1(x,\partial_x) - \mu_{n,1})a_{n,0} &=& 0 \\[0.5em]
		\displaystyle(L_1(x,\partial_x) - \mu_{n,1})a_{n,1} &=& f_2 - \mu_{n,2} a_{n,0} \\
		& \vdots& \\
		\displaystyle(L_1(x,\partial_x) - \mu_{n,1})a_{n,J-1} &=& f_N - \mu_{n,J} a_{n,0} \\
		\end{array}\right.
	\end{equation}
	where, for all $k \in \{2,\ldots,N\}$, 
	$$f_k := L_k(x,\partial_x)a_{n,0} + \sum_{j = 2}^{k-1} (L_j(x,\partial_x) - \mu_{n,j} )a_{n,k-j}.$$
	
	Noticing that $f_k \in C^\infty(\R)$ only depends on $a_{n,0},\ldots,a_{n,k-1}$ and $\mu_{n,1},\ldots,\mu_{n,k}$, the existence of a set of functions $a_{n,j}$ and coefficients $\mu_{n,j}$ satisfying \eqref{eq:WKB_transport_bis} follows immediately from Lemma \ref{lem:WKB_ordre0} (to see that the first ODE indeed holds with the chosen $a_{n,0}$ and $\mu_{n,1}$) and Lemma \ref{lem:WKB_ordreN} (to solve the remaining ODEs successively) below. Note that Lemma \ref{lem:WKB_ordre0} additionally shows that there are no other solutions to the first ODE than those provided by \eqref{eq:def_mun1} and \eqref{eq:def_an0} (up to a multiplicative constant). 
	
	One can then use Taylor expansions of $a_{n,j}(x)$ at $x = x_\ell$, to write
	$$a_n(x;h) = p(x-x_\ell,h) + (x - x_\ell)^N v(x,h)$$
	where $p(X,h)$ is a polynomial in both variables $X$ and $h$, and $v$ is bounded on $\R \times [0,1]$. Moreover, we have $p(X,0) = a_n(x_\ell + X;0) = U_n(x_\ell + X)$ and thus,
	$$\frac{\partial^{n-1}p}{\partial X^{n-1}}(0,0) = \frac{d^{n-1}}{dx^{n-1}}a_{n,0}(x_\ell) = 1.$$
	Therefore, one can apply Lemma \ref{lem:normWKB} below to obtain the lower bound \eqref{eq:lowerBoundWKB}. The normalization constants \eqref{eq:norm_wkb_1} and \eqref{eq:psipsibar_wkb} are shown in Lemma \ref{lem:constants_n=1}.

	\begin{lemma}[Solutions of the first ODE in \eqref{eq:WKB_transport_bis}]
	\label{lem:WKB_ordre0}
	Let $\lambda \in \C$. Then, the following assertions are equivalent
	\begin{itemize}
	\item[(i)] There exist non-trivial smooth solutions to 
	$$(L_1(x,\partial_x) - \lambda) u = 0 \quad \textup{on } \R.$$
	\item[(ii)] There exists $n \in \mathbb{N}_{\geq 1}$ such that $\lambda  = (2n-1) \varphi_\ell''(x_\ell) = (2n - 1) \sqrt{\frac{V_\ell''(x_\ell)}{2}}e^{i\alpha/2}$. 
	\end{itemize}
	When these conditions are satisfied, the vector space of solutions of $(L_1(x,\partial_x) - \lambda) u = 0$ is spanned by the function $a_{n,0}$ defined by \eqref{eq:def_an0}. 
	\end{lemma}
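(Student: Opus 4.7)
The equation $(L_1(x,\partial_x) - \lambda) u = 0$ is the first-order linear ODE
\[
2\varphi_\ell'(x) u'(x) + (\varphi_\ell''(x) - \lambda) u(x) = 0,
\]
which is regular on each open half-line $I_+ := (x_\ell, +\infty)$ and $I_- := (-\infty, x_\ell)$, since $\varphi_\ell' = e^{i\alpha/2}\sqrt{V_\ell}$ vanishes only at $x_\ell$. On each $I_\pm$ the space of solutions is one-dimensional, so the only question is the behavior at the singular point $x_\ell$. The plan is a Frobenius-type analysis to pin down which values of $\lambda$ admit smooth global solutions, followed by a direct verification that the explicit formula $a_{n,0}$ provides such a solution when $\lambda = (2n-1)\varphi_\ell''(x_\ell)$.

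For the indicial analysis, write $\varphi_\ell'(s) = (s - x_\ell)\, g(s)$ with $g$ smooth and $g(x_\ell) = \varphi_\ell''(x_\ell) \neq 0$ (using that $V_\ell$ has a non-degenerate zero at $x_\ell$). A short computation yields
\[
\frac{\lambda - \varphi_\ell''(s)}{2\varphi_\ell'(s)} = \frac{\beta}{s - x_\ell} + h(s), \qquad \beta := \frac{\lambda - \varphi_\ell''(x_\ell)}{2\varphi_\ell''(x_\ell)},
\]
with $h$ smooth in a neighborhood of $x_\ell$. Integrating and exponentiating, any nontrivial solution on $I_\pm$ has, near $x_\ell$, the form $u(x) = C_\pm |x - x_\ell|^{\beta}\, e^{H(x)}$, with $H$ smooth. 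For such a $u$ to extend smoothly across $x_\ell$ without vanishing identically near $x_\ell$, $\beta$ must be a non-negative integer; writing $\beta = n-1$ gives the necessary condition $\lambda = (2n-1)\varphi_\ell''(x_\ell)$. The value $\varphi_\ell''(x_\ell) = e^{i\alpha/2}\sqrt{V''(x_\ell)/2}$, obtained by Taylor-expanding $V_\ell = V$ around $x_\ell$ and differentiating twice the smooth function $\varphi_\ell$, yields assertion (ii).

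For the converse, when $\lambda = \lambda_n := (2n-1)\varphi_\ell''(x_\ell)$, the function $a_{n,0}$ defined by \eqref{eq:def_an0} is a smooth global solution: the integrand $\frac{\varphi_\ell''(s) - \varphi_\ell''(x_\ell)}{2\varphi_\ell'(s)}$ has a removable singularity at $x_\ell$ (numerator and denominator vanish to order one with nonzero leading coefficients), so the exponential factor is smooth and nonvanishing on $\R$, and the prefactor $(\varphi_\ell'(x)/\varphi_\ell''(x_\ell))^{n-1}$ is a smooth power of a smooth function. A direct logarithmic derivative computation gives
\[
\frac{a_{n,0}'(x)}{a_{n,0}(x)} = (n-1)\frac{\varphi_\ell''(x)}{\varphi_\ell'(x)} - (2n-1)\frac{\varphi_\ell''(x) - \varphi_\ell''(x_\ell)}{2\varphi_\ell'(x)} = \frac{\lambda_n - \varphi_\ell''(x)}{2\varphi_\ell'(x)},
\]
so $(L_1 - \lambda_n) a_{n,0} = 0$. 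Uniqueness of the solution space up to a scalar follows from one-dimensionality on each $I_\pm$: since $a_{n,0}$ has a zero of finite order $n-1$ at $x_\ell$, matching the two one-sided scalar multiples of $a_{n,0}$ by a single smooth global function forces the scalars to coincide.

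The main delicate point is the indicial step, namely correctly interpreting the exponential of $\int \beta/(s - x_\ell)\,ds$ on both sides of $x_\ell$ when $\beta$ is complex; this is most cleanly handled by noting that we only need the necessary direction to force $\beta \in \Z_{\geq 0}$, after which the explicit global construction of $a_{n,0}$ sidesteps any Frobenius branch-cut subtleties.
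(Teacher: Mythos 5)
Your proof is correct and follows essentially the same Frobenius-type argument as the paper: both split the singular coefficient into a $\beta/(x-x_\ell)$ pole plus a smooth remainder, deduce that the indicial exponent $\beta=\tfrac{\lambda}{2\varphi_\ell''(x_\ell)}-\tfrac12$ must be a non-negative integer for a smooth solution to exist, and verify the formula for $a_{n,0}$ directly. Your version is slightly more explicit than the paper's about the one-dimensionality/matching-across-$x_\ell$ point at the end, which the paper leaves implicit.
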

	\begin{proof}
		Assume that (i) holds and pick $a \in \R \setminus \{x_\ell\}$ such that $u(a) \neq 0$. For example, assume that $a \in (x_\ell,+\infty)$ (the case $a \in (-\infty,x_\ell)$ is similar). Then $u$ is given on $(x_\ell,+\infty)$ by
		$$u(x) = u(a) U_a(x)\,, \quad \textup{where}\quad U_a(x) := \exp\left(-\int_{a}^x \frac{\varphi_\ell''(s) - \lambda}{2\varphi_\ell'(s)}\,ds\right).$$
		Rewriting 
		$$\frac{\varphi''_\ell(s) - \lambda}{2\varphi_\ell'(s)} = \underbrace{\frac{\lambda}{2\varphi''_\ell(x_\ell)} \left(\frac{\varphi''_\ell(s) - \varphi''_\ell(x_\ell)}{\varphi_\ell'(s)}\right)}_{g(s)} + \left(\frac12 - \frac{\lambda}{2\varphi''_\ell(x_\ell)}\right)\frac{\varphi''_\ell(s)}{\varphi'_\ell(s)}$$
		and noticing that $g \in C^\infty(\R)$, we see that 
		$$U_a(x) = \exp\left(-\int_{a}^x g(s)ds\right) \left(\frac{\varphi_\ell'(x)}{\varphi_\ell'(a)}\right)^{\frac{\lambda}{2\varphi_\ell''(x_\ell)}-\frac12}.$$
		For (i) to hold, it is necessary that all derivatives of $U_a$ be bounded near $x_\ell$, and since there exists $C > 0$ such that $\varphi_\ell'(x) \sim C(x - x_\ell)$ near $x_\ell$, this is only possible if $\frac{\lambda}{2\varphi_\ell''(x_\ell)} - \frac12 \in \mathbb{N}$, i.e., if (ii) holds. Reciprocally, if (ii) holds, one can directly verify that $a_{n,0}$ provides a particular solution on $\R$, showing that (i) is satisfied.
	\end{proof}
	\begin{lemma}[Solutions of the remaining ODEs in \eqref{eq:WKB_transport_bis}]
	\label{lem:WKB_ordreN}
	Let $f \in C^\infty(\R)$, let $n \in \mathbb{N} \geq 1$, let
	$\lambda := (2n - 1) \Phi''(x_\ell)$, and let $U_n$ be a non-trivial solution of $(L_1(x,\partial_x) - \lambda) U_n = 0$. Then there exists $\lambda \in \mathbb{C}$ and $u \in C^\infty(\R)$ such that  
	\begin{equation}
	\label{eq:WKB_ordreN}
	(L_1(x,\partial_x) - \lambda) u = f - \lambda U_n.
	\end{equation}
	\end{lemma}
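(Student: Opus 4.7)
The plan is to decompose the construction into a formal (Taylor) part at $x_\ell$ and a correction handling the resulting flat remainder, exploiting that $L_1 - \lambda$ has a regular singular point at $x_\ell$ whose indicial root is $n-1$. Expanding $\varphi_\ell'(x) = \varphi_\ell''(x_\ell)(x-x_\ell) + \mathscr{O}((x-x_\ell)^2)$ and $\varphi_\ell''(x) = \varphi_\ell''(x_\ell) + \mathscr{O}(x-x_\ell)$, a short computation gives
\[
(L_1(x,\partial_x) - \lambda)\bigl((x-x_\ell)^k\bigr) = 2\varphi_\ell''(x_\ell)\bigl(k - (n-1)\bigr)(x-x_\ell)^k + \mathscr{O}\bigl((x-x_\ell)^{k+1}\bigr).
\]
The leading factor $k - (n-1)$ vanishes precisely at $k = n-1$, and it is exactly this resonance that the parameter $\mu$ is there to absorb.

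First I would build a formal power series solution at $x_\ell$. Normalize $U_n = c_n a_{n,0}$ with $c_n \neq 0$; the explicit formula \eqref{eq:def_an0} shows that $a_{n,0}$ vanishes at $x_\ell$ exactly at order $n-1$ with non-zero leading coefficient, so the Taylor series of $U_n$ starts with $c_n(x-x_\ell)^{n-1}$. Denote the Taylor coefficients of $f$ by $f_k$. For $k \neq n-1$, the equation for the $k$-th Taylor coefficient uniquely determines the unknown $u_k$ in terms of $u_0,\ldots,u_{k-1}$ because the leading factor $2\varphi_\ell''(x_\ell)(k-(n-1))$ is non-zero. At $k = n-1$, this factor vanishes, so the equation becomes independent of $u_{n-1}$ and reduces to a scalar compatibility condition linking $\mu c_n$ to $f_{n-1}$ and the previously determined $u_0,\ldots,u_{n-2}$; I would pick $\mu$ to satisfy this condition and set $u_{n-1} := 0$. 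Borel's lemma then produces $u_0 \in C^\infty(\R)$ whose Taylor coefficients at $x_\ell$ agree with those constructed, so that $r := f - \mu U_n - (L_1 - \lambda) u_0$ is smooth on $\R$ and flat at $x_\ell$.

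It remains to solve $(L_1 - \lambda) u_1 = r$. Since $a_{n,0}$ spans the kernel of $L_1 - \lambda$ and vanishes only at $x_\ell$, the variation-of-parameters formula gives
\[
u_1(x) = a_{n,0}(x) \int_{x_\ell}^x \frac{r(s)}{2\,\varphi_\ell'(s)\,a_{n,0}(s)}\,ds.
\]
Away from $x_\ell$ the integrand is smooth. Near $x_\ell$, $\varphi_\ell'(s)\,a_{n,0}(s)$ vanishes exactly at order $n$ while $r(s)$ is flat there, so the integrand extends to a smooth function that is itself flat at $x_\ell$; its primitive $v$ is therefore smooth and flat at $x_\ell$, and $u_1 = a_{n,0}\,v$ is smooth on $\R$. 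Setting $u := u_0 + u_1$ completes the construction.

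The main technical point will be verifying carefully that the scalar compatibility condition at the resonant order $n-1$ depends non-trivially on $\mu$, which reduces to the fact that $U_n$ has a non-zero leading Taylor coefficient at order $n-1$. Once this is checked, the rest --- the order-by-order Taylor recursion, the Borel summation step, and the explicit integration --- is essentially routine.
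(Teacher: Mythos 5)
Your proof is correct but takes a genuinely different route from the paper's. The paper performs variation of parameters once, globally: writing $u = a_{n,0}v$ reduces \eqref{eq:WKB_ordreN} to the first-order equation $\partial_x v = \frac{1}{2\varphi_\ell'}\bigl(\frac{f}{a_{n,0}} - \mu\bigr)$, whose right-hand side has a pole of order $n$ at $x_\ell$; the paper Taylor-expands this directly, identifies the coefficient of $(x-x_\ell)^{-1}$ as the obstruction, chooses the free constant (the one you call $\mu$; the statement's reuse of $\lambda$ is a typo) to cancel it, and then antidifferentiates explicitly, the resulting $(x-x_\ell)^{-(n-1)}$ singularity of $v$ being absorbed by the order-$(n-1)$ zero of $a_{n,0}$. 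Your version separates the formal and analytic halves: you run the indicial recursion at $x_\ell$, detect the single degenerate step $k = n-1$ where the factor $k-(n-1)$ vanishes and the nonzero leading coefficient of $U_n$ lets $\mu$ absorb the obstruction, then invoke Borel's lemma to realize the formal series modulo a flat remainder, and finally apply variation of parameters to that flat remainder. Both arguments hinge on exactly the same resonance and the same nondegeneracy in $\mu$, so they are morally parallel; the paper's is more elementary (one explicit antiderivative, no Borel summation), while yours is more modular and would transfer to situations where no closed-form antiderivative is available.
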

	\begin{proof}
	Let $\lambda \in \mathbb{C}$ be fixed. Seeking $u$ under the form $u(x) = u_0(x)v(x)$, we see that if $u \in C^\infty(\R)$, then $u$ solves \eqref{eq:WKB_ordreN} if and only if
	\begin{equation}
	\label{eq:kind_of_residue}
	\partial_x v(x) = \frac{1}{2\varphi_\ell'(x)} \left(\frac{f(x)}{u_0(x)} - \lambda\right) \quad \textup{for all } x \neq x_\ell.
	\end{equation}
	The properties of $\varphi_\ell$ allow us to write $\frac{1}{2\varphi_\ell'(x)} = \frac{\widetilde{g}(x)}{x - x_\ell}$ for some $\widetilde{g} \in C^\infty(\R)$ with $\widetilde{g}(x_\ell) \neq 0$. Moreover, using Lemma \ref{lem:WKB_ordre0}, we can write $\frac{f(x)}{2\varphi'_\ell(x) u_0(x)} = \frac{\widetilde{f}(x)}{(x - x_\ell)^{n}}$ for some $\widetilde{f} \in C^\infty(\R)$. Under this notation, \eqref{eq:kind_of_residue} becomes 
	\begin{equation}
	\label{eq:dxv}
	\partial_x v = \frac{\widetilde{f}(x)}{(x - x_\ell)^n} - \lambda \frac{\widetilde{g}(x)}{x-x_\ell}.
	\end{equation}
	Writing 
	$$\widetilde{f}(x) = \sum_{k = 0}^{n-1} \frac{\widetilde{f}^{(k)}(x_\ell)}{k!}(x-x_\ell)^k + (x - x_\ell)^{n} r_{1}\,, \quad \widetilde{g}(x) = \widetilde{g}(x_\ell) + (x - x_\ell) r_{2}(x)$$
	with $r_1,r_2 \in C^\infty(\R)$, and choosing 
	$$\lambda := \frac{\widetilde{f}^{(n-1)}(x_\ell)}{(n-1)!\,\widetilde{g}(x_\ell)}$$ 
	to cancel the ``residue'' term in $(x - x_\ell)^{-1}$ in $\partial_x v$ (corresponding to $k = n-1$) we can then exhibit a solution $v$ of \eqref{eq:dxv} given by
	$$v = \frac{1}{(x-x_\ell)^{n-1}}\sum_{k = 0}^{n-2} \frac{\widetilde{f}^{(k)}(x_\ell)}{k!(k-n+1)}(x - x_\ell)^{k} + R_1 + R_2.$$
	where $R_1,R_2 \in C^\infty(\R)$ satisfy $R_1'= r_1$ and $R_2' = r_2$. In particular, $u_0 v \in C^\infty(\R)$ since $\frac{u_0}{(x - x_\ell)^{n-1}} \in C^\infty(\R)$ by Lemma \ref{lem:WKB_ordre0}. This concludes the proof. 
	\end{proof}
	\begin{lemma}[{Lower bound on the WKB quasi-modes}]
		\label{lem:normWKB}
		Let $m \in \mathbb{N}$ and let $p(x,h)$ be a polynomial in both variables $x$ and $h$ such that 
		$$\frac{\partial^m p}{\partial x^m}(0,0) \neq 0.$$
		Let $u(x,h)$ satisfy 
		$$u(x,h) = p(x-x_\ell,h) + (x - x_\ell)^N v(x,h)$$
		for some $N > m$, where $v$ is bounded on $\R \times [0,1]$.   
		Then, for any compact interval $I \subset \R$ containing $x_\ell$ in its interior, there exists $c > 0$ and $h_0 > 0$ such that 
		$$ \|e^{-\varphi_\ell/h}u\|_{L^2(I)} \geq  c h^{m/2+\frac14} \quad h \in (0,h_0).$$
		\end{lemma}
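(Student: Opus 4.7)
The plan is to perform the semiclassical rescaling $x = x_\ell + \sqrt{h}\,y$, which captures the natural width of the Gaussian weight $e^{-\Re\varphi_\ell/h}$. Using $\varphi_\ell(x_\ell) = \varphi_\ell'(x_\ell) = 0$ together with $\varphi_\ell''(x_\ell) = e^{i\alpha/2}\sqrt{V''(x_\ell)/2}$ (which follows from $\varphi_\ell' = e^{i\alpha/2}\sqrt{V_\ell}$ and the behaviour of $\sqrt{V_\ell}$ near $x_\ell$), a Taylor expansion gives
\[\frac{\Re\varphi_\ell(x_\ell + \sqrt{h}\, y)}{h} = \omega y^2 + O(\sqrt{h}\,|y|^3), \qquad \omega := \tfrac{1}{2}\cos(\alpha/2)\sqrt{V''(x_\ell)/2} > 0,\]
so that $e^{-2\Re\varphi_\ell(x_\ell + \sqrt{h}\,y)/h}$ converges locally uniformly in $y$ to the Gaussian $e^{-2\omega y^2}$ as $h \to 0$.

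Next, I would identify the leading-order polynomial after rescaling. Writing $p(X,h) = \sum_{i,j} a_{ij}\, X^i h^j$ (a finite sum), the hypothesis $\partial_X^m p(0,0) \neq 0$ means $a_{m,0} \neq 0$, and setting $\mu := \min\{i + 2j : a_{ij} \neq 0\}$, one has $\mu \leq m$. Substituting $X = \sqrt{h}\,y$ gives
\[p(\sqrt{h}\, y, h) = h^{\mu/2}\big(P(y) + \sqrt{h}\, r(y,\sqrt{h})\big), \qquad P(y) := \sum_{i + 2j = \mu} a_{ij}\, y^i,\]
where $P\not\equiv 0$ by construction of $\mu$, and $r$ is a polynomial in $(y,\sqrt{h})$. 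The remainder term $(x-x_\ell)^N v(x,h)$ rescales to $h^{N/2} y^N v(x_\ell + \sqrt{h}\,y, h)$, which is of strictly smaller order than $h^{\mu/2}$ on any compact $y$-set since $N > m \geq \mu$ and $v$ is bounded on $\R\times[0,1]$. Altogether, $h^{-\mu/2} u(x_\ell + \sqrt{h}\, y, h) \to P(y)$ uniformly on compact subsets of $\R$.

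Finally, the change of variables turns the $L^2$ norm into
\[\|e^{-\varphi_\ell/h} u\|_{L^2(I)}^2 = h^{\mu + 1/2}\int_{h^{-1/2}(I - x_\ell)}\left|h^{-\mu/2}u(x_\ell + \sqrt{h}\,y,h)\right|^2 e^{-2\Re\varphi_\ell(x_\ell + \sqrt{h}\,y)/h}\,dy.\]
Because $x_\ell$ lies in the interior of $I$, any fixed interval $[-R,R]$ is eventually contained in the rescaled domain of integration, and by the two uniform convergences above the restricted integrand tends uniformly on $[-R,R]$ to $|P(y)|^2 e^{-2\omega y^2}$. Choosing $R$ large enough that $\int_{-R}^R |P(y)|^2 e^{-2\omega y^2}\,dy > 0$ (possible since $P\not\equiv 0$), we obtain for $h$ sufficiently small a lower bound
\[\|e^{-\varphi_\ell/h} u\|_{L^2(I)}^2 \geq c\, h^{\mu + 1/2} \geq c\, h^{m + 1/2},\]
the last step using $\mu \leq m$ and $h < 1$, which yields the claimed $h^{m/2 + 1/4}$ bound upon taking square roots. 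I do not expect a genuine obstacle here: the only care required is to ensure that $P$ is truly nonzero (which the minimality of $\mu$ guarantees, together with $a_{m,0}\neq 0$) and to control the error terms uniformly on the compact set $[-R,R]$, both of which follow from the smoothness of $\varphi_\ell$ near $x_\ell$ and the boundedness of $v$.
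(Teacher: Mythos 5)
Your proof is correct and follows essentially the same route as the paper: rescale by $\sqrt{h}$, observe that the weight limits to a Gaussian, and identify the leading power of $\sqrt{h}$ in the rescaled polynomial (your $\mu$ is the paper's $r_0$, your $P$ is their $Q_{r_0}$). Your direct identification $\mu \le m$ from $a_{m,0}\neq 0$ is in fact a little cleaner than the paper's verification that $Q_m\not\equiv 0$ via a mixed derivative, and your restriction to a fixed compact window $[-R,R]$ in the rescaled variable sidesteps the paper's triangle-inequality split into $I_1(h)-I_2(h)$ while giving the same bound.
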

		We follow the proof of \cite[Lemma B.3]{DR25}).
		\begin{proof} 
			Writing
			$$\|e^{-\varphi_\ell/h}u\|_{L^2(I)} \geq \underbrace{\|e^{-\varphi_\ell/h}p(\cdot - x_\ell,h)\|_{L^2(I)}}_{I_1(h)} - \underbrace{\|e^{-\varphi_\ell/h} (x-x_\ell)^N v\|_{L^2(I)}}_{I_2(h)},$$
			we show that $I_1(h) \geq c h^{\frac{m}{2} + \frac14}$ and $I_2(h) \leq Ch^{\frac{N}{2} + \frac14}$. 
			
			Choosing $\delta > 0$ such that $(x_\ell-\delta,x_\ell+ \delta) \subset I$, we have 
			\begin{align*}
			I_1(h)^2 \geq \int_{|x-x_\ell|\leq \delta} e^{-2\Re(\varphi_\ell(x))/h}|u(x;h)|^2\,dx.
			\end{align*}
			Moreover, since $\textup{Re}(\varphi_\ell(x)) \leq c(x-x_\ell)^2$ for some $c > 0$, using the change of variables $y = \frac{x-x_\ell}{\sqrt{h}}$ leads to
			\begin{align*}
			I_1(h)^2 &\geq  h^{1/2} \int_{-\delta/\sqrt{h}}^{\delta/\sqrt{h}} e^{-cy^2}  \left|p(\sqrt{h} y,h)\right|^2dy\\
			& = h^{1/2} \int_{-\delta/\sqrt{h}}^{\delta/\sqrt{h}} e^{-cy^2}  \left|q(y,\sqrt{h})\right|^2dy
			\end{align*}
			Here, $q(y,\eta) := p(\eta y,\eta^2)$ is a polynomial in $y$ and $\eta$ that we may rewrite under the form
			$$q(y,\eta)= \sum_{r = r_0}^{R} \eta^{r} Q_r(y)\,,  \quad Q_{r_0} \neq 0$$
			for some $r_0 \geq 0$ and some polynomials $Q_r$. Then, 
			$$|q(y,\eta)|^2 = \eta^{2r_0} |Q_{r_0}(y)|^2 + \sum_{r = r_0+1}^{2R} \eta^r \widetilde{Q}_r(y)$$
			for some other polynomials $\widetilde{Q}_r$. The dominated convergence theorem then gives
			$$I_1(h)^2 = C_{r_0}h^{r_0+\frac12} + \mathscr{O}(h^{r_0 + 1})$$
			where $C_{r_0} = \int_{\R} e^{-(1\pm\varepsilon)cy^2} |Q_{r_0}(y)|^2\,dy \neq 0$ (since $Q_{r_0} \neq 0$). Finally, we observe that 
			$$\frac{\partial^{2m}q}{\partial y^m \partial \eta^m}(y,\eta) = \frac{\partial^m}{\partial \eta^m} \left(\frac{\partial^m}{\partial y^m} [p(\eta y,\eta^2)]\right) = \frac{\partial^m}{\partial \eta^m} \left(\eta^m \frac{\partial^m p}{\partial x^m} (\eta y,\eta^2)\right)$$
			and thus 
			$$\frac{\partial^m Q_m}{\partial y^m}(0) = \frac{\partial^{2m}q}{\partial \eta^m \partial y^m}(0,0) =  \frac{\partial^{2m}q}{\partial y^m \partial \eta^m}(0,0) = \frac{\partial^m p}{\partial x^m}(0,0) \neq 0.$$
			In particular, $Q_m \neq 0$, which implies that $r_0 \geq m$, and thus, $I_1(h) \geq c h^{m+ \frac14}$ for $h$ small enough. 
			
			To estimate $I_2(h)$, we first choose $\delta > 0$ small enough such that $\varphi_\ell(x) \geq c(x - x_\ell)^2$ for all $|x-x_\ell|\leq \delta$, with $c > 0$. Since $\varphi_\ell(x) \geq c' > 0$ for $|x-x_\ell|\geq \delta$, we thus have for $h \leq 1$, 
			\begin{align*}
			I_2(h)^2& \leq C\int_{|x-x_\ell| \leq \delta} e^{-c\frac{(x-x_\ell)^2}{h}} (x - x_\ell)^{2N}\,dx + C e^{-2c'/h} \int_{I} (x-x_\ell)^{2N}\,dx\\
			& \leq h^{N+\frac12}\int_{\R} e^{-cy^2} y^{2N}\,dy + \mathscr{O}(h^\infty),
			\end{align*}
			which concludes the proof. 
		\end{proof}
		\begin{lemma}[WKB normalization constants for $n = 1$]
		\label{lem:constants_n=1}
		Let $\psi_1^{\rm wkb}$ be a WKB quasimode with parameter $(1,J)$, and let $I \subset \R$ be open an containing $x_\ell$. Then 
		\begin{equation}
		\label{eq:normalization_1}
		\|\psi_1^{\rm wkb}(\cdot;h)\|_{L^2(I)} \sim \left(\frac{\pi}{\cos(\alpha/2)}\sqrt{\frac{2}{V''(x_\ell)}}\right)^{1/4}h^{1/4},
		\end{equation}
		\begin{equation}
		\label{eq:normalization_2}
		\langle \psi_1^{\rm wkb},\overline{\psi_1^{\rm wkb}}\rangle  \sim e^{-i\alpha/4} \left(\pi \sqrt{\frac{2}{V''(x_\ell)}}\right)^{1/2} h^{1/2}
		\end{equation}
		\end{lemma}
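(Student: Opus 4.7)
The plan is to apply Laplace's method to both integrals after the rescaling $y=(x-x_\ell)/\sqrt{h}$, using the Taylor expansion
\[
\varphi_\ell(x)=\tfrac{1}{2}e^{i\alpha/2}\sqrt{V''(x_\ell)/2}\,(x-x_\ell)^2+\mathscr{O}((x-x_\ell)^3)
\]
which comes from $V_\ell(x)=\tfrac12 V''(x_\ell)(x-x_\ell)^2+\mathscr{O}((x-x_\ell)^3)$, together with the normalization $a_{1,0}(x_\ell)=1$ read off from \eqref{eq:def_an0} (at $n=1$, the prefactor is a zeroth power and the integral in the exponential vanishes at $x=x_\ell$). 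Both quantities have the form $\int_I G(x;h)e^{-\Psi(x)/h}\,dx$ with $\Re\Psi$ nonnegative, vanishing quadratically at $x_\ell$, and bounded below by a positive constant outside a neighborhood of $x_\ell$, so the standard Laplace contribution is concentrated in a $\sqrt{h}$-neighborhood of $x_\ell$.

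For \eqref{eq:normalization_1}, first I would split $I=(x_\ell-\delta,x_\ell+\delta)\cup I'$. On $I'$, $\Re\varphi_\ell\geq c'>0$, so $\int_{I'}|a_1^{\rm wkb}|^2 e^{-2\Re\varphi_\ell/h}dx=\mathscr{O}(e^{-2c'/h})$ is negligible. On the small interval, the change of variables $y=(x-x_\ell)/\sqrt{h}$ turns the integral into
\[
\sqrt{h}\int_{-\delta/\sqrt{h}}^{\delta/\sqrt{h}}|a_1^{\rm wkb}(x_\ell+\sqrt{h}y;h)|^2\,e^{-2\Re\varphi_\ell(x_\ell+\sqrt{h}y)/h}\,dy,
\]
with pointwise limit $e^{-c y^2}$ where $c=\cos(\alpha/2)\sqrt{V''(x_\ell)/2}$, since $a_1^{\rm wkb}(x_\ell+\sqrt{h}y;h)\to a_{1,0}(x_\ell)=1$. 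Dominated convergence, based on a uniform quadratic lower bound $\Re\varphi_\ell(x)\geq c''(x-x_\ell)^2$ on $(x_\ell-\delta,x_\ell+\delta)$, then yields the limit $\sqrt{h}\int_\R e^{-cy^2}dy=\sqrt{\pi h/c}$, which after simplification gives \eqref{eq:normalization_1}.

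For \eqref{eq:normalization_2}, I would proceed identically, except that
\[
\langle \psi_1^{\rm wkb},\overline{\psi_1^{\rm wkb}}\rangle=\int_I (a_1^{\rm wkb}(x;h))^2 e^{-2\varphi_\ell(x)/h}dx
\]
now involves the complex phase $2\varphi_\ell/h$ in the exponential (no complex conjugate on the second factor). The same rescaling leads, after passage to the limit, to the complex Gaussian $\sqrt{h}\int_\R e^{-\tilde c y^2}dy=\sqrt{\pi h/\tilde c}$ with $\tilde c=e^{i\alpha/2}\sqrt{V''(x_\ell)/2}$. Since $\Re\tilde c=c>0$, the modulus of the integrand is controlled by the same real Gaussian envelope as in step two, so dominated convergence applies. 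Using the principal branch $\tilde c^{-1/2}=e^{-i\alpha/4}(2/V''(x_\ell))^{1/4}$ produces \eqref{eq:normalization_2}.

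I do not anticipate any serious obstacle; the only mild technical point is ensuring that the cubic and higher remainders in the Taylor expansion of $\varphi_\ell$ do not spoil integrability after rescaling. On the scale $|y|\leq \delta/\sqrt{h}$ these terms contribute $\mathscr{O}(\sqrt{h}\,y^3)$ in the exponent, which for $\delta$ small enough is dominated by $\varepsilon c y^2$, so a uniform Gaussian bound persists and dominated convergence is fully justified.
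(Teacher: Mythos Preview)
Your proposal is correct and follows essentially the same approach as the paper: both compute the Laplace-type asymptotics via the rescaling $y=(x-x_\ell)/\sqrt h$, use $a_{1,0}(x_\ell)=1$ and the quadratic expansion of $\varphi_\ell$ at $x_\ell$, and evaluate the resulting (real or complex) Gaussian integral. The only cosmetic difference is that the paper writes the second integral over $\R$ rather than $I$, which is immaterial for the leading asymptotics.
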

		\begin{proof}
		One can check using the same techniques as in the proof of Lemma \ref{lem:normWKB} that 
		$$\|\psi_1^{\rm wkb}\|^2_{L^2(I)} =  \int_{I} e^{-2\Re(\varphi_\ell(x))/h} |a(x;h)|^2\,dx \sim h^{1/2}|a_{1,0}(x_\ell)|^2 \int_{\R} e^{- \Re(\varphi_\ell''(x_\ell))y^2}\,dy\,.$$
		Since $a_{1,0}(x_\ell)=1$, we get 
		$$\|\psi_1^{\rm wkb}\|^2_{L^2(I)} \sim h^{1/2} \sqrt{\frac{\pi}{\Re(\varphi_\ell''(x_\ell))}}$$
		and \eqref{eq:normalization_1} follows since $\Re(\varphi_\ell''(x_\ell)) = \cos(\alpha/2) \sqrt{V''(x_\ell)/2}.$
		
		Similarly,
		\begin{align*}
		\langle \psi_1^{\rm wkb},\overline{\psi_1^{\rm wkb}}\rangle &= \int_{\R} e^{-2\varphi_\ell(x)/h} a(x,h)^2\,dx \sim h^{1/2} \int_{\R} e^{-e^{i\alpha/2} \sqrt{\frac{V''(x_\ell)}{2}}y^2}dy \\
		&= h^{1/2} e^{-i\alpha/4}\sqrt{\pi} \left(\frac{2}{V''(x_\ell)}\right)^{1/4},
		\end{align*}
		where we used the fact that $\int_{\R} e^{-zx^2}\,dx = \sqrt{\frac{\pi}{z}}$ for $\textup{Re}(z) > 0$, with $\sqrt{\cdot}$ denoting the principal determination of the square root. This shows \eqref{eq:normalization_2}. 
		\end{proof}

	\section{An estimate for a Riesz projector.}

	\begin{proposition} 
	\label{prop:general_spectral_result}
	Let $H$ be a Hilbert space and $A : H \to H$ a densely defined closed operator on $H$. Let $\Omega \subset \C$ be a simply connected, bounded open set whose boundary is described by a regular closed simple curve $\gamma:[0,1] \to \mathbb{C}$ and let $K \subset \Omega$ be compact. Suppose that $\partial \Omega$ does not intersect the spectrum of $A$, and let 
	$$P := \frac{1}{2\pi i}\int_{\gamma} (z - A)^{-1}\,dz.$$ 
	Then for all $u \in \textup{dom}(A)$ and all $z_0 \in K$, 
	$$\|(\Id - P)u\| \leq \frac{|\gamma|}{2\pi \textup{dist}(K,\partial \Omega)}\left( \sup_{z \in \partial \Omega} \|(A-z)^{-1}\| \right) \|(A - z_0)u\|\,.$$
	\end{proposition}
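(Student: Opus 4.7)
The plan is to represent $(I-P)u$ as a single contour integral of a well-behaved integrand, and then bound it by the supremum norm on $\partial\Omega$.

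The key algebraic identity is the resolvent manipulation
\[
(z-A)^{-1}(A - z_0) \;=\; (z-A)^{-1}\bigl((A-z) + (z - z_0)\bigr) \;=\; -I + (z - z_0)(z-A)^{-1},
\]
which, upon solving for $(z-A)^{-1}u$, yields for every $u \in \mathrm{dom}(A)$ and $z \in \partial\Omega$,
\[
(z-A)^{-1} u \;=\; \frac{u}{z - z_0} \;+\; \frac{(z-A)^{-1}(A - z_0)u}{z - z_0}.
\]
This is valid because $z_0 \in K \subset \Omega$ and $\partial\Omega \cap \mathrm{sp}(A) = \emptyset$, so both sides make sense.

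Next I would integrate this identity over the contour $\gamma$, divided by $2\pi i$. Since $z_0$ lies in the interior of the simply connected domain $\Omega$, the Cauchy formula gives $\frac{1}{2\pi i}\int_\gamma \frac{dz}{z - z_0} = 1$, so
\[
P u \;=\; u \;+\; \frac{1}{2\pi i}\int_\gamma \frac{(z-A)^{-1}(A - z_0)u}{z - z_0}\,dz,
\]
and therefore
\[
(I - P)u \;=\; -\,\frac{1}{2\pi i}\int_\gamma \frac{(z-A)^{-1}(A - z_0)u}{z - z_0}\,dz.
\]

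Finally I would take norms inside the integral and use that $|z - z_0| \geq \mathrm{dist}(K, \partial\Omega)$ for every $z \in \partial\Omega$ and $z_0 \in K$, together with the uniform bound on $\|(z-A)^{-1}\|$ along $\partial\Omega$:
\[
\|(I - P)u\| \;\leq\; \frac{1}{2\pi}\int_\gamma \frac{\|(z-A)^{-1}\|\,\|(A - z_0)u\|}{|z - z_0|}\,|dz| \;\leq\; \frac{|\gamma|}{2\pi\,\mathrm{dist}(K,\partial\Omega)}\Bigl(\sup_{z \in \partial\Omega}\|(z-A)^{-1}\|\Bigr)\|(A - z_0)u\|.
\]

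There is no genuine obstacle here; the only thing to be careful about is that $u \in \mathrm{dom}(A)$ is needed to make sense of $(A - z_0)u$ and to justify the commutation $(z-A)^{-1}(A - z_0)u = -u + (z - z_0)(z-A)^{-1}u$ pointwise in $z$, which is standard since $(z-A)^{-1}$ maps $H$ into $\mathrm{dom}(A)$ and $A$ commutes with its own resolvent on $\mathrm{dom}(A)$. The estimate is then immediate from the contour representation.
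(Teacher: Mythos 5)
Your proposal is correct and follows essentially the same route as the paper: you rewrite $(\Id-P)u$ via the resolvent identity and Cauchy's formula as a single contour integral $-\frac{1}{2\pi i}\int_\gamma (z-z_0)^{-1}(z-A)^{-1}(A-z_0)u\,dz$ and then estimate it directly, which is exactly the factorization $(\Id-P)u = S(z_0)(A-z_0)u$ used in the paper. The only cosmetic difference is that you derive the integrand from the identity $(z-A)^{-1}(A-z_0)=-\Id+(z-z_0)(z-A)^{-1}$, whereas the paper combines $(z-z_0)^{-1}-(z-A)^{-1}$ directly; these are the same algebra.
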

	\begin{proof}
	By the Cauchy formula, 
	\begin{align*}
	(\Id - P) u &= \rev{\frac{1}{2\pi i}}\int_{\gamma} \left((z-z_0)^{-1} - (z - A)^{-1}\right)\,dz\\
	& = \rev{\frac{1}{2\pi i}}\int_\gamma (z-z_0)^{-1}(A - z_0)(z-A)^{-1}\\
	& =: S(z_0) (A - z_0)
	\end{align*}
	where 
	$$S(z_0) := \frac{1}{2\pi i} \int_\gamma (z - z_0)^{-1} (z - A)^{-1}\,dz.$$
	The result is obtained by a direct estimate of the integral defining $S$.  
	\end{proof}

		\bibliographystyle{abbrv}
		\bibliography{biblio.bib}

\end{document}